\pgfplotsset{compat=1.18}
\providecommand{\U}[1]{\protect\rule{.1in}{.1in}}
\providecommand{\keywords}[1]{\textsc{Keywords}: #1}
\newtheorem{assumption}{Assumption}[section]
\newtheorem{definition}{Definition}[section]
\newtheorem{lemma}{Lemma}[section]
\newtheorem{proposition}{Proposition}[section]
\newtheorem{remark}{Remark}[section]
\newenvironment{proof}[1][Proof]{\noindent\textbf{#1.} }{\ \rule{0.5em}{0.5em}}
\numberwithin{equation}{section}
\numberwithin{table}{section}
\numberwithin{figure}{section}
\begin{document}
\begin{bibunit}
\title{Almost Dominance: Inference and Application\thanks{The authors are grateful to all the seminar and conference participants for their insightful comments.}}

\author{Xiaojun Song\thanks{Department of Business Statistics and Econometrics, Guanghua School of Management, Peking University, Email: \texttt{sxj@gsm.pku.edu.cn}. This work was supported by the National Natural Science Foundation of China [Grant Numbers 72373007 and 72333001]. The author also gratefully acknowledges the research support from the Center for Statistical Science of Peking University.}\\
	\and Zhenting Sun\thanks{Department of Economics, University of Melbourne, Email: \texttt{zhenting.sun@unimelb.edu.au}. This work was supported by the National Natural Science Foundation of China [Grant Number 72103004].}
	}

\maketitle
\begin{abstract}
This paper proposes a general framework for inference on three types of almost dominances: almost Lorenz dominance, almost inverse stochastic dominance, and almost stochastic dominance.  
We first generalize almost Lorenz dominance to almost upward and downward Lorenz dominances. We then provide a bootstrap inference procedure for the Lorenz dominance coefficients, which measure the degrees of almost Lorenz dominance. Furthermore, we propose almost upward and downward inverse stochastic dominances and provide inference on the inverse stochastic dominance coefficients. We also show that our results can easily be extended to almost stochastic dominance. Simulation studies demonstrate the finite sample properties of the proposed estimators and the bootstrap confidence intervals. This framework can be applied to economic analysis, particularly in the areas of social welfare, inequality, and decision making under uncertainty. As an empirical example, we apply the methods to the inequality growth in the United Kingdom and find evidence for almost upward inverse stochastic dominance.  
	
\keywords{almost Lorenz dominance, almost inverse stochastic dominance, almost stochastic dominance, estimation and inference, bootstrap confidence intervals}
	
\end{abstract}

\newpage

\section{Introduction}
Suppose that there are two arbitrary cumulative distribution functions (CDFs)
$F_{1}:[0,\infty)\to[0,1]$ and $F_{2}:[0,\infty)\to[0,1]$ of two income (or wealth, etc.) distributions in two
populations. As introduced in \citet{atkinson1970measurement}, a Lorenz curve is a function that graphs the cumulative proportion of total income received by the bottom population. The distribution $F_1$ (weakly) Lorenz dominates $F_2$ if the Lorenz curve $L_1$ associated with $F_1$ is everywhere above the Lorenz curve $L_2$ associated with $F_2$. Such Lorenz dominance implies that the wealth is distributed more equally in population $F_1$ compared to $F_2$. Statistical tests of Lorenz dominance can be found in, for example,  \citet{mcfadden1989testing}, 
\citet{bishop1991international}, \citet{bishop1991lorenz}, \citet{dardanoni1999inference}, \citet{davidson2000statistical}, \citet{barrett2003consistent}, \citet{BDB14}, and \citet{Beare2017improved}.

\citet{zheng2018almost} introduces the notion of \emph{almost Lorenz dominance}: When two Lorenz curves cross, $F_1$ almost Lorenz dominates $F_2$ if $L_1$ is above $L_2$ almost everywhere. 
The degree of almost Lorenz dominance can be measured by a coefficient, \textit{Lorenz dominance coefficient} (LDC). The LDC was first proposed in \citet[Section 2.3]{zheng2018almost},\footnote{The original definition of LDC in \citet{zheng2018almost} is slightly different from the proposed one in this paper.} which is similar to the violation ratio in \citet{huang2021estimating}. \citet{zheng2018almost} shows that almost Lorenz dominance is highly related to Gini-type measures. 
Based on the seminal work of \citet{aaberge2009ranking}, we generalize almost Lorenz dominance to \textit{almost upward Lorenz dominance} and \textit{almost downward Lorenz dominance}.
We relate the generalized LDCs to a class of inequality measures.
It is straightforward to show that the smaller LDCs are, the more inequality measures demonstrate higher distribution equality in the dominant population. Thus, LDCs are of interest to us when we compare distributions in terms of distribution inequality. 
We then provide estimators for LDCs and establish the asymptotic properties of the estimators.\footnote{The proposed estimators are also different from that proposed by \citet{zheng2018almost} since the definitions of the proposed LDCs are different.} Based on the asymptotic distributions, we construct bootstrap confidence intervals (CIs) for LDCs. 
LDCs are nonlinear transformations of the difference functions of two Lorenz curves. We show that this map is Hadamard directionally differentiable, and then we apply the bootstrap method of \citet{fang2014inference} to approximate the asymptotic distributions of the estimators and obtain the confidence intervals.\footnote{For more discussions on Hadamard directional differentiability and its applications, see \citet{dumbgen1993nondifferentiable}, \citet{andrews2000inconsistency}, \citet{bickel2012resampling}, \citet{hirano2012impossibility}, \citet{beare2015nonparametric},
	\citet{Beare2016global}, \citet{hansen2017regression}, \citet{Seo2016tests},  \citet{Beare2015improved}, \citet{chen2019improved}, \citet{Beare2017improved}, and \citet{sun2018ivvalidity}.}

The second degree inverse stochastic dominance \citep{atkinson1970measurement}, also known as generalized Lorenz dominance, is widely used for ranking distribution functions according to social welfare. However, as pointed out by \citet{aaberge2021ranking}, when the distribution functions intersect, the second degree inverse dominance criterion may have limitations in attaining an unambiguous ranking \citep{davies1995making,atkinson2008more}.
\citet{aaberge2021ranking} propose an approach for comparing intersecting distribution functions based on 
high-degree upward and downward inverse stochastic dominance criteria. They provide equivalence results between inverse stochastic dominance and the ranks of social welfare functions. Based on the framework of 
\citet{aaberge2021ranking}, we introduce the notions of \emph{almost upward inverse stochastic dominance} and \emph{almost downward inverse stochastic dominance}. We then show that these types of almost dominances are also related to the ranks of social welfare functions. Similar to LDCs, the degree of almost inverse stochastic dominance can be measured by the \emph{inverse stochastic dominance coefficients} (ISDCs). The smaller the ISDCs are, the more social planner preference functions show higher social welfare for the dominant distribution. We provide inference and construct bootstrap confidence intervals for ISDCs.

Almost Lorenz dominance can be viewed as a special but nontrivial case of almost stochastic dominance, which was first introduced by \citet{leshno2002preferred}. Almost stochastic dominance has been extensively studied by \citet{guo2013note}, \cite{tzeng2013revisiting}, \citet{denuit2014almost}, \citet{guo2014moment}, \citet{tsetlin2015generalized}, \citet{guo2016almost}, and \citet{huang2021estimating}.\footnote{See a comprehensive review on stochastic dominance in \citet{whang2019econometric}.}
We show that our method for almost Lorenz dominance can be easily generalized to almost stochastic dominance and provide bootstrap confidence intervals for the \emph{stochastic dominance coefficients} (SDCs). 

We examine the inequality growth in the United Kingdom as an empirical example in the paper. We find that the estimated upward inverse stochastic dominance coefficients are small and the bootstrap confidence intervals are narrow, which implies that there may exist almost inverse stochastic dominance between the years considered in this example. 

For the sake of simplicity of exposition, we focus on the almost Lorenz dominances in the main text. The results for other types of almost dominances are provided in the supplementary appendix. The paper is organized as follows. 
Section \ref{sec.LDC} introduces the notions of Lorenz dominance coefficients, provides the properties of these coefficients, and relates them to measures of inequality. Section \ref{sec.LDC estimation and inference} proposes estimators for LDCs and establishes the asymptotic distributions of the estimators. We then construct the confidence intervals for LDCs based on these asymptotic distributions. Section \ref{sec.simulation} provides simulation evidence for the finite sample properties of the proposed method. In Section \ref{sec.empirical}, we apply our methods to the empirical application. In the supplementary appendix, Sections \ref{sec.ISDC} and \ref{sec.ASD} extend the results to almost inverse stochastic dominance and almost stochastic dominance. Section \ref{sec.additional simulation} provides additional simulation results. Section \ref{sec.proofs} contains the proofs of the results in the paper. 

{\textbf{Notation \citep[][]{Beare2017improved}:}}
Throughout the paper, all the random elements are defined on
a probability space $( \Omega, \mathcal{A}, \mathbb{P} )$. 
For an arbitrary set $A$, let $\ell^\infty(A)$
be the set of bounded real-valued functions on $A$ equipped with the supremum norm $\Vert \cdot \Vert_{\infty}$ such that
$\Vert f \Vert_{\infty}=\sup_{x\in A} \vert f(x) \vert$ for every $f\in \ell^\infty(A)$.
For a subset $B$ of a metric space,
let $C(B)$ be the set of continuous real-valued functions on $B$.
Let $\leadsto$ denote the weak convergence defined in \citet[p.~4]{van1996weak}.
We follow the convention in \citet[p.~45]{folland2013real} and define 
\begin{align}\label{eq.0timesinf}
    0\cdot\infty=0.
\end{align}

\section{Almost Lorenz Dominance}\label{sec.LDC}

We suppose that the CDFs $F_{1}$ and $F_{2}$ satisfy the following regularity conditions, which guarantee the weak convergence of the estimated quantile functions \citep{K17,Beare2017improved}.  
\begin{assumption}\label{ass.distribution}
	For $j=1,2$, $F_{j}\left(  0\right)  =0$ and $F_{j}$ is continuously
	differentiable on the interior of its support with $F_{j}^{\prime}\left(
	x\right)  >0$ for all $x\in\left(  0,\infty\right)  $. In addition, $F_{j}$
	has finite $\left(  2+\epsilon\right)  $th absolute moment for some
	$\epsilon>0$. 
\end{assumption}

With Assumption \ref{ass.distribution}, we introduce the Lorenz curves. Let $Q_1$ and $Q_2$ denote the quantile functions corresponding to $F_1$ and $F_2$, respectively, that is,
\begin{align}
	Q_j(p)&=\inf\left\{x\in [0,\infty):F_j(x)\ge p \right\},\quad p\in[0,1].
\end{align}
When $F_j$ has finite first moment $\mu_j$ as implied by Assumption \ref{ass.distribution}, the quantile function $Q_j$ is integrable with $\int_0^1 Q_j(p)\mathrm{d}p=\mu_j$. Under Assumption \ref{ass.distribution} with $\mu_j>0$, $Q_j(p)=F_j^{-1}(p)$ for all $p\in(0,1)$, and the Lorenz curve $L_j$ corresponding to $F_j$ can then be defined as
\begin{align}
	L_{j}(p)&=\frac{1}{\mu_{j}}\int_{0}^{p}Q_{j}(t)\mathrm{d}t,\quad p\in[0,1].
\end{align}
We say $F_1$ (weakly) Lorenz dominates $F_2$ if $L_1(p)\ge L_2(p)$ for all $p\in[0,1]$.
We now follow \citet{zheng2018almost} and introduce the notion of almost Lorenz dominance. For all arbitrary CDFs $F_{1}:[0,\infty)\to[0,1]$ and $F_{2}:[0,\infty)\to[0,1]$ such that the corresponding Lorenz curves exist, define
\[
S\left(  F_{1},F_{2}\right)  =\left\{  p\in\left[  0,1\right]  :L_{1}\left(
p\right)  <L_{2}\left(  p\right)  \right\}  .
\]

\begin{definition}
	\label{def.dominance} For every $\varepsilon\in[  0,1/2)  $, \emph{the CDF $F_{1}$
	$\varepsilon$-almost Lorenz dominates the CDF $F_{2}$} ($F_{1}$ $\varepsilon
	$-ALD $F_{2}$), if
	\begin{align}\label{eq.ALD inequality}
	\int_{S\left(  F_{1},F_{2}\right)  }\left(  L_{2}\left(  p\right)
	-L_{1}\left(  p\right)  \right)  \mathrm{d}p\leq\varepsilon\int_{0}%
	^{1}\left\vert L_{2}\left(  p\right)  -L_{1}\left(  p\right)  \right\vert
	\mathrm{d}p.
	\end{align}
\end{definition}

\begin{remark}
\citet{zheng2018almost} defines the almost Lorenz dominance for all $\varepsilon\in(0,1/2)$, while we allow $\varepsilon=0$. The inequality \eqref{eq.ALD inequality} holds with $\varepsilon=0$ if and only if $F_1$ Lorenz dominates $F_2$. Thus, by Definition \ref{def.dominance}, almost Lorenz dominance is a weaker and generalized version of Lorenz dominance. 
\end{remark}

\begin{lemma}\label{lemma.ald equivalence}
For every $\varepsilon\in [0,1/2)$, $F_{1}$ $\varepsilon$-ALD $F_{2}$ if and
	only if there is $c\in\left[  0,\varepsilon\right]  $ such that
	\begin{align}\label{eq.ALD2}
	\int_{S\left(  F_{1},F_{2}\right)  }\left(  L_{2}\left(  p\right)
	-L_{1}\left(  p\right)  \right)  \mathrm{d}p=c\int_{0}^{1}\left\vert
	L_{2}\left(  p\right)  -L_{1}\left(  p\right)  \right\vert \mathrm{d}p.
	\end{align}
\end{lemma}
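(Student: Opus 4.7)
The plan is to reduce the claim to an elementary algebraic fact by introducing shorthand
\[
A \;=\; \int_{S(F_1,F_2)} \bigl(L_2(p)-L_1(p)\bigr)\,\mathrm{d}p, \qquad B \;=\; \int_0^1 \bigl|L_2(p)-L_1(p)\bigr|\,\mathrm{d}p.
\]
Since $L_2-L_1$ is strictly positive on $S(F_1,F_2)$ by definition of that set, and the Lorenz curves are bounded on $[0,1]$, both $A$ and $B$ lie in $[0,\infty)$. In this notation, $F_1$ $\varepsilon$-ALD $F_2$ is exactly $A\le \varepsilon B$, while the proposed equivalent condition is that $A=cB$ for some $c\in[0,\varepsilon]$.

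The reverse implication is essentially immediate: if $A=cB$ with $c\in[0,\varepsilon]$, then $A=cB\le \varepsilon B$, which is \eqref{eq.ALD inequality}. For the forward implication, I would split on whether $B$ vanishes. If $B>0$, the natural choice is $c=A/B$; nonnegativity of $A$ gives $c\ge 0$, while the assumption $A\le \varepsilon B$ gives $c\le \varepsilon$, so $c\in[0,\varepsilon]$ and $A=cB$ by construction. If $B=0$, then $|L_2-L_1|=0$ almost everywhere on $[0,1]$, which forces $A=0$ as well, and one can simply take $c=0\in[0,\varepsilon]$, for which $A=0=c\cdot B$ holds trivially.

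The only step requiring any care is the degenerate case $B=0$, and even this is handled by the observation that $A\le B$, so $A$ must vanish whenever $B$ does. Because $B$ is finite under Assumption \ref{ass.distribution}, the convention \eqref{eq.0timesinf} is not actually invoked here; the proof is a direct rewriting of the defining inequality as a scaled equality, with the scaling factor $c=A/B$ (or $c=0$ in the trivial case).
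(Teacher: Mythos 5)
Your proof is correct and takes essentially the same route as the paper: the reverse direction follows immediately from $A=cB\le\varepsilon B$, and the forward direction takes $c=A/B$. The only difference is that you explicitly handle the degenerate case $B=0$ (where the paper simply writes $c=A/B$ without comment), which is a small but welcome extra bit of care since $A\le B$ forces $A=0$ there and $c=0$ works.
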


We define the distance function of the two Lorenz curves by
\begin{align}\label{eq.phi}
\phi\left(  p\right)  =L_{2}\left(  p\right)  -L_{1}\left(  p\right),\quad p\in[0,1].
\end{align}
Based on the equality in \eqref{eq.ALD2}, we follow \citet{zheng2018almost} and define the \emph{$F_1$-$F_2$ Lorenz Dominance Coefficient} as follows.\footnote{In Section 2.3 of \citet{zheng2018almost}, the Lorenz dominance coefficient is defined in a slightly different way.}

\begin{definition}\label{def.LDC}
Suppose that there are two Lorenz curves $L_1$ and $L_2$ corresponding to the CDFs $F_1$ and $F_2$, respectively. The \emph{$F_1$-$F_2$ Lorenz dominance coefficient ($F_1$-$F_2$ LDC)}, denoted by $c(L_1,L_2)$, is defined as
\begin{align}\label{eq.ALDC}
    c(L_1,L_2)=\inf\left\{\varepsilon\in[0,1]:\int_{S\left(  F_{1},F_{2}\right)  }\left(  L_{2}\left(  p\right)
	-L_{1}\left(  p\right)  \right)  \mathrm{d}p\leq\varepsilon\int_{0}
	^{1}\left\vert L_{2}\left(  p\right)  -L_{1}\left(  p\right)  \right\vert
	\mathrm{d}p\right\}.
\end{align}
\end{definition}

The following lemma summarizes the properties of the LDC $c(L_1,L_2)$.

\begin{lemma}\label{lemma.c properties}
The LDC $c(L_1,L_2)$ is the smallest $\varepsilon$ in $[0,1]$ such that \eqref{eq.ALD inequality} holds, $c(L_1,L_2)=0$ if and only if $F_1$ Lorenz dominates $F_2$, and $c(L_1,L_2)=1$ implies that $F_2$ Lorenz dominates $F_1$. With \eqref{eq.0timesinf}, it follows that
\begin{align}\label{eq.LDC}
    c(L_1,L_2)=\frac{\int_{S\left(  F_{1},F_{2}\right)  }\left(  L_{2}\left(  p\right)
	-L_{1}\left(  p\right)  \right)  \mathrm{d}p}{\int_{0}^{1}\left\vert
	L_{2}\left(  p\right)  -L_{1}\left(  p\right)  \right\vert \mathrm{d}p}=\frac{\int_{0}^{1}\max\left\{  \phi\left(  p\right)  ,0\right\}
\mathrm{d}p}{\int_{0}^{1}\max\left\{  \phi\left(  p\right)  ,0\right\}
\mathrm{d}p+\int_{0}^{1}\max\left\{  -\phi\left(  p\right)  ,0\right\}
\mathrm{d}p}.
\end{align}
In addition, if $c(L_1,L_2)\in(0,1]$, then $c(L_2,L_1)=1-c(L_1,L_2)$.
\end{lemma}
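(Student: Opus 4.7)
My plan is to reduce every assertion in the lemma to the closed-form ratio for $c(L_1,L_2)$ by exploiting the pointwise decomposition $\phi = \max\{\phi,0\} - \max\{-\phi,0\}$. Under Assumption \ref{ass.distribution} the Lorenz curves $L_1$ and $L_2$ are continuous on $[0,1]$, so $\phi$ is continuous and $S(F_1,F_2)=\{p\in[0,1]:\phi(p)>0\}$ is a (relatively) open set on which $\max\{\phi,0\}=\phi$, while off $S(F_1,F_2)$ one has $\max\{\phi,0\}=0$. Therefore $\int_{S(F_1,F_2)}(L_2-L_1)\,\mathrm{d}p = \int_0^1 \max\{\phi,0\}\,\mathrm{d}p$, and together with $|\phi| = \max\{\phi,0\} + \max\{-\phi,0\}$ this establishes the second equality in \eqref{eq.LDC}.

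Write $A=\int_0^1|\phi|\,\mathrm{d}p$ and $B=\int_0^1\max\{\phi,0\}\,\mathrm{d}p$, so that $0\le B\le A$. The feasible set in \eqref{eq.ALDC} is $\{\varepsilon\in[0,1]:B\le\varepsilon A\}$. If $A>0$, this set equals $[B/A,1]$, so its infimum is attained at $B/A$; this simultaneously shows that $c(L_1,L_2)$ is the smallest feasible $\varepsilon$ and delivers the first equality in \eqref{eq.LDC}. If $A=0$, then $B=0$, every $\varepsilon\in[0,1]$ is feasible, the infimum equals $0$ and is attained at $\varepsilon=0$; reading $B/A$ via the convention $0\cdot\infty=0$ from \eqref{eq.0timesinf} again yields the formula.

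For the boundary cases, $c(L_1,L_2)=0$ iff $B=0$ iff $\max\{\phi,0\}\equiv 0$ (by continuity of $\phi$) iff $L_1\ge L_2$ on $[0,1]$, which is exactly the statement that $F_1$ Lorenz dominates $F_2$. If instead $c(L_1,L_2)=1$, we must have $A>0$ (otherwise $c=0$) and $B=A$, which forces $\int_0^1\max\{-\phi,0\}\,\mathrm{d}p = A-B=0$, hence $\phi\ge 0$ everywhere, i.e., $F_2$ Lorenz dominates $F_1$.

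For the symmetry claim, assume $c(L_1,L_2)\in(0,1]$, which in particular forces $A>0$. Swapping the roles of $L_1$ and $L_2$ replaces $\phi$ by $-\phi$, so the numerator for $c(L_2,L_1)$ becomes $\int_0^1\max\{-\phi,0\}\,\mathrm{d}p = A-B$ while the denominator is unchanged, yielding $c(L_2,L_1)=(A-B)/A=1-c(L_1,L_2)$. The only mildly delicate point in the whole argument is the degenerate $A=0$ case, which is cleanly absorbed by the convention in \eqref{eq.0timesinf}; the rest is routine bookkeeping around the positive/negative-part decomposition of $\phi$.
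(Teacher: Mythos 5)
Your proof is correct and follows essentially the same route as the paper's: both rest on the decomposition $|\phi|=\max\{\phi,0\}+\max\{-\phi,0\}$ together with the identification $\int_{S(F_1,F_2)}(L_2-L_1)\,\mathrm{d}p=\int_0^1\max\{\phi,0\}\,\mathrm{d}p$, and both handle the degenerate denominator via the convention \eqref{eq.0timesinf}. Your only real departure is stylistic: you identify the feasible set in \eqref{eq.ALDC} directly as the closed interval $[B/A,1]$, which shows in one stroke that the infimum is attained and equals $B/A$, whereas the paper reaches the same conclusion by a contradiction argument.
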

  
According to Lemma \ref{lemma.c properties}, $F_1$ $\varepsilon$-ALD $F_2$ for all $\varepsilon\in[c(L_1,L_2),1/2)$ if $c(L_1,L_2)<1/2$. On the other hand, $c(L_1,L_2)>1/2$ implies that $F_2$ $\varepsilon$-ALD $F_1$ for all $\varepsilon\in[1-c(L_1,L_2),1/2)$. Thus, $c(L_1,L_2)$ presents the almost Lorenz dominance relationship between $F_1$ and $F_2$, and provides all $\varepsilon$ for which $\varepsilon$-ALD holds.

When $L_1$ and $L_2$ cross, given some $\varepsilon\in(0,1/2)$, one interesting hypothesis is
\begin{align*}
    \mathrm{H}_0: \int_{S\left(  F_{1},F_{2}\right)  }\left(  L_{2}\left(  p\right)
	-L_{1}\left(  p\right)  \right)  \mathrm{d}p\leq\varepsilon\int_{0}
	^{1}\left\vert L_{2}\left(  p\right)  -L_{1}\left(  p\right)  \right\vert
	\mathrm{d}p.
\end{align*}
Lemma \ref{lemma.c properties} shows that it is equivalent to test the following hypothesis on $c(L_1,L_2)$:
\begin{align*}
    \mathrm{H}_0: c(L_1,L_2)\le \varepsilon.
\end{align*}

The LDC provided in Definition \ref{def.LDC} is closely related to Gini-type measures that are weighted averages of the area between the
diagonal line and Lorenz curves. Following \citet{shorrocks2002approximating}, for every possible nonnegative weighting function $\theta$, a Gini-type measure for distribution $F$ can be defined as 
\begin{align}\label{eq.Gini measure}
    I(F,\theta)=\frac{\int_0^1 [p-L_F(p)]\theta(p)\mathrm{d}p}{\int_0^1 p\theta(p)\mathrm{d}p}, \quad p\in[0,1],
\end{align}
where $L_F$ denotes the Lorenz curve associated with the distribution $F$.

We follow \citet{zheng2018almost} and denote $\mathscr{B}$ the set of all Gini-type measures for all
possible weighting functions $\theta$. 
Define
\[
\tilde{\theta}\left(  p\right)  =\theta\left(  p\right)  /\int_{0}^{1}%
p\theta\left(  p\right)  \mathrm{d}p
\]
for every possible weighting function $\theta$.
It then follows that
\[
I\left(  F,\theta\right)  =\int_{0}^{1}\left[  p-L_{F}\left(  p\right)
\right]  \tilde{\theta}\left(  p\right)  \mathrm{d}p.
\]
For
every $0<\varepsilon<1/2$, we define
\begin{align}\label{eq.B*}
\mathscr{B}^{\ast}\left(  \varepsilon\right)  =\left\{  I\left(  \cdot
,\theta\right)  :\tilde{\theta}>0\text{ and }\sup_{p\in\left[
0,1\right]  }\left\{  \tilde{\theta}\left(  p\right)  \right\}
\leq\inf_{p\in\left[  0,1\right]  }\left\{  \tilde{\theta}\left(
p\right)  \right\}  \cdot\left(  \frac{1}{\varepsilon}-1\right)  \right\}  .    
\end{align}
As discussed in \citet{zheng2018almost}, the condition in \eqref{eq.B*} essentially requires that the largest weight in weighting the distance $p-L_F(p)$ cannot be larger than $(1/\varepsilon-1)$ times the smallest weight.

Proposition 1 of \citet{zheng2018almost} shows that for every $\varepsilon\in\left(  0,1/2\right)
$, $F$ $\varepsilon$-ALD $G$ if and only if $I\left(  F,\theta\right)  \leq
I\left(  G,\theta\right)  $ for all $I\left(  \cdot,\theta\right)
\in\mathscr{B}^{\ast}\left(  \varepsilon\right)  $. 
According to Lemma \ref{lemma.c properties}, for every $\varepsilon\geq c\left(  L_{1},L_{2}\right) $ with $c\left(  L_{1},L_{2}\right) \in(0,1/2)$, $F$
$\varepsilon$-ALD $G$ and thus $I\left(  F,\theta\right)  \leq I\left(
G,\theta\right)  $ for all $I\left(  \cdot,\theta\right)  \in\mathscr{B}^{\ast
}\left(  \varepsilon\right)  $. Therefore, we have the following proposition. 

\begin{proposition}\label{prop.c vs Gini}
If $c\left(  L_{1},L_{2}\right)  \in(0,1/2)$, it follows that
$I\left(  F_1,\theta\right)  \leq I\left(  F_2,\theta\right)$ \text{ for all Gini-type measures} $I\left(\cdot,\theta\right)  \in\cup_{\varepsilon\in[c\left(  L_{1},L_{2}\right),1/2)}\mathscr{B}^{\ast}\left(  \varepsilon\right)=\mathscr{B}^{\ast}\left( c\left(  L_{1},L_{2}\right) \right)$.
If $c\left(  L_{1},L_{2}\right)=0$, it follows that \linebreak$ I\left(  F_1,\theta\right)  \leq I\left(  F_2,\theta\right) \text{ for all }   I\left(\cdot,\theta\right)  \in\cup_{\varepsilon\in(0,1/2)}\mathscr{B}^{\ast}\left(  \varepsilon\right)$.
\end{proposition}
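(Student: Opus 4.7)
The plan is to build directly on two results the excerpt has already established: Proposition 1 of \citet{zheng2018almost}, which states that $F$ $\varepsilon$-ALD $G$ is equivalent to $I(F,\theta)\le I(G,\theta)$ for every $I(\cdot,\theta)\in\mathscr{B}^{\ast}(\varepsilon)$, and Lemma \ref{lemma.c properties}, which identifies $c(L_1,L_2)$ as the smallest $\varepsilon\in[0,1]$ for which $F_1$ $\varepsilon$-ALD $F_2$ holds. Almost everything in the statement will fall out from combining these two facts; the only genuine bookkeeping is the set identity in the display.

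First I would handle the case $c(L_1,L_2)\in(0,1/2)$. By Lemma \ref{lemma.c properties}, $F_1$ $\varepsilon$-ALD $F_2$ holds for every $\varepsilon\in[c(L_1,L_2),1/2)$. Fixing such an $\varepsilon$ and applying the cited Proposition 1 of \citet{zheng2018almost} yields $I(F_1,\theta)\le I(F_2,\theta)$ for all $I(\cdot,\theta)\in\mathscr{B}^{\ast}(\varepsilon)$, and taking the union over $\varepsilon$ gives the desired inequality on $\cup_{\varepsilon\in[c(L_1,L_2),1/2)}\mathscr{B}^{\ast}(\varepsilon)$. To identify this union with $\mathscr{B}^{\ast}(c(L_1,L_2))$, I would invoke the monotonicity built into \eqref{eq.B*}: the constraint there is $\sup\tilde{\theta}\le\inf\tilde{\theta}\cdot(1/\varepsilon-1)$, and since $1/\varepsilon-1$ is strictly decreasing in $\varepsilon$, the family $\{\mathscr{B}^{\ast}(\varepsilon)\}$ is nested with $\mathscr{B}^{\ast}(\varepsilon')\subseteq\mathscr{B}^{\ast}(\varepsilon)$ whenever $\varepsilon\le\varepsilon'$. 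Hence every term in the union is contained in $\mathscr{B}^{\ast}(c(L_1,L_2))$, which is itself the $\varepsilon=c(L_1,L_2)$ term of the union, establishing equality.

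For the case $c(L_1,L_2)=0$, Lemma \ref{lemma.c properties} gives that $F_1$ Lorenz dominates $F_2$, i.e., $L_1(p)\ge L_2(p)$ for every $p\in[0,1]$. Any $I(\cdot,\theta)\in\cup_{\varepsilon\in(0,1/2)}\mathscr{B}^{\ast}(\varepsilon)$ corresponds by definition to a nonnegative normalized weight $\tilde{\theta}>0$, so the pointwise inequality integrates to
\[
\int_{0}^{1}[p-L_1(p)]\tilde{\theta}(p)\,\mathrm{d}p\le \int_{0}^{1}[p-L_2(p)]\tilde{\theta}(p)\,\mathrm{d}p,
\]
i.e., $I(F_1,\theta)\le I(F_2,\theta)$, as required.

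The main obstacle I anticipate is purely the set-theoretic identity $\cup_{\varepsilon\in[c(L_1,L_2),1/2)}\mathscr{B}^{\ast}(\varepsilon)=\mathscr{B}^{\ast}(c(L_1,L_2))$, since the monotonicity argument depends on reading the constraint in \eqref{eq.B*} correctly; otherwise the proof is essentially a direct application of results already cited in the excerpt.
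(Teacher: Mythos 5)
Your proof is correct and follows essentially the same route as the paper's: Lemma \ref{lemma.c properties} gives $\varepsilon$-ALD for every $\varepsilon\in[c(L_1,L_2),1/2)$, Proposition 1 of \citet{zheng2018almost} converts this into the Gini-measure inequality on each $\mathscr{B}^{\ast}(\varepsilon)$, and the $c(L_1,L_2)=0$ case reduces to plain Lorenz dominance. You are in fact slightly more careful than the paper, which does not spell out the nestedness argument identifying $\cup_{\varepsilon\in[c(L_1,L_2),1/2)}\mathscr{B}^{\ast}(\varepsilon)$ with $\mathscr{B}^{\ast}(c(L_1,L_2))$ and simply declares the $c(L_1,L_2)=0$ case "clearly true."
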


Proposition \ref{prop.c vs Gini} highlights the importance of LDC in relation to Gini-type measures. The smaller the LDC $c(L_1,L_2)$ is, the more Gini-type measures show higher equality in the distribution $F_1$ compared to $F_2$. With the knowledge of $c(L_1,L_2)$, we can infer the relationship between $F_1$ and $F_2$ based on a class of Gini-type measures. 

Let $P:[0,1]\to[0,1]$ be a continuous, differentiable, and concave preference function employed by a social decision maker that assigns weights to individual incomes according to their ranks in the income distribution \citep{yaari1988controversial,aaberge2001axiomatic,aaberge2009ranking}.
Following \citet{aaberge2009ranking}, we introduce the family of rank-dependent measures of inequality:
\begin{align}\label{eq.inequality measure J}
J_{P}(L_{j})=1-\int_{0}^{1}P^{\prime}\left(  t\right)  \mathrm{d}L_{j}\left(
t\right)  =1-\frac{1}{\mu_{j}}\int_{0}^{1}P^{\prime}\left(  t\right)
Q_{j}\left(  t\right)  \mathrm{d}t,
\end{align}
where $L_{j}$ is the Lorenz curve of the distribution $F_{j}$ with mean
$\mu_{j}$ and the weighting function $P^{\prime}$ is the derivative of the preference function $P$.
As illustrated by \citet{aaberge2001axiomatic}, $J_P(L)$ measures the degree of inequality within an income distribution, as represented by the Lorenz curve $L$. We follow \citet{aaberge2009ranking} and formally {define} the set $\mathrm{P}$ of preference functions by
\[
\mathrm{P}=\left\{
\begin{array}
[c]{c}%
P\in C([0,1]):P\left(  0\right)  =0,P\left(  1\right)  =1,P^{\prime}\left(
1\right)  =0,\\
P^{\prime}\left(  t\right)  >0\text{ and }P^{\left(  2\right)  }\left(
t\right)  <0\text{ for all }t\in\left(  0,1\right)
\end{array}
\right\}  .
\]

\subsection{Almost Upward Lorenz Dominance and Inequality Measure}
To unify notation, we let $L_j^1=L_j$ for $j=1,2$. For $m\ge 2$, define function
\[
L_{j}^{m}\left(  p\right)  =\int_{0}^{p}L_j^{m-1}\left(  t\right)
\mathrm{d}t,\quad p\in\left[  0,1\right]  .
\]

\citet{aaberge2009ranking} introduces the $m$th-degree upward Lorenz dominance for $m\ge2$ as illustrated in the following. 
\begin{definition}
A distribution $F_{1}$ $m$th-degree upward Lorenz
dominates a distribution $F_{2}$ if $L_{1}^{m}\left(  p\right)
\geq L_{2}^{m}\left(  p\right)  $ for all $p\in\left[  0,1\right]  $.
\end{definition}

We generalize upward Lorenz dominance in \citet{aaberge2009ranking} to almost upward Lorenz dominance. 
\begin{definition}\label{def.AmULD}
For every $\varepsilon_m\in [  0,1/2)  $, the distribution $F_{1}$
$\varepsilon_{m}$-almost $m$th-degree upward Lorenz dominates the distribution $F_{2}$
($F_{1}$ $\varepsilon_{m}$-A$m$ULD $F_{2}$), if
\begin{align}\label{eq.AmULD inequality}
\int_{0}^{1}\max\left(  L_{2}^{m}\left(  p\right)  -L_{1}^{m}\left(  p\right)
,0\right)  \mathrm{d}p\leq\varepsilon_{m}\int_{0}^{1}\left\vert L_{2}%
^{m}\left(  p\right)  -L_{1}^{m}\left(  p\right)  \right\vert \mathrm{d}p.
\end{align}    
\end{definition}

\begin{remark}
    In Definition \ref{def.AmULD}, if $\varepsilon_m=0$, then $F_1$ $m$th-degree upward Lorenz dominates $F_2$.
\end{remark}

For every $\varepsilon_{m}\in\left(  0,1/2\right)  $, we define two sets of preference functions
\[
\mathrm{P}_{m}\left(  \varepsilon_{m}\right)  =\left\{
\begin{array}
[c]{c}%
P\in\mathrm{P}:\left(  -1\right)  ^{m}P^{\left(  m+1\right)  }\geq0,P^{\left(
j\right)  }\left(  1\right)  =0\text{ for }j\in\{2,\ldots, m\},\\
\sup_{t}\left\{  \left(  -1\right)  ^{m}P^{\left(  m+1\right)  }\left(
t\right)  \right\}  \leq\inf_{t}\left\{  \left(  -1\right)  ^{m}P^{\left(
m+1\right)  }\left(  t\right)  \right\}  \cdot\left(  \frac{1}{\varepsilon
_{m}}-1\right)
\end{array}
\right\}
\]
and%
\[
\mathrm{P}_{m}^{\prime}\left(  \varepsilon_{m}\right)  =\left\{
\begin{array}
[c]{c}%
P\in\mathrm{P}:\left(  -1\right)  ^{m}P^{\left(  m+1\right)  }\geq0,\left(
-1\right)  ^{j-1}P^{\left(  j\right)  }\left(  1\right)  \geq0\text{ for
}j\in\{2,\ldots, m\},\\
\sup_{t}\left\{  \left(  -1\right)  ^{m}P^{\left(  m+1\right)  }\left(
t\right)  \right\}  \leq\inf_{t}\left\{  \left(  -1\right)  ^{m}P^{\left(
m+1\right)  }\left(  t\right)  \right\}  \cdot\left(  \frac{1}{\varepsilon
_{m}}-1\right)
\end{array}
\right\}  .
\]

The sets $\mathrm{P}_m(\varepsilon_m)$ and $\mathrm{P}'_m(\varepsilon_m)$ are constructed based on \citet[p.~248]{aaberge2009ranking}.
As discussed in \citet{aaberge2001axiomatic}, risk aversion is equivalent to second degree stochastic dominance and requires the concavity of the utility function. Theorem 2.1 of \citet{aaberge2009ranking} shows that the concavity on $P$ is related to first degree Lorenz dominance and ``inequality aversion'' \citep[Definition 2.3]{aaberge2009ranking}. The requirement $(-1)^mP^{(m+1)}\ge0$ in $\mathrm{P}_m(\varepsilon_m)$ and $\mathrm{P}'_m(\varepsilon_m)$ may be viewed as an extension of the concavity condition to higher degree upward dominance.
Moreover, as shown in the proof of Proposition 2.2 \citep[Proof of Theorem 3.1A]{aaberge2009ranking}, when $J_P(L_2)-J_P(L_1)$ is written as a function of $L^m_1-L^m_2$, $(-1)^mP^{(m+1)}$ plays a role of the weighting function. We require it to be nonnegative in $\mathrm{P}_m(\varepsilon_m)$ and $\mathrm{P}'_m(\varepsilon_m)$. 
The following proposition relates almost upward Lorenz dominance to the inequality measures $J_P$ in \eqref{eq.inequality measure J} for $P$ in $\mathrm{P}_m(\varepsilon_m)$ and $\mathrm{P}'_m(\varepsilon_m)$. 
\begin{proposition}\label{prop.AmULD}
For every $\varepsilon_{m}\in\left(  0,1/2\right)  $, if $F_{1}$
$\varepsilon_{m}$-A$m$ULD $F_{2}$, then $J_{P}\left(  L_{1}\right)  \leq
J_{P}\left(  L_{2}\right)  $ for every $P\in\mathrm{P}_{m}\left(
\varepsilon_{m}\right)  $. If $F_{1}$ $\varepsilon_{m}$-A$m$ULD $F_{2}$ and
$L_{1}^{j}\left(  1\right)  \geq L_{2}^{j}\left(  1\right)  $ for all
$j\in\left\{  2,\ldots,m\right\}  $, then $J_{P}\left(  L_{1}\right)  \leq
J_{P}\left(  L_{2}\right)  $ for every $P\in\mathrm{P}_{m}^{\prime}\left(
\varepsilon_{m}\right)  $.    
\end{proposition}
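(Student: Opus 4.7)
The plan is to rewrite $J_P(L_2) - J_P(L_1)$ as a weighted integral of $L_1^m - L_2^m$ by applying integration by parts $m$ times, and then bound the resulting expression using the constraints that define $\mathrm{P}_m(\varepsilon_m)$ (resp.\ $\mathrm{P}_m'(\varepsilon_m)$) together with the A$m$ULD inequality~\eqref{eq.AmULD inequality}.

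First I would introduce $D^j(t) = L_1^j(t) - L_2^j(t)$ for $j \geq 1$, noting that $D^j(0) = 0$ for every $j$ and that $D^1(1) = L_1(1) - L_2(1) = 0$. By the definition in~\eqref{eq.inequality measure J}, $J_P(L_2) - J_P(L_1) = \int_0^1 P'(t)\,dD^1(t)$. Iterating integration by parts, using $dD^{j+1}(t) = D^j(t)\,dt$ and the vanishing of the lower boundary terms, yields
\begin{align*}
J_P(L_2) - J_P(L_1) = \sum_{j=2}^{m} (-1)^{j-1} P^{(j)}(1)\, D^j(1) + (-1)^{m} \int_0^1 P^{(m+1)}(t)\, D^m(t)\, dt.
\end{align*}
(A quick check at $m=2,3,4$ confirms the alternating-sign pattern.)

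For $P \in \mathrm{P}_m(\varepsilon_m)$, the boundary terms vanish since $P^{(j)}(1) = 0$ for $j \in \{2,\ldots,m\}$, leaving $J_P(L_2) - J_P(L_1) = -\int_0^1 w(t)\,\psi(t)\,dt$, where $w(t) = (-1)^m P^{(m+1)}(t) \ge 0$ and $\psi(t) = L_2^m(t) - L_1^m(t)$. Setting $a = \int_0^1 \max(\psi,0)\,dt$ and $b = \int_0^1 \max(-\psi,0)\,dt$, the A$m$ULD inequality reads $a \le \varepsilon_m(a+b)$, which rearranges to $b \ge (1/\varepsilon_m - 1)\,a$; the $\mathrm{P}_m(\varepsilon_m)$ constraint gives $\sup w \le (1/\varepsilon_m - 1)\inf w$. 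Combining these,
\begin{align*}
\int_0^1 w(t)\psi(t)\,dt \;\le\; (\sup w)\,a - (\inf w)\,b \;\le\; (\inf w)\bigl[(1/\varepsilon_m - 1)\,a - b\bigr] \;\le\; 0,
\end{align*}
so $J_P(L_2) - J_P(L_1) \ge 0$, establishing the first assertion.

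For $P \in \mathrm{P}_m'(\varepsilon_m)$, the boundary terms no longer vanish, but the conditions $(-1)^{j-1}P^{(j)}(1) \ge 0$ together with the additional hypothesis $D^j(1) = L_1^j(1) - L_2^j(1) \ge 0$ for $j \in \{2,\ldots,m\}$ make each boundary contribution nonnegative; the integral term is nonnegative by the same argument as above, so the second claim follows. The main obstacle will be keeping signs consistent across the $m$-fold integration by parts and verifying that the factor $(1/\varepsilon_m - 1)$ built into the definition of $\mathrm{P}_m(\varepsilon_m)$ is precisely what matches the A$m$ULD condition so that the combined bound is nonpositive; once that algebraic match is in place, the rest is routine.
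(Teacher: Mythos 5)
Your proposal is correct and follows essentially the same route as the paper's proof: the identical $m$-fold integration by parts yielding $J_P(L_2)-J_P(L_1)=\sum_{j=2}^{m}(-1)^{j-1}P^{(j)}(1)\bigl(L_1^j(1)-L_2^j(1)\bigr)+\int_0^1(-1)^mP^{(m+1)}(t)\bigl(L_1^m(t)-L_2^m(t)\bigr)\,\mathrm{d}t$, followed by the same $\sup$/$\inf$ bound on the weighted integral combined with the A$m$ULD inequality (your bound $(\sup w)a-(\inf w)b\le 0$ is exactly the paper's $r_U,r_L$ computation in different notation), and the same sign argument for the boundary terms in the $\mathrm{P}'_m(\varepsilon_m)$ case.
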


For $m\ge 2$, following Lemma \ref{lemma.ald equivalence}, it is straightforward to show that $F_{1}$ $\varepsilon_{m}$-A$m$ULD $F_{2}$
if and only if there is $c_{m}\in\left[  0,\varepsilon_{m}\right]  $ such
that
\[
\int_{0}^{1}\max\left(  L_{2}^{m}\left(  p\right)  -L_{1}^{m}\left(  p\right)
,0\right)  \mathrm{d}p= c_{m}\int_{0}^{1}\left\vert L_{2}^{m}\left(
p\right)  -L_{1}^{m}\left(  p\right)  \right\vert \mathrm{d}p.
\]
Similar to \eqref{eq.ALDC}, we define the $m$th-degree upward Lorenz dominance coefficient $c_m^u(L_1,L_2)$, where the superscript ``$u$'' represents ``upward''.
\begin{definition}\label{def.mULDC}
For $m\ge2$, the \emph{$F_1$-$F_2$ $m$th-degree upward Lorenz dominance coefficient ($F_1$-$F_2$ $m$ULDC)} $c_m^u(L_1,L_2)$ is defined as
\begin{align}
    &c_m^u(L_1,L_2)=\notag\\
    &\inf\left\{\varepsilon_m\in[0,1]:\int_0^1\max\left(  L_{2}^{m}\left(  p\right)  -L_{1}^{m  }\left(  p\right),0  \right)
\mathrm{d}p\leq\varepsilon_{m}\int_{0}^{1}\left\vert L_{2}^{m}\left(p  \right)  -L_{1}^{m}\left(  p\right)  \right\vert
\mathrm{d}p\right\}.
\end{align}
\end{definition}

For $m\ge2$, define the difference function
\[
\phi_m^u\left(  p\right)  =L_{2}^{m}\left(  p\right) -L_{1}^{m}\left(  p\right),\quad p\in[0,1].
\]
The following lemma summarizes the properties of the $m$ULDC $c_m^u(L_1,L_2)$.

\begin{lemma}\label{lemma.cmu L properties}
The $m$ULDC $c_m^u(L_1,L_2)$ is the smallest $\varepsilon_m$ in $[0,1]$ such that \eqref{eq.AmULD inequality} holds. With \eqref{eq.0timesinf}, it follows that
\begin{align}\label{eq.mULDC}
    c_m^u(L_1,L_2)&=\frac{\int_0^1\max\left(  L_{2}^{m}\left(  p\right)  -L_{1}^{m}\left(  p\right) ,0 \right)
\mathrm{d}p}{\int_{0}^{1}\left\vert L_{2}^{m}\left(  p\right) -L_{1}^{m}\left(  p\right)  \right\vert
\mathrm{d}p}\notag\\
&=\frac{\int_{0}^{1}\max\left\{  \phi_m^u\left(  p\right)  ,0\right\}
\mathrm{d}p}{\int_{0}^{1}\max\left\{  \phi_m^u\left(  p\right)  ,0\right\}
\mathrm{d}p+\int_{0}^{1}\max\left\{  -\phi_m^u\left(  p\right)  ,0\right\}
\mathrm{d}p}.
\end{align}
In addition, if $c_m^u(L_1,L_2)\in(0,1]$, then $c_m^u(L_2,L_1)=1-c_m^u(L_1,L_2)$.
\end{lemma}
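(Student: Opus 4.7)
The plan is to mirror, step for step, the proof structure that works for Lemma 2.2 (the first degree case): everything reduces to algebraic manipulation of the positive and negative parts of $\phi_m^u$, plus a careful handling of the degenerate case where both parts vanish.

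First I would introduce the shorthand
\begin{align*}
A = \int_0^1 \max\{\phi_m^u(p),0\}\,\mathrm{d}p, \qquad B = \int_0^1 \max\{-\phi_m^u(p),0\}\,\mathrm{d}p,
\end{align*}
and observe the two identities $|\phi_m^u| = \max\{\phi_m^u,0\}+\max\{-\phi_m^u,0\}$ and $\max\{L_2^m-L_1^m,0\} = \max\{\phi_m^u,0\}$, which immediately give $\int_0^1\max(L_2^m - L_1^m,0)\,\mathrm{d}p = A$ and $\int_0^1|L_2^m-L_1^m|\,\mathrm{d}p = A+B$. This already equates the two right hand sides of \eqref{eq.mULDC} once the overall ratio is established, so the core task is to compute $c_m^u(L_1,L_2)$ in terms of $A$ and $B$.

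Next I would split into two cases according to whether $A+B>0$ or $A+B=0$. When $A+B>0$, the defining inequality \eqref{eq.AmULD inequality} reads $A \le \varepsilon_m (A+B)$, which is equivalent to $\varepsilon_m \ge A/(A+B) \in [0,1]$; since this is a non-strict inequality, the infimum over $\varepsilon_m\in[0,1]$ is attained, so $c_m^u(L_1,L_2) = A/(A+B)$ and this is in fact the minimum $\varepsilon_m$ in $[0,1]$ for which \eqref{eq.AmULD inequality} holds. When $A+B=0$, necessarily $A=0$, the inequality holds for every $\varepsilon_m\in[0,1]$, so the infimum is $0$; invoking the convention $0\cdot\infty=0$ from \eqref{eq.0timesinf} to interpret $0/0$ as $0$, the formula \eqref{eq.mULDC} still yields $0$, consistent with the value of the infimum. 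Combining the two cases gives simultaneously the first assertion (that $c_m^u(L_1,L_2)$ is the smallest such $\varepsilon_m$) and the explicit formula.

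Finally, for the symmetry statement, I would observe that swapping $L_1$ and $L_2$ replaces $\phi_m^u$ by $-\phi_m^u$, hence interchanges the roles of $A$ and $B$ while leaving $A+B$ unchanged. Assuming $c_m^u(L_1,L_2) \in (0,1]$ forces $A>0$, so $A+B>0$ and the first case applies to both $c_m^u(L_1,L_2)$ and $c_m^u(L_2,L_1)$, giving $c_m^u(L_2,L_1) = B/(A+B) = 1 - A/(A+B) = 1 - c_m^u(L_1,L_2)$. The only subtle point — and thus the step I would be most careful about — is the $A+B=0$ degeneracy, where one must quote the $0\cdot\infty=0$ convention to reconcile the quotient in \eqref{eq.mULDC} with the infimum characterisation; this is precisely parallel to the argument already made for Lemma 2.2, so no new idea is needed.
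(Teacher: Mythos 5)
Your proposal is correct and follows essentially the same route as the paper, which proves this lemma by noting it is "similar to that of Lemma \ref{lemma.c properties}": reduce everything to the positive and negative parts $A$ and $B$ of $\phi_m^u$, show the infimum is attained and equals $A/(A+B)$, invoke the convention \eqref{eq.0timesinf} when $A+B=0$, and obtain the symmetry by swapping $A$ and $B$. The only cosmetic difference is that you solve the inequality $A\le\varepsilon_m(A+B)$ directly for $\varepsilon_m$, whereas the paper establishes attainment and minimality of the infimum by contradiction; the mathematical content is identical.
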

  
According to Lemma \ref{lemma.cmu L properties}, $F_1$ $\varepsilon_m$-A$m$ULD $F_2$ for all $\varepsilon_m\in[c_m^u(L_1,L_2),1/2)$ if $c_m^u(L_1,L_2)<1/2$. On the other hand, $c_m^u(L_1,L_2)>1/2$ implies that $F_2$ $\varepsilon_m$-A$m$ULD $F_1$ for all $\varepsilon_m\in[1-c_m^u(L_1,L_2),1/2)$. Thus, $c_m^u(L_1,L_2)$ presents the degree of almost upward Lorenz dominance relationship between $F_1$ and $F_2$, and provides all $\varepsilon_m$ such that the $\varepsilon_m$-A$m$ULD holds.

\begin{proposition}\label{prop.ULDC properties}
If $c_m^u(L_1,L_2)\in(0,1/2)$, it then follows that 
$J_P(L_1)\le J_P(L_2)$ for all preference functions $P\in \cup_{\varepsilon_m\in[c_m^u(L_1,L_2),1/2)}\mathrm{P}_m(\varepsilon_m)$.    
If, in addition, 
$L_{1}^{j}\left(  1\right)  \geq L_{2}^{j}\left(  1\right)  $ for all
$j\in\left\{  2,\ldots,m\right\}  $, then $J_{P}\left(  L_{1}\right)  \leq
J_{P}\left(  L_{2}\right)  $ for all $P\in\cup_{\varepsilon_m\in[c_m^u(L_1,L_2),1/2)}\mathrm{P}_{m}^{\prime}\left(
\varepsilon_{m}\right)  $. 
\end{proposition}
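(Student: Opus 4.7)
The plan is to prove this as a direct consequence of Proposition \ref{prop.AmULD} combined with the characterization of $c_m^u(L_1,L_2)$ as the smallest $\varepsilon_m \in [0,1]$ for which inequality \eqref{eq.AmULD inequality} holds, as established in Lemma \ref{lemma.cmu L properties}. The key observation is that once we fix any $\varepsilon_m$ in the interval $[c_m^u(L_1,L_2), 1/2)$, the almost upward Lorenz dominance relation $\varepsilon_m$-A$m$ULD is guaranteed to hold, and then the inequality measure comparison follows immediately from Proposition \ref{prop.AmULD}.

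For the first claim, I would fix an arbitrary $P\in\cup_{\varepsilon_m\in[c_m^u(L_1,L_2),1/2)}\mathrm{P}_m(\varepsilon_m)$, so that there exists some $\varepsilon_m^\ast \in [c_m^u(L_1,L_2), 1/2)$ with $P\in\mathrm{P}_m(\varepsilon_m^\ast)$. Since $\varepsilon_m^\ast \ge c_m^u(L_1,L_2)$ and Lemma \ref{lemma.cmu L properties} identifies $c_m^u(L_1,L_2)$ as the smallest value of $\varepsilon_m\in[0,1]$ satisfying \eqref{eq.AmULD inequality}, inequality \eqref{eq.AmULD inequality} holds at $\varepsilon_m^\ast$, i.e., $F_1$ $\varepsilon_m^\ast$-A$m$ULD $F_2$. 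Here I would need to be mildly careful that the infimum defining $c_m^u(L_1,L_2)$ is attained---but this is immediate from the fact that the quantities on both sides of \eqref{eq.AmULD inequality} are fixed real numbers and the inequality is weak, so the set of admissible $\varepsilon_m$ is closed. Now apply the first statement of Proposition \ref{prop.AmULD} with $\varepsilon_m = \varepsilon_m^\ast \in (0,1/2)$ (noting $\varepsilon_m^\ast > 0$ since $c_m^u(L_1,L_2) > 0$), yielding $J_P(L_1) \le J_P(L_2)$.

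For the second claim, the argument is identical in structure: given $P\in\cup_{\varepsilon_m\in[c_m^u(L_1,L_2),1/2)}\mathrm{P}_m'(\varepsilon_m)$, pick $\varepsilon_m^\ast \in [c_m^u(L_1,L_2),1/2)$ with $P\in\mathrm{P}_m'(\varepsilon_m^\ast)$, conclude $F_1$ $\varepsilon_m^\ast$-A$m$ULD $F_2$ via Lemma \ref{lemma.cmu L properties}, and then invoke the second statement of Proposition \ref{prop.AmULD}, which additionally requires $L_1^j(1)\ge L_2^j(1)$ for $j\in\{2,\ldots,m\}$---this is the supplementary hypothesis imposed in the proposition. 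This delivers $J_P(L_1)\le J_P(L_2)$.

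There is no substantive obstacle in this proof; the content is essentially bookkeeping. The only small subtlety worth flagging is the attainment of the infimum in Definition \ref{def.mULDC}, ensuring that $[c_m^u(L_1,L_2), 1/2)$ is the correct range of parameters for which $\varepsilon_m$-A$m$ULD actually holds, but this is a direct consequence of the closed-form expression \eqref{eq.mULDC} for $c_m^u(L_1,L_2)$ given in Lemma \ref{lemma.cmu L properties}.
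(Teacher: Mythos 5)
Your proof is correct and follows exactly the route the paper takes: the paper's own proof is the one-line observation that the result "directly follows" from Lemma \ref{lemma.cmu L properties} and Proposition \ref{prop.AmULD}, and your argument is just the careful unpacking of that, including the (correct) remark that the infimum in Definition \ref{def.mULDC} is attained so that $\varepsilon_m$-A$m$ULD genuinely holds for every $\varepsilon_m\in[c_m^u(L_1,L_2),1/2)$. No gaps.
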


Similar to Proposition \ref{prop.c vs Gini}, Proposition \ref{prop.ULDC properties} shows the importance of ULDC concerning inequality measures $J_P$. The smaller the ULDC $c_m^u(L_1,L_2)$ is, the more inequality measures show higher equality in the distribution $F_1$ compared to $F_2$. With the knowledge of $c_m^u(L_1,L_2)$, we can infer the relationship between $F_1$ and $F_2$ based on a class of inequality measures. 

\subsection{Almost Downward Lorenz Dominance and Inequality Measure}
Define function
\[
\tilde{L}_{j}^{2}\left(  p\right)  =\int_{p}^{1}\left(  1-L_{j}\left(
t\right)  \right)  \mathrm{d}t=\frac{1}{\mu_{j}}\int_{p}^{1}\left(
t-p\right)  Q_j\left(  t\right)  \mathrm{d}t,\quad p\in\left[  0,1\right],
\]
and for $m\geq3$,
\[
\tilde{L}_{j}^{m}\left(  p\right)  =\int_{p}^{1}\tilde{L}_j^{m-1}\left(
t\right)  \mathrm{d}t,\quad p\in\left[  0,1\right]  .
\]

\citet{aaberge2009ranking} introduces the $m$th-degree downward Lorenz dominance for $m\ge2$ as illustrated in the following. 
\begin{definition}
A distribution $F_{1}$ $m$th-degree downward Lorenz
dominates a distribution $F_{2}$ if $\tilde{L}_{1}^{m}\left(  p\right)
\le \tilde{L}_{2}^{m}\left(  p\right)  $ for all $p\in\left[  0,1\right]  $.
\end{definition}

We generalize downward Lorenz dominance in \citet{aaberge2009ranking} to almost downward Lorenz dominance. 
\begin{definition}\label{def.AmDLD}
For every $\varepsilon_m\in[  0,1/2)  $, the distribution $F_{1}$
$\varepsilon_{m}$-almost $m$th-degree downward Lorenz dominates the distribution
$F_{2}$ ($F_{1}$ $\varepsilon_{m}$-A$m$DLD $F_{2}$), if
\begin{align}\label{eq.AmDLD inequality}
\int_{0}^{1}\max\left(  \tilde{L}_{1}^{m}\left(  p\right)  -\tilde{L}_{2}%
^{m}\left(  p\right)  ,0\right)  \mathrm{d}p\leq\varepsilon_{m}\int_{0}%
^{1}\left\vert \tilde{L}_{1}^{m}\left(  p\right)  -\tilde{L}_{2}^{m}\left(
p\right)  \right\vert \mathrm{d}p.
\end{align}
\end{definition}

\begin{remark}
    In Definition \ref{def.AmDLD}, if $\varepsilon_m=0$, then $F_1$ $m$th-degree downward Lorenz dominates $F_2$.
\end{remark}

For every $\varepsilon_{m}\in\left(  0,1/2\right)  $, we define
\[
\mathrm{\tilde{P}}_{m}\left(  \varepsilon_{m}\right)  =\left\{
\begin{array}
[c]{c}%
P\in\mathrm{P}:P^{\left(  m+1\right)  }\leq0,P^{\left(  j\right)  }\left(
0\right)  =0\text{ for }j\in\{2,\ldots, m\},\\
\sup_{t}\left\{  -P^{\left(  m+1\right)  }\left(  t\right)  \right\}  \leq
\inf_{t}\left\{  -P^{\left(  m+1\right)  }\left(  t\right)  \right\}
\cdot\left(  \frac{1}{\varepsilon_{m}}-1\right)
\end{array}
\right\}
\]
and%
\[
\mathrm{\tilde{P}}_{m}^{\prime}\left(  \varepsilon_{m}\right)  =\left\{
\begin{array}
[c]{c}%
P\in\mathrm{P}:P^{\left(  m+1\right)  }\leq0,P^{\left(  j\right)  }\left(
0\right)  \leq0\text{ for }j\in\{2,\ldots, m\},\\
\sup_{t}\left\{  -P^{\left(  m+1\right)  }\left(  t\right)  \right\}  \leq
\inf_{t}\left\{  -P^{\left(  m+1\right)  }\left(  t\right)  \right\}
\cdot\left(  \frac{1}{\varepsilon_{m}}-1\right)
\end{array}
\right\}  .
\]

The sets $\tilde{\mathrm{P}}_m(\varepsilon_m)$ and $\tilde{\mathrm{P}}'_m(\varepsilon_m)$ are constructed based on \citet[p.~248]{aaberge2009ranking}.
The requirement $P^{(m+1)}\le0$ in $\tilde{\mathrm{P}}_m(\varepsilon_m)$ and $\tilde{\mathrm{P}}'_m(\varepsilon_m)$ may be viewed as an extension of the concavity condition to higher degree downward dominance.
Moreover, as shown in the proof of Proposition \ref{prop.AmDLD}, when $J_P(L_2)-J_P(L_1)$ is written as a function of $\tilde{L}^m_2-\tilde{L}^m_1$, $-P^{(m+1)}$ plays a role of the weighting function. We require it to be nonnegative in $\tilde{\mathrm{P}}_m(\varepsilon_m)$ and $\tilde{\mathrm{P}}'_m(\varepsilon_m)$. The following proposition relates almost downward Lorenz dominance to the inequality measures $J_P$ in \eqref{eq.inequality measure J} for $P$ in $\tilde{\mathrm{P}}_m(\varepsilon_m)$ and $\tilde{\mathrm{P}}'_m(\varepsilon_m)$. 
\begin{proposition}\label{prop.AmDLD}
For every $\varepsilon_{m}\in\left(  0,1/2\right)  $, if $F_{1}$
$\varepsilon_{m}$-A$m$DLD $F_{2}$, then $J_{P}\left(  L_{1}\right)  \leq
J_{P}\left(  L_{2}\right)  $ for every $P\in\mathrm{\tilde{P}}_{m}\left(
\varepsilon_{m}\right)  $. If $F_{1}$ $\varepsilon_{m}$-A$m$DLD $F_{2}$ and
$\tilde{L}_{1}^{j}\left(  0\right)  \leq\tilde{L}_{2}^{j}\left(  0\right)  $
for all $j\in\left\{  2,\ldots,m\right\}  $, then $J_{P}\left(  L_{1}\right)
\leq J_{P}\left(  L_{2}\right)  $ for every $P\in\mathrm{\tilde{P}}%
_{m}^{\prime}\left(  \varepsilon_{m}\right)  $.
\end{proposition}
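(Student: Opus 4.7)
The plan is to express $J_P(L_2)-J_P(L_1)$ as a sum of endpoint terms at $t=0$ plus a weighted integral of $\tilde L_1^m-\tilde L_2^m$ against $P^{(m+1)}$, and then to exploit the A$m$DLD hypothesis together with the sup/inf bound built into $\tilde{\mathrm P}_m(\varepsilon_m)$.

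First, using the definition in \eqref{eq.inequality measure J} together with Assumption \ref{ass.distribution}, I would write $J_P(L_2)-J_P(L_1)=\int_0^1 P'(t)\bigl(L_1'(t)-L_2'(t)\bigr)\mathrm{d}t$ and integrate by parts. The boundary contribution vanishes because $L_j(0)=0$ and $L_j(1)=1$, giving $J_P(L_2)-J_P(L_1)=-\int_0^1 P''(t)\bigl(L_1(t)-L_2(t)\bigr)\mathrm{d}t$. I would then unfold the definition of $\tilde L_j^m$ via the identities $L_j(t)-1=\tfrac{\mathrm{d}}{\mathrm{d}t}\tilde L_j^2(t)$ and, for $k\ge 2$, $\tilde L_j^k(t)=-\tfrac{\mathrm{d}}{\mathrm{d}t}\tilde L_j^{k+1}(t)$, and perform $m-1$ further integrations by parts. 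Because $\tilde L_j^k(1)=0$ for every $k\ge 2$, the endpoint at $t=1$ disappears at every step, leaving only $t=0$ contributions. A careful but mechanical sign accounting — which I expect to be the main routine obstacle, since each substitution $\tilde L_j^k=-\tfrac{\mathrm{d}}{\mathrm{d}t}\tilde L_j^{k+1}$ contributes a minus sign that must cancel against the minus generated by the next integration by parts — should yield the identity $J_P(L_2)-J_P(L_1)=\sum_{j=2}^{m}P^{(j)}(0)\bigl(\tilde L_1^j(0)-\tilde L_2^j(0)\bigr)+\int_0^1 P^{(m+1)}(t)\bigl(\tilde L_1^m(t)-\tilde L_2^m(t)\bigr)\mathrm{d}t$.

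For $P\in\tilde{\mathrm P}_m(\varepsilon_m)$, the boundary sum vanishes since $P^{(j)}(0)=0$ for $j=2,\ldots,m$, and it remains to show that the integral is nonnegative. Setting $w(t)=-P^{(m+1)}(t)\ge 0$, $A=\int_0^1\max(\tilde L_2^m-\tilde L_1^m,0)\mathrm{d}t$, and $B=\int_0^1\max(\tilde L_1^m-\tilde L_2^m,0)\mathrm{d}t$, the integral equals $\int_0^1 w(t)\bigl(\tilde L_2^m(t)-\tilde L_1^m(t)\bigr)\mathrm{d}t\ge (\inf w)A-(\sup w)B$. The A$m$DLD inequality \eqref{eq.AmDLD inequality} rearranges to $B\le\varepsilon_m(A+B)$, equivalently $A\ge(1/\varepsilon_m-1)B$; combining this with $\sup w\le(\inf w)(1/\varepsilon_m-1)$ makes the lower bound $(\inf w)\bigl[A-(1/\varepsilon_m-1)B\bigr]$ nonnegative. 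The degenerate case $\inf w=0$ forces $\sup w=0$, so $w\equiv 0$ and the integral vanishes trivially.

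Finally, for $P\in\tilde{\mathrm P}_m'(\varepsilon_m)$ with the added hypothesis $\tilde L_1^j(0)\le\tilde L_2^j(0)$ for $j\in\{2,\ldots,m\}$, each boundary term $P^{(j)}(0)\bigl(\tilde L_1^j(0)-\tilde L_2^j(0)\bigr)$ is a product of two nonpositive numbers, hence nonnegative; the weighted integral is handled exactly as in the $\tilde{\mathrm P}_m$ case. Combining both contributions gives $J_P(L_1)\le J_P(L_2)$, completing the proof.
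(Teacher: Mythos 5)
Your proposal is correct and follows essentially the same route as the paper's proof: repeated integration by parts (using $\tfrac{\mathrm{d}}{\mathrm{d}t}\tilde L_j^2=L_j-1$ and $\tilde L_j^k=-\tfrac{\mathrm{d}}{\mathrm{d}t}\tilde L_j^{k+1}$, with the $t=1$ endpoints vanishing) to reach the identity $J_P(L_2)-J_P(L_1)=\sum_{j=2}^{m}P^{(j)}(0)(\tilde L_1^j(0)-\tilde L_2^j(0))-\int_0^1 P^{(m+1)}(t)(\tilde L_2^m(t)-\tilde L_1^m(t))\,\mathrm{d}t$, followed by the Leshno--Levy style sup/inf bound combined with the A$m$DLD inequality. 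Your $(\inf w)A-(\sup w)B$ formulation is just a cleaner rewriting of the paper's $r_U\int_S+r_L\int_{S^c}$ manipulation, and your handling of the boundary terms and the degenerate case $\inf w=0$ matches the paper's argument.
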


For $m\ge 2$, following Lemma \ref{lemma.ald equivalence}, it is straightforward to show that $F_{1}$ $\varepsilon_{m}$-A$m$DLD $F_{2}$
if and only if there is $c_{m}\in\left[  0,\varepsilon_{m}\right]  $ such
that
\[
\int_{0}^{1}\max\left( \tilde L_{1}^{m}\left(  p\right)  -\tilde L_{2}^{m}\left(  p\right)
,0\right)  \mathrm{d}p= c_{m}\int_{0}^{1}\left\vert \tilde L_{1}^{m}\left(
p\right)  -\tilde L_{2}^{m}\left(  p\right)  \right\vert \mathrm{d}p.
\]
Similar to \eqref{eq.ALDC}, we now define the $m$th-degree downward Lorenz dominance coefficient  $c_m^d(L_1,L_2)$, where the superscript ``$d$'' represents ``downward''.
\begin{definition}\label{def.mDLDC}
For $m\ge2$, the \emph{$F_1$-$F_2$ $m$th-degree downward Lorenz dominance coefficient ($F_1$-$F_2$ $m$DLDC)} $c_m^d(L_1,L_2)$ is defined as
\begin{align}
    &c_m^d(L_1,L_2)=\notag\\
    &\inf\left\{\varepsilon_m\in[0,1]:\int_0^1\max\left( \tilde  L_{1}^{m}\left(  p\right)  -\tilde L_{2}^{m  }\left(  p\right),0  \right)
\mathrm{d}p\leq\varepsilon_{m}\int_{0}^{1}\left\vert \tilde L_{1}^{m}\left(p  \right)  -\tilde L_{2}^{m}\left(  p\right)  \right\vert
\mathrm{d}p\right\}.
\end{align}
\end{definition}

For $m\ge2$, we define the difference function
\[
\phi_m^d\left(  p\right)  =\tilde L_{1}^{m}\left(  p\right) -\tilde L_{2}^{m}\left(  p\right),\quad p\in[0,1].
\]
The following lemma summarizes the properties of the $m$DLDC $c_m^d(L_1,L_2)$.

\begin{lemma}\label{lemma.cmd L properties}
The $m$DLDC $c_m^d(L_1,L_2)$ is the smallest $\varepsilon_m$ in $[0,1]$ such that \eqref{eq.AmDLD inequality} holds. With \eqref{eq.0timesinf}, it follows that
\begin{align}\label{eq.mDLDC}
    c_m^d(L_1,L_2)&=\frac{\int_0^1\max\left( \tilde  L_{1}^{m}\left(  p\right)  -\tilde L_{2}^{m}\left(  p\right) ,0 \right)
\mathrm{d}p}{\int_{0}^{1}\left\vert\tilde  L_{1}^{m}\left(  p\right) -\tilde L_{2}^{m}\left(  p\right)  \right\vert
\mathrm{d}p}\notag\\
&=\frac{\int_{0}^{1}\max\left\{  \phi_m^d\left(  p\right)  ,0\right\}
\mathrm{d}p}{\int_{0}^{1}\max\left\{  \phi_m^d\left(  p\right)  ,0\right\}
\mathrm{d}p+\int_{0}^{1}\max\left\{  -\phi_m^d\left(  p\right)  ,0\right\}
\mathrm{d}p}.
\end{align}
In addition, if $c_m^d(L_1,L_2)\in(0,1]$, then $c_m^d(L_2,L_1)=1-c_m^d(L_1,L_2)$.
\end{lemma}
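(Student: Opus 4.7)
The plan is to mirror the strategy used in the proof of Lemma \ref{lemma.cmu L properties}: I would reduce \eqref{eq.AmDLD inequality} to a linear inequality in $\varepsilon_m$ involving only the positive and negative parts of $\phi_m^d$, then read off the infimum directly. Specifically, set
\begin{align*}
A = \int_0^1 \max\{\phi_m^d(p),0\}\,\mathrm{d}p, \qquad B = \int_0^1 \max\{-\phi_m^d(p),0\}\,\mathrm{d}p.
\end{align*}
Since $\tilde L_1^m(p) - \tilde L_2^m(p) = \phi_m^d(p)$, one has pointwise $\max(\tilde L_1^m(p) - \tilde L_2^m(p),0) = \max(\phi_m^d(p),0)$ and $|\tilde L_1^m(p) - \tilde L_2^m(p)| = \max(\phi_m^d(p),0) + \max(-\phi_m^d(p),0)$, so \eqref{eq.AmDLD inequality} becomes the single linear constraint $A \le \varepsilon_m(A+B)$.

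I would then split into two cases to determine the infimum over $\varepsilon_m\in[0,1]$. If $A+B>0$, the admissible set is exactly $[A/(A+B),\,1]$ (using $A\le A+B$ to see $A/(A+B)\le 1$), so the infimum is attained and equals $A/(A+B)$, which matches both the middle and the rightmost expressions of \eqref{eq.mDLDC}. If $A+B=0$, then $A=B=0$, the inequality is satisfied for every $\varepsilon_m\in[0,1]$, and the infimum is $0$; this agrees with \eqref{eq.mDLDC} by interpreting $A/(A+B)$ as $A\cdot(A+B)^{-1}=0\cdot\infty=0$ under convention \eqref{eq.0timesinf}. Together these two cases simultaneously yield the first assertion (that $c_m^d(L_1,L_2)$ is the smallest admissible $\varepsilon_m$ in $[0,1]$) and the representation formula.

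Finally, for the symmetry identity, suppose $c_m^d(L_1,L_2)\in(0,1]$. The case analysis above forces $A>0$, and hence $A+B>0$, so the formula applies without any recourse to the convention. Swapping the roles of $L_1$ and $L_2$ negates $\phi_m^d$ and therefore exchanges $A$ and $B$, so the formula gives $c_m^d(L_2,L_1)=B/(A+B)=1-A/(A+B)=1-c_m^d(L_1,L_2)$. The only genuinely delicate point is the degenerate case $A+B=0$ in the representation step, where one must verify that the convention \eqref{eq.0timesinf} is applied consistently so that \eqref{eq.mDLDC} remains well-defined; every remaining step is a direct rewriting of absolute values and maxima and poses no real obstacle.
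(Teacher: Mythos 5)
Your proposal is correct and follows essentially the same route as the paper, which proves this lemma by pointing to the argument for Lemma \ref{lemma.c properties}: reduce the defining inequality to $A\le\varepsilon_m(A+B)$, identify the infimum with $A/(A+B)$ (invoking \eqref{eq.0timesinf} when $A+B=0$), and obtain the symmetry by swapping $L_1$ and $L_2$, which exchanges $A$ and $B$. Your direct characterization of the admissible set as the closed interval $[A/(A+B),1]$ is a slightly cleaner way to get attainment of the infimum than the paper's contradiction argument, but the substance is identical.
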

  
According to Lemma \ref{lemma.cmd L properties}, $F_1$ $\varepsilon_m$-A$m$DLD $F_2$ for all $\varepsilon_m\in[c_m^d(L_1,L_2),1/2)$ if $c_m^d(L_1,L_2)<1/2$. On the other hand, $c_m^d(L_1,L_2)>1/2$ implies that $F_2$ $\varepsilon_m$-A$m$DLD $F_1$ for all $\varepsilon_m\in[1-c_m^d(L_1,L_2),1/2)$. Thus, $c_m^d(L_1,L_2)$ presents the degree of almost downward Lorenz dominance relationship between $F_1$ and $F_2$, and provides all $\varepsilon_m$ such that the $\varepsilon_m$-A$m$DLD holds.

\begin{proposition}\label{prop.DLDC properties}
If $c_m^d(L_1,L_2)\in(0,1/2)$, it then follows that 
$J_P(L_1)\le J_P(L_2)$ for all preference functions $P\in \cup_{\varepsilon_m\in[c_m^d(L_1,L_2),1/2)}\tilde{\mathrm{P}}_m(\varepsilon_m)$.     
If, in addition, 
$\tilde L_{1}^{j}\left(  0\right)  \leq \tilde L_{2}^{j}\left(  0\right)  $ for all
$j\in\left\{  2,\ldots,m\right\}  $, then $J_{P}\left(  L_{1}\right)  \leq
J_{P}\left(  L_{2}\right)  $ for all $P\in\cup_{\varepsilon_m\in[c_m^d(L_1,L_2),1/2)}\tilde{\mathrm{P}}_{m}^{\prime}\left(
\varepsilon_{m}\right)  $. 
\end{proposition}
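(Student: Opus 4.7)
The plan is to combine Lemma \ref{lemma.cmd L properties} with Proposition \ref{prop.AmDLD} in a direct way, mirroring the argument that produces Proposition \ref{prop.ULDC properties} from Lemma \ref{lemma.cmu L properties} and Proposition \ref{prop.AmULD}. First I would invoke Lemma \ref{lemma.cmd L properties}, which asserts that $c_m^d(L_1,L_2)$ is the smallest $\varepsilon_m \in [0,1]$ for which the defining inequality \eqref{eq.AmDLD inequality} holds. Under the standing assumption $c_m^d(L_1,L_2) \in (0,1/2)$, this monotonicity statement implies that for every $\varepsilon_m \in [c_m^d(L_1,L_2), 1/2)$, the inequality \eqref{eq.AmDLD inequality} is satisfied, so $F_1$ $\varepsilon_m$-A$m$DLD $F_2$ holds for every such $\varepsilon_m$.

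Next I would appeal to the first part of Proposition \ref{prop.AmDLD}: for each $\varepsilon_m \in [c_m^d(L_1,L_2), 1/2)$, since $F_1$ $\varepsilon_m$-A$m$DLD $F_2$, we obtain $J_P(L_1) \le J_P(L_2)$ for every $P \in \tilde{\mathrm{P}}_m(\varepsilon_m)$. Taking the union of the resulting inequalities over $\varepsilon_m \in [c_m^d(L_1,L_2), 1/2)$ gives the first assertion, namely $J_P(L_1) \le J_P(L_2)$ for every $P \in \cup_{\varepsilon_m \in [c_m^d(L_1,L_2),1/2)} \tilde{\mathrm{P}}_m(\varepsilon_m)$.

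For the second assertion, I would add the hypothesis $\tilde L_1^j(0) \le \tilde L_2^j(0)$ for all $j \in \{2,\ldots,m\}$ and invoke the second part of Proposition \ref{prop.AmDLD}. This yields $J_P(L_1) \le J_P(L_2)$ for every $P \in \tilde{\mathrm{P}}_m'(\varepsilon_m)$ and every $\varepsilon_m \in [c_m^d(L_1,L_2),1/2)$. Unionizing over $\varepsilon_m$ delivers the claim.

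Since the proof reduces to concatenating two already-established results, there is no real obstacle — essentially all of the analytic work lives in Lemma \ref{lemma.cmd L properties} and Proposition \ref{prop.AmDLD}. The only detail worth confirming is that the restriction $c_m^d(L_1,L_2) \in (0,1/2)$ guarantees that the indexing interval $[c_m^d(L_1,L_2), 1/2)$ is nonempty and contained in $(0,1/2)$, so that the sets $\tilde{\mathrm{P}}_m(\varepsilon_m)$ and $\tilde{\mathrm{P}}_m'(\varepsilon_m)$ are well defined throughout the union.
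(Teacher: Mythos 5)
Your proposal is correct and follows exactly the route the paper intends: the paper's own proof simply states that the result follows directly from Proposition \ref{prop.AmDLD} (with Lemma \ref{lemma.cmd L properties} supplying the fact that every $\varepsilon_m\in[c_m^d(L_1,L_2),1/2)$ yields $\varepsilon_m$-A$m$DLD), which is precisely the concatenation you spell out. No gaps.
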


Similar to Proposition \ref{prop.c vs Gini}, Proposition \ref{prop.DLDC properties} shows the importance of DLDC concerning inequality measures $J_P$. The smaller the DLDC $c_m^d(L_1,L_2)$ is, the more inequality measures show higher equality in the distribution $F_1$ compared to $F_2$. With the knowledge of $c_m^d(L_1,L_2)$, we can infer the relationship between $F_1$ and $F_2$ based on a class of inequality measures.

\subsection{Estimation and Inference}\label{sec.LDC estimation and inference}

As discussed in the previous sections, LDCs play an important role in comparing the equality of income (or wealth) distributions between two populations. In this section, we consider the estimation and inference of LDCs. 
\subsubsection{Sampling Frameworks}\label{sec.LD sampling frameworks}

Following \citet{BDB14} and \citet{Beare2017improved}, we consider two alternative frameworks for sampling from $F_1$ and $F_2$. From $F_1$ and $F_2$, we draw identically and independently distributed (iid) samples $\{X_i^j\}_{i=1}^{n_j}$ ($j=1,2$) that satisfy the following assumption.  

\begin{assumption}\label{ass.data} \citep{BDB14,Beare2017improved}
	The iid samples $\{X_i^1\}_{i=1}^{n_1}$ and $\{X_i^2\}_{i=1}^{n_2}$ drawn from $F_1$ and $F_2$ satisfy one of the following conditions.
	\begin{enumerate}[label=(\roman*)]
		\item \emph{Independent samples:} $\{X_i^1\}_{i=1}^{n_1}$ and $\{X_i^2\}_{i=1}^{n_2}$ are mutually independent, and the sample sizes $n_1$ and $n_2$ are treated as functions of an underlying index $n\in\mathbb N$ such that as $n\to\infty$, 
  \begin{align}\label{samplesizes}
	\frac{n_1n_2}{n_1+n_2}\to\infty\quad\text{and}\quad\frac{n_1}{n_1+n_2}\to\lambda\in[0,1].
\end{align}

		\item \emph{Matched pairs:} The sample sizes $n_1$ and $n_2$ satisfy $n_1=n_2=n$, the pairs $\{(X_i^1,X_i^2)\}_{i=1}^{n}$ are iid, and the bivariate copula for those pairs has maximal correlation strictly less than one \citep[see, e.g.,][Definition 3.2]{beare2010copulas}.
	\end{enumerate}
\end{assumption}

\subsubsection{Construction of Estimator}

We define maps $\mathcal{F}_{1}:\ell^{\infty}\left(  \left[  0,1\right]  \right)
\rightarrow\mathbb{R}$, $\mathcal{F}_{2}:\ell^{\infty}\left(  \left[
0,1\right]  \right)  \rightarrow\mathbb{R}$, and $\mathcal{F}:\ell^{\infty
}\left(  \left[  0,1\right]  \right)  \rightarrow\mathbb{R}$ by
\begin{align}\label{eq.F map}
&\mathcal{F}_{1}\left(  \psi\right)     =\int_{0}^{1}\max\left\{  \psi\left(
p\right)  ,0\right\}  \mathrm{d}p,\quad \mathcal{F}_{2}\left(  \psi\right)
=\int_{0}^{1}\max\left\{  -\psi\left(  p\right)  ,0\right\}  \mathrm{d}p,\notag\\
&\text{and }\mathcal{F}\left(  \psi\right)     =\frac{\mathcal{F}_{1}\left(
\psi\right)  }{\mathcal{F}_{1}\left(  \psi\right)  +\mathcal{F}_{2}\left(
\psi\right)  },\quad\psi\in\ell^{\infty}\left(  \left[  0,1\right]  \right)  .
\end{align}
To unify notation, for $w\in\{u,d\}$, we let $\phi_1^w=\phi$ and $c_1^w(L_1,L_2)=c(L_1,L_2)$  defined in \eqref{eq.phi} and \eqref{eq.ALDC}, respectively.
Clearly, for $m\ge 1$, we have 
$c_m^w\left(  L_{1},L_{2}\right)  =\mathcal{F}\left(  \phi_m^w\right)$ by \eqref{eq.LDC}, \eqref{eq.mULDC}, and \eqref{eq.mDLDC}.
Following \citet{BDB14} and \citet{Beare2017improved},
for $j=1,2$, define the empirical CDF
\begin{align}\label{eq.empirical cdfs}
	\hat{F}_j(x)=\frac{1}{n_j}\sum_{i=1}^{n_j}
	{1}(X_i^j\leq x),\quad x\in[0,\infty),
\end{align}
the empirical quantile function
\begin{align}\label{eq.empirical quantile}
	\hat{Q}_j(p)=\inf\{x\in[0,\infty):\hat{F}_j(x)\geq p\},\quad p\in[0,1],
\end{align}
and the empirical Lorenz curve
\begin{align*}
	\hat{L}_j(p)=\frac{1}{\hat{\mu}_j}\int_0^p\hat{Q}_j(t)\mathrm{d}t,\quad p\in[0,1],
\end{align*}
where $\hat{\mu}_j=n_j^{-1}\sum_{i=1}^{n_j}X_i^j$ is the sample mean of $\{X_i^j\}_{i=1}^{n_j}$. 
Let $\hat{L}_j^1=\hat{L}_j$.
For $m\ge 2$, define
\[
\hat L_{j}^{m}\left(  p\right)  =\int_{0}^{p}\hat L_j^{m-1}\left(  t\right)
\mathrm{d}t,\quad p\in\left[  0,1\right]  .
\]
Define
\[
\hat{\tilde{L}}_{j}^{2}\left(  p\right)  =\int_{p}^{1}\left(  1-\hat L_{j}\left(
t\right)  \right)  \mathrm{d}t\text{ and for }m\ge 3, \hat{\tilde{L}}_{j}^{m}\left(  p\right)  =\int_{p}^{1}\hat{\tilde{L}}_j^{m-1}\left(
t\right)  \mathrm{d}t,\quad p\in\left[  0,1\right]  .
\]
The difference function between empirical Lorenz curves is defined by
\begin{align*}
	\hat{\phi}(p)=\hat{L}_2(p)-\hat{L}_1(p),\quad p\in[0,1].
\end{align*}
For $w\in\{u,d\}$, let $\hat{\phi}_1^w=\hat{\phi}$ for unified notation. For $m\ge 2$, define
\[
\hat{\phi}_m^u\left(  p\right)  =\hat L_{2}^{m}\left(  p\right) -\hat L_{1}^{m}\left(  p\right)\text{ and }\hat{\phi}_m^d\left(  p\right)  =\hat{\tilde{L}}_{1}^{m}\left(  p\right) -\hat{\tilde{L}}_{2}^{m}\left(  p\right),\quad p\in[0,1].
\]
The estimator of $c_m^w(L_1,L_2)$ is given by $\hat{c}_m^w(L_1,L_2)=\mathcal{F}(\hat{\phi}_m^w)$.
\subsubsection{Asymptotic Analysis}\label{secbsasym}

We now establish the consistency of $\hat{c}_m^w(L_1,L_2)$ and derive the asymptotic distribution of
\begin{align}
    \sqrt{T_n}(\hat{c}_m^w(L_1,L_2)-{c}_m^w(L_1,L_2))=\sqrt{T_n}(\mathcal{F}(\hat{\phi}_m^w)-\mathcal{F}(\phi_m^w))
\end{align}
for all $m\ge1$ and $w\in\{u,d\}$, 
where $T_n=n_1n_2/(n_1+n_2)$. We follow \citet{fang2014inference} and introduce the following definition of Hadamard directional differentiability.
\begin{definition}
	\label{def.Hadamard directional}Let $\mathbb{D}$ and $\mathbb{E}$ be normed spaces. A map $\mathcal{G}:\mathbb{D}\to\mathbb{E}$ is said to be Hadamard
	directionally differentiable at $\psi\in\mathbb{D}$ tangentially to $\mathbb{D}_0\subset\mathbb{D}$ if there is a continuous
	map $\mathcal{G}_{\psi}^{\prime}:\mathbb{D}_0\rightarrow
	\mathbb{E}$ such that
	\begin{align}\label{Hadamard directional derivative}
		\lim_{n\rightarrow\infty}\left\Vert\frac{\mathcal{G}(\psi+t_{n}h_{n}%
			)-\mathcal{G}(\psi)}{t_{n}}-\mathcal{G}_{\psi}^{\prime}(h)\right\Vert
		_{\mathbb{E}}=0,
	\end{align}
	for all sequences $\{h_{n}\}\subset\mathbb{D}$ and $\{t_{n}\}\subset
	\mathbb{R}_{+}$ such that $t_{n}\downarrow0$ and $h_{n}\rightarrow h\in\mathbb{D}_0$.
\end{definition}

For every $\psi\in\ell^{\infty}\left(  \left[  0,1\right]  \right)  $, define
\begin{align}\label{eq.B0 B+}
B_{0}\left(  \psi\right)  =\left\{  p\in\left[  0,1\right]  :\psi\left(
p\right)  =0\right\}  \text{ and }B_{+}\left(  \psi\right)  =\left\{
p\in\left[  0,1\right]  :\psi\left(  p\right)  >0\right\}  .
\end{align}
Here, $B_0(\phi_m^w)$ is the contact set we need to estimate later. For more discussions on the estimation of contact sets, see, for example, \citet{linton2010improved} for an improved bootstrap test of stochastic dominance and \citet{lee2018testing} for testing functional inequalities.
By Lemma S.4.5 of \citet{fang2014inference}, we have that $\mathcal{F}_{1}$ and $\mathcal{F}_{2}$ are both Hadamard directionally
differentiable at $\phi_m^w$ tangentially to $\ell^{\infty}\left(  \left[
0,1\right]  \right)  $ with
\[
\mathcal{F}_{1\phi_m^w}^{\prime}\left(  h\right)  =\int_{B_{+}\left(  \phi_m^w\right)
}h\left(  p\right)  \mathrm{d}p+\int_{B_{0}\left(  \phi_m^w\right)  }\max\left\{
h\left(  p\right)  ,0\right\}  \mathrm{d}p
\]
and%
\[
\mathcal{F}_{2\phi_m^w}^{\prime}\left(  h\right)  =\int_{B_{+}\left(
-\phi_m^w\right)  }-h\left(  p\right)  \mathrm{d}p+\int_{B_{0}\left(
\phi_m^w\right)  }\max\left\{  -h\left(  p\right)  ,0\right\}  \mathrm{d}p
\]
for all $h\in\ell^{\infty}\left(  \left[  0,1\right]  \right)  $. 

Following \citet{Beare2017improved}, we let $\mathcal{B}$ be a centered Gaussian random element in $C(\left[0,1\right]  ^{2})$ with covariance kernel
\[
Cov\left(  \mathcal{B}\left(  u,v\right)  ,\mathcal{B(}u^{\prime},v^{\prime
})\right)  =\mathbf{C}\left(  u\wedge u^{\prime},v\wedge v^{\prime}\right)  -\mathbf{C}\left(
u,v\right)  \mathbf{C}(u^{\prime},v^{\prime}),
\]
where $\mathbf{C}(u,v)=uv$ under Assumption \ref{ass.data}(i)  and  $\mathbf C$ is the unique copula function for the pair $(X_{i}^{1},X_{i}^{2})  $ under Assumption \ref{ass.data}(ii). Let $\mathcal{B}_{1}$ and $\mathcal{B}_{2}$ be
the centered Gaussian random elements in $C(\left[  0,1\right]  )$ such that
$\mathcal{B}_{1}\left(  u\right)  =\mathcal{B}\left(  u,1\right)  $ and
$\mathcal{B}_{2}\left(  v\right)  =\mathcal{B}\left(  1,v\right)  $.
Under Assumptions \ref{ass.distribution} and \ref{ass.data}, \citet{Beare2017improved} show that 
\begin{align}\label{eq.phi weak convergence}
\sqrt{T_n}(\hat{\phi}-\phi)\leadsto \mathbb{G}    
\end{align}
in $C([0,1])$ for some random element $\mathbb{G}=\lambda^{1/2}\mathcal{L}_{2}-\left(
1-\lambda\right)  ^{1/2}\mathcal{L}_{1}$, where $\mathcal{L}_{j}$ is a centered random element of $C([0,1])$ given by
	\begin{align}\label{eq.L}
	\mathcal L_j(p)=-\int_0^pL_j''(t)\mathcal B_j(t)\mathrm{d}t+L_j(p)\int_0^1L_j''(t)\mathcal B_j(t)\mathrm{d}t,\quad p\in[0,1].
	\end{align}
\citet[][Proposition 3.1]{Beare2017improved} show that
\begin{align}\label{eq.asymptotic variance}
    Var(\mathbb{G}(p))= Var\left( \frac{(1-\lambda)^{1/2}}{\mu_1}(L_1(p)X_i^1-Q_1(p)\wedge X_i^1)-\frac{\lambda^{1/2}}{\mu_2}(L_2(p)X_i^2-Q_2(p)\wedge X_i^2)            \right)
\end{align}
for every $p\in [0,1]$. We follow \citet{Beare2017improved} and provide the following lemma. 
\begin{lemma}\label{lemma.G kernel}
For all $p,p'\in[0,1]$, it follows that
\begin{align*}
E\left[  \mathbb{G}\left(  p\right)  \mathbb{G}(  p^{\prime})
\right]  
=&\,\lambda E\left[  \mathcal{L}_{2}\left(  p\right)  \mathcal{L}_{2}(
p^{\prime})  \right]  -\mathbb{\lambda}^{1/2}\left(  1-\lambda\right)
^{1/2}E\left[  \mathcal{L}_{2}\left(  p\right)  \mathcal{L}_{1}(
p^{\prime})  \right] \\& -\mathbb{\lambda}^{1/2}\left(  1-\lambda\right)
^{1/2}E\left[  \mathcal{L}_{1}\left(  p\right)  \mathcal{L}_{2}(
p^{\prime})  \right]  
+\left(  1-\lambda\right)  E\left[  \mathcal{L}_{1}\left(  p\right)
\mathcal{L}_{1}(  p^{\prime})  \right],
\end{align*}
where for $j,j^{\prime}\in\left\{  1,2\right\}  $ and $p,p^{\prime}\in\left[
0,1\right]  $,
\begin{align*}
 &E\left[  \mathcal{L}_{j}\left(  p\right)  \mathcal{L}_{j^{\prime}}(
p^{\prime})  \right]= \\
& Cov\left(  \frac{1}{\mu_{j}}\left\{  L_{j}\left(  p\right)  X_i^{j}%
-Q_{j}\left(  p\right)  \wedge X_i^{j}\right\}  ,\frac{1}{\mu_{j^{\prime}}%
}\left\{  L_{j'}(  p^{\prime})  X_i^{j^{\prime}}-Q_{j'}(
p^{\prime})  \wedge X_i^{j^{\prime}}\right\}  \right)  .
\end{align*}
\end{lemma}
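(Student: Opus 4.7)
The first identity is bilinearity of expectation applied to $\mathbb{G}(p) = \lambda^{1/2}\mathcal{L}_2(p) - (1-\lambda)^{1/2}\mathcal{L}_1(p)$ and the analogous expression at $p'$: expanding $\mathbb{G}(p)\mathbb{G}(p')$ into four cross-terms and taking expectations yields exactly the asserted decomposition, since each $\mathcal{L}_j$ is centered. The remainder of the plan is devoted to the second identity.

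My strategy is to work from the integral representation \eqref{eq.L}. Setting $W_j(s) = \int_0^s L_j''(t)\mathcal{B}_j(t)\,\mathrm{d}t$, we have $\mathcal{L}_j(p) = -W_j(p) + L_j(p)W_j(1)$, so bilinearity gives
\begin{align*}
E[\mathcal{L}_j(p)\mathcal{L}_{j'}(p')] &= E[W_j(p)W_{j'}(p')] - L_{j'}(p')E[W_j(p)W_{j'}(1)] \\
&\quad - L_j(p)E[W_j(1)W_{j'}(p')] + L_j(p)L_{j'}(p')E[W_j(1)W_{j'}(1)].
\end{align*}
Each factor $E[W_j(s)W_{j'}(t)]$ is a Gaussian double integral equal, by Fubini, to $\int_0^s\int_0^t L_j''(u)L_{j'}''(v)\{\mathbf{C}_{jj'}(u,v) - uv\}\,\mathrm{d}u\,\mathrm{d}v$, where $\mathbf{C}_{jj'}$ is the covariance kernel of the marginal bridges read off from $\mathcal{B}$: namely $u\wedge u'$ when $j=j'=1$, $v\wedge v'$ when $j=j'=2$, and the copula $\mathbf{C}(u,v)$ when $j\ne j'$; in particular, under Assumption \ref{ass.data}(i), $\mathbf{C}(u,v)=uv$, so all cross-sample covariances vanish.

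Integration by parts in each double integral — using $L_j'(t) = Q_j(t)/\mu_j$ so that the $L_j''$ factors collapse into boundary contributions plus quantile integrals — and then collecting the four terms above produces the influence-function covariance on the right-hand side of the claim. This identification reflects that, under Assumption \ref{ass.distribution}, the Lorenz map $F_j\mapsto L_j$ is Hadamard differentiable with derivative admitting two equivalent representations: the integral form \eqref{eq.L} and the influence-function form $\mu_j^{-1}\{L_j(p)X_i^j - Q_j(p)\wedge X_i^j\}$, up to an additive constant that is immaterial for covariances. The diagonal case $(j,p)=(j',p')$ is exactly equation \eqref{eq.asymptotic variance}, already established in \citet[Proposition 3.1]{Beare2017improved}. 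The main obstacle is justifying the integration-by-parts step at the endpoints of $[0,1]$, where $L_j''$ can blow up because $f_j(Q_j(t))$ may vanish; I would handle this either by truncation and a dominated-convergence argument under the $(2+\epsilon)$-moment condition of Assumption \ref{ass.distribution}, or — more cleanly — by directly invoking the influence-function expansion of \citet{Beare2017improved} rather than starting from \eqref{eq.L}. Once that step is in hand, the extension from the diagonal to the off-diagonal case is bookkeeping: four bilinear covariances in place of one variance, with the cross-sample structure carried entirely by the copula-based kernel of $(\mathcal{B}_1,\mathcal{B}_2)$.
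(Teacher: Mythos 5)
Your opening step (bilinear expansion of $\mathbb{G}(p)\mathbb{G}(p')$ into the four $\mathcal{L}_j$ cross-moments) and your starting point \eqref{eq.L} coincide with the paper's, but you diverge at the key identification step, and that is where your argument is not yet a proof. The paper absorbs the two terms of \eqref{eq.L} into a single kernel, writing $\mathcal{L}_j(p)=\int_0^1 h_{j,p}(t)\mathcal{B}_j(t)\,\mathrm{d}t$ with $h_{j,p}(t)=\mu_j^{-1}\{L_j(p)-1(t\le p)\}Q_j'(t)$, applies Fubini to get $\int_0^1\int_0^1 h_{j,p}(t_1)h_{j',p'}(t_2)E[\mathcal{B}_j(t_1)\mathcal{B}_{j'}(t_2)]\,\mathrm{d}t_1\mathrm{d}t_2$, and then invokes the Cuadras/Hoeffding-type covariance representation (Theorem 3.1 of Lo, 2017) to equate this directly with $Cov(H_{j,p}(U),H_{j',p'}(V))$, where $H_{j,p}(u)=\int_0^u h_{j,p}$ and $(U,V)$ has joint CDF $\mathbf{C}$; evaluating $H_{j,p}(F_j(X_i^j))=\mu_j^{-1}\{L_j(p)X_i^j-Q_j(p)\wedge X_i^j\}$ then yields the claim with no integration by parts at all. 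Your route keeps the four $W_j$ cross-terms separate and proposes to reach the influence-function covariance by integrating by parts in each double integral, and the endpoint blow-up of $L_j''$ that you correctly flag as the main obstacle is precisely what the covariance identity sidesteps: it needs only the a.s.\ integrability of $h_{j,p}\mathcal{B}_j$, which the paper obtains for free from the $L^1$ weak convergence $n_j^{1/2}(\hat Q_j-Q_j)\leadsto -Q_j'\cdot\mathcal{B}_j$ established by Kaji (2017). Your fallback of invoking the influence-function expansion of Beare and Shi directly is essentially what the paper does, but as written you offer two possible repairs of the integration-by-parts step without executing either, so the single nontrivial step of the lemma remains a gap in your proposal; filling it by the Cuadras/Lo identity (or by carrying out the truncation argument in detail) is required before the rest of your bookkeeping goes through.
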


For $m\geq2$ and $w\in\{u,d\}$, define $\mathcal{I}_{m}^{w}:\ell^{\infty}\left(  \left[  0,1\right]  \right)
\rightarrow\ell^{\infty}\left(  \left[  0,1\right]  \right)  $ such that
\begin{align*}
&\mathcal{I}_{m}^{u}\left(  f\right)  \left(  p\right)  =\int_{0}^{p}\cdots
\int_{0}^{t_{3}}\int_{0}^{t_{2}}f\left(  t_{1}\right)  \mathrm{d}t_{1}\mathrm{d}t_{2}\cdots\mathrm{d}
t_{m-1} \text{ and } \\ 
&\mathcal{I}_{m}^{d}\left(  f\right)  \left(  p\right)  =\int_{p}^{1}\cdots
\int_{t_{3}}^{1}\int_{t_{2}}^{1}f\left(  t_{1}\right)  \mathrm{d}t_{1}\mathrm{d}t_{2}\cdots\mathrm{d}%
t_{m-1},\quad p\in\left[  0,1\right]  .   
\end{align*}
To unify notation, we let $\mathcal{I}_{1}^{w}:\ell^{\infty}\left(  \left[  0,1\right]  \right)
\rightarrow\ell^{\infty}\left(  \left[  0,1\right]  \right)  $ such that $\mathcal{I}_{1}^{w}(f)=f$ for $w\in\{u,d\}$. Then we obtain the following lemma.
\begin{lemma}\label{lemma.phi weak convergence}
For $m\ge1$ and $w\in\{u,d\}$, under Assumptions \ref{ass.distribution} and \ref{ass.data}, it follows that
\begin{align}\label{eq.phi_m^u weak convergence}
\sqrt{T_n}(\hat{\phi}_m^w-\phi_m^w)=\mathcal{I}_{m}^{w}\left(\sqrt{T_n}(\hat{\phi}-\phi)\right)\leadsto \mathcal{I}_{m}^{w}(\mathbb{G})    
\end{align}
with 
\begin{align}\label{eq.Var upward}
    &Var(\mathcal{I}_{m}^{u}(\mathbb{G})(p))=\notag\\
    &\int_{0}^{p}\cdots\int_{0}^{t_{3}^{\prime}}\int_{0}^{t_{2}^{\prime}}\left(  \int_{0}^{p}\cdots\int_{0}^{t_{3}}\int_{0}^{t_{2}}E\left[
\mathbb{G}(  t_{1})  \mathbb{G}(  t_{1}^{\prime
})  \right]  \mathrm{d}t_{1}\mathrm{d}t_{2}\cdots\mathrm{d}t_{m-1}\right)  \mathrm{d}t_{1}^{\prime}\mathrm{d}t_{2}^{\prime}\cdots\mathrm{d}t_{m-1}^{\prime}
\end{align}
and 
\begin{align}\label{eq.Var downward}
    &Var(\mathcal{I}_{m}^{d}(\mathbb{G})(p))=\notag\\
    &\int_{p}^{1}\cdots\int_{t_{3}^{\prime}}^1\int_{t_{2}^{\prime}}^1\left(  \int_{p}^{1}\cdots\int_{t_{3}}^1\int_{t_{2}}^1E\left[
\mathbb{G}(  t_{1})  \mathbb{G}(  t_{1}^{\prime
})  \right]  \mathrm{d}t_{1}\mathrm{d}t_{2}\cdots\mathrm{d}t_{m-1}\right)  \mathrm{d}t_{1}^{\prime}\mathrm{d}t_{2}^{\prime}\cdots\mathrm{d}t_{m-1}^{\prime}
\end{align}
for every $p\in[0,1]$.
\end{lemma}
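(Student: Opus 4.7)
The plan is to reduce the claim to a two-step argument: first establish the deterministic operator identity $\sqrt{T_n}(\hat{\phi}_m^w - \phi_m^w) = \mathcal{I}_m^w(\sqrt{T_n}(\hat{\phi} - \phi))$, then pass to the limit using the continuous mapping theorem applied to the already-available weak limit in \eqref{eq.phi weak convergence}. The variance formulas will then follow from Fubini's theorem applied to the resulting limit.

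First I would prove by induction on $m$ that $\phi_m^w = \mathcal{I}_m^w(\phi)$ and $\hat{\phi}_m^w = \mathcal{I}_m^w(\hat{\phi})$ pointwise on $[0,1]$. The base case $m=1$ is the unified-notation convention that $\mathcal{I}_1^w$ is the identity. For the upward case with $m \ge 2$, the defining recursion $L_j^m(p) = \int_0^p L_j^{m-1}(t)\,\mathrm{d}t$ yields
\[
\phi_m^u(p) = \int_0^p [L_2^{m-1}(t) - L_1^{m-1}(t)]\,\mathrm{d}t = \int_0^p \phi_{m-1}^u(t)\,\mathrm{d}t,
\]
which combined with the induction hypothesis gives $\phi_m^u = \mathcal{I}_m^u(\phi)$. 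The downward case is analogous, using $\tilde{L}_j^m(p) = \int_p^1 \tilde{L}_j^{m-1}(t)\,\mathrm{d}t$ together with the base identity $\phi_2^d(p) = \tilde{L}_1^2(p) - \tilde{L}_2^2(p) = \int_p^1 \phi(t)\,\mathrm{d}t$. Applying the same recursion to the empirical counterparts gives the hat version, and linearity of $\mathcal{I}_m^w$ yields the first equality in \eqref{eq.phi_m^u weak convergence}.

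Second, I would note that $\mathcal{I}_m^w:\ell^\infty([0,1])\to\ell^\infty([0,1])$ is bounded linear: iterated integration over a region of volume at most one produces the trivial bound $\Vert \mathcal{I}_m^w(f)\Vert_\infty \le \Vert f\Vert_\infty/(m-1)! \le \Vert f\Vert_\infty$. Hence $\mathcal{I}_m^w$ is continuous everywhere on $\ell^\infty([0,1])$, and combining this with the weak convergence \eqref{eq.phi weak convergence} via the continuous mapping theorem delivers $\mathcal{I}_m^w(\sqrt{T_n}(\hat{\phi} - \phi)) \leadsto \mathcal{I}_m^w(\mathbb{G})$, which is the second half of \eqref{eq.phi_m^u weak convergence}.

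For the variance formulas, since $\mathbb{G}$ is a centered Gaussian element of $C([0,1])$, the iterated integral $\mathcal{I}_m^w(\mathbb{G})(p)$ is itself centered with finite second moment, so $Var(\mathcal{I}_m^u(\mathbb{G})(p)) = E[\mathcal{I}_m^u(\mathbb{G})(p)^2]$. Writing this square as the product of two independent copies of the iterated integral (one in variables $t_1,\ldots,t_{m-1}$ and one in $t_1',\ldots,t_{m-1}'$) and interchanging the expectation with the $2(m-1)$-fold integration produces \eqref{eq.Var upward}; \eqref{eq.Var downward} is identical with the integration limits swapped. The only subtlety, and the step I expect to be the main obstacle, is justifying the Fubini interchange; this is routine once one observes, via Cauchy--Schwarz and the covariance kernel given in Lemma \ref{lemma.G kernel}, that $E|\mathbb{G}(t)\mathbb{G}(t')|$ is uniformly bounded on $[0,1]^2$, which makes the double integrand integrable on the compact simplex.
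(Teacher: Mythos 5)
Your proposal is correct and follows essentially the same route as the paper, whose proof of this lemma is simply an appeal to the continuous mapping theorem and Fubini's theorem; you have filled in the details (the inductive operator identity, linearity and boundedness of $\mathcal{I}_m^w$, and the uniform bound on the covariance kernel justifying the interchange) in the natural way. One small wording caution: the two factors in the square are \emph{not} independent copies of the iterated integral --- they are the same integral of the same $\mathbb{G}$ written with distinct dummy variables (otherwise the cross-expectation would vanish); your parenthetical and the resulting formula make clear this is what you intend, so the argument stands.
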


We then introduce the following proposition for the asymptotic properties of the estimator $\hat{c}_m^w(L_1,L_2)$. Recall that for $w\in\{u,d\}$, $\phi_1^w=L_2-L_1$; for all $m\ge2$, $\phi_m^u=L_2^m-L_1^m$ and $\phi_m^d=\tilde L_1^m-\tilde L_2^m$. 
\begin{proposition}\label{prop.c asymptotic limit}
Suppose that Assumptions \ref{ass.distribution} and \ref{ass.data} hold. For $m\ge 1$ and $w\in\{u,d\}$, if $\phi_m^w(p)\neq 0$  for some $p\in[0,1]$, then $\hat{c}_m^w(L_1,L_2)\to c_m^w(L_1,L_2)$ almost surely (a.s.), and
\begin{align}\label{eq.c asymptotic limit}
    \sqrt{T_n}(\hat{c}_m^w(L_1,L_2)-{c}_m^w(L_1,L_2))=\sqrt{T_n}(\mathcal{F}(\hat{\phi}_m^w)-\mathcal{F}(\phi_m^w))\leadsto \mathcal{F}'_{\phi_m^w}(\mathcal{I}_m^w(\mathbb{G})),
\end{align}
where for every $h\in\ell^{\infty}([0,1])$,
\begin{align*}
\mathcal{F}_{\phi_m^w}^{\prime}\left(  h\right)  =\frac{\mathcal{F}_{1\phi_m^w}^{\prime}\left(  h\right)  \mathcal{F}_{2}\left(
\phi_m^w\right)  -\mathcal{F}_{1}\left(  \phi_m^w\right)  \mathcal{F}_{2\phi_m^w}^{\prime
}\left(  h\right)  }{\left(  \mathcal{F}_{1}\left(  \phi_m^w\right)
+\mathcal{F}_{2}\left(  \phi_m^w\right)  \right)  ^{2}}.
\end{align*}
\end{proposition}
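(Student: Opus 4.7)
Writing $\hat{c}_m^w(L_1,L_2)=\mathcal{F}(\hat{\phi}_m^w)$ and $c_m^w(L_1,L_2)=\mathcal{F}(\phi_m^w)$, the strategy is to combine the weak convergence $\sqrt{T_n}(\hat{\phi}_m^w-\phi_m^w)\leadsto\mathcal{I}_m^w(\mathbb{G})$ supplied by Lemma \ref{lemma.phi weak convergence} with Hadamard directional differentiability of $\mathcal{F}$ at $\phi_m^w$, and then apply the functional delta method of \citet{fang2014inference}. The proof therefore reduces to (i) consistency of $\hat{c}_m^w(L_1,L_2)$, and (ii) verification of the directional derivative of $\mathcal{F}$ at $\phi_m^w$ tangentially to $\ell^\infty([0,1])$, with the formula stated in the proposition.

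\textbf{Consistency.} First I would show $\hat{\phi}_m^w\to\phi_m^w$ in $\ell^\infty([0,1])$ almost surely. Under Assumption \ref{ass.distribution}, the strong law of large numbers gives $\hat{\mu}_j\to\mu_j>0$ a.s., and the Glivenko--Cantelli theorem combined with a standard argument for Lorenz curves yields $\hat{L}_j\to L_j$ uniformly on $[0,1]$ a.s. Since the iterated integration operators $\mathcal{I}_m^w$ are bounded linear on $\ell^\infty([0,1])$, this upgrades to $\hat{\phi}_m^w\to\phi_m^w$ uniformly a.s. The maps $\mathcal{F}_1$ and $\mathcal{F}_2$ are $1$-Lipschitz in the supremum norm, so $\mathcal{F}_1(\hat{\phi}_m^w)+\mathcal{F}_2(\hat{\phi}_m^w)\to\int_0^1|\phi_m^w(p)|\mathrm{d}p$, which is strictly positive by the hypothesis that $\phi_m^w(p)\neq 0$ for some $p\in[0,1]$. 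Hence $\mathcal{F}$ is continuous at $\phi_m^w$ and $\hat{c}_m^w(L_1,L_2)\to c_m^w(L_1,L_2)$ a.s.

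\textbf{Derivative of $\mathcal{F}$.} Lemma S.4.5 of \citet{fang2014inference}, already invoked in the excerpt, gives Hadamard directional differentiability of $\mathcal{F}_1$ and $\mathcal{F}_2$ at $\phi_m^w$ with the derivatives $\mathcal{F}_{1\phi_m^w}'$ and $\mathcal{F}_{2\phi_m^w}'$ displayed above \eqref{eq.B0 B+}. Since $\mathcal{F}_1(\phi_m^w)+\mathcal{F}_2(\phi_m^w)>0$, the scalar quotient map $\rho(x,y)=x/(x+y)$ is continuously Fr\'echet differentiable in a neighborhood of $(\mathcal{F}_1(\phi_m^w),\mathcal{F}_2(\phi_m^w))$, with gradient $(y,-x)/(x+y)^2$. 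Composing $\rho$ with the Hadamard directionally differentiable map $\psi\mapsto(\mathcal{F}_1(\psi),\mathcal{F}_2(\psi))$ and applying the chain rule for Hadamard directional derivatives yields
\[
\mathcal{F}_{\phi_m^w}'(h)=\frac{\mathcal{F}_{1\phi_m^w}'(h)\,\mathcal{F}_2(\phi_m^w)-\mathcal{F}_1(\phi_m^w)\,\mathcal{F}_{2\phi_m^w}'(h)}{\bigl(\mathcal{F}_1(\phi_m^w)+\mathcal{F}_2(\phi_m^w)\bigr)^2},
\]
which is exactly the formula stated in the proposition. Continuity of $\mathcal{F}_{\phi_m^w}'$ on $\ell^\infty([0,1])$ is inherited from that of $\mathcal{F}_{1\phi_m^w}'$ and $\mathcal{F}_{2\phi_m^w}'$.

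\textbf{Conclusion and main obstacle.} With the derivative in place, the functional delta method of \citet{fang2014inference} applied to the weakly convergent sequence in \eqref{eq.phi_m^u weak convergence} delivers \eqref{eq.c asymptotic limit}. The most delicate step is the chain-rule verification: Hadamard \emph{directional} differentiability is not linear in the direction, so one has to check by hand that the composition remainder vanishes along any sequence $h_n\to h$ with $t_n\downarrow 0$. This is clean here because $\rho$ is Fr\'echet differentiable at the nonzero base point, so one can split the difference quotient for $\mathcal{F}(\phi_m^w+t_n h_n)-\mathcal{F}(\phi_m^w)$ using a first-order Taylor expansion of $\rho$ and control each piece via the Lipschitz continuity of $\mathcal{F}_1,\mathcal{F}_2$ together with the directional derivative formulas for $\mathcal{F}_1,\mathcal{F}_2$. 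Once this is in hand no further difficulty remains.
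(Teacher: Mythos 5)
Your proposal is correct and follows essentially the same route as the paper: uniform a.s.\ convergence of the (iterated) empirical Lorenz difference plus Lipschitz continuity of $\mathcal{F}_1,\mathcal{F}_2$ and positivity of the denominator for consistency, Hadamard directional differentiability of $\mathcal{F}=\mathcal{F}_1/(\mathcal{F}_1+\mathcal{F}_2)$ built on Lemma S.4.5 of \citet{fang2014inference}, and then Theorem 2.1 of \citet{fang2014inference} applied to the weak convergence from Lemma \ref{lemma.phi weak convergence}. The only cosmetic difference is that you obtain the quotient derivative via a chain rule with the Fr\'echet-differentiable outer map $\rho(x,y)=x/(x+y)$, whereas the paper assembles it from separately proved sum, product, and reciprocal rules for Hadamard directional derivatives (its Lemma \ref{lemma.directional derivative}); both yield the identical formula.
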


Proposition \ref{prop.c asymptotic limit} provides the asymptotic distribution of $\hat{c}_m^w(L_1,L_2)$ when $\phi_m^w(p)\neq0$ for some $p\in[0,1]$. We next construct a bootstrap confidence interval for $c_m^w(L_1,L_2)$ based on this asymptotic distribution. 

\subsubsection{Bootstrap Confidence Interval}\label{sec.bootstrap CI Lorenz}

Since the distribution of $\mathcal{F}'_{\phi_m^w}(\mathcal{I}_m^w(\mathbb{G}))$ in \eqref{eq.c asymptotic limit} is unknown and depends on the underlying data generating processes (DGPs), we use the bootstrap method to approximate this distribution. Because $\mathcal{F}'_{\phi_m^w}$ is nonlinear, the standard bootstrap method may not consistently approximate the distribution of $\mathcal{F}'_{\phi_m^w}(\mathcal{I}_m^w(\mathbb{G}))$ \citep{andrews2000inconsistency,bickel2012resampling,fang2014inference}. We next employ the bootstrap method of \citet{fang2014inference} to obtain a consistent approximation of the asymptotic distribution and construct valid critical values.
By Lemma \ref{lemma.G kernel}, for $j,j^{\prime}\in\left\{  1,2\right\}  $ and $p,p^{\prime}\in\left[
0,1\right]  $, the estimator for $E\left[  \mathcal{L}_{j}\left(  p\right)
\mathcal{L}_{j^{\prime}}\left(  p^{\prime}\right)  \right]  $, denoted by $\hat{E}\left[  \mathcal{L}_{j}\left(  p\right)
\mathcal{L}_{j^{\prime}}\left(  p^{\prime}\right)  \right]  $, is defined as the sample covariance of the two samples
\[
\left\{  \frac{1}{\hat{\mu}_{j}}\left(  \hat{L}_{j}(p)X_{i}^{j}-\hat{Q}%
_{j}(p)\wedge X_{i}^{j}\right)  \right\}  _{i=1}^{n_{j}}%
\text{ and }
\left\{  \frac{1}{\hat{\mu}_{j^{\prime}}}\left(  \hat{L}_{j^{\prime}%
}(p^{\prime})X_{i}^{j^{\prime}}-\hat{Q}_{j^{\prime}}(p^{\prime})\wedge
X_{i}^{j^{\prime}}\right)  \right\}  _{i=1}^{n_{j^{\prime}}}.
\]
Let $\hat{\lambda}=n_{1}/(n_{1}+n_{2})$. By Lemma \ref{lemma.G kernel}, for independent samples, we estimate
$E\left[  \mathbb{G}\left(  p\right)  \mathbb{G}\left(  p^{\prime}\right)
\right]  $ by
\begin{align}\label{eq.estimated cov independent}
\hat{E}\left[  \mathbb{G}\left(  p\right)  \mathbb{G}(p^{\prime})\right]
=(  1-\hat{\lambda})  \hat{E}\left[  \mathcal{L}_{1}\left(
p\right)  \mathcal{L}_{1}(p^{\prime})\right]  +\hat{\lambda}\hat{E}\left[
\mathcal{L}_{2}\left(  p\right)  \mathcal{L}_{2}(p^{\prime})\right] ,\end{align}
and for matched pairs, we estimate
$E\left[  \mathbb{G}\left(  p\right)  \mathbb{G}\left(  p^{\prime}\right)
\right]  $ by
\begin{align}\label{eq.estimated cov matched pairs}
\hat{E}\left[  \mathbb{G}\left(  p\right)  \mathbb{G}(p^{\prime})\right]   
=&\,(  1-\hat{\lambda})  \hat{E}\left[  \mathcal{L}_{1}\left(
p\right)  \mathcal{L}_{1}(p^{\prime})\right]  -\sqrt{\hat{\lambda}(
1-\hat{\lambda})  }\hat{E}\left[  \mathcal{L}_{1}\left(  p\right)
\mathcal{L}_{2}(p^{\prime})\right] \notag \\
&  -\sqrt{\hat{\lambda}(  1-\hat{\lambda})  }\hat{E}\left[
\mathcal{L}_{2}\left(  p\right)  \mathcal{L}_{1}(p^{\prime})\right]
+\hat{\lambda}\hat{E}\left[  \mathcal{L}_{2}\left(  p\right)  \mathcal{L}%
_{2}(p^{\prime})\right]  .
\end{align}
We let  
$\hat\sigma(p)^2=\hat{E}[  \mathbb{G}\left(  p\right) ^2]$ for all $p\in[0,1]$, and for $w\in\{u,d\}$, we let $\hat{\sigma}_{1}^{w}=\hat{\sigma}$.
Based on \eqref{eq.Var upward} and \eqref{eq.Var downward} with the estimates in \eqref{eq.estimated cov independent} and \eqref{eq.estimated cov matched pairs}, for $m\ge2$, we estimate ${Var}(\mathcal{I}_{m}^{u}(\mathbb{G})(p))$ and ${Var}(\mathcal{I}_{m}^{d}(\mathbb{G})(p))$ by
\begin{align}
    \hat{\sigma}_{m}^{u}(p)^2=\int_{0}^{p}\cdots\int_{0}^{t_{3}^{\prime}}\int_{0}^{t_{2}^{\prime}}\left(  \int_{0}^{p}\cdots\int_{0}^{t_{3}}\int_{0}^{t_{2}}\hat{E}\left[
\mathbb{G}(  t_{1})  \mathbb{G}(  t_{1}^{\prime
})  \right]  \mathrm{d}t_{1}\mathrm{d}t_{2}\cdots\mathrm{d}t_{m-1}\right)  \mathrm{d}t_{1}^{\prime}\mathrm{d}t_{2}^{\prime}\cdots\mathrm{d}t_{m-1}^{\prime}
\end{align}
and 
\begin{align}
    \hat{\sigma}_m^{d}(p)^2=\int_{p}^{1}\cdots\int_{t_{3}^{\prime}}^1\int_{t_{2}^{\prime}}^1\left(  \int_{p}^{1}\cdots\int_{t_{3}}^1\int_{t_{2}}^1\hat{E}\left[
\mathbb{G}(  t_{1})  \mathbb{G}(  t_{1}^{\prime
})  \right]  \mathrm{d}t_{1}\mathrm{d}t_{2}\cdots\mathrm{d}t_{m-1}\right)  \mathrm{d}t_{1}^{\prime}\mathrm{d}t_{2}^{\prime}\cdots\mathrm{d}t_{m-1}^{\prime},
\end{align}
respectively. 
Following \citet{Beare2017improved}, we construct the estimators of $B_{+}\left(  \phi_m^w\right)  $, $B_{+}\left(-\phi_m^w\right)  $, and $B_{0}\left(  \phi_m^w\right)  $ by
\begin{align*}
&\widehat{B_{+}\left(  \phi_m^w\right)  } =\left\{  p\in\left[  0,1\right]
:\frac{\sqrt{T_{n}}\hat{\phi}_m^w\left(  p\right)  }{\xi_{0}\vee\hat{\sigma
}_m^w\left(  p\right)  }>t_{n}\right\}  , \widehat{B_{+}\left(  -\phi_m^w\right)
}=\left\{  p\in\left[  0,1\right]  :\frac{\sqrt{T_{n}}\hat{\phi}_m^w\left(
p\right)  }{\xi_{0}\vee\hat{\sigma}_m^w\left(  p\right)  }<-t_{n}\right\}  ,\\
&\text{and }\widehat{B_{0}\left(  \phi_m^w\right)  } =\left\{  p\in\left[
0,1\right]  :\left\vert \frac{\sqrt{T_{n}}\hat{\phi}_m^w\left(  p\right)  }%
{\xi_{0}\vee\hat{\sigma}_m^w\left(  p\right)  }\right\vert \leq t_{n}\right\}  ,
\end{align*}
where $t_n$ is some tuning parameter such that $t_{n}\rightarrow\infty$ and $t_{n}/\sqrt{T_{n}}\rightarrow0$ as
$n\rightarrow\infty$ and $\xi_0$ is a small trimming parameter that bounds $\hat{\sigma}_m^w$ away from zero. In our simulations and application, $\xi_0$ is set to $0.001$. We then construct the estimators of $\mathcal{F}_{1\phi_m^w
}^{\prime}$ and $\mathcal{F}_{2\phi_m^w}^{\prime}$ by%
\begin{align*}
&\mathcal{\hat{F}}_{1\phi_m^w}^{\prime}\left(  h\right)  =\int_{\widehat
{B_{+}\left(  \phi_m^w\right)  }}h\left(  p\right)  \mathrm{d}p+\int
_{\widehat{B_{0}\left(  \phi_m^w\right)  }}\max\left\{  h\left(  p\right)
,0\right\}  \mathrm{d}p\\
&\text{and } \mathcal{\hat{F}}_{2\phi_m^w}^{\prime}\left(  h\right)
=\int_{\widehat{B_{+}\left(  -\phi_m^w\right)  }}-h\left(  p\right)
\mathrm{d}p+\int_{\widehat{B_{0}\left(  \phi_m^w\right)  }}\max\left\{  -h\left(
p\right)  ,0\right\}  \mathrm{d}p
\end{align*}
for every $h$ $\in\ell^{\infty}\left(  \left[  0,1\right]  \right)  $. The
estimator of $\mathcal{F}_{\phi_m^w}^{\prime}$ is defined by
\[
\mathcal{\hat{F}}_{\phi_m^w}^{\prime}\left(  h\right)  =\frac{\mathcal{\hat{F}%
}_{1\phi_m^w}^{\prime}\left(  h\right)  \mathcal{F}_{2}(  \hat{\phi}_m^w)
-\mathcal{F}_{1}(  \hat{\phi}_m^w)  \mathcal{\hat{F}}_{2\phi_m^w}^{\prime
}\left(  h\right)  }{\left(  \mathcal{F}_{1}(  \hat{\phi}_m^w)
+\mathcal{F}_{2}(  \hat{\phi}_m^w)  \right)  ^{2}}
\]
for every $h$ $\in\ell^{\infty}\left(  \left[  0,1\right]  \right)  $.

We follow \citet{BDB14} and \citet{Beare2017improved} to draw bootstrap samples:
For independent samples, we draw a bootstrap sample $\{\hat{X}_{i}^{j}%
\}_{i=1}^{n_j}$ identically and independently from $\{X_{i}^{j}\}_{i=1}^{n_j}$ for
$j=1,2$, where $\{\hat{X}_{i}^{1}\}_{i=1}^{n_1}$ is jointly independent of
$\{\hat{X}_{i}^{2}\}_{i=1}^{n_2}$; for matched pairs, we draw a bootstrap sample
$\{(\hat{X}_{i}^{1},\hat{X}_{i}^{2})\}_{i=1}^{n}$ identically and independently from $\{(X_{i}^{1},X_{i}^{2})\}_{i=1}^{n}$. Then following \citet{BDB14} and \citet{Beare2017improved}, for $j=1,2$, we define the bootstrap empirical CDF
\begin{align*}
	\hat{F}_j^*(x)=\frac{1}{n_j}\sum_{i=1}^{n_j}
	{1}(\hat{X}_i^j\leq x),\quad x\in[0,\infty),
\end{align*}
the bootstrap empirical quantile function
\begin{align}\label{eq.bootstrap quantile}
	\hat{Q}^*_j(p)=\inf\{x\in[0,\infty):\hat{F}^*_j(x)\geq p\},\quad p\in[0,1],
\end{align}
and the bootstrap empirical Lorenz curve
\begin{align*}
	\hat{L}^*_j(p)=\frac{1}{\hat{\mu}^*_j}\int_0^p\hat{Q}^*_j(t)\mathrm{d}t,\quad p\in[0,1],
\end{align*}
where $\hat{\mu}^*_j$ is the sample mean of $\{\hat{X}_i^j\}_{i=1}^{n_j}$. 
Let $\hat{L}_j^{1*}=\hat{L}_j^*$.
For $m\ge 2$, define
\[
\hat L_{j}^{m*}\left(  p\right)  =\int_{0}^{p}\hat L_j^{m-1*}\left(  t\right)
\mathrm{d}t,\quad p\in\left[  0,1\right]  .
\]
Define
\[
\hat{\tilde{L}}_{j}^{2*}\left(  p\right)  =\int_{p}^{1}\left(  1-\hat L_{j}^*\left(
t\right)  \right)  \mathrm{d}t,\quad p\in\left[  0,1\right],
\]
and for $m\geq3$,
\[
\hat{\tilde{L}}_{j}^{m*}\left(  p\right)  =\int_{p}^{1}\hat{\tilde{L}}^{m-1*}\left(
t\right)  \mathrm{d}t,\quad p\in\left[  0,1\right]  .
\]
The bootstrap approximation of $\phi$ is defined by
\begin{align*}
	\hat{\phi}^*(p)=\hat{L}^*_2(p)-\hat{L}^*_1(p),\quad p\in[0,1].
\end{align*}
For $w\in\{u,d\}$, let $\hat{\phi}_1^{w*}=\hat{\phi}^*$. For $m\ge 2$, define
\[
\hat{\phi}_m^{u*}\left(  p\right)  =\hat L_{2}^{m*}\left(  p\right) -\hat L_{1}^{m*}\left(  p\right)\text{ and }\hat{\phi}_m^{d*}\left(  p\right)  =\hat{\tilde{L}}_{1}^{m*}\left(  p\right) -\hat{\tilde{L}}_{2}^{m*}\left(  p\right),\quad p\in[0,1].
\]
We define the bootstrap approximation of $c_m^w(L_1,L_2)$ by 
\begin{align}
    \hat{c}_m^{w*}(L_1,L_2)=\mathcal{\hat{F}}_{\phi_m^{w}}^{\prime}(\sqrt{T_n}(\hat{\phi}_m^{w*}-\hat{\phi}_m^w)).
\end{align}
For every $\beta\in(0,1)$, let $c_{m,\beta}^w$ denote the $\beta$ quantile of the distribution of $\mathcal{F}'_{\phi_m^w}(\mathcal{I}_m^w(\mathbb{G}))$.
We construct the bootstrap approximation of $c_{m,\beta}^w$ by 
\begin{align}
    \hat{c}_{m,\beta}^w=\inf\left\{c: \mathbb{P}\left(\hat{c}_m^{w*}(L_1,L_2)\le c|\{X^1_i\}_{i=1}^{n_1},\{X^2_i\}_{i=1}^{n_2}\right)\ge \beta\right\}.
\end{align}
In practice, we approximate $\hat{c}_{m,\beta}^w$ by computing the $\beta$ quantile of the $n_B$ independently generated $\hat{c}_m^{w*}(L_1,L_2)$, where $n_B$ is chosen as large as is computationally convenient. 

For a nominal significance level $\alpha\in(0,1/2)$, we construct the $1-\alpha$ bootstrap confidence interval of $c_m^w(L_1,L_2)$ by
\begin{align}
    \mathrm{CI}_{m,1-\alpha}^w=[\hat{c}_m^w(L_1,L_2)-T_n^{-1/2}\hat{c}_{m,1-\alpha/2}^w,\hat{c}_m^w(L_1,L_2)-T_n^{-1/2}\hat{c}_{m,\alpha/2}^w].
\end{align}

\begin{proposition}\label{prop.confidence interval}
Suppose that Assumptions \ref{ass.distribution} and \ref{ass.data} hold. For $m\ge1$ and $w\in\{u,d\}$, if $c_m^w(L_1,L_2)\in(0,1)$ and the CDF of $\mathcal{F}'_{\phi_m^w}(\mathcal{I}_m^w(\mathbb{G}))$ is continuous and increasing at $c_{m,\alpha/2}^w$ and $c_{m,1-\alpha/2}^w$, then it follows that
\begin{align}
    \lim_{n\rightarrow\infty}\mathbb{P}(c_m^w(L_1,L_2)\in\mathrm{CI}_{m,1-\alpha}^w)= 1-\alpha.
\end{align}

\end{proposition}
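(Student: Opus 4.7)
The plan is to cast the proposition as an instance of the bootstrap framework of \citet{fang2014inference} for Hadamard directionally differentiable functionals, with the pieces needed for that framework already assembled earlier in the paper. Writing $\hat{c}_{m,\beta}^{w}$ for the conditional $\beta$-quantile of $\hat{c}_m^{w*}(L_1,L_2)$ and using Proposition~\ref{prop.c asymptotic limit}, the event $c_m^w(L_1,L_2)\in\mathrm{CI}_{m,1-\alpha}^w$ is equivalent to
\[
\hat{c}_{m,\alpha/2}^{w}\;\leq\;\sqrt{T_n}\bigl(\hat{c}_m^{w}(L_1,L_2)-c_m^w(L_1,L_2)\bigr)\;\leq\;\hat{c}_{m,1-\alpha/2}^{w},
\]
so the goal reduces to (i) the weak limit of the centered estimator, already identified in Proposition~\ref{prop.c asymptotic limit} as $\mathcal{F}'_{\phi_m^w}(\mathcal{I}_m^w(\mathbb{G}))$, and (ii) consistency of the bootstrap quantiles $\hat{c}_{m,\beta}^{w}$ for the true quantiles $c_{m,\beta}^{w}$ of that limit, at $\beta\in\{\alpha/2,1-\alpha/2\}$.

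For step (ii) I would invoke Theorem 3.2 of \citet{fang2014inference}. Its three hypotheses are checked as follows. First, weak convergence of the underlying process: Lemma~\ref{lemma.phi weak convergence} gives $\sqrt{T_n}(\hat\phi_m^w-\phi_m^w)\leadsto \mathcal{I}_m^w(\mathbb{G})$ in $C([0,1])$. Second, Hadamard directional differentiability of $\mathcal{F}$ at $\phi_m^w$ tangentially to $C([0,1])$: by Lemma S.4.5 of \citet{fang2014inference}, $\mathcal{F}_1$ and $\mathcal{F}_2$ are Hadamard directionally differentiable with the derivatives displayed right after \eqref{eq.B0 B+}; since $c_m^w(L_1,L_2)\in(0,1)$ forces $\mathcal{F}_1(\phi_m^w)>0$ and $\mathcal{F}_2(\phi_m^w)>0$, the denominator $\mathcal{F}_1+\mathcal{F}_2$ is bounded away from zero near $\phi_m^w$, and the chain rule for Hadamard directional derivatives delivers $\mathcal{F}'_{\phi_m^w}$ in the form of Proposition~\ref{prop.c asymptotic limit}. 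Third, bootstrap consistency for $\hat\phi_m^w$: the validity of the exchangeable bootstrap for $\sqrt{T_n}(\hat\phi^*-\hat\phi)$ in $C([0,1])$ under Assumption~\ref{ass.data} is shown in \citet{BDB14} and \citet{Beare2017improved}, and composing with the bounded linear map $\mathcal{I}_m^w$ transfers that conclusion to $\sqrt{T_n}(\hat\phi_m^{w*}-\hat\phi_m^w)\leadsto\mathcal{I}_m^w(\mathbb{G})$ conditionally in probability.

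What remains to feed into \citet{fang2014inference} is the consistency of the plug-in derivative $\hat{\mathcal{F}}'_{\phi_m^w}$ in their sense, namely $\hat{\mathcal{F}}'_{\phi_m^w}(h_n)\to \mathcal{F}'_{\phi_m^w}(h)$ whenever $h_n\to h$ in $\ell^\infty([0,1])$, in probability. This is the step I expect to be the main obstacle. Its core is showing that the set estimators $\widehat{B_+(\phi_m^w)}$, $\widehat{B_+(-\phi_m^w)}$ and $\widehat{B_0(\phi_m^w)}$ converge to their population counterparts in Lebesgue measure. Here I would use (a) uniform consistency $\|\hat\phi_m^w-\phi_m^w\|_\infty\to 0$ a.s., inherited from $\|\hat\phi-\phi\|_\infty\to 0$ via the contraction $\mathcal{I}_m^w$; (b) consistency of $\hat\sigma_m^w$ for $\sigma_m^w$ uniformly on $[0,1]$, obtained from the sample-covariance formulas \eqref{eq.estimated cov independent}–\eqref{eq.estimated cov matched pairs} together with Lemma~\ref{lemma.G kernel}; and (c) the tuning-rate conditions $t_n\to\infty$ and $t_n/\sqrt{T_n}\to 0$. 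Condition (c) together with (a) and (b) guarantees that on $\{p:\phi_m^w(p)>0\}$ the test statistic $\sqrt{T_n}\hat\phi_m^w(p)/(\xi_0\vee\hat\sigma_m^w(p))$ diverges to $+\infty$ while $t_n$ grows more slowly than $\sqrt{T_n}$, so $B_+(\phi_m^w)\subset \widehat{B_+(\phi_m^w)}$ eventually; an analogous argument treats $B_+(-\phi_m^w)$; and the leftover set $\widehat{B_0(\phi_m^w)}$ has Lebesgue measure converging to $|B_0(\phi_m^w)|$. This is precisely the contact-set argument in \citet{linton2010improved} and \citet{Beare2017improved}; I would cite those results and then combine them with dominated convergence to pass the set convergence through the integrals defining $\hat{\mathcal{F}}'_{1\phi_m^w}(h_n)$ and $\hat{\mathcal{F}}'_{2\phi_m^w}(h_n)$, yielding the required consistency for $\hat{\mathcal{F}}'_{\phi_m^w}$ after using continuity of the ratio map (again enabled by $c_m^w\in(0,1)$).

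Once the three hypotheses of \citet{fang2014inference} are verified, their Theorem 3.2 gives $\hat{c}_m^{w*}(L_1,L_2)\leadsto \mathcal{F}'_{\phi_m^w}(\mathcal{I}_m^w(\mathbb{G}))$ conditionally on the data in probability. Because the CDF of $\mathcal{F}'_{\phi_m^w}(\mathcal{I}_m^w(\mathbb{G}))$ is assumed continuous and strictly increasing at $c_{m,\alpha/2}^{w}$ and $c_{m,1-\alpha/2}^{w}$, a standard quantile-convergence lemma (e.g.\ Lemma 21.2 of \citet{van1996weak}) gives $\hat{c}_{m,\alpha/2}^{w}\to^p c_{m,\alpha/2}^{w}$ and $\hat{c}_{m,1-\alpha/2}^{w}\to^p c_{m,1-\alpha/2}^{w}$. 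Finally, joint application of Slutsky's lemma and the continuous mapping theorem to the displayed event at the start yields
\[
\mathbb{P}\bigl(c_m^w(L_1,L_2)\in\mathrm{CI}_{m,1-\alpha}^w\bigr)\;\to\;\mathbb{P}\bigl(c_{m,\alpha/2}^{w}\leq \mathcal{F}'_{\phi_m^w}(\mathcal{I}_m^w(\mathbb{G}))\leq c_{m,1-\alpha/2}^{w}\bigr)=1-\alpha,
\]
completing the proof.
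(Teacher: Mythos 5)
Your proposal follows essentially the same route as the paper: verify the Fang--Santos conditions by establishing convergence in Lebesgue measure of the estimated contact sets $\widehat{B_{+}(\phi_m^w)}$, $\widehat{B_{+}(-\phi_m^w)}$, $\widehat{B_{0}(\phi_m^w)}$ (via uniform consistency of $\hat\phi_m^w$, boundedness of $\hat\sigma_m^w$, the tuning rates, and dominated convergence), deduce consistency of the Lipschitz plug-in derivative $\hat{\mathcal{F}}'_{\phi_m^w}$, obtain bootstrap quantile consistency, and finish with Slutsky and continuous mapping. The only slight imprecision is the claim that $B_{+}(\phi_m^w)\subset\widehat{B_{+}(\phi_m^w)}$ eventually --- what actually holds (and what the paper proves) is that the Lebesgue measure of $B_{+}(\phi_m^w)\setminus\widehat{B_{+}(\phi_m^w)}$ vanishes in probability, which your subsequent dominated-convergence step correctly delivers.
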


Proposition \ref{prop.confidence interval} provides the two-sided confidence interval for $c_m^w(L_1,L_2)$. It is straightforward to extend the result to one-sided confidence intervals. 

\begin{remark}
In Proposition \ref{prop.confidence interval}, it is required that $c_m^w(L_1,L_2)\in(0,1)$, which excludes some cases where Lorenz dominance holds. In these cases, $\mathcal{F}'_{\phi_m^w}(\mathcal{I}_m^w(\mathbb{G}))=0$ or $c_m^w(L_1,L_2)$ may incur the zero denominator issue (in the case where $F_1=F_2$). As a consequence, the asymptotic results may not hold. In practice, we suggest first performing tests for Lorenz dominance and proceeding with the proposed method if the dominance relationship is rejected. 
\end{remark}

\begin{remark}
The asymptotic results in the paper are established under a fixed DGP, which may not hold uniformly over all possible DGPs in general. For example, if $c_m^w(L_1,L_2)$ drifts to zero as $T_n\to\infty$, the results in Proposition \ref{prop.confidence interval} may not hold. Then the proposed method may not work well in finite samples. In practice, when $c_m^w(L_1,L_2)$ is close to $0$ or $1$ and the sample size is relatively small, we may use the bounds of $c_m^w(L_1,L_2)$ to improve the confidence interval, that is, we may use $\mathrm{CI}_{m,1-\alpha}^w\cap [0,1]$ as the confidence interval.
\end{remark}

\section{Simulation Evidence}
\label{sec.simulation}
We run Monte Carlo simulations to provide evidence for the finite sample properties of the estimators and the bootstrap confidence intervals. 
In this section, we focus on the Lorenz dominance coefficient. Additional simulations are provided in Section \ref{sec.additional simulation} in the appendix. 
The number of bootstrap samples, $n_B$, is $1000$. The number of Monte Carlo iterations is $1000$. The nominal significance level is set to $0.05$. We present the mean (Mean), the bias (Bias), the standard error (SE), and the root mean square error (RMSE) of the estimators, as well as the coverage rate (CR) of the bootstrap confidence intervals in $1000$ iterations.

As discussed in \citet{Beare2017improved}, \citet{reed2001pareto,reed2003pareto} and \citet{toda2012double} show that income distributions can be well approximated by members of the double Pareto parametric family. We use distributions from this family to construct the simulations. The density function for the double Pareto family is 
\[
f\left(  x\right)  =\left\{
\begin{array}
[c]{c}%
\frac{\alpha\beta}{\alpha+\beta}M^{\alpha}x^{-\alpha-1}\\
\frac{\alpha\beta}{\alpha+\beta}M^{-\beta}x^{\beta-1}%
\end{array}
\right.
\begin{array}
[c]{c}%
x\geq M,\\
0\leq x<M,
\end{array}
\]
where $M>0$ is a scale parameter. For $\alpha,\beta>0$, we write $X\sim \mathrm{dP}(\alpha,\beta)$ to denote that the random variable $X$ has the double Pareto distribution with $M$ normalized to one and shape parameters $\alpha,\beta$. Assumption \ref{ass.distribution} is satisfied when $\alpha>2$. We let $X^1\sim \mathrm{dP}(3,1.5)$ and $X^2_{(\beta)}\sim \mathrm{dP}(2.1,\beta)$ with $\beta\in\{2,3,4,5\}$. Figure \ref{fig:Lorenz} displays the Lorenz curves $L_1$ and $L_{2(\beta)}$ corresponding to the above DGPs. The LDC $c(L_1,L_2)=$ $0.04703$, $0.31489$, $0.45198$, and $0.51960$, respectively, for $\beta=$ $2$, $3$, $4$, and $5$. We choose the tuning parameter from $t_n\in (0,20]$. In Section \ref{sec.tuning parameter selection}, we show how to choose $t_n$ empirically to construct the bootstrap confidence intervals. For independent samples, we let $$(n_1,n_2)\in\{(200,200),(200,500),(200,1000),(1000,2000),(10000,10000)\}.$$
For matched pairs, we let $$(n_1,n_2)\in\{(200,200),(500,500),(1000,1000),(2000,2000),(10000,10000)\}.$$

Tables \ref{tab:CR IS} and \ref{tab:CR MP} show the simulation results for independent samples and matched pairs. For all the DGPs, as $n_1$ and $n_2$ increase, Mean gets close to $c(L_1,L_2)$, Bias decreases to $0$, and both SE and RMSE decrease; under appropriate choices of $t_n$, CR approaches $95\%$.  

\begin{figure} [ht!]
\caption{Lorenz Curves for Four DGPs}
\label{fig:Lorenz}
\centering
\begin{subfigure}[b]{0.45\textwidth}
	\centering
\scalebox{0.2}{
\includegraphics{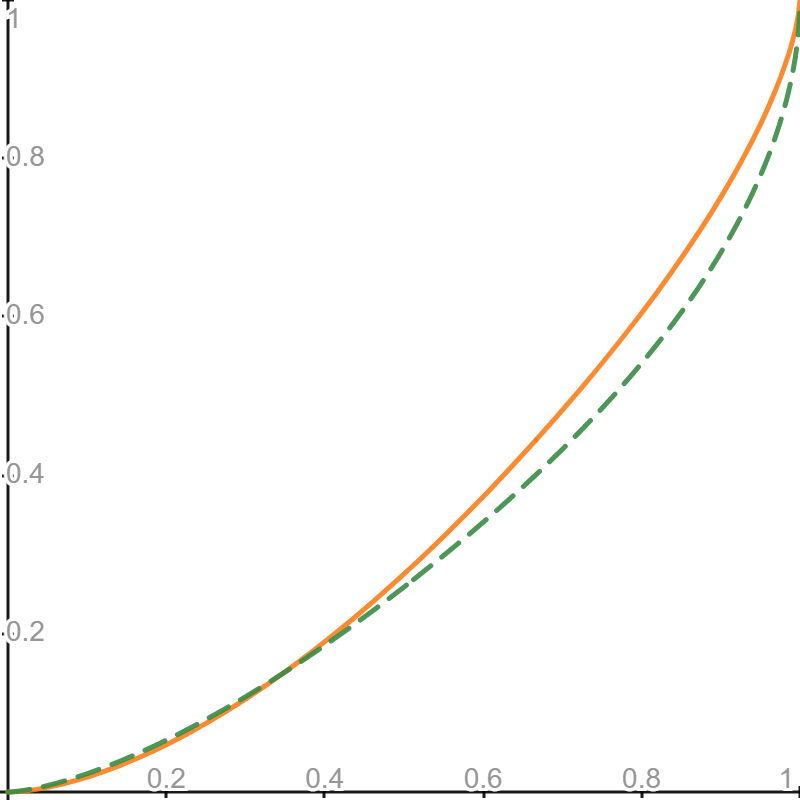}
}
\subcaption{$L_1$ (solid) and $L_{2(2)}$ (dashed)}

\end{subfigure}
\begin{subfigure}[b]{0.45\textwidth}
	\centering
\scalebox{0.2}{
\includegraphics{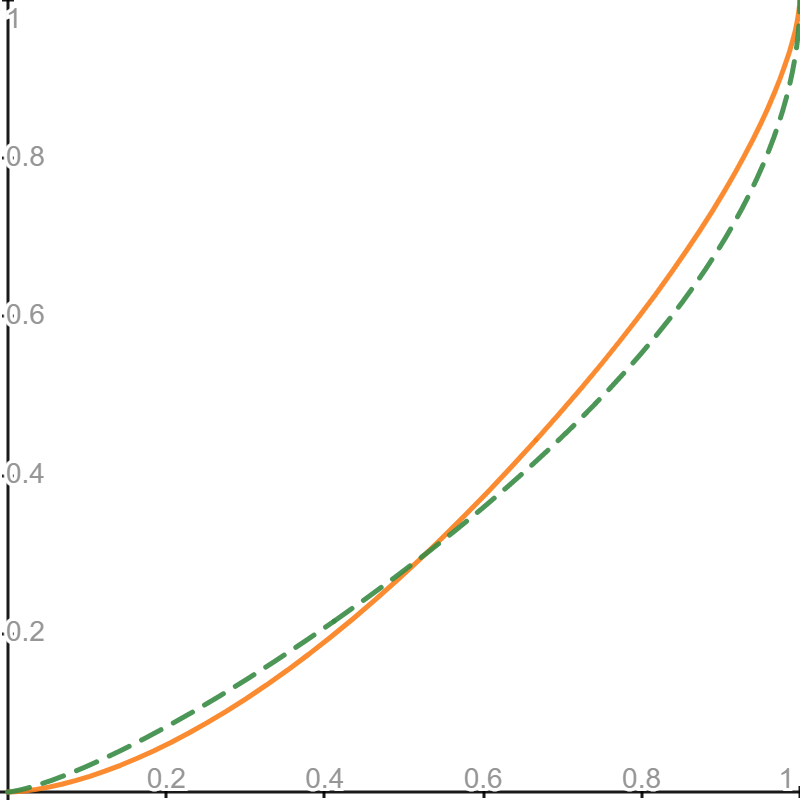}
}
\subcaption{$L_1$ (solid) and $L_{2(3)}$ (dashed)}

\end{subfigure}
\begin{subfigure}[b]{0.45\textwidth}
	\centering
\scalebox{0.2}{
\includegraphics{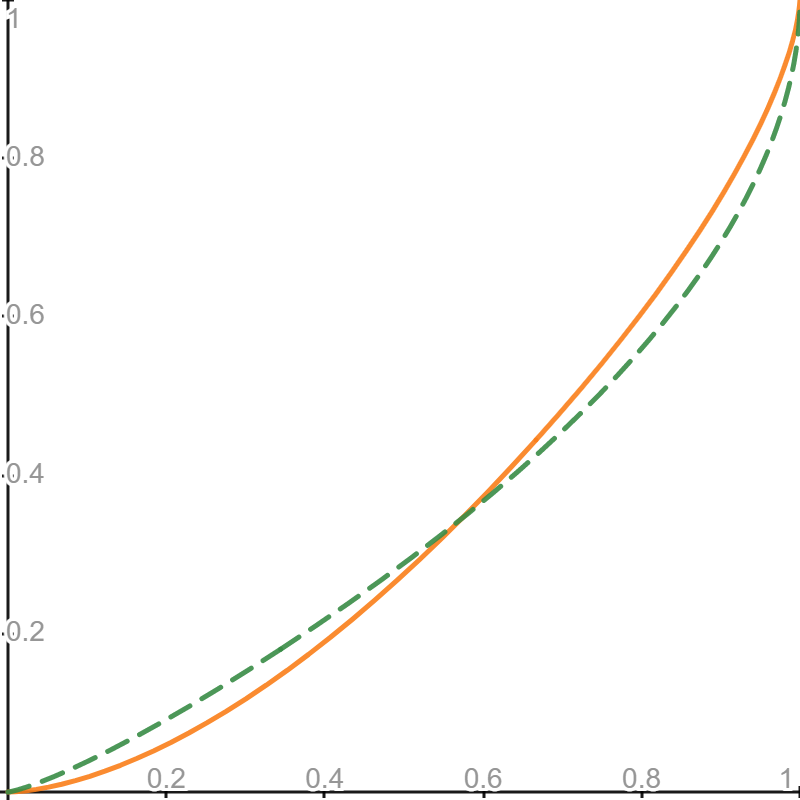}
}
\subcaption{$L_1$ (solid) and $L_{2(4)}$ (dashed)}

\end{subfigure}
\begin{subfigure}[b]{0.45\textwidth}
	\centering
\scalebox{0.2}{
\includegraphics{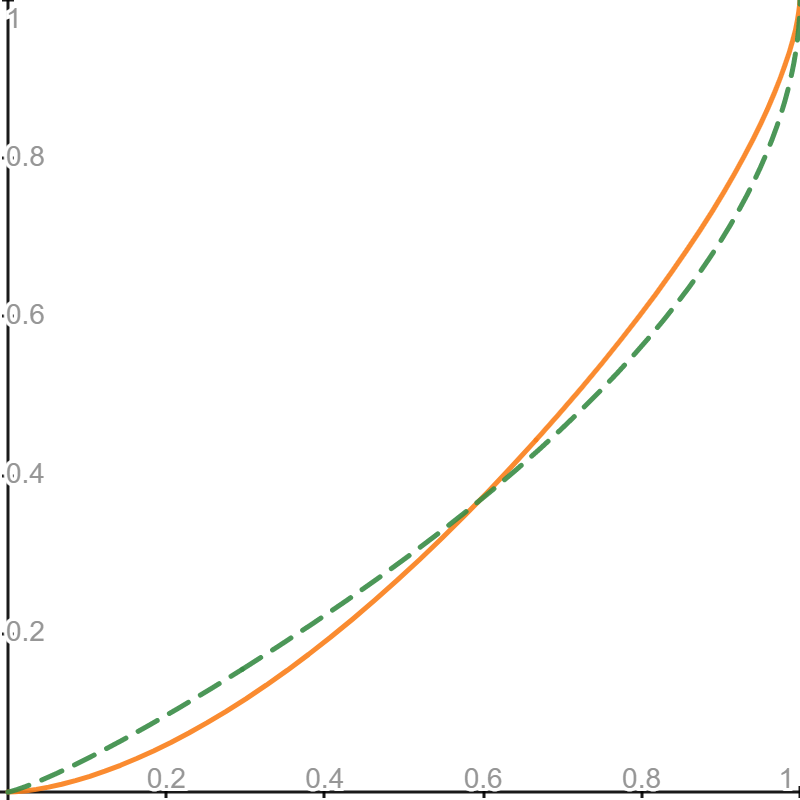}
}
\subcaption{$L_1$ (solid) and $L_{2(5)}$ (dashed)}

\end{subfigure}
\end{figure}

\begin{table}[ht!]
\centering
	\caption{Estimations and Coverage Rates for LDC (Independent Samples)}
		
		\scalebox{0.95}{
			\begin{tabular}{cccccccccc}
				\hline\hline
			{DGP} & 
			 $c(L_1,L_2)$	&	{$n_1 $}	& {$n_2 $}		 & Mean & Bias & SE & RMSE & $t_n$ & CR  \\
			 \hline
			 \multirow{5}{*}{(a)}	 & \multirow{5}{*}{0.04703} 
		               & 200 & 200   & 0.2305 & 0.1834 & 0.3088 & 0.3592 & 0.001 & 0.7040 \\
			 &         & 200 & 500   & 0.1529 & 0.1058 & 0.2420 & 0.2641 & 0.001 & 0.7270 \\
			 &         & 200 & 1000  & 0.1258 & 0.0788 & 0.2144 & 0.2284 & 0.001 & 0.7200 \\
			 &         & 1000 & 2000 & 0.0733 & 0.0263 & 0.0754 & 0.0798 & 0.001 & 0.8630 \\
			 &         & 10000& 10000& 0.0523 & 0.0053 & 0.0233 & 0.0239 & 0.001 & 0.9330\\	 
			 \hline
			 \multirow{5}{*}{(b)}	 & \multirow{5}{*}{0.31489} 
		               & 200 & 200   & 0.4753 & 0.1604 & 0.3341 & 0.3706 & 0.001 & 0.6970 \\
			 &         & 200 & 500   & 0.3986 & 0.0837 & 0.2865 & 0.2985 & 0.001 & 0.7840 \\
			 &         & 200 & 1000  & 0.3695 & 0.0546 & 0.2650 & 0.2706 & 0.001 & 0.7900 \\
			 &         & 1000 & 2000 & 0.3515 & 0.0366 & 0.1612 & 0.1653 & 0.001 & 0.9140 \\
			 &         & 10000& 10000& 0.3260 & 0.0111 & 0.0718 & 0.0727 & 0.101 & 0.9340\\	 	 
			 \hline
			 \multirow{5}{*}{(c)}	 & \multirow{5}{*}{0.45198} 
		               & 200 & 200   & 0.5680 & 0.1160 & 0.3182 & 0.3387 & 0.001 & 0.6720 \\
			 &         & 200 & 500   & 0.5056 & 0.0537 & 0.2770 & 0.2821 & 0.001 & 0.7800 \\
			 &         & 200 & 1000  & 0.4811 & 0.0292 & 0.2574 & 0.2590 & 0.001 & 0.7940 \\
			 &         & 1000 & 2000 & 0.4804 & 0.0284 & 0.1678 & 0.1702 & 0.001 & 0.9100 \\
			 &         & 10000& 10000& 0.4619 & 0.0099 & 0.0799 & 0.0805 & 0.001 & 0.9260\\	 
			 \hline
			 \multirow{5}{*}{(d)}	 & \multirow{5}{*}{0.51960} 
		               & 200 & 200   & 0.6114 & 0.0918 & 0.3060 & 0.3195 & 0.001 & 0.6690 \\
			 &         & 200 & 500   & 0.5575 & 0.0379 & 0.2674 & 0.2701 & 0.001 & 0.7740 \\
			 &         & 200 & 1000  & 0.5360 & 0.0164 & 0.2484 & 0.2489 & 0.001 & 0.7850 \\
			 &         & 1000 & 2000 & 0.5427 & 0.0231 & 0.1655 & 0.1671 & 0.001 & 0.9030 \\
			 &         & 10000& 10000& 0.5283 & 0.0087 & 0.0806 & 0.0811 & 0.001 & 0.9280\\	 
				\hline\hline                         	
			\end{tabular}
		}

		\label{tab:CR IS}

\end{table}

\begin{table}[ht!]
\centering
	\caption{Estimations and Coverage Rates for LDC (Matched Pairs)}
		
		\scalebox{0.95}{
			\begin{tabular}{cccccccccc}
				\hline\hline
			{DGP} & 
			 $c(L_1,L_2)$	&	{$n_1 $}	& {$n_2 $}		 & Mean & Bias & SE & RMSE & $t_n$ & CR  \\
			 \hline
			 \multirow{5}{*}{(a)}	 & \multirow{5}{*}{0.04703} 
		               & 200 & 200   & 0.1733 & 0.1262 & 0.2616 & 0.2904 & 0.001 & 0.7330 \\
			 &         & 500 & 500   & 0.1173 & 0.0703 & 0.1783 & 0.1917 & 0.001 & 0.8280 \\
			 &         & 1000 & 1000 & 0.0914 & 0.0444 & 0.1004 & 0.1097 & 0.001 & 0.9010 \\
			 &         & 2000 & 2000 & 0.0692 & 0.0222 & 0.0585 & 0.0626 & 0.001 & 0.9070 \\
			 &         & 10000& 10000& 0.0491 & 0.0021 & 0.0179 & 0.0181 & 0.501   & 0.9480\\	 
			 \hline
			 \multirow{5}{*}{(b)}	 & \multirow{5}{*}{0.31489} 
		               & 200 & 200   & 0.4158 & 0.1009 & 0.3068 & 0.3230 & 0.001 & 0.7620 \\
			 &         & 500 & 500   & 0.4162 & 0.1013 & 0.2334 & 0.2544 & 0.001 & 0.8770 \\
			 &         & 1000 & 1000 & 0.3907 & 0.0758 & 0.1808 & 0.1961 & 0.001 & 0.9270 \\
			 &         & 2000 & 2000 & 0.3568 & 0.0419 & 0.1434 & 0.1494 & 0.001 & 0.9250 \\
			 &         & 10000& 10000& 0.3196 & 0.0047 & 0.0602 & 0.0604 & 7.7   & 0.9500\\	 
			 \hline
			 \multirow{5}{*}{(c)}	 & \multirow{5}{*}{0.45198} 
		               & 200 & 200   & 0.5185 & 0.0665 & 0.2986 & 0.3059 & 0.001 & 0.7560 \\
			 &         & 500 & 500   & 0.5473 & 0.0953 & 0.2249 & 0.2443 & 0.001 & 0.8590 \\
			 &         & 1000 & 1000 & 0.5168 & 0.0648 & 0.1811 & 0.1923 & 0.001 & 0.9060 \\
			 &         & 2000 & 2000 & 0.4903 & 0.0383 & 0.1546 & 0.1593 & 0.001 & 0.9170 \\
			 &         & 10000& 10000& 0.4561 & 0.0041 & 0.0686 & 0.0687 & 4.4   & 0.9500\\
			 \hline
			 \multirow{5}{*}{(d)}	 & \multirow{5}{*}{0.51960} 
		               & 200 & 200   & 0.5682 & 0.0486 & 0.2898 & 0.2938 & 0.001 & 0.7520 \\
			 &         & 500 & 500   & 0.6094 & 0.0898 & 0.2144 & 0.2325 & 0.001 & 0.8510 \\
			 &         & 1000 & 1000 & 0.5752 & 0.0556 & 0.1760 & 0.1845 & 0.001 & 0.9000 \\
			 &         & 2000 & 2000 & 0.5547 & 0.0351 & 0.1549 & 0.1589 & 0.001 & 0.9060 \\
			 &         & 10000& 10000& 0.5232 & 0.0036 & 0.0700 & 0.0701 & 3.5 & 0.9500 \\
				\hline\hline                         	
			\end{tabular}
		}

		\label{tab:CR MP}

\end{table}

\section{Empirical Application}\label{sec.empirical}

In this section, we revisit the example of \citet{aaberge2021ranking} regarding the inequality growth in the United Kingdom over the past few decades, which may be related to the business cycle \citep{blundell2010consumption}. 
We use the replication data of \citet{aaberge2021ranking}, which initially came from the European Community Household Panel (ECHP) for
1995–2001, and from the European Union Statistics on Income and Living Conditions (EU-SILC) for 2005–2010. 
As described by \citet{aaberge2021ranking}, the dataset is limited to households with a male aged 25–64, and focuses on the distribution of individual equivalent income, adjusted for inflation as well as household size and composition.

Tables 3 and 4 of \citet{aaberge2021ranking} report the successful rankings based on the 3rd-degree upward and downward inverse stochastic dominances. They show that the 3rd-degree upward inverse stochastic dominance does not provide a complete ranking for several years. We apply the proposed method to estimate the third degree upward inverse stochastic dominance coefficients (3UISDCs) and construct confidence intervals for these coefficients for specified years.\footnote{See theoretical and simulation results for upward and downward inverse stochastic dominance coefficients in Sections \ref{sec.ISDC} and \ref{sec.additional simulation} in the appendix.} We compare the latter years ($F_1$) with the former years ($F_2$) shown in Table \ref{tab:CI Application 1}, and find that all the estimated 3UISDCs are small. Furthermore, the bootstrap confidence intervals are narrow. These facts provide evidence that the income distributions of the latter years may almost upward inverse stochastically dominate the income distributions of the former years, which complements the results of \citet{aaberge2021ranking}.

\begin{table}[H]
\centering
	\caption{Inequality Growth in the United Kingdom}
		
		\scalebox{1}{
			\begin{tabular}{cccccc}
				\hline\hline
			Year & 
			 $3$UISDC	& CI 	 & Year	 & $3$UISDC & CI   \\
			\hline
94--98&3.2067e-08&[0, 3.0653e-07]&96--98&2.9089e-06&[0, 1.6521e-05]\\
94--99&1.287e-07&[0, 8.9855e-07]&96--99&5.9121e-06&[0, 3.0879e-05]\\
94--00&3.0482e-07&[0, 1.7727e-06]&96--00&3.8907e-06&[0, 1.6773e-05]\\
94--01&3.9715e-07&[0, 2.5549e-06]&96--01&1.2211e-05&[0, 5.0282e-05]\\
95--98&1.0184e-07&[0, 8.0492e-07]&97--98&0.0063743&[0, 0.056701]\\
95--99&3.6099e-07&[0, 2.5261e-06]&97--99&0.0091885&[0, 0.069839]\\
95--00&6.5108e-07&[0, 3.5034e-06]&97--00&8.1315e-05&[0, 0.0004528]\\
95--01&1.7317e-06&[0, 9.5748e-06]&97--01&0.00027027&[0, 0.0011134]\\
				\hline\hline                         	
			\end{tabular}
		}

		\label{tab:CI Application 1}

\end{table}

In Table \ref{tab:CI LDC}, we provide the estimated $F_1$-$F_2$ LDCs and their corresponding confidence intervals, using the same data as in Table \ref{tab:CI Application 1}. For most of the comparisons in Table \ref{tab:CI LDC}, the conclusions are similar to those in Table \ref{tab:CI Application 1}, while the results for 97--98, 97--99, 97--00, and 97--01 are different. In these four groups of comparisons, the LDCs are all estimated to be $1$ and the confidence intervals all shrink to $[1,1]$. This may indicate Lorenz dominance between the Year 1997 and the other years. We then run the Lorenz dominance tests of \citet{Beare2017improved} for the null hypothesis that $F_2$ Lorenz dominates $F_1$, and the null can not be rejected, which is consistent with the estimations in Table \ref{tab:CI LDC}. It is worth noting that, although the asymptotic results in the paper may not hold in cases where $c_m^w(L_1,L_2)\in\{0,1\}$, the results in Table \ref{tab:CI LDC} demonstrate that the proposed method performs well in practice.

\begin{table}[H]
\centering
	\caption{$F_1$-$F_2$ Lorenz Dominance Coefficients}
		
		\scalebox{1}{
			\begin{tabular}{cccccc}
				\hline\hline
			Year & 
			 LDC	& CI 	 & Year	 & LDC & CI   \\
			\hline
94--98 & 1.1151e-05 & [0, 1.3027e-05] & 96--98 & 1.0914e-02 & [0, 9.0150e-02] \\
94--99 & 3.7292e-05 & [0, 6.4719e-05] & 96--99 & 9.1456e-02 & [0, 9.1430e-01] \\
94--00 & 9.0555e-05 & [0, 1.4074e-04] & 96--00 & 2.3386e-02 & [0, 1.8231e-01] \\
94--01 & 2.2390e-04 & [0, 3.4754e-04] & 96--01 & 5.4596e-02 & [0, 2.9047e-01] \\
95--98 & 8.3603e-05 & [0, 1.3646e-04] & 97--98 & 1 & [1, 1] \\
95--99 & 4.1599e-04 & [0, 3.0698e-03] & 97--99 & 1 & [1, 1] \\
95--00 & 7.3659e-04 & [0, 4.5734e-03] & 97--00 & 1 & [1, 1] \\
95--01 & 4.6061e-03 & [0, 2.4259e-02] & 97--01 & 1 & [1, 1] \\
				\hline\hline                         	
			\end{tabular}
		}

		\label{tab:CI LDC}

\end{table}

Note that Lorenz dominance is related to the inequality measure $J_P(L_F)$ \citep[Theorem 2.1]{aaberge2009ranking}, while inverse stochastic dominance is related to the social welfare function $W_P(F)=\mu_F(1-J_P(L_F))$ \citep[p.~645]{aaberge2021ranking}, where $L_F$ is the Lorenz curve for the distribution $F$, and $\mu_F$ is the mean for $F$. For illustration, we consider a preference function $P(t)=t^3-3t^2+3t$ with $t\in[0,1]$ such that $P\in\cap_{\varepsilon_3\in(0,1)}\mathcal{P}_3(\varepsilon_3)$, where $\mathcal{P}_3(\varepsilon_3)$ is defined as in Section \ref{sec.AUISD}. The values of $\mu_F$, $W_P(F)$, and $J_P(L_F)$ for Year 1997, 1998, 1999, 2000, and 2001 are listed in Table \ref{tab: measures}, which are consistent with the results in Tables \ref{tab:CI Application 1} and \ref{tab:CI LDC}.

\begin{table}[h]
\caption{Values of $\mu_F$, $W_P(F)$, and $J_P(L_F)$}
\label{tab: measures}
\centering
\begin{tabular}{c c c c c c}
\hline
\hline
 & 1997 & 1998 & 1999 & 2000 & 2001 \\
\hline
$\mu_F$ & 22625.08788 & 23927.0261 & 24276.49701 & 27607.35936 & 28253.11169 \\

$W_P(F)$ & 13164.54128 & 13596.19132 & 13574.86124 & 15645.45898 & 15940.32593 \\

$J_P(L_F)$ & 0.418144082 & 0.431764262 & 0.440822898 & 0.433286655 & 0.43580282 \\
\hline
\hline
\end{tabular}
\end{table}

\section{Conclusion}

In this paper, we develop a general framework for inferring three types of almost dominances: almost Lorenz dominance, almost inverse stochastic dominance, and almost stochastic dominance. A bootstrap inference procedure is introduced for the corresponding coefficients, which quantify the extent of almost dominances. Simulation studies evaluate the finite sample performance of the proposed estimators and the bootstrap confidence intervals. The proposed methods are applied to analyze the growth of inequality in the United Kingdom, revealing evidence of almost upward inverse stochastic dominance.

While traditional dominance concepts are powerful, strict requirements can limit their applicability. The almost dominance framework allows for small violations of dominance rules, and provides a more nuanced and flexible approach to conducting robust social welfare and inequality comparisons, generalizing investor preferences, and gaining a deeper understanding of income distributional changes.

\putbib
\end{bibunit}

\newpage
\doublespacing
\begin{bibunit}
\appendix

\setcounter{page}{1} 

\begin{center}
    \Large{Almost Dominance: Inference and Application\\Online Supplementary Appendix}

    \vspace{0.5cm}
    
    \large{Xiaojun Song \qquad Zhenting Sun}
\end{center}

\startcontents[sections]
\printcontents[sections]{l}{1}{\setcounter{tocdepth}{2}}

\section{Almost Inverse Stochastic Dominance}\label{sec.ISDC}

Following \citet{leshno2002preferred}, \citet{zheng2018almost}, and \citet{aaberge2021ranking}, in this section, we propose \emph{almost inverse stochastic dominances}, define the \emph{inverse stochastic dominance coefficients}, and establish an inference procedure for these coefficients. 
Let $F_1$ and $F_2$ be two distributions that satisfy Assumption \ref{ass.distribution}.
Define
\[
\Lambda_{j}^{2}\left(  p\right)  =\int_{0}^{p}F_{j}^{-1}\left(  t\right)
\mathrm{d}t,\quad p\in\left[  0,1\right]  .
\]

Let $P$ denote the social planner's preference function. 
As pointed out by \citet{yaari1988controversial} and \citet{aaberge2021ranking}, the social welfare functions are consistent with the condition of second-degree stochastic dominance if and
only if $P^{\prime}  >0$ and $P^{(2)}  <0$. We follow \citet{aaberge2021ranking} and define the set of social planner's preference functions
\[
\mathcal{P}=\left\{
\begin{array}
[c]{c}%
P\in C([0,1]):P\left(  0\right)  =0,P\left(  1\right)  =1,P^{\prime}\left(
1\right)  =0,\\
P^{\prime}\left(  t\right)  >0\text{ and }P^{\left(  2\right)  }\left(
t\right)  <0\text{ for all }t\in\left(  0,1\right)
\end{array}
\right\}  .
\]
We then define the welfare functions $W_P$ such that for every $P\in\mathcal{P}$ and every CDF $F$, 
\[
W_{P}\left(  F\right)  =\int_{0}^{1}P^{\prime}\left(  t\right)  F^{-1}\left(
t\right)  \mathrm{d}t.
\]
Theorem 1 of \citet{aaberge2001axiomatic}
demonstrates that a person who supports Axioms 1--4 in \citet{aaberge2001axiomatic} ranks Lorenz curves according to the criterion $W_P$.

\begin{remark}
\citet{aaberge2021ranking} discuss the relationship between the inequality measure $J_P$ defined in \eqref{eq.inequality measure J} and the social welfare measure $W_P$. \citet{ebert1987size} shows that for every preference function $P\in\mathcal{P}$ and every distribution function $F$ that satisfies Assumption \ref{ass.distribution}, $W_P(F)=\mu_F(1-J_P(L_F))$, where $\mu_F$ denotes the population mean for $F$ and $L_F$ denotes the Lorenz curve corresponding to $F$. As pointed out by \citet{aaberge2021ranking}, the product $\mu_F J_P(L_F)$ presents the loss in social welfare due to inequality in the distribution.
\end{remark}

\subsection{Almost Upward Inverse Stochastic Dominance and Social Welfare}\label{sec.AUISD}

For $m\geq3$, define
\begin{align*}
\Lambda_{j}^{m}\left(  p\right)   &  =\int_{0}^{p}\Lambda_{j}^{m-1}\left(
t\right)  \mathrm{d}t=\int_{0}^{p}\cdots\int_{0}^{t_{3}}\int_{0}^{t_{2}}%
F_{j}^{-1}\left(  t_{1}\right)  \mathrm{d}t_{1}\mathrm{d}t_{2}\cdots
\mathrm{d}t_{m-1}\\
&  =\frac{1}{\left(  m-2\right)  !}\int_{0}^{p}\left(  p-t\right)
^{m-2}F^{-1}\left(  t\right)  \mathrm{d}t,\quad p\in\left[  0,1\right]  .
\end{align*}
\citet{aaberge2021ranking} introduce the $m$th-degree upward inverse stochastic dominance for $m\ge3$. 

\begin{definition}
A distribution $F_{1}$ $m$th-degree upward inverse stochastically
dominates a distribution $F_{2}$ if $\Lambda_{1}^{m}\left(  p\right)
\geq\Lambda_{2}^{m}\left(  p\right)  $ for all $p\in\left[  0,1\right]  $.
\end{definition}

When $m=2$, the upward inverse stochastic dominance is the generalized Lorenz dominance.
We generalize the upward inverse stochastic dominance in \citet{aaberge2021ranking} to almost upward inverse stochastic dominance.

\begin{definition}
For every $\varepsilon_m\in[  0,1/2)  $, the CDF $F_{1}$
$\varepsilon_{m}$-almost $m$th-degree upward inverse stochastically dominates
the CDF $F_{2}$ ($F_{1}$ $\varepsilon_{m}$-A$m$UISD $F_{2}$), if
\begin{align}\label{eq.AmUISD inequality}
\int_0^1\max\left(  \Lambda_{2}^{m}\left(  p\right)  -\Lambda_{1}^{m}\left(p\right)  ,0\right)  \mathrm{d}p\leq\varepsilon_{m}\int_{0}^{1}\left\vert
\Lambda_{2}^{m}\left(  p\right)  -\Lambda_{1}^{m}\left(  p\right)  \right\vert\mathrm{d}p.
\end{align}    
\end{definition}

For every $\varepsilon_{m}\in\left(  0,1/2\right)  $, we define
\[
\mathcal{P}_{m}\left(  \varepsilon_{m}\right)  =\left\{
\begin{array}
[c]{c}%
P\in\mathcal{P}:\left(  -1\right)  ^{m-1}P^{\left(  m\right)  }\geq
0,P^{\left(  j\right)  }\left(  1\right)  =0\text{ for }j\in\{2,\ldots,m-1\},\\
\sup_{t}\left\{  \left(  -1\right)  ^{m-1}P^{\left(  m\right)  }\left(
t\right)  \right\}  \leq\inf_{t}\left\{  \left(  -1\right)  ^{m-1}P^{\left(m\right)  }\left(  t\right)  \right\}  \cdot\left(  \frac{1}{\varepsilon_{m}}-1\right)
\end{array}
\right\}
\]
and%
\[
\mathcal{P}_{m}^{\prime}\left(  \varepsilon_{m}\right)  =\left\{
\begin{array}
[c]{c}%
P\in\mathcal{P}:\left(  -1\right)  ^{m-1}P^{\left(  m\right)  }\geq0,\left(
-1\right)  ^{j-1}P^{\left(  j\right)  }\left(  1\right)  \geq0\text{ for
}j\in\{2,\ldots,m-1\},\\
\sup_{t}\left\{  \left(  -1\right)  ^{m-1}P^{\left(  m\right)  }\left(t\right)  \right\}  \leq\inf_{t}\left\{  \left(  -1\right)  ^{m-1}P^{\left(m\right)  }\left(  t\right)  \right\}  \cdot\left(  \frac{1}{\varepsilon_{m}}-1\right)
\end{array}
\right\}  .
\]
Then the following proposition links A$m$UISD to the social welfare function $W_P$ with $P$ from $\mathcal{P}_m(\varepsilon_m)$ and $\mathcal{P}'_m(\varepsilon_m)$.
\begin{proposition}\label{prop.AmUISD}
For every $\varepsilon_{m}\in\left(  0,1/2\right)  $, if $F_{1}$
$\varepsilon_{m}$-A$m$UISD $F_{2}$, then $W_{P}\left(  F_{1}\right)  \geq
W_{P}\left(  F_{2}\right)  $ for every $P\in\mathcal{P}_{m}\left(
\varepsilon_{m}\right)  $. If $F_{1}$ $\varepsilon_{m}$-A$m$UISD $F_{2}$ and
$\Lambda_{1}^{j}\left(  1\right)  \geq\Lambda_{2}^{j}\left(  1\right)  $ for
all $j\in\left\{  3,\ldots,m\right\}  $, then $W_{P}\left(  F_{1}\right)  \geq
W_{P}\left(  F_{2}\right)  $ for every $P\in\mathcal{P}_{m}^{\prime}\left(
\varepsilon_{m}\right)  $.
\end{proposition}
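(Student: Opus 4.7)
The plan is to obtain a closed-form representation of $W_P(F_2) - W_P(F_1)$ as a weighted integral of $\Delta^m(t) := \Lambda_2^m(t) - \Lambda_1^m(t)$ (plus boundary contributions at $t=1$), and then bound that integral using the A$m$UISD inequality together with the sup/inf constraint built into $\mathcal{P}_m(\varepsilon_m)$ and $\mathcal{P}'_m(\varepsilon_m)$. This parallels the strategy used for Proposition \ref{prop.AmULD}, but starting from the welfare representation $W_P(F) = \int_0^1 P'(t) F^{-1}(t) \mathrm{d}t$ rather than from the inequality measure $J_P$.

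First I would perform $m-1$ successive integrations by parts, using $(\Lambda_j^2)'(t) = F_j^{-1}(t)$ and $(\Lambda_j^k)'(t) = \Lambda_j^{k-1}(t)$ for $k \geq 3$. The lower boundary terms vanish because $\Delta^k(0) = 0$ for all $k \geq 2$, and the first upper boundary term vanishes because $P'(1) = 0$. Tracking signs carefully, I expect to arrive at the identity
\begin{equation*}
W_P(F_2) - W_P(F_1) = \sum_{j=2}^{m-1} (-1)^{j-1} P^{(j)}(1) \Delta^{j+1}(1) + (-1)^{m-1} \int_0^1 P^{(m)}(t) \Delta^m(t) \mathrm{d}t.
\end{equation*}
For the first assertion, the hypothesis $P^{(j)}(1) = 0$ for $j \in \{2, \ldots, m-1\}$ in $\mathcal{P}_m(\varepsilon_m)$ kills the boundary sum, leaving only $\int_0^1 g(t) \Delta^m(t) \mathrm{d}t$ with $g := (-1)^{m-1} P^{(m)} \geq 0$. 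Splitting $\Delta^m$ into its positive and negative parts and bounding by $\sup g$ and $\inf g$ respectively yields
\begin{equation*}
W_P(F_2) - W_P(F_1) \leq \sup_t g(t) \int_0^1 \max(\Delta^m, 0) \mathrm{d}t - \inf_t g(t) \int_0^1 \max(-\Delta^m, 0) \mathrm{d}t.
\end{equation*}
Rearranging the A$m$UISD inequality gives $\int_0^1 \max(\Delta^m, 0) \mathrm{d}t \leq \tfrac{\varepsilon_m}{1-\varepsilon_m} \int_0^1 \max(-\Delta^m, 0) \mathrm{d}t$, and substituting this reduces the right-hand side to $\bigl[\sup g \cdot \tfrac{\varepsilon_m}{1-\varepsilon_m} - \inf g\bigr]$ times a nonnegative quantity. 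This is nonpositive exactly when $\sup g \leq \inf g \cdot (1/\varepsilon_m - 1)$, which is the defining condition of $\mathcal{P}_m(\varepsilon_m)$, so $W_P(F_1) \geq W_P(F_2)$ follows.

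For the second assertion with $P \in \mathcal{P}'_m(\varepsilon_m)$, the boundary sum at $t=1$ is no longer zero, but each of its terms is still nonpositive: the condition $(-1)^{j-1} P^{(j)}(1) \geq 0$ supplies a nonnegative factor, while the extra hypothesis $\Lambda_1^j(1) \geq \Lambda_2^j(1)$ for $j \in \{3, \ldots, m\}$ translates to $\Delta^{j+1}(1) \leq 0$ for $j \in \{2, \ldots, m-1\}$, making each product $\leq 0$. The integral term is handled exactly as before, yielding $W_P(F_2) - W_P(F_1) \leq 0$.

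The main obstacle I anticipate is purely bookkeeping: ensuring the alternating signs from the repeated integration by parts line up with the parity conventions $(-1)^{j-1}$ and $(-1)^{m-1}$ in the definitions of $\mathcal{P}_m(\varepsilon_m)$ and $\mathcal{P}'_m(\varepsilon_m)$, and verifying that the index shift from $\Delta^{j+1}$ to $\Lambda^{j+1}$ matches the stated range $j+1 \in \{3,\ldots,m\}$ of the supplementary boundary condition. Once the identity is established with correct signs, the remainder is a direct application of the A$m$UISD inequality together with the sup/inf bound on the weight.
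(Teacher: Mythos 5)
Your proposal is correct and follows essentially the same route as the paper: repeated integration by parts to express the welfare difference as a boundary sum at $t=1$ plus a weighted integral of $\Lambda_2^m-\Lambda_1^m$, with the weight $(-1)^{m-1}P^{(m)}$, followed by the sup/inf split and the A$m$UISD inequality (your identity is the paper's after negation and the reindexing $j\mapsto j-1$, and your rearranged form $\int_0^1\max(\Delta^m,0)\,\mathrm{d}t\le\tfrac{\varepsilon_m}{1-\varepsilon_m}\int_0^1\max(-\Delta^m,0)\,\mathrm{d}t$ is algebraically equivalent to the paper's $(r_U+r_L)\varepsilon_m\le r_L$ step). The sign bookkeeping and the handling of the boundary terms for $\mathcal{P}'_m(\varepsilon_m)$ both check out.
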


For $m\ge 3$, following Lemma \ref{lemma.ald equivalence}, it is straightforward to show that $F_{1}$ $\varepsilon_{m}$-A$m$UISD $F_{2}$
if and only if there is $c_{m}\in\left[  0,\varepsilon_{m}\right]  $ such
that
\[
\int_0^1\max\left(  \Lambda_{2}^{m}\left(  p\right)  -\Lambda_{1}^{m}\left(p\right)  ,0\right)  \mathrm{d}p=c_{m}\int_{0}^{1}\left\vert \Lambda_{2}
^{m}\left(  p\right)  -\Lambda_{1}^{m}\left(  p\right)  \right\vert
\mathrm{d}p.
\]
We now define the upward inverse stochastic dominance coefficient as follows. 
\begin{definition}\label{def.mUISDC}
For $m\ge3$, the \emph{$F_1$-$F_2$ $m$th-degree upward inverse stochastic dominance coefficient ($F_1$-$F_2$ $m$UISDC)}, denoted by $c_m^u(F_1,F_2)$, is defined as
\begin{align}
    &c_m^u(F_1,F_2)=\notag\\
    &\inf\left\{\varepsilon_m\in[0,1]:\int_0^1\max\left(  \Lambda_{2}^{m}\left(  p\right)  -\Lambda_{1}^{m  }\left(  p\right),0  \right)
\mathrm{d}p\leq\varepsilon_{m}\int_{0}^{1}\left\vert \Lambda_{2}^{m}\left(p  \right)  -\Lambda_{1}^{m}\left(  p\right)  \right\vert
\mathrm{d}p\right\}.
\end{align}
\end{definition}

We define the difference function
\begin{align}\label{eq.phi m u}
\phi_m^u\left(  p\right)  =\Lambda_{2}^{m}\left(  p\right) -\Lambda_{1}^{m}\left(  p\right),\quad p\in[0,1].
\end{align}
The following lemma summarizes the properties of the $m$UISDC $c_m^u(F_1,F_2)$.

\begin{lemma}\label{lemma.cmu properties}
The $m$UISDC $c_m^u(F_1,F_2)$ is the smallest $\varepsilon_m$ in $[0,1]$ such that \eqref{eq.AmUISD inequality} holds. With \eqref{eq.0timesinf}, it follows that
\begin{align}
    c_m^u(F_1,F_2)&=\frac{\int_0^1\max\left(  \Lambda_{2}^{m}\left(  p\right)  -\Lambda_{1}^{m}\left(  p\right) ,0 \right)
\mathrm{d}p}{\int_{0}^{1}\left\vert \Lambda_{1}^{m}\left(  p\right) -\Lambda_{2}^{m}\left(  p\right)  \right\vert
\mathrm{d}p}\notag\\
&=\frac{\int_{0}^{1}\max\left\{  \phi_m^u\left(  p\right)  ,0\right\}
\mathrm{d}p}{\int_{0}^{1}\max\left\{  \phi_m^u\left(  p\right)  ,0\right\}
\mathrm{d}p+\int_{0}^{1}\max\left\{  -\phi_m^u\left(  p\right)  ,0\right\}
\mathrm{d}p}.
\end{align}
In addition, if $c_m^u(F_1,F_2)\in(0,1]$, then $c_m^u(F_2,F_1)=1-c_m^u(F_1,F_2)$.
\end{lemma}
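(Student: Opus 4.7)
The plan is to mirror the proofs of Lemmas \ref{lemma.cmu L properties} and \ref{lemma.cmd L properties}, since the definition of $c_m^u(F_1,F_2)$ has exactly the same structural form as $c_m^u(L_1,L_2)$, with $\Lambda_j^m$ playing the role of $L_j^m$. Concretely, I would reduce the statement to an elementary real-analytic identity about the numerator and denominator appearing in \eqref{eq.AmUISD inequality}, and then handle a single boundary case via the convention \eqref{eq.0timesinf}.

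First I would introduce the shorthand
\[
A = \int_0^1 \max\{\phi_m^u(p),0\}\,\mathrm{d}p, \qquad B = \int_0^1 \max\{-\phi_m^u(p),0\}\,\mathrm{d}p,
\]
so that $\int_0^1 \lvert \phi_m^u(p) \rvert\,\mathrm{d}p = A + B$ and the defining inequality \eqref{eq.AmUISD inequality} for $\varepsilon_m$-A$m$UISD becomes simply $A \le \varepsilon_m (A+B)$. This rewriting packages all of the geometry of $\Lambda_1^m$ versus $\Lambda_2^m$ into two nonnegative constants, so the rest is bookkeeping.

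Next I would identify the infimum. When $A + B > 0$, the inequality $A \le \varepsilon_m(A+B)$ is equivalent to $\varepsilon_m \ge A/(A+B)$, so the set of admissible $\varepsilon_m$ in $[0,1]$ is the closed interval $[A/(A+B),1]$; the infimum $A/(A+B)$ is therefore attained, giving both the ``smallest $\varepsilon_m$'' claim and the displayed quotient formula. When $A + B = 0$, the inequality reduces to $0 \le 0$, which every $\varepsilon_m \in [0,1]$ satisfies, so the infimum is $0$, and invoking $0\cdot\infty = 0$ from \eqref{eq.0timesinf} justifies interpreting the quotient $A/(A+B) = 0/0$ as $0$ in this degenerate case. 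This is the step I expect to be the main (only) obstacle, since it is the sole place the convention is genuinely used.

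Finally, for the symmetry statement, swapping the roles of $F_1$ and $F_2$ replaces $\phi_m^u$ by $-\phi_m^u$ and hence interchanges $A$ and $B$. When $c_m^u(F_1,F_2) \in (0,1]$ we have $A > 0$, so $A+B > 0$, and the quotient representation directly gives
\[
c_m^u(F_2,F_1) \;=\; \frac{B}{A+B} \;=\; 1 - \frac{A}{A+B} \;=\; 1 - c_m^u(F_1,F_2),
\]
as claimed. All remaining checks are routine and parallel verbatim the arguments already carried out for the Lorenz-dominance coefficients in Lemmas \ref{lemma.cmu L properties} and \ref{lemma.cmd L properties}.
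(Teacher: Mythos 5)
Your proof is correct and follows essentially the same route as the paper, which simply defers to the argument for Lemma \ref{lemma.c properties}: reduce \eqref{eq.AmUISD inequality} to $A\le\varepsilon_m(A+B)$, identify the admissible set and its (attained) infimum, invoke \eqref{eq.0timesinf} when $A+B=0$, and get the symmetry claim by interchanging $A$ and $B$. Your direct characterization of the admissible set as the closed interval $[A/(A+B),1]$ is a slightly cleaner packaging than the paper's contradiction argument, but the substance is identical.
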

  
According to Lemma \ref{lemma.cmu properties}, $F_1$ $\varepsilon_m$-A$m$UISD $F_2$ for all $\varepsilon_m\in[c_m^u(F_1,F_2),1/2)$ if $c_m^u(F_1,F_2)<1/2$. On the other hand, $c_m^u(F_1,F_2)>1/2$ implies that $F_2$ $\varepsilon_m$-A$m$UISD $F_1$ for all $\varepsilon_m\in[1-c_m^u(F_1,F_2),1/2)$. Thus, $c_m^u(F_1,F_2)$ presents the degree of almost upward inverse stochastic dominance relationship between $F_1$ and $F_2$, and provides all $\varepsilon_m$ such that the $\varepsilon_m$-A$m$UISD holds.

\begin{proposition}\label{prop.UISDC properties}
If $c_m^u(F_1,F_2)\in(0,1/2)$, it then follows that 
$W_P(F_1)\ge W_P(F_2)$ for all preference functions $P\in \cup_{\varepsilon_m\in[c_m^u(F_1,F_2),1/2)}\mathcal{P}_m(\varepsilon_m)$.  
If, in addition, 
$\Lambda_{1}^{j}\left(  1\right)  \geq\Lambda_{2}^{j}\left(  1\right)  $ for
all $j\in\left\{  3,\ldots,m\right\}  $, it follows that $W_{P}\left(  F_{1}\right)  \geq
W_{P}\left(  F_{2}\right)  $ for every $P\in\cup_{\varepsilon_m\in[c_m^u(F_1,F_2),1/2)}\mathcal{P}_{m}^{\prime}\left(
\varepsilon_{m}\right)  $.
\end{proposition}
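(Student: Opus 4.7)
The plan is to combine Lemma \ref{lemma.cmu properties} with Proposition \ref{prop.AmUISD}; the result is essentially a repackaging of these two facts, analogous to how Proposition \ref{prop.c vs Gini} follows from Zheng's equivalence and Lemma \ref{lemma.c properties}, and how Proposition \ref{prop.ULDC properties} follows from Proposition \ref{prop.AmULD}.

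First I would show a monotonicity observation for A$m$UISD in $\varepsilon_m$: the left-hand side of \eqref{eq.AmUISD inequality} does not depend on $\varepsilon_m$, and the factor $\int_0^1 |\Lambda_2^m(p)-\Lambda_1^m(p)|\,\mathrm{d}p$ multiplying $\varepsilon_m$ on the right-hand side is nonnegative. Consequently, if \eqref{eq.AmUISD inequality} holds at some $\varepsilon_m^\ast$, it also holds for every $\varepsilon_m \ge \varepsilon_m^\ast$. By Lemma \ref{lemma.cmu properties}, $c_m^u(F_1,F_2)$ is itself the smallest $\varepsilon_m\in[0,1]$ at which \eqref{eq.AmUISD inequality} holds, so $F_1$ $\varepsilon_m$-A$m$UISD $F_2$ for every $\varepsilon_m\in[c_m^u(F_1,F_2),1/2)$ whenever $c_m^u(F_1,F_2)<1/2$.

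Next, I would feed this into Proposition \ref{prop.AmUISD}. Given any $P$ in the union $\cup_{\varepsilon_m\in[c_m^u(F_1,F_2),1/2)}\mathcal{P}_m(\varepsilon_m)$, pick $\varepsilon_m^\ast\in[c_m^u(F_1,F_2),1/2)$ with $P\in\mathcal{P}_m(\varepsilon_m^\ast)$. Since $c_m^u(F_1,F_2)\in(0,1/2)$, we have $\varepsilon_m^\ast\in(0,1/2)$, so the first assertion of Proposition \ref{prop.AmUISD} applies to give $W_P(F_1)\ge W_P(F_2)$. Because $P$ was arbitrary, the first claim follows. For the second claim, I would simply repeat the argument with $\mathcal{P}_m'(\varepsilon_m^\ast)$ in place of $\mathcal{P}_m(\varepsilon_m^\ast)$, invoking the second assertion of Proposition \ref{prop.AmUISD} under the additional hypothesis $\Lambda_1^j(1)\ge\Lambda_2^j(1)$ for $j\in\{3,\ldots,m\}$.

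There is no real obstacle: once the monotonicity of A$m$UISD in $\varepsilon_m$ is noted, the result is immediate from the previously established tools. The only point requiring a little care is bookkeeping the union — making sure each $P$ in the union lies in some $\mathcal{P}_m(\varepsilon_m^\ast)$ with $\varepsilon_m^\ast\in(0,1/2)$, which is exactly what the hypothesis $c_m^u(F_1,F_2)\in(0,1/2)$ guarantees.
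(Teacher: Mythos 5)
Your proposal is correct and follows exactly the paper's route: the paper's proof is the one-line observation that the result "directly follows from Proposition \ref{prop.AmUISD} and Lemma \ref{lemma.cmu properties}," and your argument simply spells out the monotonicity in $\varepsilon_m$ and the union bookkeeping that this one-liner leaves implicit. No gaps.
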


Similar to Proposition \ref{prop.c vs Gini}, Proposition \ref{prop.UISDC properties} shows the importance of UISDC concerning welfare functions $W_P$. The smaller the UISDC $c_m^u(F_1,F_2)$ is, the more welfare functions show higher welfare for the distribution $F_1$ compared to $F_2$. With the knowledge of $c_m^u(F_1,F_2)$, we can infer the relationship between $F_1$ and $F_2$ based on a class of welfare functions. 

\subsection{Almost Downward Inverse Stochastic Dominance and Social Welfare}
\citet{aaberge2021ranking} introduce the $m$th-degree downward inverse stochastic dominance for $m\ge3$. For $j=1,2$, define
\begin{align*}
\tilde{\Lambda}_{j}^{3}\left(  p\right)   &  =\int_{p}^{1}\Lambda_{j}%
^{2}\left(  t\right)  \mathrm{d}t=\int_{0}^{1}\Lambda_{j}^{2}\left(  t\right)
\mathrm{d}t-\Lambda_{j}^{3}\left(  p\right) \\
&  =\left(  1-p\right)  \int_{0}^{1}F_{j}^{-1}\left(  t\right)  \mathrm{d}%
t-\int_{p}^{1}\left(  t-p\right)  F_{j}^{-1}\left(  t\right)  \mathrm{d}%
t,\quad p\in\left[  0,1\right]  .
\end{align*}
For $m\geq4$, define
\begin{align*}
\tilde{\Lambda}_{j}^{m}\left(  p\right)   &  =\int_{p}^{1}\tilde{\Lambda}%
_{j}^{m-1}\left(  t\right)  \mathrm{d}t=\int_{p}^{1}\cdots\int_{t_{3}}^{1}\int_{0}^{t_{2}}F_{j}^{-1}\left(
t_{1}\right)  \mathrm{d}t_{1}\mathrm{d}t_{2}\cdots\mathrm{d}t_{m-1}\\
&  =\frac{1}{\left(  m-2\right)  !}\left[  \left(  1-p\right)  ^{m-2}\int
_{0}^{1}F^{-1}\left(  t\right)  \mathrm{d}t-\int_{p}^{1}\left(  t-p\right)
^{m-2}F^{-1}\left(  t\right)  \mathrm{d}t\right]  ,\quad p\in\left[
0,1\right].
\end{align*}

\begin{definition}
A distribution $F_{1}$ $m$th-degree downward inverse
stochastically dominates a distribution $F_{2}$ if $\tilde{\Lambda}_{1}%
^{m}\left(  p\right)  \geq\tilde{\Lambda}_{2}^{m}\left(  p\right)  $ for all
$p\in\left[  0,1\right]  $.    
\end{definition}

We generalize the downward inverse stochastic dominance in \citet{aaberge2021ranking} to almost downward inverse stochastic dominance.

\begin{definition}
For every $\varepsilon_m\in[  0,1/2)  $, the CDF $F_{1}$
$\varepsilon_{m}$-almost $m$th-degree downward inverse stochastically
dominates the CDF $F_{2}$ ($F_{1}$ $\varepsilon_{m}$-A$m$DISD $F_{2}$), if
\[
\int_0^1\max\left(  \tilde{\Lambda}_{2}^{m}\left(  p\right)  -\tilde{\Lambda}%
_{1}^{m}\left(  p\right)  ,0\right)  \mathrm{d}p\leq\varepsilon_{m}\int
_{0}^{1}\left\vert \tilde{\Lambda}_{2}^{m}\left(  p\right)  -\tilde{\Lambda
}_{1}^{m}\left(  p\right)  \right\vert \mathrm{d}p.
\]
\end{definition}

For every $\varepsilon_{m}\in\left(  0,1/2\right)  $, we define
\[
\mathcal{\tilde{P}}_{m}\left(  \varepsilon_{m}\right)  =\left\{
\begin{array}
[c]{c}%
P\in\mathcal{P}:P^{\left(  m\right)  }\leq0,P^{\left(
j\right)  }\left(  0\right)  =0\text{ for }j\in\{2,\ldots,m-1\},\\
\sup_{t}\left\{  -P^{\left(  m\right)  }\left(  t\right)  \right\}  \leq
\inf_{t}\left\{  -P^{\left(  m\right)  }\left(  t\right)  \right\}
\cdot\left(  \frac{1}{\varepsilon_{m}}-1\right)
\end{array}
\right\}
\]
and%
\[
\mathcal{\tilde{P}}_{m}^{\prime}\left(  \varepsilon_{m}\right)  =\left\{
\begin{array}
[c]{c}%
P\in\mathcal{P}:P^{\left(  m\right)  }\leq0,P^{\left(
j\right)  }\left(  0\right)\le0\text{ for }j\in\{2,\ldots,m-1\},\\
\sup_{t}\left\{  -P^{\left(  m\right)  }\left(  t\right)  \right\}  \leq
\inf_{t}\left\{  -P^{\left(  m\right)  }\left(  t\right)  \right\}
\cdot\left(  \frac{1}{\varepsilon_{m}}-1\right)
\end{array}
\right\}  .
\]
Then the following proposition links A$m$DISD to the social welfare function $W_P$ with $P$ from $\tilde{\mathcal{P}}_m(\varepsilon_m)$ and $\tilde{\mathcal{P}}'_m(\varepsilon_m)$.

\begin{proposition}\label{prop.AmDISD} 
For every $\varepsilon_{m}\in\left(  0,1/2\right)  $, if $F_{1}$
$\varepsilon_{m}$-A$m$DISD $F_{2}$, then $W_{P}\left(  F_{1}\right)  \geq
W_{P}\left(  F_{2}\right)  $ for every $P\in\mathcal{\tilde{P}}_{m}\left(
\varepsilon_{m}\right)  $. If $F_{1}$ $\varepsilon_{m}$-A$m$DISD $F_{2}$ and
$\tilde{\Lambda}_{1}^{j}\left(  0\right)  \geq\tilde{\Lambda}_{2}^{j}\left(
0\right)  $ for all $j\in\left\{  3,\ldots,m\right\}  $, then $W_{P}\left(
F_{1}\right)  \geq W_{P}\left(  F_{2}\right)  $ for every $P\in\mathcal{\tilde
{P}}_{m}^{\prime}\left(  \varepsilon_{m}\right)  $.    
\end{proposition}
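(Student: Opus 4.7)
The plan is to parallel the proof of Proposition \ref{prop.AmULD} (and the argument sketched there following \citet{aaberge2009ranking}), but integrating from the right endpoint so that the boundary terms fall at $t=0$ instead of $t=1$, which is what makes the conditions in $\tilde{\mathcal{P}}_m$ and $\tilde{\mathcal{P}}'_m$ involve derivatives of $P$ at $0$. First I would write
\begin{align*}
W_P(F_1)-W_P(F_2)=\int_0^1 P'(t)\bigl(F_1^{-1}(t)-F_2^{-1}(t)\bigr)\,\mathrm dt,
\end{align*}
and integrate by parts once using $P'(1)=0$ and $(\Lambda_1^2-\Lambda_2^2)(0)=0$ to obtain
\begin{align*}
W_P(F_1)-W_P(F_2)=-\int_0^1 P^{(2)}(t)\bigl(\Lambda_1^2(t)-\Lambda_2^2(t)\bigr)\,\mathrm dt.
\end{align*}

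Next I would iterate integration by parts $m-2$ additional times, exploiting the recursions $(\tilde\Lambda_j^3)'=-\Lambda_j^2$ and $(\tilde\Lambda_j^k)'=-\tilde\Lambda_j^{k-1}$ for $k\ge 4$. Each step swaps a derivative from $\tilde\Lambda_1^k-\tilde\Lambda_2^k$ onto $P$, produces a boundary contribution at $t=0$ (the contribution at $t=1$ vanishes because $\tilde\Lambda_j^k(1)=0$), and the two minus signs cancel so the overall sign of the integral stays fixed. Careful bookkeeping yields the identity
\begin{align*}
W_P(F_1)-W_P(F_2)=-\sum_{j=2}^{m-1}P^{(j)}(0)\bigl(\tilde\Lambda_1^{j+1}(0)-\tilde\Lambda_2^{j+1}(0)\bigr)+\int_0^1\bigl(-P^{(m)}(t)\bigr)\bigl(\tilde\Lambda_1^m(t)-\tilde\Lambda_2^m(t)\bigr)\,\mathrm dt.
\end{align*}
For $P\in\tilde{\mathcal P}_m(\varepsilon_m)$, the boundary sum vanishes because $P^{(j)}(0)=0$ for $j\in\{2,\ldots,m-1\}$, leaving only the integral term to analyze. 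For $P\in\tilde{\mathcal P}_m'(\varepsilon_m)$, each factor $-P^{(j)}(0)\ge 0$ together with the hypothesis $\tilde\Lambda_1^{j+1}(0)-\tilde\Lambda_2^{j+1}(0)\ge 0$ makes every boundary summand nonnegative.

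It remains to show the integral $\int_0^1(-P^{(m)}(t))\bigl(\tilde\Lambda_1^m(t)-\tilde\Lambda_2^m(t)\bigr)\,\mathrm dt\ge 0$. I would set $w=-P^{(m)}\ge 0$ and $\psi=\tilde\Lambda_1^m-\tilde\Lambda_2^m$, rewrite the A$m$DISD inequality as
\begin{align*}
(1-\varepsilon_m)\int_{\{\psi<0\}}(-\psi)\,\mathrm dp\le\varepsilon_m\int_{\{\psi>0\}}\psi\,\mathrm dp,
\end{align*}
and then bound
\begin{align*}
\int_0^1 w\psi\,\mathrm dp\ge\inf_t w(t)\int_{\{\psi>0\}}\psi\,\mathrm dp-\sup_t w(t)\int_{\{\psi<0\}}(-\psi)\,\mathrm dp\ge\Bigl(\inf_t w(t)-\sup_t w(t)\tfrac{\varepsilon_m}{1-\varepsilon_m}\Bigr)\int_{\{\psi>0\}}\psi\,\mathrm dp,
\end{align*}
which is $\ge 0$ exactly under the spread condition $\sup_t w(t)\le\inf_t w(t)\cdot(1/\varepsilon_m-1)$ built into both $\tilde{\mathcal P}_m(\varepsilon_m)$ and $\tilde{\mathcal P}_m'(\varepsilon_m)$. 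Combining the sign of the boundary sum with this inequality delivers $W_P(F_1)\ge W_P(F_2)$ in each case.

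The routine part is the integration-by-parts recursion; the main obstacle I would be careful about is keeping the signs and indices straight so that the boundary terms line up with exactly the vanishing/sign conditions imposed on $P^{(j)}(0)$ in the two sets, and in particular verifying that the derivative hitting $\psi$ at the last step really is $-P^{(m)}$ (not $+P^{(m)}$), since this is what matches the sign hypothesis $P^{(m)}\le 0$ and makes the weighting argument work.
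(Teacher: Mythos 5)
Your proposal is correct and follows essentially the same route as the paper: the same integration-by-parts recursion using $(\tilde\Lambda_j^k)'=-\tilde\Lambda_j^{k-1}$ and $\tilde\Lambda_j^k(1)=0$ yields exactly the paper's identity (your sum over $j\in\{2,\dots,m-1\}$ reindexes to the paper's sum over $j\in\{3,\dots,m\}$), and your final weighted-integral bound is the same Leshno--Levy-type argument the paper invokes via the proof of Proposition \ref{prop.AmDLD}, merely organized through $\int_{\{\psi>0\}}\psi$ rather than $\int_0^1\vert\psi\vert$.
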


For $m\ge 3$, following Lemma \ref{lemma.ald equivalence}, it is straightforward to show that $F_{1}$ $\varepsilon_{m}$-A$m$DISD $F_{2}$ if and only if there is $c_{m}\in\left[  0,\varepsilon_{m}\right]  $ such
that
\[
\int_0^1\max\left(  \tilde{\Lambda}_{2}^{m}\left(  p\right)  -\tilde{\Lambda}_{1}^{m}\left(
p\right)  ,0\right)  \mathrm{d}p=c_{m}\int_{0}^{1}\left\vert \tilde{\Lambda}_{2}
^{m}\left(  p\right)  -\tilde{\Lambda}_{1}^{m}\left(  p\right)  \right\vert
\mathrm{d}p.
\]
We now define the downward inverse stochastic dominance coefficient as follows. 

\begin{definition}\label{def.mDISDC}
For $m\ge3$, the \emph{$F_1$-$F_2$ $m$th-degree downward inverse stochastic dominance coefficient ($F_1$-$F_2$ $m$DISDC)}, denoted by $c_m^d(F_1,F_2)$, is defined as
\begin{align}\label{eq.AmDISD inequality}
    &c_m^d(F_1,F_2)=\notag\\
    &\inf\left\{\varepsilon_m\in[0,1]:\int_0^1\max\left(  \tilde{\Lambda}_{2}^{m}\left(  p\right)  -\tilde{\Lambda}_{1}^{m  }\left(  p\right),0  \right)
\mathrm{d}p\leq\varepsilon_{m}\int_{0}^{1}\left\vert \tilde{\Lambda}_{2}^{m}\left(p  \right)  -\tilde{\Lambda}_{1}^{m}\left(  p\right)  \right\vert
\mathrm{d}p\right\}.
\end{align}
\end{definition}

We define the difference function
\begin{align}\label{eq.phi m d}
\phi_m^d\left(  p\right)  =\tilde{\Lambda}_{2}^{m}\left(  p\right) -\tilde{\Lambda}_{1}^{m}\left(  p\right),\quad p\in[0,1].
\end{align}
The following lemma summarizes the properties of the $m$DISDC $c_m^d(F_1,F_2)$.

\begin{lemma}\label{lemma.cmd properties}
The $m$DISDC $c_m^d(F_1,F_2)$ is the smallest $\varepsilon_m$ in $[0,1]$ such that \eqref{eq.AmDISD inequality} holds. With \eqref{eq.0timesinf}, it follows that
\begin{align}
    c_m^d(F_1,F_2)&=\frac{\int_0^1\max\left(  \tilde{\Lambda}_{2}^{m}\left(  p\right)  -\tilde{\Lambda}_{1}^{m}\left(  p\right) ,0 \right)
\mathrm{d}p}{\int_{0}^{1}\left\vert \tilde{\Lambda}_{1}^{m}\left(  p\right) -\tilde{\Lambda}_{2}^{m}\left(  p\right)  \right\vert
\mathrm{d}p}\notag\\
&=\frac{\int_{0}^{1}\max\left\{  \phi_m^d\left(  p\right)  ,0\right\}
\mathrm{d}p}{\int_{0}^{1}\max\left\{  \phi_m^d\left(  p\right)  ,0\right\}
\mathrm{d}p+\int_{0}^{1}\max\left\{  -\phi_m^d\left(  p\right)  ,0\right\}
\mathrm{d}p}.
\end{align}
In addition, if $c_m^d(F_1,F_2)\in(0,1]$, then $c_m^d(F_2,F_1)=1-c_m^d(F_1,F_2)$.
\end{lemma}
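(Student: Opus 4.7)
The plan is to mirror exactly the arguments used for Lemma \ref{lemma.c properties}, Lemma \ref{lemma.cmu L properties}, and especially Lemma \ref{lemma.cmu properties}, applied now to the difference function $\phi_m^d = \tilde{\Lambda}_2^m - \tilde{\Lambda}_1^m$ in place of the corresponding upward object. First I would verify that the set of $\varepsilon_m \in [0,1]$ satisfying \eqref{eq.AmDISD inequality} is nonempty and closed. Nonemptiness follows because $\varepsilon_m = 1$ works: trivially $\int_0^1 \max(\phi_m^d,0)\,\mathrm{d}p \le \int_0^1 |\phi_m^d|\,\mathrm{d}p$. Closedness is immediate because \eqref{eq.AmDISD inequality} is an affine inequality in $\varepsilon_m$ with fixed coefficients. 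Hence the infimum in Definition \ref{def.mDISDC} is attained, which proves the first claim that $c_m^d(F_1,F_2)$ is the smallest such $\varepsilon_m$.

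Next, for the closed-form formula, I would split into two cases. When $\int_0^1 |\phi_m^d(p)|\,\mathrm{d}p > 0$, the inequality \eqref{eq.AmDISD inequality} rearranges to
\[
\varepsilon_m \ge \frac{\int_0^1 \max(\phi_m^d,0)\,\mathrm{d}p}{\int_0^1 |\phi_m^d|\,\mathrm{d}p},
\]
so the smallest admissible $\varepsilon_m$ equals this ratio. The second equality in \eqref{eq.mDLDC}-style form then follows from the pointwise identity $|\phi_m^d| = \max(\phi_m^d,0) + \max(-\phi_m^d,0)$. When $\int_0^1 |\phi_m^d(p)|\,\mathrm{d}p = 0$, then also $\int_0^1 \max(\phi_m^d,0)\,\mathrm{d}p = 0$, so every $\varepsilon_m \in [0,1]$ satisfies \eqref{eq.AmDISD inequality} and hence $c_m^d(F_1,F_2) = 0$; this value is consistent with the displayed formula once the convention \eqref{eq.0timesinf} is applied to the $0/0$ expression, exactly as in Lemma \ref{lemma.c properties}.

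For the complementarity statement, observe that swapping $F_1$ and $F_2$ sends $\phi_m^d$ to $-\phi_m^d$, so $\max(\phi_m^d,0)$ and $\max(-\phi_m^d,0)$ exchange roles in the numerator, while the denominator $\int_0^1 \max(\phi_m^d,0)\,\mathrm{d}p + \int_0^1 \max(-\phi_m^d,0)\,\mathrm{d}p$ is symmetric and unchanged. The hypothesis $c_m^d(F_1,F_2) \in (0,1]$ ensures that this denominator is strictly positive (otherwise the closed-form formula and the degenerate case would force $c_m^d(F_1,F_2) = 0$), so both $c_m^d(F_1,F_2)$ and $c_m^d(F_2,F_1)$ are genuine ratios, and adding them gives
\[
c_m^d(F_1,F_2) + c_m^d(F_2,F_1) = \frac{\int_0^1 \max(\phi_m^d,0)\,\mathrm{d}p + \int_0^1 \max(-\phi_m^d,0)\,\mathrm{d}p}{\int_0^1 \max(\phi_m^d,0)\,\mathrm{d}p + \int_0^1 \max(-\phi_m^d,0)\,\mathrm{d}p} = 1,
\]
which yields $c_m^d(F_2,F_1) = 1 - c_m^d(F_1,F_2)$.

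The main obstacle is purely notational bookkeeping in the degenerate case $\int_0^1 |\phi_m^d|\,\mathrm{d}p = 0$: one must check that the stated formula still makes sense under convention \eqref{eq.0timesinf} and coincides with the value $0$ produced by the infimum characterization. All remaining steps are a direct transcription of the arguments already established for the LDC, $m$ULDC, $m$DLDC, and $m$UISDC analogues, since the algebraic structure of the definition depends only on the sign pattern of the difference function $\phi_m^d$ and not on the specific construction of $\tilde{\Lambda}_j^m$.
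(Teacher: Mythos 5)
Your proposal is correct and follows essentially the same route as the paper, whose own proof simply defers to the argument for Lemma \ref{lemma.c properties}: identify the infimum as a minimum, reduce the closed form to the pointwise identity $|\phi_m^d|=\max\{\phi_m^d,0\}+\max\{-\phi_m^d,0\}$ with the convention \eqref{eq.0timesinf} covering the degenerate case, and obtain complementarity from the sign flip of $\phi_m^d$ under swapping $F_1$ and $F_2$. Your justification that the infimum is attained (nonemptiness plus closedness of the constraint set) is a minor stylistic variant of the paper's contradiction argument and is equally valid.
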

  
According to Lemma \ref{lemma.cmd properties}, $F_1$ $\varepsilon_m$-A$m$DISD $F_2$ for all $\varepsilon_m\in[c_m^d(F_1,F_2),1/2)$ if $c_m^d(F_1,F_2)<1/2$. On the other hand,  $c_m^d(F_1,F_2)>1/2$ implies that $F_2$ $\varepsilon_m$-A$m$DISD $F_1$ for all $\varepsilon_m\in[1-c_m^d(F_1,F_2),1/2)$. Thus, $c_m^d(F_1,F_2)$ presents the degree of almost downward inverse stochastic dominance relationship between $F_1$ and $F_2$, and provides all $\varepsilon_m$ such that the $\varepsilon_m$-A$m$DISD holds.

\begin{proposition}\label{prop.DISDC properties}
If $c_m^d(F_1,F_2)\in(0,1/2)$, it then follows that 
$W_P(F_1)\ge W_P(F_2)$ for all preference functions $P\in \cup_{\varepsilon_m\in[c_m^d(F_1,F_2),1/2)}\tilde{\mathcal{P}}_m(\varepsilon_m)$. If, in addition,
$\tilde{\Lambda}_{1}^{j}\left(  0\right)  \geq\tilde{\Lambda}_{2}^{j}\left(
0\right)  $ for all $j\in\left\{  3,\ldots,m\right\}  $, then $W_{P}\left(
F_{1}\right)  \geq W_{P}\left(  F_{2}\right)  $ for every $P\in\cup_{\varepsilon_m\in[c_m^d(F_1,F_2),1/2)}\mathcal{\tilde
{P}}_{m}^{\prime}\left(  \varepsilon_{m}\right)  $.    
\end{proposition}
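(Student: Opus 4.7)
The plan is to derive Proposition \ref{prop.DISDC properties} as a direct consequence of Lemma \ref{lemma.cmd properties} (which characterizes $c_m^d(F_1,F_2)$ as the infimum of $\varepsilon_m$ for which A$m$DISD holds) together with Proposition \ref{prop.AmDISD} (which links A$m$DISD to the welfare ordering via $W_P$). The argument mirrors how Proposition \ref{prop.ULDC properties} follows from Lemma \ref{lemma.cmu L properties} and Proposition \ref{prop.AmULD}, and analogously Proposition \ref{prop.UISDC properties}.

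First I would fix an arbitrary $P\in\cup_{\varepsilon_m\in[c_m^d(F_1,F_2),1/2)}\tilde{\mathcal{P}}_m(\varepsilon_m)$, so that there exists some $\varepsilon_m^{\star}\in[c_m^d(F_1,F_2),1/2)$ with $P\in\tilde{\mathcal{P}}_m(\varepsilon_m^{\star})$. Then I would invoke Lemma \ref{lemma.cmd properties}: since $c_m^d(F_1,F_2)$ is the smallest element of $[0,1]$ for which the defining inequality \eqref{eq.AmDISD inequality} holds, and since the right-hand side of that inequality is nondecreasing in $\varepsilon_m$ while the left-hand side is independent of $\varepsilon_m$, the inequality automatically holds for every $\varepsilon_m\ge c_m^d(F_1,F_2)$. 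Hence $F_1$ $\varepsilon_m^{\star}$-A$m$DISD $F_2$. Applying the first statement of Proposition \ref{prop.AmDISD} to the pair $(\varepsilon_m^{\star},P)$ yields $W_P(F_1)\ge W_P(F_2)$, which establishes the first assertion since $P$ was arbitrary.

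For the second assertion, I would repeat the same argument but take $P\in\cup_{\varepsilon_m\in[c_m^d(F_1,F_2),1/2)}\tilde{\mathcal{P}}_m^{\prime}(\varepsilon_m)$ and invoke the second statement of Proposition \ref{prop.AmDISD}. The hypothesis $\tilde{\Lambda}_1^j(0)\ge\tilde{\Lambda}_2^j(0)$ for all $j\in\{3,\ldots,m\}$ is inherited directly as an assumption, so that Proposition \ref{prop.AmDISD} applies and delivers $W_P(F_1)\ge W_P(F_2)$.

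The proof is essentially a packaging step: the substantive work lies in Proposition \ref{prop.AmDISD} (the integration-by-parts representation of $W_P(F_1)-W_P(F_2)$ in terms of $\tilde{\Lambda}_1^m-\tilde{\Lambda}_2^m$ and the sign-control argument using the bounded-ratio condition in $\tilde{\mathcal{P}}_m(\varepsilon_m)$) and in Lemma \ref{lemma.cmd properties} (the monotonicity in $\varepsilon_m$ together with continuity at the infimum). The only point worth being careful about is the monotonicity of the A$m$DISD relation in $\varepsilon_m$, namely that $F_1$ $\varepsilon_m$-A$m$DISD $F_2$ for every $\varepsilon_m\ge c_m^d(F_1,F_2)$; this follows immediately since the factor $\int_0^1|\tilde{\Lambda}_2^m-\tilde{\Lambda}_1^m|\,\mathrm{d}p$ on the right-hand side of \eqref{eq.AmDISD inequality} is nonnegative, so enlarging $\varepsilon_m$ only relaxes the inequality. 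No new technical obstacle arises beyond what was already handled in the parallel Proposition \ref{prop.UISDC properties}.
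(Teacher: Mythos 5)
Your proof is correct and takes essentially the same route as the paper, whose entire proof is the one-line remark that the result follows directly from Proposition \ref{prop.AmDISD} and Lemma \ref{lemma.cmd properties}; you have simply spelled out the packaging step (monotonicity of the A$m$DISD inequality in $\varepsilon_m$, so that dominance holds at every $\varepsilon_m^{\star}\ge c_m^d(F_1,F_2)$, followed by an application of Proposition \ref{prop.AmDISD}) that the paper leaves implicit.
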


Similar to Proposition \ref{prop.c vs Gini}, Proposition \ref{prop.DISDC properties} shows the importance of DISDC concerning welfare functions $W_P$. The smaller the DISDC $c_m^d(F_1,F_2)$ is, the more welfare functions show higher welfare for the distribution $F_1$ compared to $F_2$. With the knowledge of $c_m^d(F_1,F_2)$, we can infer the relationship between $F_1$ and $F_2$ based on a class of welfare functions.

\subsection{Estimation and Inference}
Let $\mathcal{F}$, $\hat{F}_j$, and $\hat{Q}_j$ be defined as in \eqref{eq.F map}, \eqref{eq.empirical cdfs}, and \eqref{eq.empirical quantile}. Let $\mathcal{B}_{1}$ and $\mathcal{B}_{2}$ be
the centered Gaussian random elements in $C(\left[  0,1\right]  )$ defined in Section \ref{secbsasym}. It is
straightforward to show that (see the proof of Lemma \ref{lemma.F asymptotic limit})
\[
\left(
\begin{array}
[c]{c}%
n_{1}^{1/2}\left(  \hat{F}_{1}-F_{1}\right)  \\
n_{2}^{1/2}\left(  \hat{F}_{2}-F_{2}\right)
\end{array}
\right)  \leadsto\left(
\begin{array}
[c]{c}%
\mathcal{B}_{1}\circ F_{1}\\
\mathcal{B}_{2}\circ F_{2}%
\end{array}
\right)
\]
in $\ell^{\infty}\left(  [0,\infty)\right)  \times\ell^{\infty}\left(
[0,\infty)\right)  $.
\citet{Beare2017improved} show that under Assumptions \ref{ass.distribution} and \ref{ass.data}, we can apply the results of \citet{K17} and obtain
\[
\left(
\begin{array}
[c]{c}%
n_{1}^{1/2}\left(  \hat{Q}_{1}-Q_{1}\right)  \\
n_{2}^{1/2}\left(  \hat{Q}_{2}-Q_{2}\right)
\end{array}
\right)  \leadsto\left(
\begin{array}
[c]{c}%
-Q_{1}^{\prime}\cdot\mathcal{B}_{1}\\
-Q_{2}^{\prime}\cdot\mathcal{B}_{2}%
\end{array}
\right)
\]
in $L^{1}\left(  [0,1]\right)  \times L^{1}\left(  [0,1]\right)  $. Recall that for all $p\in\left[  0,1\right]  $, $\Lambda_{j}^{2}\left(
p\right)  =\int_{0}^{p}Q_{j}\left(  t\right)  \mathrm{d}t$ and we estimate
$\Lambda_{j}^{2}\left(  p\right)  $ by $\hat{\Lambda}_{j}^{2}\left(  p\right)
=\int_{0}^{p}\hat{Q}_{j}\left(  t\right)  \mathrm{d}t$. As shown in \citet{jiang2023nonparametric}, by continuous
mapping theorem,
\begin{align}\label{eq.Lambda weak convergence}
\left(
\begin{array}
[c]{c}%
n_{1}^{1/2}\left(  \hat{\Lambda}_{1}^2-{\Lambda}_{1}^2\right)  \\
n_{2}^{1/2}\left(  \hat{\Lambda}_{2}^2-{\Lambda}_{2}^2\right)
\end{array}
\right)
\leadsto\binom{\mathcal{V}_{1}}{\mathcal{V}_{2}},
\end{align}
where $\mathcal{V}_{j}=\int_{0}^{p}-Q_{j}^{\prime}\left(  t\right)
\mathcal{B}_{j}(t)\mathrm{d}t$. By the continuous mapping theorem again, it follows
that
\[
\sqrt{T_{n}}\left\{  (\hat{\Lambda}_{2}^{2}-\hat{\Lambda}_{1}^{2})-\left(
\Lambda_{2}^{2}-\Lambda_{1}^{2}\right)  \right\}  \leadsto\mathbb{G}_{\Lambda},
\]
where $\mathbb{G}_{\Lambda}=\sqrt{\lambda
}\mathcal{V}_{2}-\sqrt{1-\lambda}\mathcal{V}_{1}$ and for all $p,p^{\prime}\in\left[  0,1\right]  $,
\begin{align*}
E\left[  \mathbb{G}_{\Lambda}\left(  p\right)  \mathbb{G}_{\Lambda}\left(
p^{\prime}\right)  \right]    =&\,\left(  1-\lambda\right)  E\left[
\mathcal{V}_{1}\left(  p\right)  \mathcal{V}_{1}\left(  p^{\prime}\right)
\right]  -\sqrt{\lambda\left(  1-\lambda\right)  }E\left[  \mathcal{V}%
_{1}\left(  p\right)  \mathcal{V}_{2}\left(  p^{\prime}\right)  \right]  \\
& -\sqrt{\lambda\left(  1-\lambda\right)  }E\left[  \mathcal{V}_{2}\left(
p\right)  \mathcal{V}_{1}\left(  p^{\prime}\right)  \right]  +\lambda E\left[
\mathcal{V}_{2}\left(  p\right)  \mathcal{V}_{2}\left(  p^{\prime}\right)
\right].
\end{align*}
By Lemma 3.1 of \citet{jiang2023nonparametric}, 
for $j,j^{\prime}\in\left\{  1,2\right\}  $ and $p,p^{\prime}\in\left[
0,1\right]  $, we have that
\begin{align}\label{eq.Cov V}
E\left[  \mathcal{V}_{j}\left(  p\right)  \mathcal{V}_{j^{\prime}}(p^{\prime})  \right]  =Cov\left(  Q_{j}\left(  p\right)  \wedge
X_i^{j},Q_{j^{\prime}}(  p^{\prime})  \wedge X_i^{j^{\prime}}\right)  .
\end{align}

For $j=1,2$, $m\geq3$, and $p\in[0,1]$, define
\begin{align*}
\hat{\Lambda}_{j}^{m}\left(  p\right)  =\int_{0}^{p}\cdots\int_{0}^{t_{3}}\int_{0}^{t_{2}}%
\hat{Q}_{j}\left(  t_{1}\right)  \mathrm{d}t_{1}\mathrm{d}t_{2}\cdots
\mathrm{d}t_{m-1} =\int_{0}^{p}\cdots\int_{0}^{t_{3}}\hat{\Lambda}_{j}^{2}\left(  t_{2}\right)  \mathrm{d}t_{2}\cdots
\mathrm{d}t_{m-1}
\end{align*}
and 
\begin{align*}
\hat{\tilde{\Lambda}}_{j}^{m}\left(  p\right)  =\int_{p}^{1}\cdots
\int_{t_{3}}^{1}\int_{0}^{t_{2}}\hat{Q}_{j}\left(  t_{1}\right)
\mathrm{d}t_{1}\mathrm{d}t_{2}\cdots\mathrm{d}t_{m-1}=\int_{p}^{1}\cdots\int_{t_{3}}^{1}\hat{\Lambda}_{j}^{2}\left(  t_{2}\right)
\mathrm{d}t_{2}\cdots\mathrm{d}t_{m-1}.
\end{align*}
For $m\ge3$, define the empirical difference functions
\[
\hat{\phi}_m^u\left(  p\right)  =\hat{\Lambda}_{2}^{m}\left(  p\right) -\hat{\Lambda}_{1}^{m}\left(  p\right) \text{ and } {\hat{\phi}}_m^d\left(  p\right)  =\hat{\tilde{\Lambda}}_{2}^{m}\left(  p\right) -\hat{\tilde{\Lambda}}_{1}^{m}\left(  p\right), \quad p\in[0,1].
\]
The estimator for $c_{m}^w\left(  F_{1},F_{2}\right)  $ can be constructed by
\[
\hat{c}_{m}^w\left(  F_{1},F_{2}\right)  =\frac{\mathcal{F}_{1}(  \hat
{\phi}_{m}^w)  }{\mathcal{F}_{1}(  \hat{\phi}_{m}^w)
+\mathcal{F}_{2}(  \hat{\phi}_{m}^w)  }=\mathcal{F}(  \hat{\phi
}_{m}^w)  .
\]

Similar to Proposition \ref{prop.c asymptotic limit}, we can show that $\mathcal{F}$ is Hadamard directionally differentiable at $\phi_m^w$ defined in \eqref{eq.phi m u} and \eqref{eq.phi m d} such that for every $h\in\ell^{\infty}([0,1])$,
\begin{align*}
\mathcal{F}_{\phi_m^w}^{\prime}\left(  h\right)  =\frac{\mathcal{F}_{1\phi_m^w}^{\prime}\left(  h\right)  \mathcal{F}_{2}\left(
\phi_m^w\right)  -\mathcal{F}_{1}\left(  \phi_m^w\right)  \mathcal{F}_{2\phi_m^w}^{\prime
}\left(  h\right)  }{\left(  \mathcal{F}_{1}\left(  \phi_m^w\right)
+\mathcal{F}_{2}\left(  \phi_m^w\right)  \right)  ^{2}},
\end{align*}
where 
\[
\mathcal{F}_{1\phi_m^w}^{\prime}\left(  h\right)  =\int_{B_{+}\left(  \phi_m^w\right)
}h\left(  p\right)  \mathrm{d}p+\int_{B_{0}\left(  \phi_m^w\right)  }\max\left\{
h\left(  p\right)  ,0\right\}  \mathrm{d}p
\]
and
\[
\mathcal{F}_{2\phi_m^w}^{\prime}\left(  h\right)  =\int_{B_{+}\left(
-\phi_m^w\right)  }-h\left(  p\right)  \mathrm{d}p+\int_{B_{0}\left(
\phi_m^w\right)  }\max\left\{  -h\left(  p\right)  ,0\right\}  \mathrm{d}p
\]
with $B_0$ and $B_+$ defined in \eqref{eq.B0 B+}.

\begin{proposition}\label{prop.cm inverse asymptotic limit}
Under Assumptions \ref{ass.distribution} and \ref{ass.data}, for $m\ge3$ and $w\in\{u,d\}$, it follows that
\begin{align*}
\sqrt{T_{n}}(  \hat{\phi}_{m}^w-\phi_{m}^w)    \leadsto\mathbb{G}_{m}^w
\end{align*}
for some random element $\mathbb{G}_m^w$ with 
\begin{align*}
Var(\mathbb{G}_m^u(p))=\int_{0}^{p}\cdots\int_{0}^{t_{3}^{\prime}}\left(  \int_{0}^{p}\cdots\int_{0}^{t_{3}}E\left[
\mathbb{G}_{\Lambda}(  t_{2})  \mathbb{G}_{\Lambda}(  t_{2}^{\prime
})  \right]  \mathrm{d}t_{2}\cdots\mathrm{d}t_{m-1}\right)  \mathrm{d}t_{2}^{\prime}\cdots\mathrm{d}t_{m-1}^{\prime}
\end{align*}
and 
\begin{align*}
Var(\mathbb{G}_m^d(p))=    \int_{p}^{1}\cdots\int_{t_{3}^{\prime}}^1\left(  \int_{p}^{1}\cdots\int_{t_{3}}^1E\left[
\mathbb{G}_{\Lambda}(  t_{2})  \mathbb{G}_{\Lambda}(  t_{2}^{\prime
})  \right]  \mathrm{d}t_{2}\cdots\mathrm{d}t_{m-1}\right)  \mathrm{d}t_{2}^{\prime}\cdots\mathrm{d}t_{m-1}^{\prime}
\end{align*}
for every $p\in[0,1]$.
Moreover, we have that
\[
\sqrt{T_{n}}\left\{  \hat{c}_{m}^w\left(  F_{1},F_{2}\right)  -c_{m}^w\left(
F_{1},F_{2}\right)  \right\}  \leadsto\mathcal{F}_{\phi_{m}^w}^{\prime}\left(
\mathbb{G}_{m}^w\right).
\]
\end{proposition}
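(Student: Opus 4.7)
The plan is to mirror the argument used for Proposition \ref{prop.c asymptotic limit} in the Lorenz case, with the building block now being the generalized Lorenz process $\sqrt{T_n}((\hat\Lambda_2^2 - \hat\Lambda_1^2) - (\Lambda_2^2 - \Lambda_1^2))$ instead of $\sqrt{T_n}(\hat\phi - \phi)$. The joint weak convergence in \eqref{eq.Lambda weak convergence}, together with the continuous mapping theorem applied to $(v_1,v_2)\mapsto \sqrt{\hat\lambda}\,v_2 - \sqrt{1-\hat\lambda}\,v_1$, already yields
\[
\sqrt{T_n}\bigl\{(\hat\Lambda_2^2 - \hat\Lambda_1^2) - (\Lambda_2^2 - \Lambda_1^2)\bigr\} \leadsto \mathbb{G}_\Lambda
\]
in $\ell^\infty([0,1])$, with covariance kernel given by the expressions preceding the proposition.

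Next I would introduce, analogously to $\mathcal{I}_m^w$, the iterated-integration operators $\mathcal{J}_m^w:\ell^\infty([0,1])\to \ell^\infty([0,1])$ defined for $m\ge 3$ by
\[
\mathcal{J}_m^u(f)(p) = \int_0^p\!\!\cdots\!\!\int_0^{t_3} f(t_2)\,\mathrm{d}t_2\cdots\mathrm{d}t_{m-1}, \qquad \mathcal{J}_m^d(f)(p) = \int_p^1\!\!\cdots\!\!\int_{t_3}^1 f(t_2)\,\mathrm{d}t_2\cdots\mathrm{d}t_{m-1},
\]
so that $\phi_m^u = \mathcal{J}_m^u(\Lambda_2^2 - \Lambda_1^2)$ and $\phi_m^d = -\mathcal{J}_m^d(\Lambda_2^2 - \Lambda_1^2)$ (the sign adjustment for $w=d$ matches the definition of $\tilde\Lambda_j^m$ and $\phi_m^d$ in \eqref{eq.phi m d}). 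Each $\mathcal{J}_m^w$ is a bounded linear map with operator norm at most one, hence continuous; the continuous mapping theorem then yields $\sqrt{T_n}(\hat\phi_m^w - \phi_m^w) \leadsto \mathbb{G}_m^w := \pm\mathcal{J}_m^w(\mathbb{G}_\Lambda)$ in $\ell^\infty([0,1])$. The variance formulas follow by writing $\mathbb{G}_m^w(p)$ as a multiple iterated integral of $\mathbb{G}_\Lambda$ and applying Fubini's theorem to interchange expectation and integrals, which is justified by the boundedness of $\mathbb{G}_\Lambda$ on compact sets combined with the bounded moments in Assumption \ref{ass.distribution}.

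Finally, to obtain the asymptotic distribution of $\sqrt{T_n}(\hat c_m^w - c_m^w) = \sqrt{T_n}(\mathcal{F}(\hat\phi_m^w) - \mathcal{F}(\phi_m^w))$, I would invoke the Hadamard directional differentiability of $\mathcal{F}$ at $\phi_m^w$ (derived exactly as in the Lorenz case via the quotient rule together with Lemma S.4.5 of \citet{fang2014inference} for $\mathcal{F}_1$ and $\mathcal{F}_2$), and then apply the functional delta method of \citet{shapiro1991asymptotic} and \citet[Theorem 2.1]{fang2014inference}. This yields
\[
\sqrt{T_n}(\hat c_m^w(F_1,F_2) - c_m^w(F_1,F_2)) \leadsto \mathcal{F}'_{\phi_m^w}(\mathbb{G}_m^w).
\]
The main obstacle is verifying that the denominator $\mathcal{F}_1(\phi_m^w) + \mathcal{F}_2(\phi_m^w) = \int_0^1 |\phi_m^w(p)|\,\mathrm{d}p$ is strictly positive, so that $\mathcal{F}$ is well-defined and Hadamard directionally differentiable in a neighborhood of $\phi_m^w$; this is implicit in the assumption that $c_m^w(F_1,F_2)$ is a genuine coefficient (it would follow, as in the Lorenz case, from $\phi_m^w \not\equiv 0$, which in turn follows from $F_1\neq F_2$ under Assumption \ref{ass.distribution}). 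Apart from this nondegeneracy check, the remaining steps are direct applications of the continuous mapping theorem, Fubini's theorem, and the delta method.
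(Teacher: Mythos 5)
Your plan follows exactly the same route as the paper's (very terse) proof, which invokes precisely the weak convergence in \eqref{eq.Lambda weak convergence}, the continuous mapping theorem for the iterated-integration operator, Fubini's theorem for the variance kernels, and Theorem 2.1 of \citet{fang2014inference} for the delta-method step; your additional observation that $\mathcal{F}_1(\phi_m^w)+\mathcal{F}_2(\phi_m^w)>0$ must be checked is a sound point that the paper leaves implicit. One small slip: by \eqref{eq.phi m d} we have $\phi_m^d=\tilde\Lambda_2^m-\tilde\Lambda_1^m=\mathcal{J}_m^d(\Lambda_2^2-\Lambda_1^2)$ with no sign flip (unlike the downward Lorenz case), though this is harmless since the centered Gaussian limit is symmetric.
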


Let $\hat{\lambda}=n_1/(n_1+n_2)$. For $j,j^{\prime}\in\left\{  1,2\right\}  $ and $p,p^{\prime}\in\left[
0,1\right]  $, by \eqref{eq.Cov V}, we estimate $E\left[  \mathcal{V}_{j}\left(  p\right)
\mathcal{V}_{j^{\prime}}\left(  p^{\prime}\right)  \right]  $ by $\hat E\left[  \mathcal{V}_{j}\left(  p\right)
\mathcal{V}_{j^{\prime}}\left(  p^{\prime}\right)  \right]  $ which is defined as the sample covariance of the two samples
\begin{align*}
\left\{\hat{Q}_j(p)\wedge X^j_i\right\}_{i=1}^{n_j} \text{ and } \left\{\hat{Q}_{j'}(p')\wedge X^{j'}_i\right\}_{i=1}^{n_{j'}}.
\end{align*}
For independent samples, we estimate $E\left[  \mathbb{G}_{\Lambda}\left(
p\right)  \mathbb{G}_{\Lambda}\left(  p^{\prime}\right)  \right]  $ by
\[
\hat{E}\left[  \mathbb{G}_{\Lambda}\left(  p\right)  \mathbb{G}_{\Lambda
}(p^{\prime})\right]  =(  1-\hat{\lambda})  \hat{E}\left[
\mathcal{V}_{1}\left(  p\right)  \mathcal{V}_{1}(p^{\prime})\right]
+\hat{\lambda}\hat{E}\left[  \mathcal{V}_{2}\left(  p\right)  \mathcal{V}%
_{2}(p^{\prime})\right]  .
\]
For matched pairs,
\begin{align*}
\hat{E}\left[  \mathbb{G}_{\Lambda}\left(  p\right)  \mathbb{G}_{\Lambda
}(p^{\prime})\right]     =&\,(  1-\hat{\lambda})  \hat{E}\left[
\mathcal{V}_{1}\left(  p\right)  \mathcal{V}_{1}(p^{\prime})\right]
-\sqrt{\hat{\lambda}(  1-\hat{\lambda})  }\hat{E}\left[
\mathcal{V}_{1}\left(  p\right)  \mathcal{V}_{2}(p^{\prime})\right]  \\
& -\sqrt{\hat{\lambda}(  1-\hat{\lambda})  }\hat{E}\left[
\mathcal{V}_{2}\left(  p\right)  \mathcal{V}_{1}(p^{\prime})\right]
+\hat{\lambda}\hat{E}\left[  \mathcal{V}_{2}\left(  p\right)  \mathcal{V}%
_{2}(p^{\prime})\right]  .
\end{align*}
We then estimate the variance $Var(\mathbb{G}_m^w(p))$ by
\begin{align*}
    \hat{\sigma}_m^{u}(p)^2=\int_{0}^{p}\cdots\int_{0}^{t_{3}^{\prime}}\left(  \int_{0}^{p}\cdots\int_{0}^{t_{3}}\hat{E}\left[
\mathbb{G}_{\Lambda}(  t_{2})  \mathbb{G}_{\Lambda}(  t_{2}^{\prime
})  \right]  \mathrm{d}t_{2}\cdots\mathrm{d}t_{m-1}\right)  \mathrm{d}t_{2}^{\prime}\cdots\mathrm{d}t_{m-1}^{\prime}
\end{align*}
and 
\begin{align*}
    \hat{\sigma}_m^{d}(p)^2=\int_{p}^{1}\cdots\int_{t_{3}^{\prime}}^1\left(  \int_{p}^{1}\cdots\int_{t_{3}}^1\hat{E}\left[
\mathbb{G}_{\Lambda}(  t_{2})  \mathbb{G}_{\Lambda}(  t_{2}^{\prime
})  \right]  \mathrm{d}t_{2}\cdots\mathrm{d}t_{m-1}\right)  \mathrm{d}t_{2}^{\prime}\cdots\mathrm{d}t_{m-1}^{\prime}
\end{align*}
for $p\in[0,1]$.

\subsubsection{Bootstrap Confidence Intervals for UISDC and DISDC}
 
Similar to Section \ref{sec.bootstrap CI Lorenz}, for $m\ge3$ and $w\in\{u,d\}$, we construct the estimators of $B_{+}\left(  \phi_m^w\right)  $, $B_{+}\left(
-\phi_m^w\right)  $, and $B_{0}\left(  \phi_m^w\right)  $ by
\begin{align*}
&\widehat{B_{+}\left(  \phi^w_m\right)  } =\left\{  p\in\left[  0,1\right]
:\frac{\sqrt{T_{n}}\hat{\phi}^w_m\left(  p\right)  }{\xi_{0}\vee\hat{\sigma
}^w_m\left(  p\right)  }>t_{n}\right\}  ,\widehat{B_{+}\left(  -\phi^w_m\right)
}=\left\{  p\in\left[  0,1\right]  :\frac{\sqrt{T_{n}}\hat{\phi}^w_m\left(
p\right)  }{\xi_{0}\vee\hat{\sigma}^w_m\left(  p\right)  }<-t_{n}\right\}  ,\\
&\text{and }\widehat{B_{0}\left(  \phi_m^w\right)  } =\left\{  p\in\left[
0,1\right]  :\left\vert \frac{\sqrt{T_{n}}\hat{\phi}^w_m\left(  p\right)  }
{\xi_{0}\vee\hat{\sigma}^w_m\left(  p\right)  }\right\vert \leq t_{n}\right\}  ,
\end{align*}
where $t_{n}\rightarrow\infty$ and $t_{n}/\sqrt{T_{n}}\rightarrow0$ as
$n\rightarrow\infty$. We then construct the estimator of $\mathcal{F}_{1\phi_m
^w}^{\prime}$ and $\mathcal{F}_{2\phi_m^w}^{\prime}$ by
\begin{align*}
&\mathcal{\hat{F}}_{1\phi_m^w}^{\prime}\left(  h\right)  =\int_{\widehat
{B_{+}\left(  \phi_m^w\right)  }}h\left(  p\right)  \mathrm{d}p+\int
_{\widehat{B_{0}\left(  \phi_m^w\right)  }}\max\left\{  h\left(  p\right)
,0\right\}  \mathrm{d}p\\
&\text{and } \mathcal{\hat{F}}_{2\phi_m^w}^{\prime}\left(  h\right)
=\int_{\widehat{B_{+}\left(  -\phi_m^w\right)  }}-h\left(  p\right)
\mathrm{d}p+\int_{\widehat{B_{0}\left(  \phi_m^w\right)  }}\max\left\{  -h\left(p\right)  ,0\right\}  \mathrm{d}p
\end{align*}
for every $h$ $\in\ell^{\infty}\left(  \left[  0,1\right]  \right)  $. The
estimator of $\mathcal{F}_{\phi_m^w}^{\prime}$ is defined by
\[
\mathcal{\hat{F}}_{\phi_m^w}^{\prime}\left(  h\right)  =\frac{\mathcal{\hat{F}
}_{1\phi_m^w}^{\prime}\left(  h\right)  \mathcal{F}_{2}(  \hat{\phi}_m^w)
-\mathcal{F}_{1}(  \hat{\phi}_m^w)  \mathcal{\hat{F}}_{2\phi_m^w}^{\prime
}\left(  h\right)  }{\left(  \mathcal{F}_{1}(  \hat{\phi}_m^w)
+\mathcal{F}_{2}(  \hat{\phi}_m^w)  \right)  ^{2}}
\]
for every $h$ $\in\ell^{\infty}\left(  \left[  0,1\right]  \right)  $.

As discussed in Section \ref{sec.bootstrap CI Lorenz}, for independent samples, we draw a bootstrap sample $\{\hat{X}_{i}^{j}%
\}_{i=1}^{n_j}$ identically and independently from $\{X_{i}^{j}\}_{i=1}^{n_j}$ for
$j=1,2$, where $\{\hat{X}_{i}^{1}\}_{i=1}^{n_1}$ is jointly independent of
$\{\hat{X}_{i}^{2}\}_{i=1}^{n_2}$. For matched pairs, we draw a bootstrap sample
$\{(\hat{X}_{i}^{1},\hat{X}_{i}^{2})\}_{i=1}^{n}$ identically and
independently from $\{(X_{i}^{1},X_{i}^{2})\}_{i=1}^{n}$. 
Let $\hat{Q}_j^{*}$ be defined as in \eqref{eq.bootstrap quantile}.
For $j=1,2$, $m\geq4$, and $p\in[0,1]$, define
\begin{align*}
\hat{\Lambda}_{j}^{m*}\left(  p\right)  =\int_{0}^{p}\cdots\int_{0}^{t_{3}}\int_{0}^{t_{2}}%
\hat{Q}_{j}^*\left(  t_{1}\right)  \mathrm{d}t_{1}\mathrm{d}t_{2}\cdots
\mathrm{d}t_{m-1}=\frac{1}{\left(  m-2\right)  !}\int_{0}^{p}\left(  p-t\right)
^{m-2}\hat{Q}_j^*\left(  t\right)  \mathrm{d}t
\end{align*}
and 
\begin{align*}
\hat{\tilde{\Lambda}}_{j}^{m*}\left(  p\right)   &  =\int_{p}^{1}\cdots
\int_{t_{3}}^{1}\int_{0}^{t_{2}}\hat{Q}_{j}^*\left(  t_{1}\right)
\mathrm{d}t_{1}\mathrm{d}t_{2}\cdots\mathrm{d}t_{m-1}\\
&  =\frac{1}{\left(  m-2\right)  !}\left[  \left(  1-p\right)  ^{m-2}\int
_{0}^{1}\hat{Q}_{j}^*\left(  t\right)  \mathrm{d}t-\int_{p}^{1}\left(
t-p\right)  ^{m-2}\hat{Q}_{j}^*\left(  t\right)  \mathrm{d}t\right].
\end{align*}
Define the bootstrap difference functions
\[
\hat{\phi}_m^{u*}\left(  p\right)  =\hat{\Lambda}_{2}^{m*}\left(  p\right) -\hat{\Lambda}_{1}^{m*}\left(  p\right) \text{ and } {\hat{\phi}}_m^{d*}\left(  p\right)  =\hat{\tilde{\Lambda}}_{2}^{m*}\left(  p\right) -\hat{\tilde{\Lambda}}_{1}^{m*}\left(  p\right), \quad p\in[0,1].
\]
For $w\in\{u,d\}$, we define the bootstrap estimation of $c_m^w(F_1,F_2)$ by 
\begin{align}
    \hat{c}_m^{w*}(F_1,F_2)=\mathcal{\hat{F}}_{\phi_m^w}^{\prime}(\sqrt{T_n}(\hat{\phi}_m^{w*}-\hat{\phi}_m^w)).
\end{align}
For every $\beta\in(0,1)$, let $c_{m,\beta}^w$ denote the $\beta$ quantile of the distribution of $\mathcal{F}'_{\phi_m^w}(\mathbb{G}_m^w)$.
We construct the bootstrap approximation of $c_{m,\beta}^w$ by 
\begin{align}
    \hat{c}_{m,\beta}^w=\inf\left\{c: \mathbb{P}\left(\hat{c}_m^{w*}(F_1,F_2)\le c|\{X^1_i\}_{i=1}^{n_1},\{X^2_i\}_{i=1}^{n_2}\right)\ge \beta\right\}.
\end{align}
Empirically, we approximate $\hat{c}_{m,\beta}^w$ by computing the $\beta$ quantile of the $n_B$ independently generated $\hat{c}_m^{w*}(F_1,F_2)$, where $n_B$ is chosen as large as is computationally convenient. 

For a nominal significance level $\alpha\in(0,1/2)$, we construct the $1-\alpha$ confidence interval by
\begin{align}
    \mathrm{CI}_{m,1-\alpha}^w=[\hat{c}_m^w(F_1,F_2)-T_n^{-1/2}\hat{c}_{m,1-\alpha/2}^w,\hat{c}_m^w(F_1,F_2)-T_n^{-1/2}\hat{c}_{m,\alpha/2}^w].
\end{align}

\begin{proposition}\label{prop.confidence interval ISDC}
Suppose that Assumptions \ref{ass.distribution} and \ref{ass.data} hold. For $m\ge3$ and $w\in\{u,d\}$, if $c_m^w(F_1,F_2)\in(0,1)$ and the CDF of $\mathcal{F}'_{\phi_m^w}(\mathbb{G}_m^w)$ is continuous and increasing at $c_{m,\alpha}^w$ and $c_{m,1-\alpha}^w$, then it follows that
\begin{align}
    \lim_{n\rightarrow\infty}\mathbb{P}(c_m^w(F_1,F_2)\in\mathrm{CI}^w_{m,1-\alpha})= 1-\alpha.
\end{align}
\end{proposition}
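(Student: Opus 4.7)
The plan is to mirror the proof of Proposition \ref{prop.confidence interval} and apply the bootstrap validity result of \citet{fang2014inference}. The ingredients required by their framework are: (i) weak convergence $\sqrt{T_n}(\hat{\phi}_m^w-\phi_m^w)\leadsto\mathbb{G}_m^w$ in $\ell^{\infty}([0,1])$; (ii) Hadamard directional differentiability of $\mathcal{F}$ at $\phi_m^w$ tangentially to $\ell^{\infty}([0,1])$; (iii) bootstrap consistency of $\sqrt{T_n}(\hat{\phi}_m^{w*}-\hat{\phi}_m^w)$ for $\mathbb{G}_m^w$ conditional on the data; and (iv) consistency of $\mathcal{\hat{F}}'_{\phi_m^w}$ for $\mathcal{F}'_{\phi_m^w}$ in the sense of Assumption 4 of \citet{fang2014inference}.

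First I would note that (i) is exactly the content of Proposition \ref{prop.cm inverse asymptotic limit}, while (ii) follows by combining Lemma S.4.5 of \citet{fang2014inference} with the chain rule, exactly as in the derivation of $\mathcal{F}'_{\phi_m^w}$ displayed just above the proposition. For (iii), the bootstrap quantile processes $\sqrt{n_j}(\hat{Q}_j^*-\hat{Q}_j)$ consistently estimate the limits of $\sqrt{n_j}(\hat{Q}_j-Q_j)$ under Assumptions \ref{ass.distribution} and \ref{ass.data} by Theorem 3.9.11 of \citet{van1996weak} and the arguments used for \eqref{eq.Lambda weak convergence}; applying the bounded linear operators defined by the iterated integrals $\mathcal{I}_m^w$ (or their downward analogues) and the continuous mapping theorem then yields bootstrap consistency for $\sqrt{T_n}(\hat{\phi}_m^{w*}-\hat{\phi}_m^w)$ in $\ell^{\infty}([0,1])$.

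The hard part will be (iv), the consistency of the plug-in derivative estimator. It suffices to show that $\mathcal{\hat{F}}'_{\phi_m^w}(h_n)\to\mathcal{F}'_{\phi_m^w}(h)$ a.s.\ whenever $h_n\to h$ in $\ell^{\infty}([0,1])$. Since $c_m^w(F_1,F_2)\in(0,1)$ forces $\mathcal{F}_1(\phi_m^w)+\mathcal{F}_2(\phi_m^w)>0$, and the continuous mapping theorem gives $\mathcal{F}_k(\hat{\phi}_m^w)\to\mathcal{F}_k(\phi_m^w)$ a.s.\ for $k=1,2$, the denominator in $\mathcal{\hat{F}}'_{\phi_m^w}$ is well-behaved. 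The key step is to handle the estimated contact and positivity sets: I would establish $\sup_{p}|\hat{\sigma}_m^w(p)-\sigma_m^w(p)|\to 0$ in probability, which follows from the consistency of the sample covariances $\hat{E}[\mathcal{V}_j(p)\mathcal{V}_{j'}(p')]$ under the $(2+\epsilon)$-moment condition combined with the smoothing effect of the iterated integrals. Then, using $t_n\to\infty$ and $t_n/\sqrt{T_n}\to 0$ together with the a.s.\ uniform convergence $\hat{\phi}_m^w\to\phi_m^w$, a Lebesgue-dominated-convergence argument shows that $\widehat{B_+(\phi_m^w)}$, $\widehat{B_+(-\phi_m^w)}$, and $\widehat{B_0(\phi_m^w)}$ agree with their population counterparts up to null sets in the limit, in the sense required to pass to the limit inside the defining integrals of $\mathcal{\hat{F}}'_{k\phi_m^w}(h_n)$; this is the same bookkeeping as in the proof of Proposition \ref{prop.confidence interval}, only adapted to the higher-degree difference $\phi_m^w$.

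Assembling (i)--(iv), Theorem 3.2 of \citet{fang2014inference} yields that $\hat{c}_m^{w*}(F_1,F_2)$ is a consistent bootstrap approximation of $\mathcal{F}'_{\phi_m^w}(\mathbb{G}_m^w)$ conditional on the data. Because the CDF of $\mathcal{F}'_{\phi_m^w}(\mathbb{G}_m^w)$ is continuous and strictly increasing at $c_{m,\alpha/2}^w$ and $c_{m,1-\alpha/2}^w$ (the quantiles actually entering $\mathrm{CI}_{m,1-\alpha}^w$, which the statement abbreviates as $c_{m,\alpha}^w$ and $c_{m,1-\alpha}^w$), the bootstrap quantiles $\hat{c}_{m,\alpha/2}^w$ and $\hat{c}_{m,1-\alpha/2}^w$ converge in probability to the true quantiles. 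A direct rewriting of the event $\{c_m^w(F_1,F_2)\in\mathrm{CI}_{m,1-\alpha}^w\}$ as $\{\hat{c}_{m,\alpha/2}^w\le\sqrt{T_n}(\hat{c}_m^w-c_m^w)\le\hat{c}_{m,1-\alpha/2}^w\}$, together with Proposition \ref{prop.cm inverse asymptotic limit} and Slutsky's lemma, delivers $\lim_{n\to\infty}\mathbb{P}(c_m^w(F_1,F_2)\in\mathrm{CI}_{m,1-\alpha}^w)=1-\alpha$, completing the proof.
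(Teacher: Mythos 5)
Your proposal is correct and follows essentially the same route as the paper, which simply adapts the proof of Proposition \ref{prop.confidence interval} to the inverse-stochastic-dominance setting: weak convergence from Proposition \ref{prop.cm inverse asymptotic limit}, Hadamard directional differentiability of $\mathcal{F}$ at $\phi_m^w$, consistency of the estimated sets $\widehat{B_+(\phi_m^w)}$, $\widehat{B_+(-\phi_m^w)}$, $\widehat{B_0(\phi_m^w)}$ and hence of $\hat{\mathcal{F}}'_{\phi_m^w}$ in the sense of Assumption 4 of \citet{fang2014inference}, followed by bootstrap quantile convergence and Slutsky's lemma. Your reading of $c_{m,\alpha}^w$, $c_{m,1-\alpha}^w$ as the $\alpha/2$ and $1-\alpha/2$ quantiles entering $\mathrm{CI}_{m,1-\alpha}^w$ matches the paper's intent.
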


\section{Almost Stochastic Dominance}\label{sec.ASD}
In this section, we extend our results to almost stochastic dominance. 
We follow \citet{tsetlin2015generalized} and let $u$ be a decision maker's utility function and $u^{\left(
k\right)  }$ the $k$th derivative of $u$. Let $F_{1}$ and $F_{2}$ be CDFs that are restricted on the
support $\left[  a,b\right]  $ such that $F_1(a)=F_2(a)=0$ and $F_1(b)=F_2(b)=1$. For $j=1,2$ and $k\geq2$, we define
\[
F_{j}^{\left(  k\right)  }\left(  x\right)  =\int_{a}^{x}F_{j}^{\left(
k-1\right)  }\left(  t\right)  \mathrm{d}t,\quad x\in[a,b],
\]
with $F_{j}^{\left(  1\right)  }\left(  x\right)  =F_{j}\left(  x\right)  $ for all $x\in[a,b]$. To fix the idea, we first consider almost first-degree stochastic dominance. Define
\[
S_{1}\left(  F_{1},F_{2}\right)  =\left\{  x\in\left[  a,b\right]
:F_{1}\left(  x\right)  > F_{2}\left(  x\right)  \right\}.
\]
The original definition of \emph{almost first-degree stochastic dominance} provided by \citet{leshno2002preferred} is as follows.

\begin{definition}\label{def.AFSD}
For every $\varepsilon_{1}\in\left[  0,1/2\right)  $, $F_{1}$
$\varepsilon_{1}$-almost first-degree stochastically dominates $F_{2}$ ($F_1$ $ \varepsilon_{1}$-AFSD $F_2$) if
\begin{align}\label{eq.AFSD}
\int_{S_{1}\left(  F_{1},F_{2}\right)  }(  F_{1}\left(  x\right)
-F_{2}\left(  x\right)  )  \mathrm{d}x\leq\varepsilon_{1}\int_{a}%
^{b}\left\vert F_{1}\left(  x\right)  -F_{2}\left(  x\right)  \right\vert
\mathrm{d}x.    
\end{align}
\end{definition}

For every $\varepsilon_{1}\in\left(  0,1\right)  $, define the utility function class
\[
U_{1}\left(  \varepsilon_{1}\right)  =\left\{  u:u^{\left(  1\right)  }%
>0,\sup_x\left\{  u^{\left(  1\right)  }\left(  x\right)  \right\}  \leq
\inf_x\left\{  u^{\left(  1\right)  }\left(  x\right)  \right\}  \left(
\frac{1}{\varepsilon_{1}}-1\right)  \right\}  .
\]
\citet[][Theorem 1]{leshno2002preferred} associate $\varepsilon_{1}$-AFSD with the class
$U_{1}\left(  \varepsilon_{1}\right)  $ by showing that for every $\varepsilon
_{1}\in\left(  0,1/2\right)  $, $F_{1}$ $\varepsilon_{1}$-AFSD $F_{2}$ if and
only if $E_{F_{1}}[  u(X)]  \geq E_{F_{2}}[  u(X)]  $ for all
$u\in U\left(  \varepsilon_{1}\right)  $, where $E_{F}$ denotes the expectation for $X\sim F$. As pointed out by \citet{tsetlin2015generalized}, $\varepsilon_{1}$-AFSD is a
weaker version of first-degree stochastic dominance (FSD): $\varepsilon_{1}$-AFSD requires that the ratio of the
area between $F_{1}$ and $F_{2}$ for which $F_{1}  \geq
F_{2}  $ (i.e., $\int_{S_{1}\left(  F_{1},F_{2}\right)  }(
F_{1}\left(  x\right)  -F_{2}\left(  x\right)  )  \mathrm{d}x$) to the
total area between $F_{1}$ and $F_{2}$ (i.e., $\int_{a}^{b}\left\vert F_{1}\left(
x\right)  -F_{2}\left(  x\right)  \right\vert \mathrm{d}x$) is less than or equal to $\varepsilon_{1}$; FSD requires that this ratio is zero, i.e., $\int_{S_{1}\left(  F_{1},F_{2}\right)  }(  F_{1}\left(
x\right)  -F_{2}\left(  x\right)  )  \mathrm{d}x=0$. It is
interesting to figure out the smallest value of $\varepsilon_{1}$ such that \eqref{eq.AFSD} holds. If we denote this value by $c_{1}$, then it is easy to show (by proof similar to that of Lemma \ref{lemma.c properties}) that $E_{F_{1}}[  u(X)]  \geq
E_{F_{2}}[  u(X)]  $ for all $u\in\cup_{\varepsilon_{1}\in\lbrack
c_{1},1/2)}U_1\left(  \varepsilon_{1}\right)  $. We now consider general \emph{$\varepsilon_m$-almost $m$th-degree stochastic dominance}. We follow \citet{tsetlin2015generalized} and define 
\[
S_{m}\left(  F_{1},F_{2}\right)  =\left\{  x\in\left[  a,b\right]
:F_{1}^{\left(  m\right)  }\left(  x\right)  > F_{2}^{\left(  m\right)
}\left(  x\right)  \right\}  .
\]

\begin{definition}\label{def.AmSD}
For every $\varepsilon_{m}\in\left[  0,1/2\right)  $, $F_{1}$
$\varepsilon_{m}$-almost $m$th-degree stochastically dominates $F_{2}$ ($F_1$ $\varepsilon_m$-A$m$SD $F_2$) if
\begin{align}\label{eq.ASD inequality}
\int_{S_{m}\left(  F_{1},F_{2}\right)  }\left(  F_{1}^{\left(  m\right)
}\left(  x\right)  -F_{2}^{\left(  m\right)  }\left(  x\right)  \right)
\mathrm{d}x\leq\varepsilon_{m}\int_{a}^{b}\left\vert F_{1}^{\left(  m\right)
}\left(  x\right)  -F_{2}^{\left(  m\right)  }\left(  x\right)  \right\vert
\mathrm{d}x.
\end{align}
\end{definition}

For every $\varepsilon_{m}\in\left(  0,1\right)  $, define the utility function class
\[
U_{m}\left(  \varepsilon_{m}\right)  =\left\{
\begin{array}
[c]{c}
u:\left(  -1\right)  ^{k+1}u^{\left(  k\right)  }>0,k\in\{1,\ldots,m\},\\
\sup_x\left\{  (-1)^{m+1}u^{\left(  m\right)  }\left(  x\right)  \right\}
\leq\inf_x\left\{  (-1)^{m+1}u^{\left(  m\right)  }\left(  x\right)  \right\}
\left(  \frac{1}{\varepsilon_{m}}-1\right)
\end{array}
\right\}  .
\]
Theorem 4 of \citet{tsetlin2015generalized} shows that $E_{F_1}[u(X)]\ge E_{F_2}[u(X)]$ for all $u\in U_m(\varepsilon_m)$ if and only if $F_1$ $\varepsilon_m$-A$m$SD $F_2$ and $F_{1}^{\left(  k\right)  }\left(  b\right)  \leq F_2^{\left(  k\right)
}\left(  b\right)  $ for all $k\in\left\{  2,\ldots,m\right\}  $. Similarly, we are interested in finding the smallest $\varepsilon_m$ such that \eqref{eq.ASD inequality} holds.
We propose the $m$th-degree stochastic dominance coefficient as follows for all $m\ge1$. 

\begin{definition}\label{def.SDC}
The \emph{$F_1$-$F_2$ $m$th-degree stochastic dominance coefficient ($F_1$-$F_2$ $m$SDC)}, denoted by $c_m(F_1,F_2)$, is defined as
\begin{align}
    &c_m(F_1,F_2)=\notag\\
    &\inf\left\{\varepsilon_m\in[0,1]:\int_{S_{m}\left(  F_{1},F_{2}\right)  }\left(  F_{1}^{\left(  m\right)
}\left(  x\right)  -F_{2}^{\left(  m\right)  }\left(  x\right)  \right)
\mathrm{d}x\leq\varepsilon_{m}\int_{a}^{b}\left\vert F_{1}^{\left(  m\right)
}\left(  x\right)  -F_{2}^{\left(  m\right)  }\left(  x\right)  \right\vert
\mathrm{d}x\right\}.
\end{align}
\end{definition}

We define the difference function
\[
\phi_m\left(  x\right)  =F^{(m)}_{1}\left(  x\right)  -F^{(m)}_{2}\left(  x\right),\quad x\in[a,b].
\]
The following lemma summarizes the properties of the SDC $c_m(F_1,F_2)$.

\begin{lemma}\label{lemma.cm properties}
The SDC $c_m(F_1,F_2)$ is the smallest $\varepsilon_m$ in $[0,1]$ such that \eqref{eq.ASD inequality} holds. With \eqref{eq.0timesinf}, it follows that
\begin{align}\label{eq.SDC}
    c_m(F_1,F_2)&=\frac{\int_{S_{m}\left(  F_{1},F_{2}\right)  }\left(  F_{1}^{\left(  m\right)
}\left(  x\right)  -F_{2}^{\left(  m\right)  }\left(  x\right)  \right)
\mathrm{d}x}{\int_{a}^{b}\left\vert F_{1}^{\left(  m\right)
}\left(  x\right)  -F_{2}^{\left(  m\right)  }\left(  x\right)  \right\vert
\mathrm{d}x}\notag\\
&=\frac{\int_{a}^{b}\max\left\{  \phi_m\left(  x\right)  ,0\right\}
\mathrm{d}x}{\int_{a}^{b}\max\left\{  \phi_m\left(  x\right)  ,0\right\}
\mathrm{d}x+\int_{a}^{b}\max\left\{  -\phi_m\left(  x\right)  ,0\right\}
\mathrm{d}x}.
\end{align}
In addition, if $c_m(F_1,F_2)\in(0,1]$, then $c_m(F_2,F_1)=1-c_m(F_1,F_2)$.
\end{lemma}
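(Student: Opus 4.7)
The plan is to mirror the proof of Lemma \ref{lemma.c properties}, since the statement is structurally identical with $\phi$ replaced by $\phi_m$ and the integration domain $[0,1]$ replaced by $[a,b]$. The starting observation is that $S_m(F_1,F_2) = \{x\in[a,b]: \phi_m(x)>0\}$, so
\[
\int_{S_m(F_1,F_2)} \bigl(F_1^{(m)}(x)-F_2^{(m)}(x)\bigr)\,\mathrm{d}x = \int_a^b \max\{\phi_m(x),0\}\,\mathrm{d}x,
\]
and splitting $|\phi_m| = \max\{\phi_m,0\} + \max\{-\phi_m,0\}$ gives
\[
\int_a^b |\phi_m(x)|\,\mathrm{d}x = \int_a^b \max\{\phi_m(x),0\}\,\mathrm{d}x + \int_a^b \max\{-\phi_m(x),0\}\,\mathrm{d}x.
\]
These identities show that the two ratios on the right-hand side of \eqref{eq.SDC} are equal, so it suffices to establish the first equality.

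Next I would set $A = \int_a^b \max\{\phi_m,0\}\,\mathrm{d}x$ and $B = \int_a^b |\phi_m|\,\mathrm{d}x$ and split into two cases. If $B>0$, then the inequality in \eqref{eq.ASD inequality} is equivalent to $\varepsilon_m \geq A/B$, and since $A\leq B$ we have $A/B\in[0,1]$; hence the infimum in Definition \ref{def.SDC} is attained and equals $A/B$. If $B=0$, then $\phi_m=0$ almost everywhere so $A=0$ and \eqref{eq.ASD inequality} becomes $0\leq 0$, which holds for every $\varepsilon_m\in[0,1]$; the infimum is therefore $0$, which is consistent with the ratio $A/B$ under the convention \eqref{eq.0timesinf} since $0/0 = 0\cdot(1/0) = 0\cdot\infty = 0$. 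Together these two cases show both that $c_m(F_1,F_2)$ is the smallest $\varepsilon_m\in[0,1]$ for which \eqref{eq.ASD inequality} holds and that the closed-form expression in \eqref{eq.SDC} is valid.

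For the symmetry statement, note that swapping $F_1$ and $F_2$ replaces $\phi_m$ by $-\phi_m$, which interchanges $\max\{\phi_m,0\}$ and $\max\{-\phi_m,0\}$ but leaves $B$ unchanged. Hence
\[
c_m(F_2,F_1) = \frac{B-A}{B} = 1 - \frac{A}{B} = 1 - c_m(F_1,F_2),
\]
provided $B>0$. The hypothesis $c_m(F_1,F_2)\in(0,1]$ forces $A>0$ and therefore $B>0$, so this manipulation is legitimate.

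The arguments are essentially mechanical; the only subtle point is the edge case $B=0$, where one must invoke the convention \eqref{eq.0timesinf} to reconcile the infimum (which is $0$) with the formal ratio $A/B$. Once this is handled, the rest follows by direct computation, and no new ideas beyond those used in Lemmas \ref{lemma.c properties}, \ref{lemma.cmu L properties}, \ref{lemma.cmd L properties}, \ref{lemma.cmu properties}, and \ref{lemma.cmd properties} are required.
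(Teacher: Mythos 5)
Your proof is correct and follows essentially the same route as the paper, which simply notes that the argument is identical to that of Lemma \ref{lemma.c properties} with $\phi$ replaced by $\phi_m$ and $[0,1]$ by $[a,b]$. The only cosmetic difference is that you characterize the feasible set of $\varepsilon_m$ directly as the interval $[A/B,1]$ when $B>0$, whereas the paper establishes attainment of the infimum by contradiction; both handle the degenerate case and the symmetry claim in the same way via the convention \eqref{eq.0timesinf}.
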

  
According to Lemma \ref{lemma.cm properties}, $F_1$ $\varepsilon_m$-A$m$SD $F_2$ for all $\varepsilon_m\in[c_m(F_1,F_2),1/2)$ if $c_m(F_1,F_2)<1/2$. On the other hand, $c_m(F_1,F_2)>1/2$ implies that $F_2$ $\varepsilon_m$-A$m$SD $F_1$ for all $\varepsilon_m\in[1-c_m(F_1,F_2),1/2)$. Thus, $c_m(F_1,F_2)$ presents the degree of almost stochastic dominance relationship between $F_1$ and $F_2$, and provides all $\varepsilon_m$ such that the $\varepsilon_m$-A$m$SD holds.

\begin{proposition}\label{prop.SDC properties}
If $c_m(F_1,F_2)\in(0,1/2)$ and $F_{1}^{\left(  k\right)  }\left(  b\right)  \leq F_2^{\left(  k\right)
}\left(  b\right)  $ for all $k\in\left\{  2,\ldots,m\right\}  $, it then follows that 
    $E_{F_1}[u(X)]\ge E_{F_2}[u(X)]$ \text{ for all } $u\in\cup_{\varepsilon_m\in[c_m(F_1,F_2),1/2)}U_m(\varepsilon_m)$.
\end{proposition}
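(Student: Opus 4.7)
The plan is to reduce the claim to Theorem 4 of \citet{tsetlin2015generalized} by first showing that $F_1$ $\varepsilon_m$-A$m$SD $F_2$ holds for every $\varepsilon_m$ in the range $[c_m(F_1,F_2),1/2)$, and then invoking the sufficiency direction of that theorem uniformly over this range.

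First, I would use Lemma \ref{lemma.cm properties}, which says that $c_m(F_1,F_2)$ is the smallest element of $[0,1]$ for which the defining inequality \eqref{eq.ASD inequality} holds. Since the right-hand side of \eqref{eq.ASD inequality} is a nondecreasing linear function of $\varepsilon_m$ while the left-hand side is independent of $\varepsilon_m$, the inequality continues to hold for every $\varepsilon_m\ge c_m(F_1,F_2)$. Hence for any $\varepsilon_m\in[c_m(F_1,F_2),1/2)$, we have that $F_1$ $\varepsilon_m$-A$m$SD $F_2$ in the sense of Definition \ref{def.AmSD}.

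Next, fix $\varepsilon_m\in[c_m(F_1,F_2),1/2)$ and take any $u\in U_m(\varepsilon_m)$. Combining the A$m$SD conclusion from the previous step with the hypothesized boundary condition $F_1^{(k)}(b)\le F_2^{(k)}(b)$ for $k\in\{2,\ldots,m\}$, the "if" direction of Theorem 4 of \citet{tsetlin2015generalized} yields $E_{F_1}[u(X)]\ge E_{F_2}[u(X)]$. Since $\varepsilon_m\in[c_m(F_1,F_2),1/2)$ and $u\in U_m(\varepsilon_m)$ were arbitrary, this inequality holds simultaneously for every $u$ in the union
\[
\bigcup_{\varepsilon_m\in[c_m(F_1,F_2),1/2)} U_m(\varepsilon_m),
\]
which is exactly the assertion of the proposition.

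I do not anticipate a substantive obstacle here: the argument is essentially a packaging result. The only subtlety worth double-checking is that the boundary conditions $F_1^{(k)}(b)\le F_2^{(k)}(b)$ carry over unchanged across all $\varepsilon_m$ in the range (they do, since they do not involve $\varepsilon_m$), and that the assumption $c_m(F_1,F_2)\in(0,1/2)$ guarantees the interval $[c_m(F_1,F_2),1/2)$ is nonempty so that the union is well-defined. Hence the proof reduces to the monotonicity observation above together with a direct invocation of \citet[Theorem 4]{tsetlin2015generalized}.
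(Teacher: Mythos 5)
Your argument is correct and matches the paper's proof, which simply states that the result follows directly from Lemma \ref{lemma.cm properties} and Theorem 4 of \citet{tsetlin2015generalized}; you have merely spelled out the monotonicity-in-$\varepsilon_m$ step and the uniform invocation of the theorem's sufficiency direction that the paper leaves implicit. No gaps.
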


Similar to Proposition \ref{prop.c vs Gini}, Proposition \ref{prop.SDC properties} shows the importance of SDC concerning utility functions. The smaller the SDC $c_m(F_1,F_2)$ is, the more utility functions show higher expected utility for the distribution $F_1$ compared to $F_2$. With the knowledge of $c_m(F_1,F_2)$, we can infer the relationship between $F_1$ and $F_2$ based on a class of utility functions.

\subsection{Estimation and Inference}
Next, we provide an approach of estimating $c_m(F_1,F_2)$ and conducting inference on it. 

\begin{assumption}\label{ass.distribution SD}
    Let random variables $X^1$ and $X^2$ have the joint CDF $F_{12}$ with marginal CDFs $F_1$ and $F_2$, respectively. The mean vector $E[(X^1,X^2)]$ is finite and $(X^1,X^2)$ has a finite covariance matrix.
\end{assumption}

Similar to Sections \ref{sec.LDC} and \ref{sec.ISDC}, 
we define maps $\mathcal{F}_{1}:\ell^{\infty}\left(  \left[  a,b\right]  \right)
\rightarrow\mathbb{R}$, $\mathcal{F}_{2}:\ell^{\infty}\left(  \left[
a,b\right]  \right)  \rightarrow\mathbb{R}$, and $\mathcal{F}:\ell^{\infty
}\left(  \left[  a,b\right]  \right)  \rightarrow\mathbb{R}$ by
\begin{align}\label{eq.F map SD}
&\mathcal{F}_{1}\left(  \psi\right)     =\int_{a}^{b}\max\left\{  \psi\left(x\right)  ,0\right\}  \mathrm{d}x,\quad \mathcal{F}_{2}\left(  \psi\right)
=\int_{a}^{b}\max\left\{  -\psi\left(  x\right)  ,0\right\}  \mathrm{d}x,\notag\\
&\text{and }\mathcal{F}\left(  \psi\right)     =\frac{\mathcal{F}_{1}\left(
\psi\right)  }{\mathcal{F}_{1}\left(  \psi\right)  +\mathcal{F}_{2}\left(
\psi\right)  },\quad\psi\in\ell^{\infty}\left(  \left[  a,b\right]  \right)  .
\end{align}
Given the sample as in Assumption \ref{ass.data}, let $\hat{F}_j$ ($j=1,2$) be defined as in \eqref{eq.empirical cdfs} for $x\in[a,b]$. For matched pairs, define the empirical joint CDF
\begin{align}\label{eq.empirical joint cdf}
	\hat{F}_{12}(x,x')=\frac{1}{n}\sum_{i=1}^{n}
	{1}(X_i^1\leq x,X_i^2\leq x'),\quad x,x'\in[a,b].
\end{align} For $j=1,2$, define
\[
\hat{F}_{j}^{\left(  m\right)  }\left(  x\right)  =\int_{a}^{x}\hat{F}%
_{j}^{\left(  m-1\right)  }\left(  t\right)  \mathrm{d}t=\int_{a}^{x}%
\cdots\int_{a}^{t_{3}}\int_{a}^{t_{2}}\hat{F}_{j}\left(  t_{1}\right)
\mathrm{d}t_{1}\mathrm{d}t_{2}\cdots\mathrm{d}t_{m-1},\quad m\geq2,
\]
with $\hat{F}_{j}^{\left(  1\right)  }\left(  x\right)  =\hat{F}_{j}\left(
x\right)  $ for all $x\in\left[  a,b\right]  $. Define
\begin{align}\label{eq.Im}
\mathcal{I}_1(f)=f \text{ and for }m\ge2, \mathcal{I}_{m}\left(  f\right)(x)  =\int_{a}^{x}\cdots\int_{a}^{t_{3}}\int
_{a}^{t_{2}}f\left(  t_{1}\right)  \mathrm{d}t_{1}\mathrm{d}t_{2}%
\cdots\mathrm{d}t_{m-1}, \quad x\in [a,b],
\end{align}
for every measurable function $f$.
For $m\ge1$, define the empirical version of $\phi_m$ by
\[
\hat{\phi}_{m}\left(  x\right)  =\hat{F}_{1}^{\left(  m\right)  }\left(
x\right)  -\hat{F}_{2}^{\left(  m\right)  }\left(  x\right)  ,\quad
x\in\left[  a,b\right]  .
\]
The estimator for $c_{m}\left(  F_{1},F_{2}\right)  $ can be constructed
by
\[
\hat{c}_{m}\left(  F_{1},F_{2}\right)  =\frac{\mathcal{F}_{1}(  \hat
{\phi}_{m})  }{\mathcal{F}_{1}(  \hat{\phi}_{m})
+\mathcal{F}_{2}(  \hat{\phi}_{m})  }=\mathcal{F}(  \hat{\phi
}_{m})  .
\]

For every $\psi\in\ell^{\infty}\left(  \left[  a,b\right]  \right)  $, define
\[
C_{0}\left(  \psi\right)  =\left\{  x\in\left[  a,b\right]  :\psi\left(
x\right)  =0\right\}  \text{ and }C_{+}\left(  \psi\right)  =\left\{
x\in\left[  a,b\right]  :\psi\left(  x\right)  >0\right\}  .
\]

\begin{lemma}\label{lemma.F asymptotic limit}
Under Assumptions \ref{ass.data} and \ref{ass.distribution SD},
\begin{align}
    \sqrt{T_{n}}\left\{  (\hat{F}_{1}-\hat{F}_{2})-(F_{1}-F_{2})\right\}
\leadsto\mathbb{G}_{F},
\end{align}
for some random element $\mathbb{G}_{F}$ such that for all $x,x^{\prime}\in[a,b]$,
\begin{align}\label{eq.GxGx'}
&E\left[  \mathbb{G}_{F}\left(  x\right)  \mathbb{G}_{F}(  x^{\prime
})  \right]  \notag\\
  =&\,\left(  1-\lambda\right)  F_{1}(  x\wedge x^{\prime})  -\left(
1-\lambda\right)  F_{1}\left(  x\right)  F_{1}(  x^{\prime})
+\lambda F_{2}(  x\wedge x^{\prime})
-\lambda F_{2}\left(  x\right)  F_{2}(  x^{\prime})\notag\\
  &-\sqrt{\lambda\left(  1-\lambda\right)  }F_{12}(  x,x^{\prime})+\sqrt{\lambda\left(  1-\lambda\right)  }F_{1}\left(  x\right)  F_{2}(
x^{\prime})
\notag\\
&-\sqrt{\lambda\left(  1-\lambda\right)  }F_{12}(  x^{\prime},x)+\sqrt{\lambda\left(  1-\lambda\right)  }F_{1}(  x^{\prime})
F_{2}\left(  x\right).
\end{align}
\end{lemma}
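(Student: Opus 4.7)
My plan is to derive Lemma 4.1 by combining a Donsker-type result for the joint empirical CDF process with a direct covariance computation, closely mirroring the strategy used earlier in the paper for the quantile processes.

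First, I would invoke the bivariate empirical process theory under Assumption 2.2. In the independent samples case, the classical Donsker theorem applied separately to $\{X_i^1\}$ and $\{X_i^2\}$ yields
\begin{equation*}
\bigl(n_1^{1/2}(\hat F_1 - F_1),\, n_2^{1/2}(\hat F_2 - F_2)\bigr) \leadsto (\mathcal B_1 \circ F_1,\, \mathcal B_2 \circ F_2)
\end{equation*}
in $\ell^\infty([a,b])^2$, with independent marginal limits; in the matched pairs case, the bivariate empirical process literature (as already used in Section 2.5 of the paper and in Beare (2010) / Beare and Moon 2015) delivers the same display but with $\mathcal B_1$ and $\mathcal B_2$ jointly distributed according to the centered Gaussian element $\mathcal B$ on $C([0,1]^2)$ with copula $\mathbf C$. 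Assumption 5.1 ensures the finite second moments needed to work on $[a,b]$.

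Second, multiplying by $\sqrt{T_n/n_j}$ and using $T_n/n_1\to 1-\lambda$, $T_n/n_2\to\lambda$, together with the continuous mapping theorem for the subtraction map $(g_1,g_2)\mapsto g_1-g_2$, I conclude
\begin{equation*}
\sqrt{T_n}\bigl\{(\hat F_1-\hat F_2)-(F_1-F_2)\bigr\} \leadsto \mathbb G_F := \sqrt{1-\lambda}\,\mathcal B_1\circ F_1 - \sqrt{\lambda}\,\mathcal B_2\circ F_2.
\end{equation*}

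Third, I would verify \eqref{eq.GxGx'} by a direct bilinear expansion of $E[\mathbb G_F(x)\mathbb G_F(x')]$. From the covariance kernel of $\mathcal B$ stated in Section 2.5, $E[\mathcal B_j(F_j(x))\mathcal B_j(F_j(x'))]=F_j(x\wedge x')-F_j(x)F_j(x')$ for $j=1,2$, which produces the first four terms in \eqref{eq.GxGx'}. For the cross terms, $E[\mathcal B_1(F_1(x))\mathcal B_2(F_2(x'))] = E[\mathcal B(F_1(x),1)\mathcal B(1,F_2(x'))] = \mathbf C(F_1(x),F_2(x')) - F_1(x)F_2(x')$. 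Under independent samples, $\mathbf C(u,v)=uv$ so the cross covariance vanishes and $F_{12}(x,x')=F_1(x)F_2(x')$, so the cross-term lines in \eqref{eq.GxGx'} are identically zero and the formula holds. Under matched pairs, Sklar's theorem gives $\mathbf C(F_1(x),F_2(x'))=F_{12}(x,x')$ (and symmetrically for $F_{12}(x',x)$), so $-\sqrt{\lambda(1-\lambda)}$ times these cross covariances reproduces exactly the last four terms of \eqref{eq.GxGx'}.

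The main obstacle is the joint weak convergence in the matched pairs case, since dependence across the two empirical processes forces reliance on the copula-based bivariate empirical process framework; however, this is exactly the machinery already imported in Section 2.5, so the proof becomes essentially a clean invocation followed by the covariance computation outlined above. Tightness and measurability issues are standard, as $[a,b]$ is compact and the bracketing numbers for indicator classes of half-lines are well-controlled.
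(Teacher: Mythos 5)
Your proposal is correct and follows essentially the same route as the paper's proof: establish joint weak convergence of the two empirical CDF processes, apply the continuous mapping theorem to the difference, and expand the covariance bilinearly, with the cross terms identified as $\mathbf{C}(F_1(x),F_2(x'))-F_1(x)F_2(x')=F_{12}(x,x')-F_1(x)F_2(x')$. The only cosmetic difference is that the paper obtains the joint convergence in the matched-pairs case by indexing the empirical process with the indicator classes $\{1_{[a,x]\times[a,b]}\}$ and $\{1_{[a,b]\times[a,x]}\}$ and citing the union-of-Donsker-classes result (Example 2.10.7 of van der Vaart and Wellner), whereas you invoke the copula-based bivariate process $\mathcal{B}$ from Section~\ref{secbsasym} together with Sklar's theorem; both yield the identical covariance kernel \eqref{eq.GxGx'}.
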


\begin{proposition}\label{prop.cm asymptotic limit}
Under Assumptions \ref{ass.data} and \ref{ass.distribution SD}, it follows that
\begin{align*}
\sqrt{T_{n}}(  \hat{\phi}_{m}-\phi_{m})    \leadsto\mathbb{G}_{m}
\end{align*}
for some random element $\mathbb{G}_m$ with 
\begin{align*}
    &Var(\mathbb{G}_m(x))=\\
    &\int_{a}^{x}\cdots\int_{a}^{t_{3}^{\prime}}\int_{a}^{t_{2}^{\prime}}\left(  \int_{a}^{x}\cdots\int_{a}^{t_{3}}\int_{a}^{t_{2}}E\left[
\mathbb{G}_{F}(  t_{1})  \mathbb{G}_{F}(  t_{1}^{\prime
})  \right]  \mathrm{d}t_{1}\mathrm{d}t_{2}\cdots\mathrm{d}t_{m-1}\right)  \mathrm{d}t_{1}^{\prime}\mathrm{d}t_{2}^{\prime}\cdots\mathrm{d}t_{m-1}^{\prime}
\end{align*}
for every $x\in[a,b]$.
Moreover, it follows that
\[
\sqrt{T_{n}}\left\{  \hat{c}_{m}\left(  F_{1},F_{2}\right)  -c_{m}\left(
F_{1},F_{2}\right)  \right\}  \leadsto\mathcal{F}_{\phi_{m}}^{\prime}\left(
\mathbb{G}_{m}\right),
\]
where for every $h\in\ell^{\infty}([a,b])$,
\begin{align*}
\mathcal{F}_{\phi_m}^{\prime}\left(  h\right)  =\frac{\mathcal{F}_{1\phi_m}^{\prime}\left(  h\right)  \mathcal{F}_{2}\left(
\phi_m\right)  -\mathcal{F}_{1}\left(  \phi_m\right)  \mathcal{F}_{2\phi_m}^{\prime
}\left(  h\right)  }{\left(  \mathcal{F}_{1}\left(  \phi_m\right)
+\mathcal{F}_{2}\left(  \phi_m\right)  \right)  ^{2}},
\end{align*}
\[
\mathcal{F}_{1\phi_m}^{\prime}\left(  h\right)  =\int_{C_{+}\left(  \phi_m\right)
}h\left(  x\right)  \mathrm{d}x+\int_{C_{0}\left(  \phi_m\right)  }\max\left\{
h\left(  x\right)  ,0\right\}  \mathrm{d}x,
\]
and
\[
\mathcal{F}_{2\phi_m}^{\prime}\left(  h\right)  =\int_{C_{+}\left(
-\phi_m\right)  }-h\left(  x\right)  \mathrm{d}x+\int_{C_{0}\left(
\phi_m\right)  }\max\left\{  -h\left(  x\right)  ,0\right\}  \mathrm{d}x.
\]

\end{proposition}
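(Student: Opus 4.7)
The plan is to split the proof into two stages that parallel exactly the argument underlying Proposition \ref{prop.c asymptotic limit}: first obtain the weak convergence of $\sqrt{T_n}(\hat{\phi}_m-\phi_m)$ by pushing the convergence from Lemma \ref{lemma.F asymptotic limit} through the repeated-integral operator $\mathcal{I}_m$; then obtain the convergence of $\sqrt{T_n}\{\hat{c}_m(F_1,F_2)-c_m(F_1,F_2)\}$ by applying the functional delta method of \citet{fang2014inference} to the Hadamard directionally differentiable map $\mathcal{F}$.

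For the first stage, I would observe that the operator $\mathcal{I}_m:\ell^{\infty}([a,b])\to\ell^{\infty}([a,b])$ defined in \eqref{eq.Im} is linear and bounded, with operator norm at most $(b-a)^{m-1}/(m-1)!$, so in particular it is continuous. Since $\hat{F}_j^{(m)}=\mathcal{I}_m(\hat{F}_j)$ and $F_j^{(m)}=\mathcal{I}_m(F_j)$ for $j=1,2$, linearity of $\mathcal{I}_m$ gives
\begin{equation*}
\sqrt{T_n}(\hat{\phi}_m-\phi_m)=\mathcal{I}_m\!\left(\sqrt{T_n}\{(\hat{F}_1-\hat{F}_2)-(F_1-F_2)\}\right).
\end{equation*}
Combining Lemma \ref{lemma.F asymptotic limit} with the continuous mapping theorem yields $\sqrt{T_n}(\hat{\phi}_m-\phi_m)\leadsto\mathcal{I}_m(\mathbb{G}_F)=:\mathbb{G}_m$. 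The stated variance formula then follows from writing $\mathbb{G}_m(x)^2$ as an iterated integral of $\mathbb{G}_F(t_1)\mathbb{G}_F(t_1')$ over the prescribed simplices and exchanging integration and expectation via Fubini (the covariance kernel \eqref{eq.GxGx'} is bounded on $[a,b]^2$, which justifies Fubini).

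For the second stage, I would first verify that $\mathcal{F}_1$ and $\mathcal{F}_2$ defined in \eqref{eq.F map SD} are Hadamard directionally differentiable at $\phi_m$ tangentially to $\ell^{\infty}([a,b])$; this is an immediate adaptation of Lemma S.4.5 of \citet{fang2014inference} (the same result used in Section \ref{sec.LDC estimation and inference}), with derivatives expressed in terms of the contact set $C_0(\phi_m)$ and the strict-positivity sets $C_+(\pm\phi_m)$. A standard chain and quotient rule for Hadamard directionally differentiable maps, applicable because $\mathcal{F}_1(\phi_m)+\mathcal{F}_2(\phi_m)=\int_a^b|\phi_m|\,\mathrm{d}x>0$ (which is implied by $c_m(F_1,F_2)\in(0,1)$ and, more generally, by $\phi_m\not\equiv0$), then yields Hadamard directional differentiability of $\mathcal{F}$ at $\phi_m$ with derivative $\mathcal{F}'_{\phi_m}$ as stated. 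Applying the delta method for Hadamard directionally differentiable maps \citep[Theorem 2.1]{fang2014inference} to the weak convergence from stage one delivers
\begin{equation*}
\sqrt{T_n}\{\hat{c}_m(F_1,F_2)-c_m(F_1,F_2)\}=\sqrt{T_n}\{\mathcal{F}(\hat{\phi}_m)-\mathcal{F}(\phi_m)\}\leadsto\mathcal{F}'_{\phi_m}(\mathbb{G}_m).
\end{equation*}

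The main obstacle I anticipate is the bookkeeping required to ensure that the limit $\mathbb{G}_m=\mathcal{I}_m(\mathbb{G}_F)$ takes values in a subspace of $\ell^{\infty}([a,b])$ on which $\mathcal{F}_{\phi_m}'$ is continuous in the sense needed for the delta method; this amounts to checking that $\mathbb{G}_F$ has sufficiently regular sample paths (which follows from the empirical-process structure exploited in Lemma \ref{lemma.F asymptotic limit} and the boundedness/smoothing effect of $\mathcal{I}_m$). Aside from this, the stages above reduce everything either to a continuous linear pushforward or to the same Hadamard differentiability machinery already developed for the Lorenz case, so no new technical ingredient beyond Lemma \ref{lemma.F asymptotic limit} is required.
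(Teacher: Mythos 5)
Your proposal is correct and follows essentially the same route as the paper: the continuous mapping theorem applied to the bounded linear operator $\mathcal{I}_m$ together with Lemma \ref{lemma.F asymptotic limit} for the first display, Fubini's theorem for the variance formula, and the Hadamard directional differentiability of $\mathcal{F}$ (via the quotient rule and Lemma S.4.5 of \citet{fang2014inference}) combined with Theorem 2.1 of \citet{fang2014inference} for the second display, exactly as in the paper's proof of Proposition \ref{prop.c asymptotic limit} to which its proof of this result defers. Your explicit remark that the quotient rule requires $\mathcal{F}_1(\phi_m)+\mathcal{F}_2(\phi_m)>0$, i.e.\ $\phi_m\not\equiv 0$, is a useful bit of extra care that the paper leaves implicit here.
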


Let $\hat{\lambda}=n_1/(n_1+n_2)$.  With \eqref{eq.GxGx'}, for independent samples,
we estimate $E[\mathbb{G}_F(x)\mathbb{G}_F(x')]$ by 
\begin{align*}
    \hat{E}\left[  \mathbb{G}_{F}\left(  x\right)  \mathbb{G}_{F}(  x^{\prime
})  \right]  
  =&\,(  1-\hat{\lambda})\hat  F_{1}(  x\wedge x^{\prime})  -(
1-\hat{\lambda})\hat  F_{1}\left(  x\right)\hat  F_{1}(  x^{\prime})\\
 &+\hat{\lambda}\hat F_{2}(  x\wedge x^{\prime})
-\hat{\lambda}\hat F_{2}\left(  x\right) \hat F_{2}(  x^{\prime})
\end{align*}
for all $x,x'\in[a,b]$. For matched pairs, we estimate $E[\mathbb{G}_F(x)\mathbb{G}_F(x')]$ by 
\begin{align*}
    &\hat{E}\left[  \mathbb{G}_{F}\left(  x\right)  \mathbb{G}_{F}(  x^{\prime
})  \right]  \notag\\
  =&\,(  1-\hat{\lambda})\hat  F_{1}(  x\wedge x^{\prime})  -(
1-\hat{\lambda})\hat  F_{1}\left(  x\right)\hat  F_{1}(  x^{\prime})
+\hat{\lambda}\hat F_{2}(  x\wedge x^{\prime})
-\hat{\lambda}\hat F_{2}\left(  x\right) \hat F_{2}(  x^{\prime})  
  \notag\\
& -\sqrt{\hat{\lambda}(  1-\hat{\lambda})  }\hat F_{12}(  x,x^{\prime})+\sqrt{\hat{\lambda}(  1-\hat{\lambda})  }\hat F_{1}\left(  x\right)  \hat F_{2}(
x^{\prime}) \notag\\
  &-\sqrt{\hat{\lambda}(  1-\hat{\lambda})  }\hat F_{12}(  x^{\prime},x)
+\sqrt{\hat{\lambda}(  1-\hat{\lambda})  }\hat F_{1}(  x^{\prime})
\hat F_{2}\left(  x\right)
\end{align*}
for all $x,x'\in[a,b]$. We then estimate the variance $Var(\mathbb{G}_m(x))$ by
\begin{align*}
&\hat{\sigma}_1(x)^2=\hat{E}\left[
\mathbb{G}_{F}( x)^2  \right]  \text{ for every }x\in[a,b], \text{ and for }m\ge2,\\
    &\hat{\sigma}_m(x)^2=\\
&\int_{a}^{x}\cdots\int_{a}^{t_{3}^{\prime}}\int_{a}^{t_{2}^{\prime}}\left(  \int_{a}^{x}\cdots\int_{a}^{t_{3}}\int_{a}^{t_{2}}\hat{E}\left[
\mathbb{G}_{F}(  t_{1})  \mathbb{G}_{F}(  t_{1}^{\prime
})  \right]  \mathrm{d}t_{1}\mathrm{d}t_{2}\cdots\mathrm{d}t_{m-1}\right)  \mathrm{d}t_{1}^{\prime}\mathrm{d}t_{2}^{\prime}\cdots\mathrm{d}t_{m-1}^{\prime},
\end{align*}
for every $x\in[a,b]$.

\subsubsection{Bootstrap Confidence Interval}

Following Section \ref{sec.bootstrap CI Lorenz}, we construct the estimators of $C_{+}\left(  \phi_m\right)  $, $C_{+}\left(
-\phi_m\right)  $, and $C_{0}\left(  \phi_m\right)  $ by
\begin{align*}
&\widehat{C_{+}\left(  \phi_m\right)  } =\left\{  x\in\left[  a,b\right]
:\frac{\sqrt{T_{n}}\hat{\phi}_m\left(  x\right)  }{\xi_{0}\vee\hat{\sigma
}_m\left(  x\right)  }>t_{n}\right\}  ,\widehat{C_{+}\left(  -\phi_m\right)
}=\left\{  x\in\left[  a,b\right]  :\frac{\sqrt{T_{n}}\hat{\phi}_m\left(
x\right)  }{\xi_{0}\vee\hat{\sigma}_m\left(  x\right)  }<-t_{n}\right\}  ,\\
&\text{and }\widehat{C_{0}\left(  \phi_m\right)  } =\left\{  x\in\left[
a,b\right]  :\left\vert \frac{\sqrt{T_{n}}\hat{\phi}_m\left(  x\right)  }
{\xi_{0}\vee\hat{\sigma}_m\left(  x\right)  }\right\vert \leq t_{n}\right\}  ,
\end{align*}
where $t_{n}\rightarrow\infty$ and $t_{n}/\sqrt{T_{n}}\rightarrow0$ as
$n\rightarrow\infty$. We then construct the estimator of $\mathcal{F}_{1\phi_m
}^{\prime}$ and $\mathcal{F}_{2\phi_m}^{\prime}$ by%
\begin{align*}
&\mathcal{\hat{F}}_{1\phi_m}^{\prime}\left(  h\right)  =\int_{\widehat
{C_{+}\left(  \phi_m\right)  }}h\left(  p\right)  \mathrm{d}p+\int
_{\widehat{C_{0}\left(  \phi_m\right)  }}\max\left\{  h\left(  p\right)
,0\right\}  \mathrm{d}p\\
&\text{and } \mathcal{\hat{F}}_{2\phi_m}^{\prime}\left(  h\right)
=\int_{\widehat{C_{+}\left(  -\phi_m\right)  }}-h\left(  p\right)
\mathrm{d}p+\int_{\widehat{C_{0}\left(  \phi_m\right)  }}\max\left\{  -h\left(
p\right)  ,0\right\}  \mathrm{d}p
\end{align*}
for every $h$ $\in\ell^{\infty}\left(  \left[  0,1\right]  \right)  $. The
estimator of $\mathcal{F}_{\phi_m}^{\prime}$ is defined by
\[
\mathcal{\hat{F}}_{\phi_m}^{\prime}\left(  h\right)  =\frac{\mathcal{\hat{F}%
}_{1\phi_m}^{\prime}\left(  h\right)  \mathcal{F}_{2}(  \hat{\phi}_m)
-\mathcal{F}_{1}(  \hat{\phi}_m)  \mathcal{\hat{F}}_{2\phi_m}^{\prime
}\left(  h\right)  }{\left(  \mathcal{F}_{1}(  \hat{\phi}_m)
+\mathcal{F}_{2}(  \hat{\phi}_m)  \right)  ^{2}}
\]
for every $h$ $\in\ell^{\infty}\left(  \left[  0,1\right]  \right)  $.

As discussed in Section \ref{sec.bootstrap CI Lorenz}, for independent samples, we draw a bootstrap sample $\{\hat{X}_{i}^{j}%
\}_{i=1}^{n_j}$ identically and independently from $\{X_{i}^{j}\}_{i=1}^{n_j}$ for
$j=1,2$, where $\{\hat{X}_{i}^{1}\}_{i=1}^{n_1}$ is jointly independent of
$\{\hat{X}_{i}^{2}\}_{i=1}^{n_2}$. For matched pairs, we draw a bootstrap sample
$\{(\hat{X}_{i}^{1},\hat{X}_{i}^{2})\}_{i=1}^{n}$ identically and
independently from $\{(X_{i}^{1},X_{i}^{2})\}_{i=1}^{n}$. Let $\hat{F}_j^*$ be defined as in Section \ref{sec.bootstrap CI Lorenz} for $x\in[a,b]$.
Then we define the bootstrap version of $\phi_m$ by
\begin{align*}
	\hat{\phi}_m^*(x)=\mathcal{I}_m(\hat{F}_1^*-\hat{F}_2^*)(x),\quad x\in[a,b].
\end{align*}
We define the bootstrap approximation of $c_m(F_1,F_2)$ by 
\begin{align}
    \hat{c}_m^*(F_1,F_2)=\mathcal{\hat{F}}_{\phi_m}^{\prime}(\sqrt{T_n}(\hat{\phi}_m^*-\hat{\phi}_m)).
\end{align}
For every $\beta\in(0,1)$, let $c_{m,\beta}$ denote the $\beta$ quantile of the distribution of $\mathcal{F}'_{\phi_m}(\mathbb{G}_m)$.
We construct the bootstrap approximation of $c_{m,\beta}$ by 
\begin{align}
    \hat{c}_{m,\beta}=\inf\left\{c: \mathbb{P}\left(\hat{c}_m^*(F_1,F_2)\le c|\{X^1_i\}_{i=1}^{n_1},\{X^2_i\}_{i=1}^{n_2}\right)\ge \beta\right\}.
\end{align}
Empirically, we approximate $\hat{c}_{m,\beta}$ by computing the $\beta$ quantile of the $n_B$ independently generated $\hat{c}_m^*(F_1,F_2)$, where $n_B$ is chosen as large as is computationally convenient. 

For a nominal significance level $\alpha\in(0,1/2)$, we construct the $1-\alpha$ confidence interval by
\begin{align}
    \mathrm{CI}_{m,1-\alpha}=[\hat{c}_m(F_1,F_2)-T_n^{-1/2}\hat{c}_{m,1-\alpha/2},\hat{c}_m(F_1,F_2)-T_n^{-1/2}\hat{c}_{m,\alpha/2}].
\end{align}

\begin{proposition}\label{prop.confidence interval SDC}
Suppose that Assumption \ref{ass.data} holds. If $c_m(F_1,F_2)\in(0,1)$ and the CDF of $\mathcal{F}'_{\phi_m}(\mathbb{G}_m)$ is continuous and increasing at $c_{\alpha}$ and $c_{1-\alpha}$, then it follows that
\begin{align}
    \lim_{n\rightarrow\infty}\mathbb{P}(c_m(F_1,F_2)\in\mathrm{CI}_{m,1-\alpha})= 1-\alpha.
\end{align}

\end{proposition}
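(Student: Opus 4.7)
The plan is to follow the same route used for Propositions \ref{prop.confidence interval} and \ref{prop.confidence interval ISDC}, namely, to apply the Hadamard directional differentiability framework of \citet{fang2014inference}. By Proposition \ref{prop.cm asymptotic limit}, we already have $\sqrt{T_n}(\hat{c}_m(F_1,F_2)-c_m(F_1,F_2))\leadsto \mathcal{F}'_{\phi_m}(\mathbb{G}_m)$, where $\mathcal{F}'_{\phi_m}$ is a Hadamard directional derivative (Lemma S.4.5 of \citet{fang2014inference}). To turn this into a valid bootstrap confidence interval, I will verify the two hypotheses of Theorem 3.2 of \citet{fang2014inference}: (i) the bootstrap analog $\sqrt{T_n}(\hat{\phi}_m^{*}-\hat{\phi}_m)$ reproduces the law of $\mathbb{G}_m$ conditionally on the data, and (ii) the plug-in derivative $\hat{\mathcal{F}}'_{\phi_m}$ is a consistent estimator of $\mathcal{F}'_{\phi_m}$ in the sense required by that theorem.

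For step (i), I would argue as follows. Under Assumption \ref{ass.data}, the standard empirical-process bootstrap (independent samples or the paired bootstrap for matched pairs) gives $\sqrt{T_n}((\hat{F}_1^{*}-\hat{F}_2^{*})-(\hat{F}_1-\hat{F}_2))\leadsto_{P}\mathbb{G}_F$ conditional on the data, with $\mathbb{G}_F$ as in Lemma \ref{lemma.F asymptotic limit}; this is the usual Donsker bootstrap together with the delta method for the difference map. The operator $\mathcal{I}_m$ defined in \eqref{eq.Im} is linear and bounded on $\ell^{\infty}([a,b])$ (it is just iterated integration on a bounded interval), hence continuous, so the continuous mapping theorem yields $\sqrt{T_n}(\hat{\phi}_m^{*}-\hat{\phi}_m)=\mathcal{I}_m(\sqrt{T_n}((\hat{F}_1^{*}-\hat{F}_2^{*})-(\hat{F}_1-\hat{F}_2)))\leadsto_{P}\mathcal{I}_m(\mathbb{G}_F)=\mathbb{G}_m$, conditionally on the sample.

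For step (ii), consistency of $\hat{\mathcal{F}}'_{\phi_m}$ reduces to showing that the estimated contact set $\widehat{C_0(\phi_m)}$ and the strict-positivity sets $\widehat{C_+(\pm\phi_m)}$ recover the true ones in a measure-theoretic sense, and that $\mathcal{F}_1(\hat{\phi}_m)+\mathcal{F}_2(\hat{\phi}_m)$ converges to $\mathcal{F}_1(\phi_m)+\mathcal{F}_2(\phi_m)$, which is strictly positive because $c_m(F_1,F_2)\in(0,1)$. Using the tuning sequence $t_n\to\infty$ with $t_n/\sqrt{T_n}\to0$, together with $\sqrt{T_n}(\hat{\phi}_m-\phi_m)=O_P(1)$ and the uniform consistency of $\hat{\sigma}_m$ bounded away from zero by $\xi_0$, one shows by a standard argument (as in \citet{linton2010improved,Beare2017improved}) that the indicator functions of $\widehat{C_+(\phi_m)}$, $\widehat{C_+(-\phi_m)}$, and $\widehat{C_0(\phi_m)}$ converge in Lebesgue measure to the corresponding indicators of $C_+(\phi_m)$, $C_+(-\phi_m)$, and $C_0(\phi_m)$, which implies $\hat{\mathcal{F}}'_{1\phi_m}(h_n)\to \mathcal{F}'_{1\phi_m}(h)$ and $\hat{\mathcal{F}}'_{2\phi_m}(h_n)\to \mathcal{F}'_{2\phi_m}(h)$ whenever $h_n\to h$ in $\ell^{\infty}([a,b])$. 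Combined with $\mathcal{F}_j(\hat{\phi}_m)\to \mathcal{F}_j(\phi_m)$, this gives the required consistency of $\hat{\mathcal{F}}'_{\phi_m}$.

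Invoking Theorem 3.2 of \citet{fang2014inference} then yields $\hat{c}_m^{*}(F_1,F_2)\leadsto_{P}\mathcal{F}'_{\phi_m}(\mathbb{G}_m)$ conditionally on the data. By the continuity and strict monotonicity of the CDF of $\mathcal{F}'_{\phi_m}(\mathbb{G}_m)$ at $c_{m,\alpha/2}$ and $c_{m,1-\alpha/2}$, the bootstrap quantiles satisfy $\hat{c}_{m,\beta}\to c_{m,\beta}$ in probability for $\beta\in\{\alpha/2,1-\alpha/2\}$. The asserted coverage then follows from the elementary identity
\[
\mathbb{P}\bigl(c_m(F_1,F_2)\in\mathrm{CI}_{m,1-\alpha}\bigr)=\mathbb{P}\bigl(\hat{c}_{m,\alpha/2}^{*}\le \sqrt{T_n}(\hat{c}_m(F_1,F_2)-c_m(F_1,F_2))\le \hat{c}_{m,1-\alpha/2}^{*}\bigr),
\]
combined with the weak limit from Proposition \ref{prop.cm asymptotic limit} and the continuous-mapping/Slutsky argument. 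The main obstacle is step (ii), namely, the contact-set consistency uniformly in $h$ and the handling of the ratio structure of $\mathcal{F}$ near the zero/one boundary; the assumption $c_m(F_1,F_2)\in(0,1)$ is used precisely to ensure the denominator $\mathcal{F}_1(\phi_m)+\mathcal{F}_2(\phi_m)$ stays bounded away from zero, so that the composite derivative $\mathcal{F}'_{\phi_m}$ is well-defined and continuous in $h$.
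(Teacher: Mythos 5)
Your proposal is correct and follows essentially the same route as the paper: the paper's own proof simply states that it is analogous to the proof of Proposition \ref{prop.confidence interval}, which proceeds exactly as you describe — conditional bootstrap weak convergence of the underlying process, measure-theoretic consistency of the estimated sets $\widehat{C_{+}(\pm\phi_m)}$ and $\widehat{C_{0}(\phi_m)}$ yielding consistency of $\hat{\mathcal{F}}'_{\phi_m}$ (verifying Assumption 4 of \citet{fang2014inference}), bootstrap quantile consistency, and then Slutsky plus continuous mapping for the coverage statement. The only cosmetic difference is that you cite Theorem 3.2 of \citet{fang2014inference} where the paper works through Remark 3.4 and a Theorem S.1.1-type argument, but this is the same machinery.
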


\section{Additional Simulation Evidence}\label{sec.additional simulation}

In this section, we provide additional simulation results for ISDCs and SDCs. 

\subsection{Inverse Stochastic Dominance Coefficient}

We consider simulations for $3$UISDC. In each iteration of the simulations for $3$UISDC, the data $\{X_i^1\}_{i=1}^{n_1}$ are generated independently
	from $\mathrm{dP}(2.1,1.5)$, and the data $\{X_i^2\}_{i=1}^{n_2}$ are generated independently
	from $\mathrm{dP}(200,\beta)$, whose law is
	parametrized by $\beta$. We let $\beta\in\{2.2,2.3,2.4,2.5\}$. Figure \ref{fig:DGPs UISDC} displays the curves $\Lambda_1^3$ and $\Lambda_{2(\beta)}^3$ corresponding to the above DGPs. The $3$UISDC $c_3^u(F_1,F_2)=$ $0.06229$, $0.14052$, $0.26581$, and $0.42840$, respectively, for $\beta=$ $2.2$, $2.3$, $2.4$, and $2.5$. We choose the tuning parameter from $t_n\in (0,20]$. For independent samples, we let $$(n_1,n_2)\in\{(200,200),(200,500),(200,1000),(1000,2000),(10000,10000)\}.$$
For matched pairs, we let $$(n_1,n_2)\in\{(200,200),(500,500),(1000,1000),(2000,2000),(10000,10000)\}.$$

Tables \ref{tab:CR UISDC IS} and \ref{tab:CR UISDC MP} show the simulation results for independent samples and matched pairs. For all the DGPs, as $n_1$ and $n_2$ increase, Mean gets close to $c_3^u(F_1,F_2)$, Bias decreases to $0$, and both SE and RMSE decrease; under appropriate choices of $t_n$, CR approaches $95\%$.

\begin{figure} [h]
\caption{$\Lambda_j^3$ Curves for Four DGPs}
\label{fig:DGPs UISDC}
\centering
\begin{subfigure}[b]{0.45\textwidth}
	\centering
\scalebox{0.2}{
\includegraphics{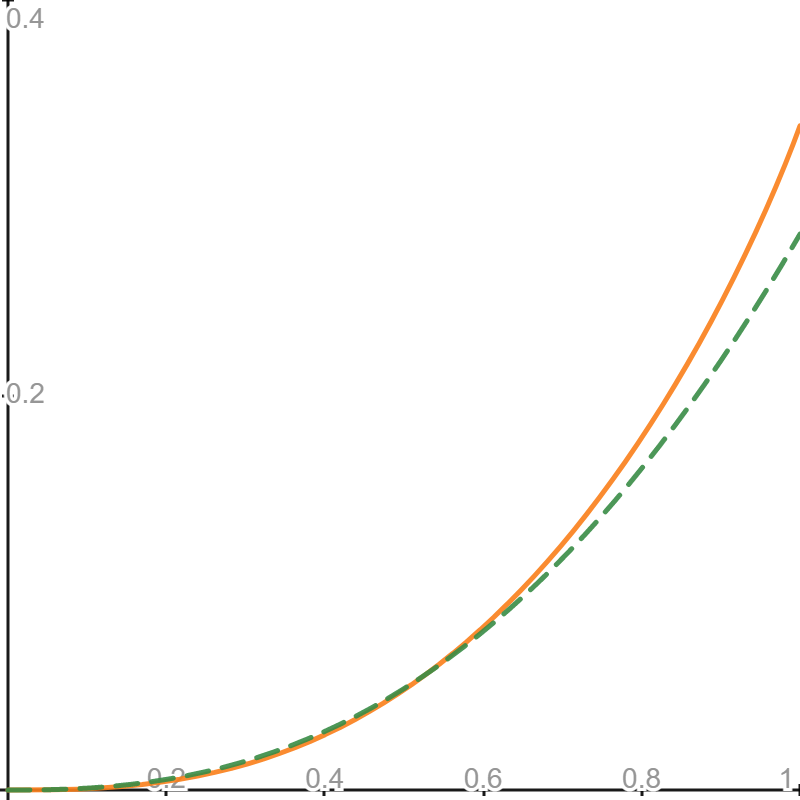}
}
\subcaption{$\Lambda_1^3$ (solid) and $\Lambda^3_{2(2.2)}$ (dashed)}

\end{subfigure}
\begin{subfigure}[b]{0.45\textwidth}
	\centering
\scalebox{0.2}{
\includegraphics{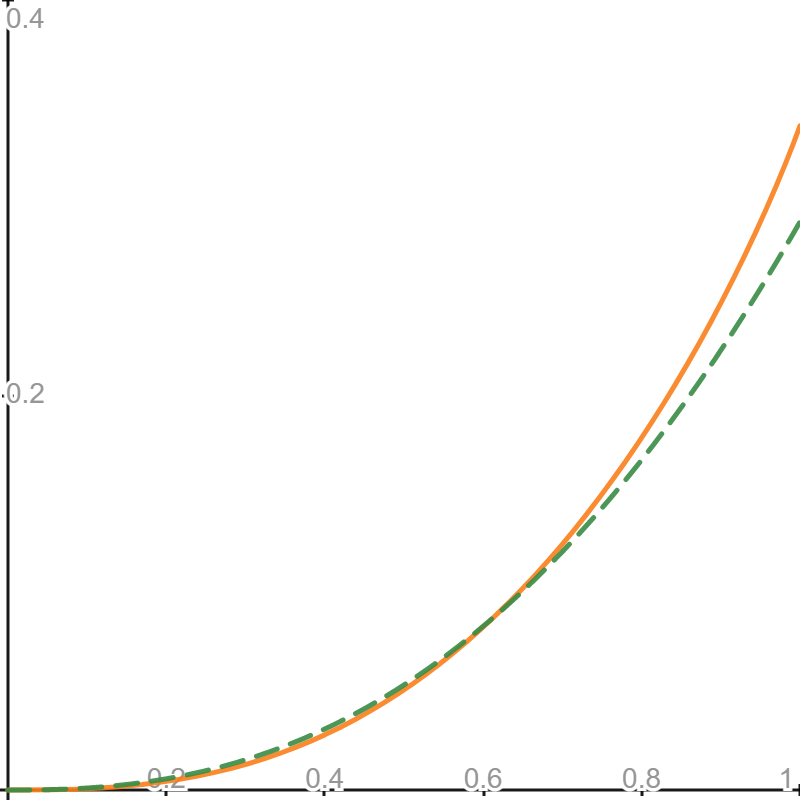}
}
\subcaption{$\Lambda_1^3$ (solid) and $\Lambda_{2(2.3)}^3$ (dashed)}

\end{subfigure}
\begin{subfigure}[b]{0.45\textwidth}
	\centering
\scalebox{0.2}{
\includegraphics{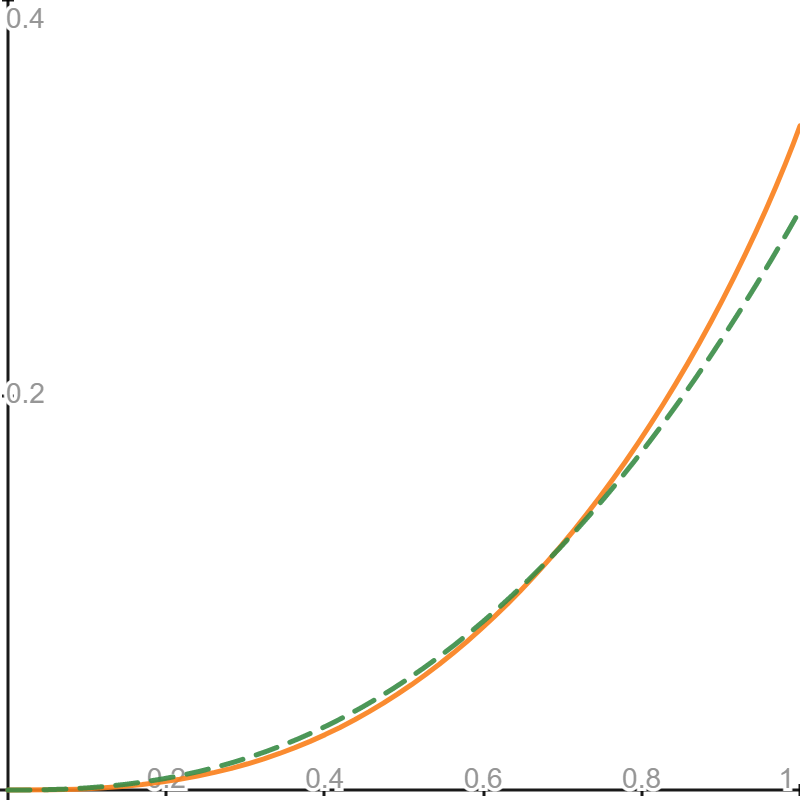}
}
\subcaption{$\Lambda_1^3$ (solid) and $\Lambda_{2(2.4)}^3$ (dashed)}

\end{subfigure}
\begin{subfigure}[b]{0.45\textwidth}
	\centering
\scalebox{0.2}{
\includegraphics{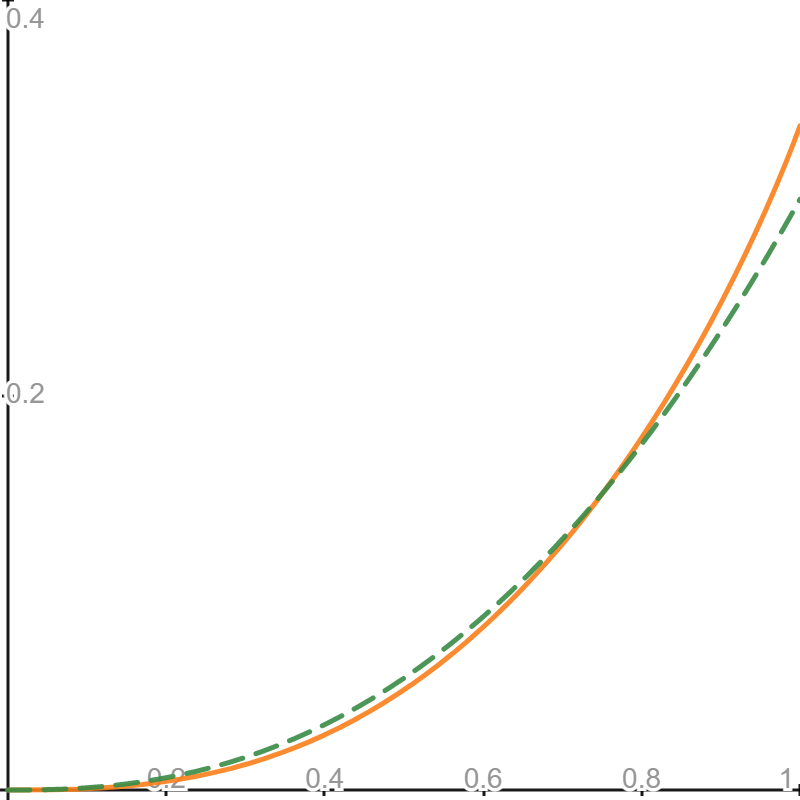}
}
\subcaption{$\Lambda_1^3$ (solid) and $\Lambda_{2(2.5)}^3$ (dashed)}

\end{subfigure}
\end{figure}

\begin{table}[h]
\centering
	\caption{Estimations and Coverage Rates for $3$UISDC (Independent Samples)}
		
		\scalebox{0.95}{
			\begin{tabular}{cccccccccc}
				\hline\hline
			{DGP} & 
			 $c_3^u(F_1,F_2)$	&	{$n_1 $}	& {$n_2 $}		 & Mean & Bias & SE & RMSE & $t_n$ & CR  \\
			 \hline
			 \multirow{5}{*}{(a)}	 & \multirow{5}{*}{0.06229} 
		               & 200 & 200   & 0.1736 & 0.1113 & 0.2474 & 0.2713 & 0.001 & 0.6980 \\
			 &         & 200 & 500   & 0.1339 & 0.0716 & 0.2040 & 0.2162 & 0.001 & 0.7190 \\
			 &         & 200 & 1000  & 0.1127 & 0.0504 & 0.1797 & 0.1866 & 0.001 & 0.6960 \\
			 &         & 1000 & 2000 & 0.0777 & 0.0154 & 0.0762 & 0.0778 & 0.001 & 0.8430 \\
			 &         & 10000& 10000& 0.0633 & 0.0010 & 0.0219 & 0.0219 & 0.001 & 0.9220\\	 
			 \hline
			 \multirow{5}{*}{(b)}	 & \multirow{5}{*}{0.14052} 
		               & 200 & 200   & 0.2546 & 0.1141 & 0.2916 & 0.3131 & 0.001 & 0.6850 \\
			 &         & 200 & 500   & 0.2130 & 0.0725 & 0.2566 & 0.2666 & 0.001 & 0.7210 \\
			 &         & 200 & 1000  & 0.1882 & 0.0476 & 0.2351 & 0.2398 & 0.001 & 0.7080 \\
			 &         & 1000 & 2000 & 0.1577 & 0.0172 & 0.1206 & 0.1218 & 0.001 & 0.8510 \\
			 &         & 10000& 10000& 0.1404 & -0.0001 & 0.0406 & 0.0406 & 0.001 & 0.9190\\	 	 
			 \hline
			 \multirow{5}{*}{(c)}	 & \multirow{5}{*}{0.26581} 
		               & 200 & 200   & 0.3461 & 0.0803 & 0.3231 & 0.3329 & 0.001 & 0.6680 \\
			 &         & 200 & 500   & 0.3077 & 0.0419 & 0.2961 & 0.2990 & 0.001 & 0.7140 \\
			 &         & 200 & 1000  & 0.2819 & 0.0161 & 0.2801 & 0.2806 & 0.001 & 0.7030 \\
			 &         & 1000 & 2000 & 0.2735 & 0.0076 & 0.1630 & 0.1631 & 0.001 & 0.8620 \\
			 &         & 10000& 10000& 0.2617 & -0.0041 & 0.0616 & 0.0618 & 0.001 & 0.9200\\	 
			 \hline
			 \multirow{5}{*}{(d)}	 & \multirow{5}{*}{0.42840} 
		               & 200 & 200   & 0.4418 & 0.0134 & 0.3389 & 0.3392 & 0.001 & 0.6690 \\
			 &         & 200 & 500   & 0.4117 & -0.0167 & 0.3174 & 0.3178 & 0.001 & 0.7010 \\
			 &         & 200 & 1000  & 0.3868 & -0.0416 & 0.3086 & 0.3114 & 0.001 & 0.6970 \\
			 &         & 1000 & 2000 & 0.4139 & -0.0145 & 0.1899 & 0.1905 & 0.001 & 0.8660 \\
			 &         & 10000& 10000& 0.4177 & -0.0107 & 0.0761 & 0.0768 & 0.001 & 0.9210\\	 
				\hline\hline                         	
			\end{tabular}
		}

		\label{tab:CR UISDC IS}

\end{table}

\begin{table}[h]
\centering
	\caption{Estimations and Coverage Rates for $3$UISDC (Matched Pairs)}
		
		\scalebox{0.95}{
			\begin{tabular}{cccccccccc}
				\hline\hline
			{DGP} & 
			 $c_3^u(F_1,F_2)$	&	{$n_1 $}	& {$n_2 $}		 & Mean & Bias & SE & RMSE & $t_n$ & CR  \\
			 \hline
			 \multirow{5}{*}{(a)}	 & \multirow{5}{*}{0.06229} 
		               & 200 & 200 & 0.1212 & 0.0589 & 0.1711 & 0.1809 & 0.001 & 0.7690
\\ 
        &&500 & 500 & 0.1386 & 0.0764 & 0.1308 & 0.1515 & 0.001 & 0.9320
\\ 
        &&1000 & 1000 & 0.0912 & 0.0289 & 0.0732 & 0.0787 & 0.001 & 0.9430
 \\ 
        &&2000 & 2000 & 0.0709 & 0.0086 & 0.0422 & 0.0431 & 0.001
 & 0.9350
\\ 
        &&10000 & 10000 & 0.0642 & 0.0019 & 0.0166 & 0.0167 & 2.4 & 0.9510

\\  
			 \hline
			 \multirow{5}{*}{(b)}	 & \multirow{5}{*}{0.14052} 
		               & 200 & 200 & 0.1998 & 0.0593 & 0.2236 & 0.2313 & 0.001
 & 0.7930\\
        &&500 & 500 & 0.2453 & 0.1048 & 0.1785 & 0.2070 & 0.001
 & 0.9400
\\ 
        &&1000 & 1000 & 0.1863 & 0.0458 & 0.1180 & 0.1266 & 0.001
 & 0.9490
\\ 
        &&2000 & 2000 & 0.1519 & 0.0114 & 0.0729 & 0.0738 & 0.001 & 0.9510

\\ 
        &&10000 & 10000 & 0.1423 & 0.0018 & 0.0305 & 0.0306 & 3.7
 & 0.9510
\\  	 
			 \hline
			 \multirow{5}{*}{(c)}	 & \multirow{5}{*}{0.26581} 
		               & 200 & 200 & 0.2972 & 0.0314 & 0.2656 & 0.2674 & 0.001
& 0.8020\\ 
        &&500 & 500 & 0.3783 & 0.1125 & 0.2106 & 0.2388 & 0.001
& 0.9180
\\ 
        &&1000 & 1000 & 0.3221 & 0.0563 & 0.1576 & 0.1674 & 0.001
& 0.9480
\\ 
        &&2000 & 2000 & 0.2756 & 0.0098 & 0.1046 & 0.1050 & 0.601
& 0.9490
\\ 
        &&10000 & 10000 & 0.2655 & -0.0004 & 0.0462 & 0.0462 & 6.7
& 0.9500
\\  
			 \hline
			 \multirow{5}{*}{(d)}	 & \multirow{5}{*}{0.42840} 
		               & 200 & 200 & 0.4061 & -0.0223 & 0.2900 & 0.2909 & 0.001
 & 0.8030 
\\ 
        &&500 & 500 & 0.5203 & 0.0919 & 0.2186 & 0.2371 & 0.001
 & 0.8990 
\\ 
        &&1000 & 1000 & 0.4810 & 0.0526 & 0.1761 & 0.1838 & 0.001 & 0.9320 
\\ 
        &&2000 & 2000 & 0.4305 & 0.0021 & 0.1244 & 0.1244 & 1.5 & 0.9500 \\ 
        &&10000 & 10000 & 0.4236 & -0.0048 & 0.0567 & 0.0569 & 15.6 & 0.9550
\\  
				\hline\hline                         	
			\end{tabular}
		}

		\label{tab:CR UISDC MP}

\end{table}

\subsection{Stochastic Dominance Coefficient}
We let $X^1$ and $X^2$ be random variables such that
\begin{align*}
X^1=\begin{cases}0.25&\text{ with probability }1/\beta,\\
1&\text{ with probability }1-1/\beta,\end{cases}
\text{ and }
X^2=\begin{cases}0.5&\text{ with probability }2/3,\\
0.75&\text{ with probability }1/3,\end{cases}
\end{align*}
with $\beta\in\{2,4,6,8\}$.
Figure \ref{fig:DGPs ASD} displays the CDF curves $F_{1(\beta)}$ and $F_{2}$ corresponding to the above DGPs. The SDC $c_1(F_1,F_2)=$ $0.081081$, $0.11111$, $0.17647$, and $0.42857$, respectively for $\beta=$ $8$, $6$, $4$, and $2$. 
We choose the tuning parameter from $t_n\in (0,20]$. For independent samples, we let $$(n_1,n_2)\in\{(100,100),(100,200),(100,500),(200,500),(1000,1000)\}.$$
For matched pairs, we let $$(n_1,n_2)\in\{(100,100),(200,200),(300,300),(500,500),(1000,1000)\}.$$
Tables \ref{tab:CR SDC IS} and \ref{tab:CR SDC MP} show the simulation results for independent samples and matched pairs, which are similar to those in Tables \ref{tab:CR IS} and \ref{tab:CR MP}. For all DGPs, as $n_1$ and $n_2$ increase, Mean gets close to $c_1(F_1,F_2)$, Bias decreases to $0$, and both SE and RMSE decrease; under appropriate choices of $t_n$, CR approaches $95\%$. 

\begin{figure} [h]
\caption{Distribution Curves for Four DGPs}
\label{fig:DGPs ASD}
\centering
\begin{subfigure}[b]{0.45\textwidth}
	\centering
\scalebox{0.2}{
\includegraphics{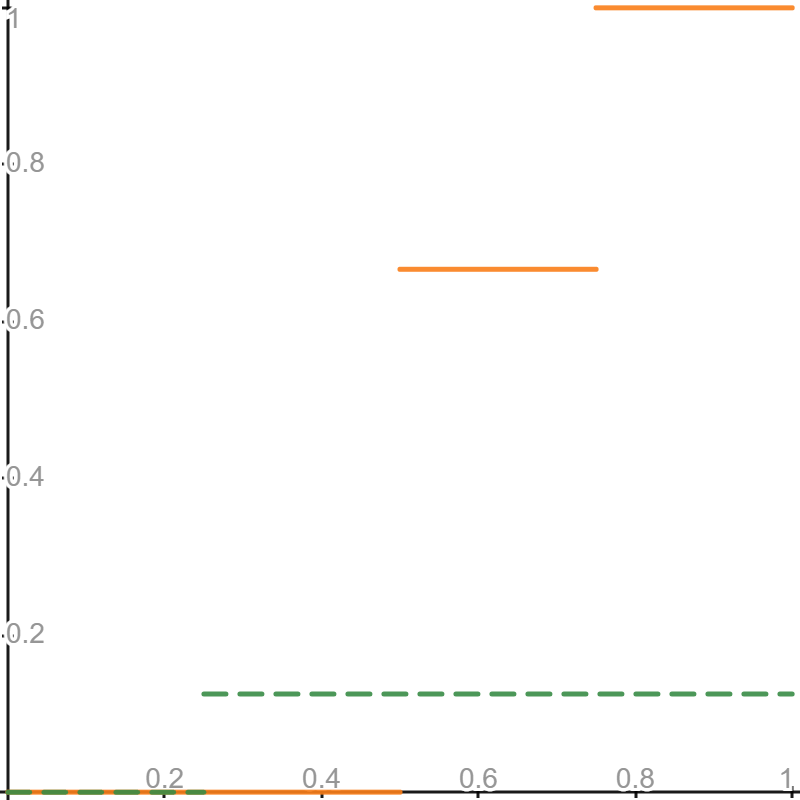}
}
\subcaption{$F_{1(8)}$ (dashed) and $F_{2}$ (solid)}

\end{subfigure}
\begin{subfigure}[b]{0.45\textwidth}
	\centering
\scalebox{0.2}{
\includegraphics{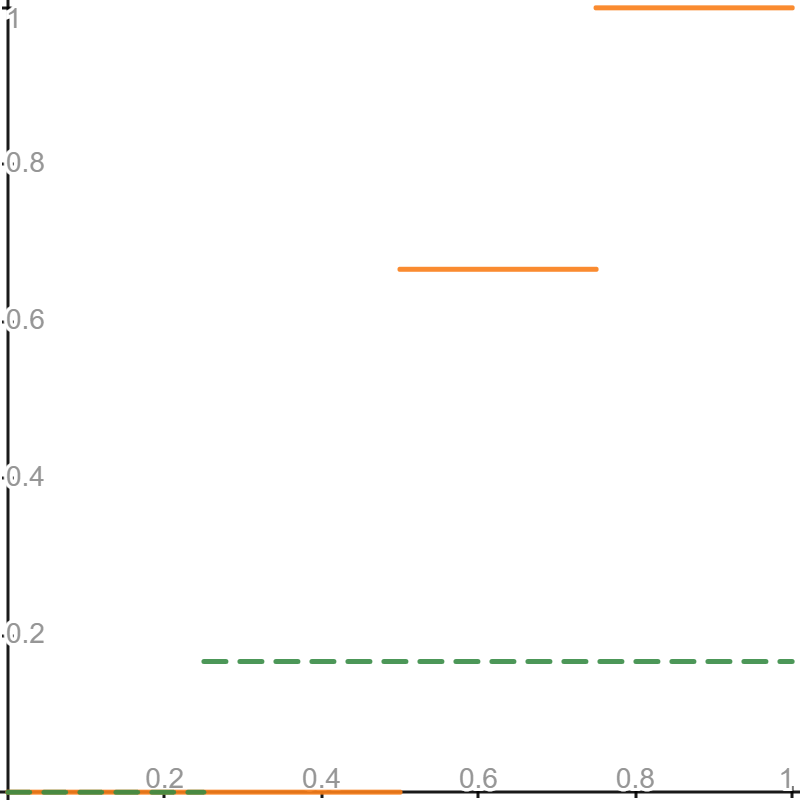}
}
\subcaption{$F_{1(6)}$ (dashed) and $F_{2}$ (solid)}

\end{subfigure}
\begin{subfigure}[b]{0.45\textwidth}
	\centering
\scalebox{0.2}{
\includegraphics{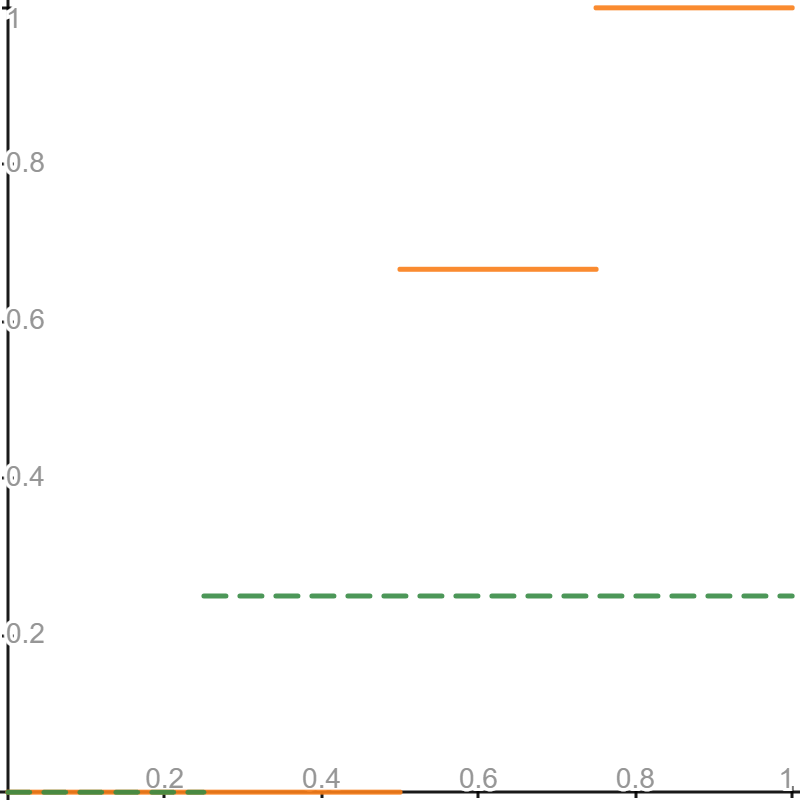}
}
\subcaption{$F_{1(4)}$ (dashed) and $F_{2}$ (solid)}

\end{subfigure}
\begin{subfigure}[b]{0.45\textwidth}
	\centering
\scalebox{0.2}{
\includegraphics{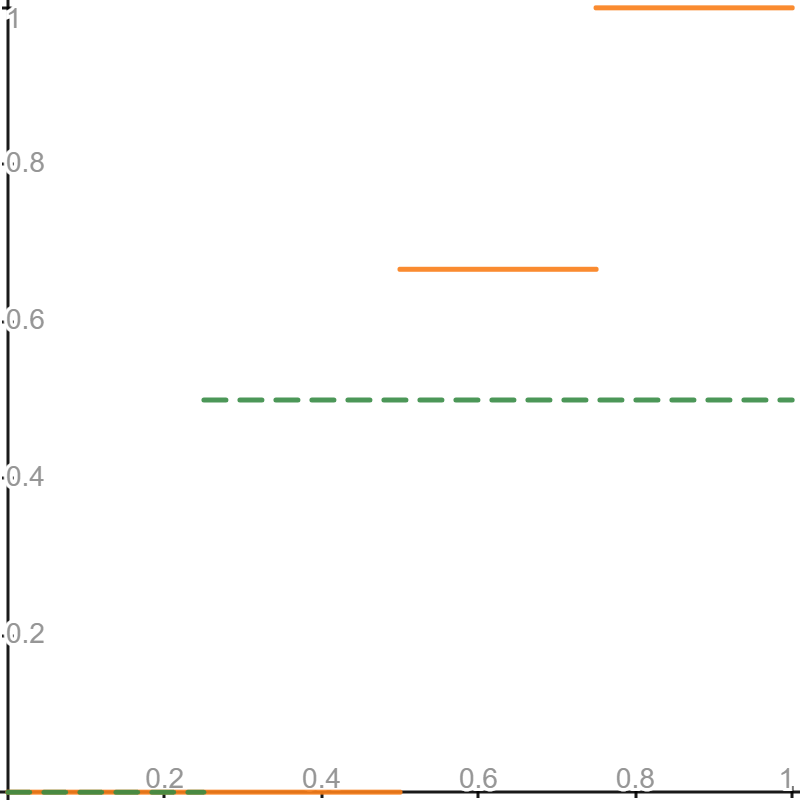}
}
\subcaption{$F_{1(2)}$ (dashed) and $F_{2}$ (solid)}

\end{subfigure}
\end{figure}

\begin{table}[h]
\centering
	\caption{Estimations and Coverage Rates for SDC (Independent Samples)}
		
		\scalebox{0.95}{
			\begin{tabular}{cccccccccc}
				\hline\hline
			{DGP} & 
			 $c_1(F_1,F_2)$	&	{$n_1 $}	& {$n_2 $}		 & Mean & Bias & SE & RMSE & $t_n$ & CR  \\
			 \hline
			 \multirow{5}{*}{(a)}	 & \multirow{5}{*}{0.08108} 
		               & 100 & 100 & 0.0826 & 0.0016 & 0.0241 & 0.0241 & 0.001 & 0.9380 \\
			 &         & 100 & 200 & 0.0823 & 0.0012 & 0.0236 & 0.0236 & 0.001 & 0.9350 \\
			 &         & 100 & 500 & 0.0812 & 0.0002 & 0.0229 & 0.0229 & 0.001 & 0.9330 \\
			 &         & 200 & 500 & 0.0819 & 0.0009 & 0.0171 & 0.0171 & 0.001 & 0.9270 \\
			 &         & 1000& 1000& 0.0812 & 0.0001 & 0.0073 & 0.0073 & 0.001 & 0.9490\\	 
			 \hline
			 \multirow{5}{*}{(b)}	 & \multirow{5}{*}{0.11111} 
		               & 100 & 100 & 0.1130 & 0.0019 & 0.0290 & 0.0290 & 0.001 & 0.9410 \\
			 &         & 100 & 200 & 0.1114 & 0.0003 & 0.0276 & 0.0276 & 0.001 & 0.9450 \\
			 &         & 100 & 500 & 0.1119 & 0.0008 & 0.0269 & 0.0269 & 0.001 & 0.9540 \\
			 &         & 200 & 500 & 0.1115 & 0.0004 & 0.0196 & 0.0196 & 0.001 & 0.9440 \\
			 &         & 1000& 1000& 0.1111 & 0.0000 & 0.0086 & 0.0086 & 0.001 & 0.9500\\	 
			 \hline
		\multirow{5}{*}{(c)}	 & \multirow{5}{*}{0.17647} 
		               & 100 & 100 & 0.1804 & 0.0039 & 0.0386 & 0.0388 & 2.6 & 0.9470 \\
			 &         & 100 & 200 & 0.1781 & 0.0016 & 0.0370 & 0.0371 & 3.0 & 0.9390 \\
			 &         & 100 & 500 & 0.1766 & 0.0001 & 0.0361 & 0.0361 & 3.0 & 0.9480 \\
			 &         & 200 & 500 & 0.1774 & 0.0010 & 0.0251 & 0.0251 & 4.9 & 0.9530 \\
			 &         & 1000& 1000& 0.1767 & 0.0002 & 0.0113 & 0.0113 & 12.0 & 0.9510\\	 
			 \hline
		\multirow{5}{*}{(d)}	 & \multirow{5}{*}{0.42857} 
		               & 100 & 100 & 0.4326 & 0.0040 & 0.0652 & 0.0653 & 0.001 & 0.9470 \\
			 &         & 100 & 200 & 0.4355 & 0.0069 & 0.0657 & 0.0660 & 0.001 & 0.9390 \\
			 &         & 100 & 500 & 0.4317 & 0.0032 & 0.0634 & 0.0635 & 0.001 & 0.9410 \\
			 &         & 200 & 500 & 0.4302 & 0.0016 & 0.0443 & 0.0443 & 0.001 & 0.9450 \\
			 &         & 1000& 1000& 0.4284 & -0.0002& 0.0209 & 0.0209 & 5.0 & 0.9500\\
				\hline\hline                         	
			\end{tabular}
		}

		\label{tab:CR SDC IS}

\end{table}

\begin{table}[h]
\centering
	\caption{Estimations and Coverage Rates for SDC (Matched Pairs)}
		
		\scalebox{0.95}{
			\begin{tabular}{ccccccccccccc}
				\hline\hline
			{DGP} & 
			 $c_1(F_1,F_2)$	&	{$n_1 $}	& {$n_2 $}		 & Mean & Bias & SE & RMSE & $t_n$ & CR  \\
			 \hline
			 \multirow{5}{*}{(a)}	 & \multirow{5}{*}{0.08108} 
		               & 100 & 100 & 0.0824 & 0.0013 & 0.0233 & 0.0234 & 0.001 & 0.9450 \\
			 &         & 200 & 200 & 0.0817 & 0.0006 & 0.0164 & 0.0164 & 0.001 & 0.9450 \\
			 &         & 300 & 300 & 0.0816 & 0.0006 & 0.0138 & 0.0138 & 0.001 & 0.9410 \\
			 &         & 500 & 500 & 0.0815 & 0.0004 & 0.0101 & 0.0101 & 7.3 & 0.9520 \\
			 &         & 1000& 1000& 0.0815 & 0.0004 & 0.0077 & 0.0077 & 7.3 & 0.9390\\	 
			 \hline
			 \multirow{5}{*}{(b)}	 & \multirow{5}{*}{0.11111} 
		              & 100 & 100 & 0.1126 & 0.0015 & 0.0278 & 0.0278 & 3.2 & 0.9530 \\
			 &         & 200 & 200 & 0.1113 & 0.0002 & 0.0190 & 0.0190 & 3.2 & 0.9470 \\
			 &         & 300 & 300 & 0.1111 & 0.0000 & 0.0158 & 0.0158 & 6.4 & 0.9510 \\
			 &         & 500 & 500 & 0.1116 & 0.0005 & 0.0114 & 0.0115 & 9.0 & 0.9510 \\
			 &         & 1000& 1000& 0.1117 & 0.0006 & 0.0089 & 0.0089 & 9.0 & 0.9490\\	 
			 \hline
		\multirow{5}{*}{(c)}	 & \multirow{5}{*}{0.17647} 
		              & 100 & 100 & 0.1796 & 0.0031 & 0.0359 & 0.0361 & 4.6 & 0.9480 \\
			 &         & 200 & 200 & 0.1768 & 0.0003 & 0.0248 & 0.0248 & 6.9 & 0.9490 \\
			 &         & 300 & 300 & 0.1766 & 0.0001 & 0.0203 & 0.0203 & 8.7 & 0.9530 \\
			 &         & 500 & 500 & 0.1768 & 0.0003 & 0.0152 & 0.0152 & 11.7 & 0.9510 \\
			 &         & 1000& 1000& 0.1769 & 0.0004 & 0.0110 & 0.0110 & 16.9 & 0.9500\\	 
			 \hline
		\multirow{5}{*}{(d)}	 & \multirow{5}{*}{0.42857} 
		               & 100 & 100 & 0.4298 & 0.0012 & 0.0519 & 0.0519 & 4.9 & 0.9490 \\
			 &         & 200 & 200 & 0.4299 & 0.0013 & 0.0371 & 0.0371 & 6.5 & 0.9550 \\
			 &         & 300 & 300 & 0.4294 & 0.0009 & 0.0292 & 0.0292 & 7.9 & 0.9460 \\
			 &         & 500 & 500 & 0.4289 & 0.0004 & 0.0229 & 0.0229 & 10.0 & 0.9480 \\
			 &         & 1000& 1000& 0.4299 & 0.0014 & 0.0156 & 0.0157 & 13.9 & 0.9510\\
				\hline\hline                         	
			\end{tabular}
		}

		\label{tab:CR SDC MP}

\end{table}

\subsection{Tuning Parameter Selection}\label{sec.tuning parameter selection}
In the above simulations, we compute the confidence intervals for all $t_n$ from some prespecified set and display the values that yield the best results. We now propose an empirical way of selecting $t_n$ in practice. Let $t_n$ be selected from a set $S_t$ that is sufficiently large. Suppose that we observe the data $\{X_i^1\}_{i=1}^{n_1}$ and $\{X_i^2\}_{i=1}^{n_2}$. We then take the empirical distributions of $\{X_i^1\}_{i=1}^{n_1}$ and $\{X_i^2\}_{i=1}^{n_2}$ as the DGP to generate the data in the simulations and compute the dominance coefficient $c$ based on the empirical distributions of $\{X_i^1\}_{i=1}^{n_1}$ and $\{X_i^2\}_{i=1}^{n_2}$. We take this $c$ as the true value of the coefficient we are interested in. 
Then we follow the previous simulation procedure to compute the bootstrap confidence intervals for every value of $t_n$. Finally, we select the value $t_n$ that yields the coverage rate closest to $1-\alpha$, and use this value to construct the bootstrap confidence intervals in the application.

\section{Proofs} \label{sec.proofs}

\subsection{Proofs for Section \ref{sec.LDC}}

\begin{proof}[Proof of Lemma \ref{lemma.ald equivalence}]
    If \eqref{eq.ALD2} holds with some $c\in[0,\varepsilon]$ for some $\varepsilon\in[0,1/2)$, then clearly $F_1$ $\varepsilon$-ALD $F_2$ by definition. If $F_1$ $\varepsilon$-ALD $F_2$ for some $\varepsilon\in[0,1/2)$, then we can find $$c=\frac{\int_{S(F_{1},F_{2})}\left(  L_{2}\left(  p\right)  -L_{1}\left(  p\right)
\right)  \mathrm{d}p}{\int_{0}^{1}\left\vert
L_{2}\left(  p\right)  -L_{1}\left(  p\right)  \right\vert \mathrm{d}p}$$
that satisfies \eqref{eq.ALD2}. 
\end{proof}

\begin{proof}[Proof of Lemma \ref{lemma.c properties}]
Suppose that
\[
\int_{S(F_{1},F_{2})}\left(  L_{2}\left(  p\right)  -L_{1}\left(  p\right)
\right)  \mathrm{d}p>c\left(  L_{1},L_{2}\right)  \int_{0}^{1}\left\vert
L_{2}\left(  p\right)  -L_{1}\left(  p\right)  \right\vert \mathrm{d}p,
\]
then there is some $\delta>0$ such that
\[
\int_{S(F_{1},F_{2})}\left(  L_{2}\left(  p\right)  -L_{1}\left(  p\right)
\right)  \mathrm{d}p=\left\{  c\left(  L_{1},L_{2}\right)  +\delta\right\}
\int_{0}^{1}\left\vert L_{2}\left(  p\right)  -L_{1}\left(  p\right)
\right\vert \mathrm{d}p.
\]
For all $\varepsilon\in\left[  0,1\right]$ with
\[
\int_{S(F_{1},F_{2})}\left(  L_{2}\left(  p\right)  -L_{1}\left(  p\right)
\right)  \mathrm{d}p\leq\varepsilon\int_{0}^{1}\left\vert L_{2}\left(
p\right)  -L_{1}\left(  p\right)  \right\vert \mathrm{d}p  ,
\]
we have that $\varepsilon\geq c\left(  L_{1},L_{2}\right)  +\delta$.
It then follows that
\[
c\left(  L_{1},L_{2}\right)  <c\left(  L_{1},L_{2}\right)  +\frac{\delta}%
{2}<\varepsilon
\]
for all $\varepsilon$ such that $\int_{S(F_{1},F_{2})}\left(  L_{2}\left(
p\right)  -L_{1}\left(  p\right)  \right)  \mathrm{d}p\leq\varepsilon\int
_{0}^{1}\left\vert L_{2}\left(  p\right)  -L_{1}\left(  p\right)  \right\vert
\mathrm{d}p$, which contradicts the definition of $c\left(  L_{1}%
,L_{2}\right)  $. Thus, we have
\[
\int_{S(F_{1},F_{2})}\left(  L_{2}\left(  p\right)  -L_{1}\left(  p\right)
\right)  \mathrm{d}p\leq c\left(  L_{1},L_{2}\right)  \int_{0}^{1}\left\vert
L_{2}\left(  p\right)  -L_{1}\left(  p\right)  \right\vert \mathrm{d}p.
\]
If there is some other $c$ such that $c<c\left(  L_{1},L_{2}\right)  $ and
$$\int_{S(F_{1},F_{2})}\left(  L_{2}\left(  p\right)  -L_{1}\left(  p\right)
\right)  \mathrm{d}p\leq c\int_{0}^{1}\left\vert L_{2}\left(  p\right)
-L_{1}\left(  p\right)  \right\vert \mathrm{d}p,$$ then the definition of
$c\left(  L_{1},L_{2}\right)  $ is contradicted. Thus, $c(L_1,L_2)$ is the smallest $\varepsilon$ such that \eqref{eq.ALD inequality} holds. If $c\left(  L_{1},L_{2}\right)  =0$, then $\int_{S(F_{1},F_{2})}\left(
L_{2}\left(  p\right)  -L_{1}\left(  p\right)  \right)  \mathrm{d}p=0$. By \eqref{eq.0timesinf}, \eqref{eq.LDC} holds. If $c\left(  L_{1},L_{2}\right)  \neq0$, then $\int_{S(F_{1},F_{2})}\left(
L_{2}\left(  p\right)  -L_{1}\left(  p\right)  \right)  \mathrm{d}p\neq0$ and clearly $\int_0^1|L_2(p)-L_1(p)|\mathrm{d}p\neq0$. Then \eqref{eq.LDC} holds since $c(L_1,L_2)$ is the smallest $\varepsilon$ such that \eqref{eq.ALD inequality} holds.

If $c\left(  L_{1},L_{2}\right)  =0$, then $\int_{S(F_{1},F_{2})}\left(
L_{2}\left(  p\right)  -L_{1}\left(  p\right)  \right)  \mathrm{d}p=0$ which
implies that $F_{1}$ Lorenz dominates $F_{2}$. If $F_{1}$ Lorenz dominates
$F_{2}$, then $\int_{S(F_{1},F_{2})}\left(  L_{2}\left(  p\right)
-L_{1}\left(  p\right)  \right)  \mathrm{d}p=0$ and thus $c\left(  L_{1}%
,L_{2}\right)  =0$. If $c\left(  L_{1},L_{2}\right)  =1$, then $\int_{S(F_{2},F_{1})}\left(
L_{1}\left(  p\right)  -L_{2}\left(  p\right)  \right)  \mathrm{d}p=0$ which
implies that $F_{2}$ Lorenz dominates $F_{1}$. 

It is clear that
\begin{align*}
\int_{0}^{1}\left\vert L_{2}\left(  p\right)  -L_{1}\left(  p\right)
\right\vert \mathrm{d}p  &  =\int_{0}^{1}\phi\left(  p\right)  1\left\{
\phi\left(  p\right)  \geq0\right\}  \mathrm{d}p+\int_{0}^{1}-\phi\left(
p\right)  1\left\{  \phi\left(  p\right)  <0\right\}  \mathrm{d}p\\
&  =\int_{0}^{1}\max\left\{  \phi\left(  p\right)  ,0\right\}  \mathrm{d}%
p+\int_{0}^{1}\max\left\{  -\phi\left(  p\right)  ,0\right\}  \mathrm{d}p.
\end{align*}
So by \eqref{eq.0timesinf}, \eqref{eq.LDC} holds.
If $c\left(  L_{1},L_{2}\right)  \in(0,1]  $, then by \eqref{eq.LDC},
\[
c\left(  L_{2},L_{1}\right)  =\frac{\int_{0}^{1}\max\left\{  -\phi\left(
p\right)  ,0\right\}  \mathrm{d}p}{\int_{0}^{1}\max\left\{  \phi\left(
p\right)  ,0\right\}  \mathrm{d}p+\int_{0}^{1}\max\left\{  -\phi\left(
p\right)  ,0\right\}  \mathrm{d}p}=1-c\left(  L_{1},L_{2}\right)  .
\]
\end{proof}

\begin{proof}[Proof of Proposition \ref{prop.c vs Gini}]
For every $c\left(
L_{1},L_{2}\right)  \in(0,1/2)$, by Lemma \ref{lemma.c properties} and Proposition 1 of \citet{zheng2018almost}, $F_1$ $\varepsilon$-ALD $F_2$
for every $\varepsilon\in\lbrack c\left(  L_{1},L_{2}\right)  ,1/2)$ and
$I\left(  F_1,\theta\right)  \leq I\left(  F_2,\theta\right)  $ for all $I\left(
\cdot,\theta\right)  \in\mathscr{B}^{\ast}\left(  \varepsilon\right)  $.
If $c\left(  L_{1},L_{2}\right)  =0$, then by Lemma \ref{lemma.c properties}, $F_{1}$ Lorenz
dominates $F_{2}$ and the claim is clearly true. 
\end{proof}

\begin{proof}[Proof of Proposition \ref{prop.AmULD}]
The proof closely follows the strategies of the proofs of Theorem 1 of \citet{leshno2002preferred} and Theorem 3.1A of \citet{aaberge2009ranking}.
 Using integration by parts, we can show that for every $P\in\mathrm{P}$,
\begin{align*}
&  J_{P}\left(  L_{2}\right)  -J_{P}\left(  L_{1}\right)  =\int_{0}%
^{1}P^{\prime}\left(  t\right)  \mathrm{d}L_{1}\left(  t\right)  -\int_{0}%
^{1}P^{\prime}\left(  t\right)  \mathrm{d}L_{2}\left(  t\right) \\
=&\,-\int_{0}^{1}P^{\left(  2\right)  }\left(  t\right)  \mathrm{d}L_{1}%
^{2}\left(  t\right)  +\int_{0}^{1}P^{\left(  2\right)  }\left(  t\right)
\mathrm{d}L_{2}^{2}\left(  t\right) \\
=&\,-P^{\left(  2\right)  }\left(  t\right)  L_{1}^{2}\left(  t\right)
|_{0}^{1}+P^{\left(  2\right)  }\left(  t\right)  L_{2}^{2}\left(  t\right)
|_{0}^{1}+\int_{0}^{1}P^{\left(  3\right)  }\left(  t\right)  \mathrm{d}%
L_{1}^{3}\left(  t\right)  -\int_{0}^{1}P^{\left(  3\right)  }\left(
t\right)  \mathrm{d}L_{2}^{3}\left(  t\right) \\
=&\,\sum_{j=2}^{m}\left(  -1\right)  ^{j-1}P^{\left(  j\right)  }\left(
1\right)  \left(  L_{1}^{j}\left(  1\right)  -L_{2}^{j}\left(  1\right)
\right)  +\int_{0}^{1}\left(  -1\right)  ^{m}P^{\left(  m+1\right)  }\left(
t\right)  \left(  L_{1}^{m}\left(  t\right)  -L_{2}^{m}\left(  t\right)
\right)  \mathrm{d}t.
\end{align*}
If $P\in\mathrm{P}_{m}\left(  \varepsilon_{m}\right)  $, then $\sum_{j=2}%
^{m}\left(  -1\right)  ^{j-1}P^{\left(  j\right)  }\left(  1\right)
(L_{1}^{j}\left(  1\right)  -L_{2}^{j}\left(  1\right)  )=0$. For every $m\ge2$, let $r_{U}
=\sup_{t}\{\left(  -1\right)  ^{m}P^{\left(  m+1\right)  }\left(  t\right)
\}$ and $r_{L}=\inf_{t}\{\left(  -1\right)  ^{m}P^{\left(  m+1\right)
}\left(  t\right)  \}$. Also, let $S=\{t\in\left[  0,1\right]  :L_{1}%
^{m}\left(  t\right)  -L_{2}^{m}\left(  t\right)  <0\}$. It then follows that
\begin{align*}
&  \int_{0}^{1}\left(  -1\right)  ^{m}P^{\left(  m+1\right)  }\left(
t\right)  \left(  L_{1}^{m}\left(  t\right)  -L_{2}^{m}\left(  t\right)
\right)  \mathrm{d}t\\
\geq&\, r_{U}\int_{S}\left(  L_{1}^{m}\left(  t\right)  -L_{2}^{m}\left(
t\right)  \right)  \mathrm{d}t+r_{L}\int_{S^{c}}\left(  L_{1}^{m}\left(
t\right)  -L_{2}^{m}\left(  t\right)  \right)  \mathrm{d}t\\
=&\,r_{U}\int_{S}\left(  L_{1}^{m}\left(  t\right)  -L_{2}^{m}\left(
t\right)  \right)  \mathrm{d}t-r_{L}\int_{S}\left(  L_{1}^{m}\left(  t\right)
-L_{2}^{m}\left(  t\right)  \right)  \mathrm{d}t\\
&  +r_{L}\int_{S}\left(  L_{1}^{m}\left(  t\right)  -L_{2}^{m}\left(
t\right)  \right)  \mathrm{d}t+r_{L}\int_{S^{c}}\left(  L_{1}^{m}\left(
t\right)  -L_{2}^{m}\left(  t\right)  \right)  \mathrm{d}t\\
=&\,\left(  r_{U}+r_{L}\right)  \int_{S}\left(  L_{1}^{m}\left(  t\right)
-L_{2}^{m}\left(  t\right)  \right)  \mathrm{d}t+r_{L}\int_{0}^{1}\left\vert
L_{1}^{m}\left(  t\right)  -L_{2}^{m}\left(  t\right)  \right\vert
\mathrm{d}t\\
=&\,-\left(  r_{U}+r_{L}\right)  \int_{0}^{1}\max\left(  L_{2}^{m}\left(
t\right)  -L_{1}^{m}\left(  t\right)  ,0\right)  \mathrm{d}t+r_{L}\int_{0}%
^{1}\left\vert L_{1}^{m}\left(  t\right)  -L_{2}^{m}\left(  t\right)
\right\vert \mathrm{d}t\\
\geq&\,-\left(  r_{U}+r_{L}\right)  \varepsilon_{m}\int_{0}^{1}\left\vert
L_{1}^{m}\left(  t\right)  -L_{2}^{m}\left(  t\right)  \right\vert
\mathrm{d}t+r_{L}\int_{0}^{1}\left\vert L_{1}^{m}\left(  t\right)  -L_{2}%
^{m}\left(  t\right)  \right\vert \mathrm{d}t.
\end{align*}
With $P\in\mathrm{P}_{m}\left(  \varepsilon_{m}\right)  $, we have
\[
r_{U}\leq r_{L}\left(  \frac{1}{\varepsilon_{m}}-1\right)  \Leftrightarrow
\left(  r_{U}+r_{L}\right)  \varepsilon_{m}\leq r_{L},
\]
which implies $J_{P}\left(  L_{2}\right)  -J_{P}\left(  L_{1}\right)  \geq0$.

If $L_{1}^{j}\left(  1\right)  -L_{2}^{j}\left(  1\right)  \geq0$ for all
$j\leq m$, then for every $P\in\mathrm{P}_{m}^{\prime}\left(  \varepsilon
_{m}\right)  $, we have that \linebreak $\sum_{j=2}^{m}\left(  -1\right)  ^{j-1}P^{\left(  j\right)
}\left(  1\right)  (L_{1}^{j}\left(  1\right)  -L_{2}^{j}\left(  1\right)
)\geq0$ and the result follows.
\end{proof}

\begin{proof}[Proof of Lemma \ref{lemma.cmu L properties}]
The proof is similar to that of Lemma \ref{lemma.c properties}.
\end{proof}

\begin{proof}[Proof of Proposition \ref{prop.ULDC properties}]
The results directly follow from Lemma \ref{lemma.cmu L properties} and Proposition \ref{prop.AmULD}.  
\end{proof}

\begin{proof}[Proof of Proposition \ref{prop.AmDLD}]
The proof closely follows the strategies of the proofs of Theorem 1 of \citet{leshno2002preferred} and Theorem 3.1B of \citet{aaberge2009ranking}.
Clearly, for $m\geq3$, we have that
${\mathrm{d}\tilde{L}_{j}^{m}\left(  p\right)  }/{\mathrm{d}p}=-\tilde
{L}_{j}^{m-1}\left(  p\right)$.
Also,
\begin{align*}
\tilde{L}_{j}^{2}\left(  p\right)  =\int_{p}^{1}\left(  1-L_{j}\left(
t\right)  \right)  \mathrm{d}t  =\left(  1-p\right)  -\int_{0}^{1}L_{j}\left(  t\right)  \mathrm{d}%
t+L_{j}^{2}\left(  p\right) ,
\end{align*}
and it follows that
${\mathrm{d}\tilde{L}_{j}^{2}\left(  p\right)  }/{\mathrm{d}p}
=-1+{\mathrm{d}L_{j}^{2}\left(  p\right)  }/{\mathrm{d}p}$.
Then we have that for every $P\in\mathrm{P}$,
\begin{align*}
&  J_{P}\left(  L_{2}\right)  -J_{P}\left(  L_{1}\right)  =\int_{0}%
^{1}P^{\prime}\left(  t\right)  \mathrm{d}L_{1}\left(  t\right)  -\int_{0}%
^{1}P^{\prime}\left(  t\right)  \mathrm{d}L_{2}\left(  t\right)  \\
=&\,-\int_{0}^{1}P^{\left(  2\right)  }\left(  t\right)  \mathrm{d}L_{1}%
^{2}\left(  t\right)  +\int_{0}^{1}P^{\left(  2\right)  }\left(  t\right)
\mathrm{d}L_{2}^{2}\left(  t\right)  =-\int_{0}^{1}P^{\left(  2\right)  }\left(  t\right)  \mathrm{d}\tilde
{L}_{1}^{2}\left(  t\right)  +\int_{0}^{1}P^{\left(  2\right)  }\left(
t\right)  \mathrm{d}\tilde{L}_{2}^{2}\left(  t\right)  \\
=&\,-P^{\left(  2\right)  }\left(  t\right)  \tilde{L}_{1}^{2}\left(
t\right)  |_{0}^{1}+P^{\left(  2\right)  }\left(  t\right)  \tilde{L}_{2}%
^{2}\left(  t\right)  |_{0}^{1}-\int_{0}^{1}P^{\left(  3\right)  }\left(
t\right)  \mathrm{d}\tilde{L}_{1}^{3}\left(  t\right)  +\int_{0}^{1}P^{\left(
3\right)  }\left(  t\right)  \mathrm{d}\tilde{L}_{2}^{3}\left(  t\right)  \\
=&\,\sum_{j=2}^{m}P^{\left(  j\right)  }\left(  0\right)  \left(  \tilde
{L}_{1}^{j}\left(  0\right)  -\tilde{L}_{2}^{j}\left(  0\right)  \right)
+\int_{0}^{1}-P^{\left(  m+1\right)  }\left(  t\right)  \left(  \tilde{L}%
_{2}^{m}\left(  t\right)  -\tilde{L}_{1}^{m}\left(  t\right)  \right)
\mathrm{d}t.
\end{align*}
If $P\in\tilde{\mathrm{P}}_{m}\left(  \varepsilon_{m}\right)  $, then $\sum_{j=2}%
^{m}P^{\left(  j\right)  }\left(  0\right)  (\tilde{L}_{1}^{j}\left(
0\right)  -L_{2}^{j}\left(  0\right)  )=0$. Let $r_{U}=\sup_{t}\{-P^{\left(
m+1\right)  }\left(  t\right)  \}$ and $r_{L}=\inf_{t}\{-P^{\left(
m+1\right)  }\left(  t\right)  \}$. Also, let $S=\{t\in\left[  0,1\right]
:\tilde{L}_{1}^{m}\left(  t\right)  -\tilde{L}_{2}^{m}\left(  t\right)  >0\}$.
It then follows that
\begin{align*}
&  \int_{0}^{1}-P^{\left(  m+1\right)  }\left(  t\right)  \left(  \tilde
{L}_{2}^{m}\left(  t\right)  -\tilde{L}_{1}^{m}\left(  t\right)  \right)
\mathrm{d}t\\
\geq&\, r_{U}\int_{S}\left(  \tilde{L}_{2}^{m}\left(  t\right)  -\tilde{L}%
_{1}^{m}\left(  t\right)  \right)  \mathrm{d}t+r_{L}\int_{S^{c}}\left(
\tilde{L}_{2}^{m}\left(  t\right)  -\tilde{L}_{1}^{m}\left(  t\right)
\right)  \mathrm{d}t\\
=&\,r_{U}\int_{S}\left(  \tilde{L}_{2}^{m}\left(  t\right)  -\tilde{L}_{1}%
^{m}\left(  t\right)  \right)  \mathrm{d}t-r_{L}\int_{S}\left(  \tilde{L}%
_{2}^{m}\left(  t\right)  -\tilde{L}_{1}^{m}\left(  t\right)  \right)
\mathrm{d}t\\
&  +r_{L}\int_{S}\left(  \tilde{L}_{2}^{m}\left(  t\right)  -\tilde{L}_{1}%
^{m}\left(  t\right)  \right)  \mathrm{d}t+r_{L}\int_{S^{c}}\left(  \tilde
{L}_{2}^{m}\left(  t\right)  -\tilde{L}_{1}^{m}\left(  t\right)  \right)
\mathrm{d}t\\
=&\,\left(  r_{U}+r_{L}\right)  \int_{S}\left(  \tilde{L}_{2}^{m}\left(
t\right)  -\tilde{L}_{1}^{m}\left(  t\right)  \right)  \mathrm{d}t+r_{L}%
\int_{0}^{1}\left\vert \tilde{L}_{1}^{m}\left(  t\right)  -\tilde{L}_{2}%
^{m}\left(  t\right)  \right\vert \mathrm{d}t\\
=&\,-\left(  r_{U}+r_{L}\right)  \int_{0}^{1}\max\left(  \tilde{L}_{1}%
^{m}\left(  t\right)  -\tilde{L}_{2}^{m}\left(  t\right)  ,0\right)
\mathrm{d}t+r_{L}\int_{0}^{1}\left\vert \tilde{L}_{1}^{m}\left(  t\right)
-\tilde{L}_{2}^{m}\left(  t\right)  \right\vert \mathrm{d}t\\
\geq&\,-\left(  r_{U}+r_{L}\right)  \varepsilon_{m}\int_{0}^{1}\left\vert
\tilde{L}_{1}^{m}\left(  t\right)  -\tilde{L}_{2}^{m}\left(  t\right)
\right\vert \mathrm{d}t+r_{L}\int_{0}^{1}\left\vert \tilde{L}_{1}^{m}\left(
t\right)  -\tilde{L}_{2}^{m}\left(  t\right)  \right\vert \mathrm{d}t.
\end{align*}
With $P\in\tilde{\mathrm{P}}_{m}\left(  \varepsilon_{m}\right)  $, we have
\[
r_{U}\leq r_{L}\left(  \frac{1}{\varepsilon_{m}}-1\right)  \Leftrightarrow
\left(  r_{U}+r_{L}\right)  \varepsilon_{m}\leq r_{L},
\]
which implies $J_{P}\left(  L_{2}\right)  -J_{P}\left(  L_{1}\right)  \geq0$.

If $\tilde{L}_{1}^{j}\left(  0\right)  -\tilde{L}_{2}^{j}\left(  0\right)
\leq0$ for all $j\leq m$, then for every $P\in\tilde{\mathrm{P}}_{m}^{\prime}\left(
\varepsilon_{m}\right)  $, $\sum_{j=2}^{m}P^{\left(  j\right)  }\left(
0\right)  (\tilde{L}_{1}^{j}\left(  0\right)  -\tilde{L}_{2}^{j}\left(
0\right)  )\geq0$ and the result follows.
\end{proof}

\begin{proof}[Proof of Lemma \ref{lemma.cmd L properties}]
The proof is similar to that of Lemma \ref{lemma.c properties}.
\end{proof}

\begin{proof}[Proof of Proposition \ref{prop.DLDC properties}]
The results directly follow from Proposition \ref{prop.AmDLD}.  
\end{proof}

\begin{lemma}\label{lemma.directional derivative}
Let $\mathbb{D}$ be a normed space. Suppose that $\mathcal{F}_{1}%
:  \mathbb{D}  \rightarrow\mathbb{R}$ and
$\mathcal{F}_{2}:  \mathbb{D}  \rightarrow
\mathbb{R}$ such that $\mathcal{F}_{1}$ and $\mathcal{F}_{2}$ are both
Hadamard directionally differentiable at $\phi$ tangentially to some
$\mathbb{D}_{0}\subset\mathbb{D}$ with the derivatives $\mathcal{F}_{1\phi
}^{\prime}$ and $\mathcal{F}_{2\phi}^{\prime}$. Then we have the following results:
\begin{enumerate}[label=(\roman*)]
\item The summation $\mathcal{F}_1+\mathcal{F}_2$ is Hadamard directionally differentiable at $\phi$ such that
\[
\left(  \mathcal{F}_{1}+\mathcal{F}_{2}\right)  _{\phi}^{\prime}\left(
h\right)  =\mathcal{F}_{1\phi}^{\prime}\left(  h\right)  +\mathcal{F}_{2\phi
}^{\prime}\left(  h\right)  ,\quad h\in\mathbb{D}_{0}.
\]

\item The multiplication $\mathcal{F}_1\mathcal{F}_2$ is Hadamard directionally differentiable at $\phi$ such that
\[
\left(  \mathcal{F}_{1}\mathcal{F}_{2}\right)  _{\phi}^{\prime}\left(
h\right)  =\mathcal{F}_{1\phi}^{\prime}\left(  h\right)  \mathcal{F}%
_{2}\left(  \phi\right)  +\mathcal{F}_{1}\left(  \phi\right)  \mathcal{F}%
_{2\phi}^{\prime}\left(  h\right)  ,\quad h\in\mathbb{D}_{0}.
\]

\item If $\mathcal{F}_1\left(  \phi\right)  \neq0$, then
the inverse $1/\mathcal{F}_1$ is Hadamard directionally differentiable at $\phi$ such that
\[
\left(  \frac{1}{\mathcal{F}_{1}}\right)  _{\phi}^{\prime}\left(  h\right)
=-\frac{\mathcal{F}_{1\phi}^{\prime}\left(  h\right)  }{\mathcal{F}_1\left(
\phi\right)  ^{2}},\quad h\in\mathbb{D}_{0}.
\]

\end{enumerate}
\end{lemma}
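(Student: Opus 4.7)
The plan is to verify, for each of the three parts, the defining limit \eqref{Hadamard directional derivative} of Definition \ref{def.Hadamard directional} directly, using standard algebraic identities combined with the Hadamard directional differentiability of $\mathcal{F}_1$ and $\mathcal{F}_2$ at $\phi$. A preliminary observation that I will use throughout is that Hadamard directional differentiability implies continuity of $\mathcal{F}_j$ along the perturbation $\phi+t_n h_n$: since the quotient $(\mathcal{F}_j(\phi+t_n h_n)-\mathcal{F}_j(\phi))/t_n$ converges to $\mathcal{F}_{j\phi}'(h)\in\mathbb{R}$ and $t_n\downarrow 0$, multiplying by $t_n$ yields $\mathcal{F}_j(\phi+t_n h_n)\to \mathcal{F}_j(\phi)$ for $j=1,2$. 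Continuity of the claimed derivative maps on $\mathbb{D}_0$ is inherited from the continuity of $\mathcal{F}_{1\phi}'$ and $\mathcal{F}_{2\phi}'$ in each case.

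For (i), I would apply the triangle inequality to the decomposition
\[
\left|\frac{(\mathcal{F}_1+\mathcal{F}_2)(\phi+t_n h_n)-(\mathcal{F}_1+\mathcal{F}_2)(\phi)}{t_n}-\mathcal{F}_{1\phi}'(h)-\mathcal{F}_{2\phi}'(h)\right|\le \sum_{j=1,2}\left|\frac{\mathcal{F}_j(\phi+t_n h_n)-\mathcal{F}_j(\phi)}{t_n}-\mathcal{F}_{j\phi}'(h)\right|,
\]
and each summand vanishes as $n\to\infty$ by the directional differentiability of $\mathcal{F}_j$. For (ii), I would use the Leibniz-type decomposition
\[
\mathcal{F}_1(\phi+t_n h_n)\mathcal{F}_2(\phi+t_n h_n)-\mathcal{F}_1(\phi)\mathcal{F}_2(\phi)= \bigl[\mathcal{F}_1(\phi+t_n h_n)-\mathcal{F}_1(\phi)\bigr]\mathcal{F}_2(\phi+t_n h_n)+\mathcal{F}_1(\phi)\bigl[\mathcal{F}_2(\phi+t_n h_n)-\mathcal{F}_2(\phi)\bigr].
\]
Dividing by $t_n$, the first term tends to $\mathcal{F}_{1\phi}'(h)\mathcal{F}_2(\phi)$ by the preliminary continuity observation applied to $\mathcal{F}_2$ together with the differentiability of $\mathcal{F}_1$, and the second to $\mathcal{F}_1(\phi)\mathcal{F}_{2\phi}'(h)$ by the differentiability of $\mathcal{F}_2$.

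For (iii), assuming $\mathcal{F}_1(\phi)\neq 0$, the preliminary continuity observation gives $\mathcal{F}_1(\phi+t_n h_n)\neq 0$ for all sufficiently large $n$, so the identity
\[
\frac{1}{t_n}\left(\frac{1}{\mathcal{F}_1(\phi+t_n h_n)}-\frac{1}{\mathcal{F}_1(\phi)}\right)= -\frac{1}{\mathcal{F}_1(\phi+t_n h_n)\,\mathcal{F}_1(\phi)}\cdot\frac{\mathcal{F}_1(\phi+t_n h_n)-\mathcal{F}_1(\phi)}{t_n}
\]
is well defined eventually. Letting $n\to\infty$ and using $\mathcal{F}_1(\phi+t_n h_n)\to \mathcal{F}_1(\phi)$ together with the differentiability of $\mathcal{F}_1$ gives the limit $-\mathcal{F}_{1\phi}'(h)/\mathcal{F}_1(\phi)^2$, as claimed.

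There is no serious obstacle in this proof; it is a calculus-of-directional-derivatives identity that mirrors the classical product and quotient rules. The only delicate point is recording that Hadamard directional differentiability automatically produces continuity of $\mathcal{F}_j$ along the sequences $\phi+t_n h_n$ appearing in Definition \ref{def.Hadamard directional}, since without this observation neither the product nor the reciprocal identity passes cleanly to the limit (for the reciprocal, we also need the nonvanishing of $\mathcal{F}_1(\phi+t_n h_n)$ to hold eventually, which again rests on this continuity). Once this observation is in place, each verification reduces to passing to a limit in an algebraic identity whose individual factors all converge.
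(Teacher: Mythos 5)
Your proposal is correct and follows essentially the same route as the paper's proof: the same splitting of the product difference by adding and subtracting the cross term, and the same algebraic identity for the reciprocal, with the limit passed using the continuity of $\mathcal{F}_1$ and $\mathcal{F}_2$ implied by their Hadamard directional differentiability. You are in fact slightly more careful than the paper in recording explicitly that this continuity also guarantees $\mathcal{F}_1(\phi+t_nh_n)\neq 0$ eventually in part (iii), a point the paper leaves implicit.
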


\begin{proof}[Proof of Lemma \ref{lemma.directional derivative}]
Let $t_n \downarrow 0$ and $h_n\to h\in\mathbb{D}_0$ such that $\phi+t_n h_n\in\mathbb{D}$.

(i). By definition, it is easy to show that
\[
\frac{\mathcal{F}_{1}\left(  \phi+t_{n}h_{n}\right)  +\mathcal{F}_{2}\left(
\phi+t_{n}h_{n}\right)  -\mathcal{F}_{1}\left(  \phi\right)  -\mathcal{F}%
_{2}\left(  \phi\right)  }{t_{n}}-\mathcal{F}_{1\phi}^{\prime}\left(
h\right)  -\mathcal{F}_{2\phi}^{\prime}\left(  h\right)  \rightarrow0.
\]

(ii). First, we have that
\begin{align*}
&  \frac{\mathcal{F}_{1}\left(  \phi+t_{n}h_{n}\right)  \mathcal{F}_{2}\left(
\phi+t_{n}h_{n}\right)  -\mathcal{F}_{1}\left(  \phi\right)  \mathcal{F}%
_{2}\left(  \phi\right)  }{t_{n}}\\
=&\,\frac{\mathcal{F}_{1}\left(  \phi+t_{n}h_{n}\right)  \mathcal{F}%
_{2}\left(  \phi+t_{n}h_{n}\right)  -\mathcal{F}_{1}\left(  \phi\right)
\mathcal{F}_{2}\left(  \phi+t_{n}h_{n}\right)  }{t_{n}}\\
&  +\frac{\mathcal{F}_{1}\left(  \phi\right)  \mathcal{F}_{2}\left(
\phi+t_{n}h_{n}\right)  -\mathcal{F}_{1}\left(  \phi\right)  \mathcal{F}%
_{2}\left(  \phi\right)  }{t_{n}}.
\end{align*}
Then, by the continuity of $\mathcal{F}_2$ ($\mathcal{F}_2$ is directionally differentiable), we can show that
\begin{align*}
&  \frac{\mathcal{F}_{1}\left(  \phi+t_{n}h_{n}\right)  \mathcal{F}_{2}\left(
\phi+t_{n}h_{n}\right)  -\mathcal{F}_{1}\left(  \phi\right)  \mathcal{F}%
_{2}\left(  \phi+t_{n}h_{n}\right)  }{t_{n}}-\mathcal{F}_{1\phi}^{\prime
}\left(  h\right)  \mathcal{F}_{2}\left(  \phi+t_{n}h_{n}\right)  \\
&+\mathcal{F}_{1\phi}^{\prime}\left(  h\right)  \mathcal{F}_{2}\left(
\phi+t_{n}h_{n}\right)  -\mathcal{F}_{1\phi}^{\prime}\left(  h\right)
\mathcal{F}_{2}\left(  \phi\right)  \rightarrow0
\end{align*}
and
\[
\frac{\mathcal{F}_{1}\left(  \phi\right)  \mathcal{F}_{2}\left(  \phi
+t_{n}h_{n}\right)  -\mathcal{F}_{1}\left(  \phi\right)  \mathcal{F}%
_{2}\left(  \phi\right)  }{t_{n}}-\mathcal{F}_{1}\left(  \phi\right)
\mathcal{F}_{2\phi}^{\prime}\left(  h\right)  \rightarrow0.
\]

(iii). 
We first write 
\[
\frac{1/\mathcal{F}_{1}\left(  \phi+t_{n}h_{n}\right)  -1/\mathcal{F}%
_{1}\left(  \phi\right)  }{t_{n}}=\frac{1}{t_{n}}\frac{\mathcal{F}_{1}\left(
\phi\right)  -\mathcal{F}_{1}\left(  \phi+t_{n}h_{n}\right)  }{\mathcal{F}%
_{1}\left(  \phi+t_{n}h_{n}\right)  \mathcal{F}_{1}\left(  \phi\right)  }.
\]
Then, by the continuity of $\mathcal{F}_1$ ($\mathcal{F}_1$ is directionally differentiable), we can show that
\begin{align*}
\frac{1}{t_{n}}\frac{\mathcal{F}_{1}\left(  \phi\right)  -\mathcal{F}%
_{1}\left(  \phi+t_{n}h_{n}\right)  }{\mathcal{F}_{1}\left(  \phi+t_{n}%
h_{n}\right)  \mathcal{F}_{1}\left(  \phi\right)  }&-\left(  -\frac
{\mathcal{F}_{1\phi}^{\prime}\left(  h\right)  }{\mathcal{F}_{1}\left(
\phi+t_{n}h_{n}\right)  \mathcal{F}_{1}\left(  \phi\right)  }\right)\\
&-\frac{\mathcal{F}_{1\phi}^{\prime}\left(  h\right)  }{\mathcal{F}_{1}\left(
\phi+t_{n}h_{n}\right)  \mathcal{F}_{1}\left(  \phi\right)  }-\left(
-\frac{\mathcal{F}_{1\phi}^{\prime}\left(  h\right)  }{\mathcal{F}_{1}\left(
\phi\right)  ^{2}}\right)  \rightarrow0.
\end{align*}

\end{proof}

\begin{proof}[Proof of Lemma \ref{lemma.G kernel}]
We closely follow the proofs of Lemma A.1 of \citet{Beare2017improved} and Lemma 3.1 of \citet{jiang2023nonparametric}.
First, we have that
\begin{align*}
& E\left[  \mathbb{G}\left(  p\right)  \mathbb{G}\left(  p^{\prime}\right)
\right]  =E\left[  \left(  \mathbb{\lambda}^{1/2}\mathcal{L}_{2}\left(  p\right)
-\left(  1-\lambda\right)  ^{1/2}\mathcal{L}_{1}\left(  p\right)  \right)
\left(  \mathbb{\lambda}^{1/2}\mathcal{L}_{2}\left(  p^{\prime}\right)
-\left(  1-\lambda\right)  ^{1/2}\mathcal{L}_{1}\left(  p^{\prime}\right)
\right)  \right]  \\
=&\,\lambda E\left[  \mathcal{L}_{2}\left(  p\right)  \mathcal{L}_{2}\left(
p^{\prime}\right)  \right]  -\mathbb{\lambda}^{1/2}\left(  1-\lambda\right)
^{1/2}E\left[  \mathcal{L}_{2}\left(  p\right)  \mathcal{L}_{1}\left(
p^{\prime}\right)  \right]  -\mathbb{\lambda}^{1/2}\left(  1-\lambda\right)
^{1/2}E\left[  \mathcal{L}_{1}\left(  p\right)  \mathcal{L}_{2}\left(
p^{\prime}\right)  \right]  \\
&  +\left(  1-\lambda\right)  E\left[  \mathcal{L}_{1}\left(  p\right)
\mathcal{L}_{1}\left(  p^{\prime}\right)  \right].
\end{align*}

For $j=1,2$, $p\in\left[  0,1\right]  $, and $t\in\left[  0,1\right]
$, define%
\[
h_{j,p}\left(  t\right)  =\frac{1}{\mu_{j}}\left\{  L_{j}\left(  p\right)
-1\left(  t\leq p\right)  \right\}  Q_{j}^{\prime}\left(  t\right)  .
\]
Note that the almost sure integrability of $h_{j,p}\mathcal{B}_{j}$ follows from
the weak convergence of $n_{j}^{1/2}(\hat{Q}_{j}-Q_{j})\leadsto-Q_{j}^{\prime
}\mathcal{B}_{j}$ in $L^{1}\left(  \left[  0,1\right]  \right)  $ established
by \citet{K17}. Then we can show that $\mathcal{L}_j$ defined in \eqref{eq.L} satisfies that
\[
\mathcal{L}_{j}\left(  p\right)  =\int_{0}^{1}h_{j,p}\left(
t\right)  \mathcal{B}_{j}\left(  t\right)  \mathrm{d}t,\quad p\in\left[
0,1\right].
\]
Define
\[
H_{j,p}\left(  u\right)  =\int_{0}^{u}h_{j,p}\left(  t\right)  \mathrm{d}t,\quad p\in\left[
0,1\right].
\]
For $j,j^{\prime}\in\left\{  1,2\right\}  $ and $p,p^{\prime}\in\left[
0,1\right]  $, by Fubini's theorem,
\[
Cov\left(  \mathcal{L}_{j}\left(  p\right)  ,\mathcal{L}_{j^{\prime}}\left(
p^{\prime}\right)  \right)  =\int_{0}^{1}\int_{0}^{1}h_{j,p}\left(
t_{1}\right)  h_{j^{\prime},p^{\prime}}\left(  t_{2}\right)  E\left[
\mathcal{B}_{j}\left(  t_{1}\right)  \mathcal{B}_{j^{\prime}}\left(
t_{2}\right)  \right]  \mathrm{d}t_{1}\mathrm{d}t_{2}.
\]
Let $\left(  U,V\right)  $ be a pair of random variables with joint CDF given
by the copula $\mathbf C$ in Assumption \ref{ass.data}(ii). For $j=1$ and $j^{\prime}=2$, $E\left[  \mathcal{B}%
_{1}\left(  t_{1}\right)  \mathcal{B}_{2}\left(  t_{2}\right)  \right]
=\mathbf C\left(  t_{1},t_{2}\right)  -t_{1}t_{2}$. By Theorem 3.1 of \citet{lo2017functional} (see also \citet{cuadras2002covariance} and \citet{beare2009generalization}),
\begin{align*}
Cov\left(  H_{j,p}\left(  U\right)  ,H_{j^{\prime},p^{\prime}}\left(
V\right)  \right)  =\int_{0}^{1}\int_{0}^{1}h_{j,p}\left(  t_{1}\right)
h_{j^{\prime},p^{\prime}}\left(  t_{2}\right)  \left( \mathbf C\left(  t_{1}%
,t_{2}\right)  -t_{1}t_{2}\right)  \mathrm{d}t_{1}\mathrm{d}t_{2}.    
\end{align*}
Then it follows that
\begin{align*}
&E\left[  \mathcal{L}_{1}\left(  p\right)  \mathcal{L}_{2}\left(  p^{\prime
}\right)  \right]  =Cov\left(  \mathcal{L}_{1}\left(  p\right)
,\mathcal{L}_{2}\left(  p^{\prime}\right)  \right)  =Cov\left(
H_{1,p}\left(  F_{1}\left(  X_i^{1}\right)  \right)  ,H_{2,p^{\prime}}\left(
F_{2}\left(  X_i^{2}\right)  \right)  \right)  \\
& =Cov\left(  \frac{1}{\mu_{1}}\left\{  L_{1}\left(  p\right)  X_i^{1}%
-Q_{1}\left(  p\right)  \wedge X_i^{1}\right\}  ,\frac{1}{\mu_{2}}\left\{
L_{2}\left(  p^{\prime}\right)  X_i^{2}-Q_{2}\left(  p^{\prime}\right)  \wedge
X_i^{2}\right\}  \right)  .
\end{align*}
For $j=j^{\prime}$, $E\left[  \mathcal{B}_{j}\left(  t_{1}\right)
\mathcal{B}_{j}\left(  t_{2}\right)  \right]  =t_{1}\wedge t_{2}-t_{1}t_{2}$.
By Theorem 3.1 of \citet{lo2017functional} again,
\[
Cov\left(  H_{j,p}\left(  U\right)  ,H_{j,p^{\prime}}\left(  U\right)
\right)  =\int_{0}^{1}\int_{0}^{1}h_{j,p}\left(  t_{1}\right)  h_{j,p'}\left(
t_{2}\right)  \left(  t_{1}\wedge t_{2}-t_{1}t_{2}\right)  \mathrm{d}%
t_{1}\mathrm{d}t_{2}.
\]
Then it follows that
\begin{align*}
&E\left[  \mathcal{L}_{j}\left(  p\right)  \mathcal{L}_{j}\left(  p^{\prime
}\right)  \right] =Cov\left(  \mathcal{L}_{j}\left(  p\right)
,\mathcal{L}_{j}\left(  p^{\prime}\right)  \right)  =Cov\left(  H_{j,p}\left(
F_{j}\left(  X_i^{j}\right)  \right)  ,H_{j,p^{\prime}}\left(  F_{j}\left(
X_i^{j}\right)  \right)  \right)  \\
& =Cov\left(  \frac{1}{\mu_{j}}\left\{  L_{j}\left(  p\right)  X_i^{j}%
-Q_{j}\left(  p\right)  \wedge X_i^{j}\right\}  ,\frac{1}{\mu_{j}}\left\{
L_{j}\left(  p^{\prime}\right)  X_i^{j}-Q_{j}\left(  p^{\prime}\right)  \wedge
X_i^{j}\right\}  \right)  .
\end{align*}
\end{proof}

\begin{proof}[Proof of Lemma \ref{lemma.phi weak convergence}]
The results follow from the continuous mapping theorem and Fubini's theorem (see, for example, Theorem 2.37 of \citet{folland2013real}). 
\end{proof}

\begin{proof}[Proof of Proposition \ref{prop.c asymptotic limit}]
For simplicity of notation, we focus on the case where $m=1$. The proof of general results for $m\ge2$ and $w\in\{u,d\}$ is similar. 
We first show the Hadamard differentiability of $\mathcal{F}$. Because $L_1$ and $L_2$ are continuous in $p$, if $L_1(p)\neq L_2(p)$ for some $p\in[0,1]$, then $\mathcal{F}_1(\phi)+\mathcal{F}_2(\phi)> 0$. By Lemma \ref{lemma.directional derivative}, we have that
\begin{align}\label{eq.F HD}
&\mathcal{F}_{\phi}^{\prime}\left(  h\right)  =\left(  \frac{\mathcal{F}%
_{1}}{\mathcal{F}_{1}+\mathcal{F}_{2}}\right)  _{\phi}^{\prime}\left(
h\right)  =\frac{\mathcal{F}_{1\phi}^{\prime}\left(  h\right)  }%
{\mathcal{F}_{1}\left(  \phi\right)  +\mathcal{F}_{2}\left(  \phi\right)
}-\mathcal{F}_{1}\left(  \phi\right)  \frac{\mathcal{F}_{1\phi}^{\prime
}\left(  h\right)  +\mathcal{F}_{2\phi}^{\prime}\left(  h\right)  }{\left(
\mathcal{F}_{1}\left(  \phi\right)  +\mathcal{F}_{2}\left(  \phi\right)
\right)  ^{2}}\notag\\
=&\,\frac{\mathcal{F}_{1\phi}^{\prime}\left(  h\right)  \left[  \mathcal{F}%
_{1}\left(  \phi\right)  +\mathcal{F}_{2}\left(  \phi\right)  \right]
-\mathcal{F}_{1}\left(  \phi\right)  \left[  \mathcal{F}_{1\phi}^{\prime
}\left(  h\right)  +\mathcal{F}_{2\phi}^{\prime}\left(  h\right)  \right]
}{\left(  \mathcal{F}_{1}\left(  \phi\right)  +\mathcal{F}_{2}\left(
\phi\right)  \right)  ^{2}} =\frac{\mathcal{F}_{1\phi}^{\prime}\left(  h\right)  \mathcal{F}_{2}\left(
\phi\right)  -\mathcal{F}_{1}\left(  \phi\right)  \mathcal{F}_{2\phi}^{\prime
}\left(  h\right)  }{\left(  \mathcal{F}_{1}\left(  \phi\right)
+\mathcal{F}_{2}\left(  \phi\right)  \right)  ^{2}}.
\end{align}
This implies the continuity of $\mathcal{F}$. By Lemma 3 of \citet{BDB14}, we have $\hat{c}(L_1,L_2)\to c(L_1,L_2)$ a.s. With \eqref{eq.phi weak convergence} and \eqref{eq.F HD}, by Theorem 2.1 of \citet{fang2014inference}, \eqref{eq.c asymptotic limit} holds.
\end{proof}

\begin{proof}[Proof of Proposition \ref{prop.confidence interval}]
For simplicity of notation, we focus on the case where $m=1$. The proof of general results for $m\ge2$ and $w\in\{u,d\}$ is similar.
Since $c(L_1,L_2)\neq0$, $L_1\neq L_2$ and $\mathcal{F}_1(\phi)+\mathcal{F}_2(\phi)>0$.
Let $\varepsilon>0$ and $\mu$ denote the Lebesgue measure in the following. 
First, we have that
\begin{align*}
&\mathbb{P}\left(  \mu\left(  \widehat{B_{+}\left(  \phi\right)  }\setminus
B_{+}\left(  \phi\right)  \right)  >\varepsilon\right)     \leq
\mathbb{P}\left(  \widehat{B_{+}\left(  \phi\right)  }\setminus B_{+}\left(
\phi\right)  \neq\varnothing\right) \\
\leq&\,\mathbb{P}\left(  \sup_{p\in\widehat{B_{+}\left(  \phi\right)
}\setminus B_{+}\left(  \phi\right)  }\frac{\sqrt{T_{n}}(\hat{\phi}\left(
p\right)  -\phi\left(  p\right)  )}{\xi_{0}\vee\hat{\sigma}\left(  p\right)
}>t_{n}\right)  \leq\mathbb{P}\left(  \sup_{p\in\left[  0,1\right]  }\frac{\sqrt{T_{n}%
}(\hat{\phi}\left(  p\right)  -\phi\left(  p\right)  )}{\xi_{0}\vee\hat
{\sigma}\left(  p\right)  }>t_{n}\right)  .
\end{align*}
Since $\sqrt{T_{n}}(\hat{\phi}  -\phi
)\leadsto\mathbb{G}$, then by
Example 1.4.7 (Slutsky’s lemma) and Theorem 1.3.6 (continuous mapping) of \citet{van1996weak},
\[
\frac{1}{t_{n}}\sup_{p\in\left[  0,1\right]  }\frac{\sqrt{T_{n}}(\hat{\phi
}\left(  p\right)  -\phi\left(  p\right)  )}{\xi_{0}\vee\hat{\sigma}\left(
p\right)  }\le 
\frac{1}{t_{n}}\sup_{p\in\left[  0,1\right]  }\left\vert\frac{\sqrt{T_{n}}(\hat{\phi
}\left(  p\right)  -\phi\left(  p\right)  )}{\xi_{0}  } \right\vert
\leadsto0.
\]
Then by Theorem 1.3.4(iii) of \citet{van1996weak}, $\mathbb{P}(\mu(\widehat{B_{+}\left(  \phi\right)  }\setminus
B_{+}\left(  \phi\right)  )>\varepsilon)\rightarrow0$.

By Lemma 3 of \citet{BDB14}, $\sup_{p\in\left[  0,1\right]  }|\hat{\phi}\left(  p\right)
-\phi\left(  p\right)  |\rightarrow0$ a.s. For simplicity of notation, let $\hat{\sigma}$ denote $\hat{\sigma}_1^w$ in the following.
In the proof of Lemma A.2 of \citet{Beare2017improved}, it is shown that $\sup_{p\in[0,1]}\hat{\sigma}(p)$ is bounded by some $C_{\sigma}>0$ a.s.
Define
\[
A_{\phi}=\left\{  \omega\in\Omega:\sup_{p\in\left[  0,1\right]  }\left\vert
\hat{\phi}_{\omega}\left(  p\right)  -\phi\left(  p\right)  \right\vert
\rightarrow0\right\}  \cap \left\{ \omega\in\Omega: \sup_{p\in[0,1]}\hat{\sigma}_{\omega}(p) \le C_{\sigma} \right\},
\]
where the subscript ${\omega}$ denotes that the random elements are fixed at $\omega$.
Then $\mathbb{P}\left(  A_{\phi}\right)  =1$. Consider
\[
\mathbb{P}\left(  \mu\left(  B_{+}\left(  \phi\right)  \setminus\widehat
{B_{+}\left(  \phi\right)  }\right)  >\varepsilon\right)  =\int1\left\{
\mu\left(  B_{+}\left(  \phi\right)  \setminus\widehat{B_{+}\left(
\phi\right)  }\right)  >\varepsilon\right\}  \mathrm{d}\mathbb{P}.
\]
For every fixed $\omega\in A_{\phi}$, we have that%
\[
\mu\left(  B_{+}\left(  \phi\right)  \setminus\widehat{B_{+}\left(
\phi\right)  }\right)  =\int_{\left[  0,1\right]  }1\left\{  p\in B_{+}\left(
\phi\right)  \setminus\widehat{B_{+}\left(  \phi\right)  }\right\}
\mathrm{d}\mu\left(  p\right)  ,
\]
where%
\[
B_{+}\left(  \phi\right)  \setminus\widehat{B_{+}\left(  \phi\right)
}=\left\{  p\in\left[  0,1\right]  :\frac{\sqrt{T_{n}}\hat{\phi}_{\omega
}\left(  p\right)  }{\xi_{0}\vee\hat{\sigma}_{\omega}\left(  p\right)  }\leq t_{n}%
,\phi\left(  p\right)  >0\right\}  .
\]
For every $p\in\left[  0,1\right]  $ such that $\phi\left(  p\right)  >0$,
since $\sup_{p\in\left[  0,1\right]  }\vert \hat{\phi}_{\omega}\left(
p\right)  -\phi\left(  p\right)  \vert \rightarrow0$, then when $n$ is
large enough, $\hat{\phi}_{\omega}\left(  p\right)  >1/2\cdot\phi\left(  p\right)
$. Since $\hat{\sigma}_{\omega}$ is uniformly bounded and $t_{n}/\sqrt{T_{n}%
}\rightarrow0$, we have that ($\omega$ is fixed) for every fixed $p\in[0,1]$,
\[
1\left\{  p\in B_{+}\left(  \phi\right)  \setminus\widehat{B_{+}\left(
\phi\right)  }\right\}  \rightarrow0.
\]
Thus, by the dominated convergence theorem ($\omega$
is fixed),
\[
\mu\left(  B_{+}\left(  \phi\right)  \setminus\widehat{B_{+}\left(
\phi\right)  }\right)  =\int_{\left[  0,1\right]  }1\left\{  p\in B_{+}\left(
\phi\right)  \setminus\widehat{B_{+}\left(  \phi\right)  }\right\}
\mathrm{d}\mu\left(  p\right)  \rightarrow0.
\]
This implies that for every $\omega\in A_{\phi}$,
\[
1\left\{  \mu\left(  B_{+}\left(  \phi\right)  \setminus\widehat{B_{+}\left(
\phi\right)  }\right)  >\varepsilon\right\}  \rightarrow0.
\]
Then by the dominated convergence theorem again,
\[
\mathbb{P}\left(  \mu\left(  B_{+}\left(  \phi\right)  \setminus\widehat
{B_{+}\left(  \phi\right)  }\right)  >\varepsilon\right)  =\int1\left\{
\mu\left(  B_{+}\left(  \phi\right)  \setminus\widehat{B_{+}\left(
\phi\right)  }\right)  >\varepsilon\right\}  \mathrm{d}\mathbb{P}%
\rightarrow0.
\]
Similarly, we can show that
\[
\mathbb{P}\left(  \mu\left(  \widehat{B_{+}\left(  -\phi\right)  }\setminus
B_{+}\left(  -\phi\right)  \right)  >\varepsilon\right)  \rightarrow0\text{
and }\mathbb{P}\left(  \mu\left(  B_{+}\left(  -\phi\right)  \setminus
\widehat{B_{+}\left(  -\phi\right)  }\right)  >\varepsilon\right)
\rightarrow0.
\]
In addition,
\begin{align*}
\mathbb{P}\left(  \mu\left(  B_{0}\left(  \phi\right)  \setminus\widehat
{B_{0}\left(  \phi\right)  }\right)  >\varepsilon\right)   &  \leq
\mathbb{P}\left(  B_{0}\left(  \phi\right)  \setminus\widehat{B_{0}\left(
\phi\right)  }\neq\varnothing\right) \\
&  \leq\mathbb{P}\left(  \sup_{p\in B_{0}\left(  \phi\right)  \setminus
\widehat{B_{0}\left(  \phi\right)  }}\left\vert \frac{\sqrt{T_{n}}(\hat{\phi
}\left(  p\right)  -\phi\left(  p\right)  )}{\xi_{0}\vee\hat{\sigma}\left(
p\right)  }\right\vert >t_{n}\right)  \rightarrow0.
\end{align*}
Also, we have that
\[
\widehat{B_{0}\left(  \phi\right)  }\setminus B_{0}\left(  \phi\right)
=\left\{  p\in\left[  0,1\right]  :\frac{\sqrt{T_{n}}\left\vert \hat{\phi
}\left(  p\right)  \right\vert }{\xi_{0}\vee\hat{\sigma}\left(
p\right)  }\leq t_{n},\left\vert \phi\left(  p\right)  \right\vert >0\right\}
.
\]
Then by similar arguments, we have that
\[
\mathbb{P}\left(  \mu\left(  \widehat{B_{0}\left(  \phi\right)  }\setminus
B_{0}\left(  \phi\right)  \right)  >\varepsilon\right)  \rightarrow0.
\]

It is easy to show that $\mathcal{\hat{F}}_{1\phi}^{\prime}\left(  h\right)  $
is Lipschitz continuous, because
\begin{align*}
&\left\vert \mathcal{\hat{F}}_{1\phi}^{\prime}\left(  h_{1}\right)
-\mathcal{\hat{F}}_{1\phi}^{\prime}\left(  h_{2}\right)  \right\vert  
\leq\left\vert \int_{\widehat{B_{+}\left(  \phi\right)  }}h_{1}\left(
p\right)  \mathrm{d}p-\int_{\widehat{B_{+}\left(  \phi\right)  }}h_{2}\left(
p\right)  \mathrm{d}p\right\vert\\ 
&+\left\vert \int_{\widehat{B_{0}\left(
\phi\right)  }}\max\left\{  h_{1}\left(  p\right)  ,0\right\}  \mathrm{d}%
p-\int_{\widehat{B_{0}\left(  \phi\right)  }}\max\left\{  h_{2}\left(
p\right)  ,0\right\}  \mathrm{d}p\right\vert \leq2\left\Vert h_{1}-h_{2}\right\Vert _{\infty}.
\end{align*}
Similarly, we can show that $\mathcal{\hat{F}}_{2\phi}^{\prime}\left(
h\right)  $ is Lipschitz continuous. For every $h\in C\left(  \left[
0,1\right]  \right)  $,
\begin{align*}
\left\vert \mathcal{\hat{F}}_{1\phi}^{\prime}\left(  h\right)
-\mathcal{F}_{1\phi}^{\prime}\left(  h\right)  \right\vert 
\leq&\,\left\Vert h\right\Vert _{\infty}\bigg[ \mu\left(  B_{+}\left(
\phi\right)  \setminus\widehat{B_{+}\left(  \phi\right)  }\right)  +\mu\left(
\widehat{B_{+}\left(  -\phi\right)  }\setminus B_{+}\left(  -\phi\right)
\right) \\
&+\mu\left(  B_{0}\left(  \phi\right)  \setminus\widehat{B_{0}\left(
\phi\right)  }\right)  +\mu\left(  \widehat{B_{0}\left(  \phi\right)
}\setminus B_{0}\left(  \phi\right)  \right)  \bigg] \rightarrow_{p}0.
\end{align*}
Similarly, we can show that
\[
\left\vert \mathcal{\hat{F}}_{2\phi}^{\prime}\left(  h\right)  -\mathcal{F}%
_{2\phi}^{\prime}\left(  h\right)  \right\vert \rightarrow_{p}0.
\]
Next, since $\hat{\phi}\rightarrow\phi$ a.s. by \citet{BDB14}, we have that for some $C_n=O_p(1)$,
\begin{align*}
&\left\vert \mathcal{\hat{F}}_{\phi}^{\prime}\left(  h_{1}\right)
-\mathcal{\hat{F}}_{\phi}^{\prime}\left(  h_{2}\right)  \right\vert  \\
=&\,\left\vert \frac{\mathcal{\hat{F}}_{1\phi}^{\prime}\left(  h_{1}\right)
\mathcal{F}_{2}\left(  \hat{\phi}\right)  -\mathcal{F}_{1}\left(  \hat{\phi
}\right)  \mathcal{\hat{F}}_{2\phi}^{\prime}\left(  h_{1}\right)  }{\left(
\mathcal{F}_{1}\left(  \hat{\phi}\right)  +\mathcal{F}_{2}\left(  \hat{\phi
}\right)  \right)  ^{2}}-\frac{\mathcal{\hat{F}}_{1\phi}^{\prime}\left(
h_{2}\right)  \mathcal{F}_{2}\left(  \hat{\phi}\right)  -\mathcal{F}%
_{1}\left(  \hat{\phi}\right)  \mathcal{\hat{F}}_{2\phi}^{\prime}\left(
h_{2}\right)  }{\left(  \mathcal{F}_{1}\left(  \hat{\phi}\right)
+\mathcal{F}_{2}\left(  \hat{\phi}\right)  \right)  ^{2}}\right\vert \\
\leq&\,\frac{\left[  \left\vert \mathcal{F}_{2}\left(  \hat{\phi}\right)
\right\vert \left\vert \mathcal{\hat{F}}_{1\phi}^{\prime}\left(  h_{1}\right)
-\mathcal{\hat{F}}_{1\phi}^{\prime}\left(  h_{2}\right)  \right\vert
+\left\vert \mathcal{F}_{1}\left(  \hat{\phi}\right)  \right\vert \left\vert
\mathcal{\hat{F}}_{2\phi}^{\prime}\left(  h_{1}\right)  -\mathcal{\hat{F}%
}_{2\phi}^{\prime}\left(  h_{2}\right)  \right\vert \right]  }{\left(
\mathcal{F}_{1}\left(  \hat{\phi}\right)  +\mathcal{F}_{2}\left(  \hat{\phi
}\right)  \right)  ^{2}}\leq C_n\left\Vert h_{1}-h_{2}\right\Vert _{\infty}.
\end{align*}
Also, since  $\mathcal{\hat{F}}_{1\phi
}^{\prime}\left(  h\right)  \rightarrow_{p}\mathcal{F}_{1\phi}^{\prime}\left(
h\right)  $ and $\mathcal{\hat{F}}_{2\phi}^{\prime}\left(  h\right)
\rightarrow_{p}\mathcal{F}_{2\phi}^{\prime}\left(  h\right)  $ for every $h\in C\left(  \left[  0,1\right]  \right)  $, then we have
$\mathcal{\hat
{F}}_{\phi}^{\prime}\left(  h\right)  \rightarrow_{p}\mathcal{F}_{\phi
}^{\prime}\left(  h\right)  $.
By Remark 3.4 of \citet{fang2014inference}, Assumption 4 of \citet{fang2014inference} holds. 

By a proof similar to that of Theorem S.1.1 in \citet{fang2014inference}, we can show that
\[
\hat{c}_{1-\alpha/2}\rightarrow_{p}c_{1-\alpha/2}\text{ and }\hat{c}%
_{\alpha/2}\rightarrow_{p}c_{\alpha/2}.
\]
By Proposition \ref{prop.c asymptotic limit} in the paper and Example 1.4.7 (Slutsky’s lemma), Theorem 1.3.6 (continuous mapping), and Theorem 1.3.4(vi) of \citet{van1996weak},
\[
\lim_{n\rightarrow\infty} \mathbb{P}\left(  \sqrt{T_{n}}\left(
\hat{c}\left(  L_{1},L_{2}\right)  -c\left(  L_{1},L_{2}\right)  \right)
>\hat{c}_{1-\alpha/2}\right)  =\alpha/2
\]
and%
\[
\lim_{n\rightarrow\infty}\mathbb{P}\left(  \sqrt{T_{n}}\left(
\hat{c}\left(  L_{1},L_{2}\right)  -c\left(  L_{1},L_{2}\right)  \right)
<\hat{c}_{\alpha/2}\right)  =\alpha/2,
\]
which imply that
\[
\lim_{n\rightarrow\infty}\mathbb{P}\left(  \sqrt{T_{n}}\left(  \hat
{c}\left(  L_{1},L_{2}\right)  -c\left(  L_{1},L_{2}\right)  \right)
\in\left[  \hat{c}_{\alpha/2},\hat{c}_{1-\alpha/2}\right]  \right)
=1-\alpha,
\]
or equivalently%
\[
\lim_{n\rightarrow\infty}\mathbb{P}\left(  c\left(  L_{1},L_{2}\right)
\in\left[  \hat{c}\left(  L_{1},L_{2}\right)  -T_{n}^{-1/2}\hat{c}%
_{1-\alpha/2},\hat{c}\left(  L_{1},L_{2}\right)  -T_{n}^{-1/2}\hat{c}%
_{\alpha/2}\right]  \right)  =1-\alpha.
\]
\end{proof}

\subsection{Proofs for Section \ref{sec.ISDC}}

\begin{proof}[Proof of Proposition \ref{prop.AmUISD}]
The proof closely follows the strategies of the proofs of Theorem 1 of \citet{leshno2002preferred} and Theorem 2.3 of \citet{aaberge2021ranking}.
We can show that
\begin{align*}
&  W_{P}\left(  F_{1}\right)  -W_{P}\left(  F_{2}\right)  \\
=&\,-P^{\left(  2\right)  }\left(  1\right)  \left(  \Lambda_{1}^{3}\left(
1\right)  -\Lambda_{2}^{3}\left(  1\right)  \right)  +\int_{0}^{1}P^{\left(
3\right)  }\left(  t\right)  \mathrm{d}\Lambda_{1}^{4}\left(  t\right)
-\int_{0}^{1}P^{\left(  3\right)  }\left(  t\right)  \mathrm{d}\Lambda_{2}%
^{4}\left(  t\right)  \\
=&\,-P^{\left(  2\right)  }\left(  1\right)  \left(  \Lambda_{1}^{3}\left(
1\right)  -\Lambda_{2}^{3}\left(  1\right)  \right)  +P^{\left(  3\right)
}\left(  t\right)  \left(  \Lambda_{1}^{4}\left(  t\right)  -\Lambda_{2}%
^{4}\left(  t\right)  \right)  |_{0}^{1}-\int_{0}^{1}P^{\left(  4\right)
}\left(  t\right)  \left(  \Lambda_{1}^{4}\left(  t\right)  -\Lambda_{2}%
^{4}\left(  t\right)  \right)  \mathrm{d}t\\
=&\,\sum_{j=3}^{m}\left(  -1\right)  ^{j-2}P^{\left(  j-1\right)  }\left(
1\right)  \left(  \Lambda_{1}^{j}\left(  1\right)  -\Lambda_{2}^{j}\left(
1\right)  \right)  +\int_{0}^{1}\left(  -1\right)  ^{m-1}P^{\left(  m\right)
}\left(  t\right)  \left(  \Lambda_{1}^{m}\left(  t\right)  -\Lambda_{2}%
^{m}\left(  t\right)  \right)  \mathrm{d}t.
\end{align*}
The remainder of the proof is similar to that of Proposition \ref{prop.AmULD}.
\end{proof}

\begin{proof}[Proof of Lemma \ref{lemma.cmu properties}]
The proof is similar to that of Lemma \ref{lemma.c properties}.    
\end{proof}

\begin{proof}[Proof of Proposition \ref{prop.UISDC properties}]
 The results directly follow from Proposition \ref{prop.AmUISD} and Lemma \ref{lemma.cmu properties}. 
\end{proof}

\begin{proof}[Proof of Proposition \ref{prop.AmDISD}]
The proof closely follows the strategies of the proofs of Theorem 1 of \citet{leshno2002preferred} and Theorem 2.4 of \citet{aaberge2021ranking}.
By definition, for $m\ge 4$, we have that
\[
\tilde{\Lambda}_{j}^{m}\left(  p\right)  =\int_{p}^{1}\tilde{\Lambda}%
_{j}^{m-1}\left(  t\right)  \mathrm{d}t=\int_{0}^{1}\tilde{\Lambda}_{j}%
^{m-1}\left(  t\right)  \mathrm{d}t-\int_{0}^{p}\tilde{\Lambda}_{j}%
^{m-1}\left(  t\right)  \mathrm{d}t,
\]
which implies
${\mathrm{d}\tilde{\Lambda}_{j}^{m}\left(  p\right)  }/{\mathrm{d}%
p}=-\tilde{\Lambda}_{j}^{m-1}\left(  p\right)$.
Also, we have that $$\tilde{\Lambda}_{j}^{3}\left(  p\right)  =\int_{p}^{1}{\Lambda}_{j}^{2}\left(  t\right)  \mathrm{d}t=\int_{0}^{1}{\Lambda}_{j}
^{2}\left(  t\right)  \mathrm{d}t-\int_{0}^{p}{\Lambda}_{j}
^{2}\left(  t\right)  \mathrm{d}t.$$
By a strategy similar to that of the proof of Proposition \ref{prop.AmUISD}, we can show that
\begin{align*}
&  W_{P}\left(  F_{1}\right)  -W_{P}\left(  F_{2}\right)  \\
=&\,-P^{\left(  2\right)  }\left(  0\right)  \left(  \tilde{\Lambda}_{1}%
^{3}\left(  0\right)  -\tilde{\Lambda}_{2}^{3}\left(  0\right)  \right)
+\int_{0}^{1}P^{\left(  3\right)  }\left(  t\right)  \mathrm{d}\tilde{\Lambda
}_{1}^{4}\left(  t\right)  -\int_{0}^{1}P^{\left(  3\right)  }\left(
t\right)  \mathrm{d}\tilde{\Lambda}_{2}^{4}\left(  t\right)  \\
=&\,\sum_{j=3}^{m}-P^{\left(  j-1\right)  }\left(  0\right)  \left(
\tilde{\Lambda}_{1}^{j}\left(  0\right)  -\tilde{\Lambda}_{2}^{j}\left(
0\right)  \right)  +\int_{0}^{1}-P^{\left(  m\right)  }\left(  t\right)
\left(  \tilde{\Lambda}_{1}^{m}\left(  t\right)  -\tilde{\Lambda}_{2}%
^{m}\left(  t\right)  \right)  \mathrm{d}t.
\end{align*}
The remainder of the proof is similar to that of Proposition \ref{prop.AmDLD}.
\end{proof}

\begin{proof}[Proof of Lemma \ref{lemma.cmd properties}]
 The proof is similar to that of Lemma \ref{lemma.c properties}.   
\end{proof}

\begin{proof}[Proof of Proposition \ref{prop.DISDC properties}]
 The results directly follow from Proposition \ref{prop.AmDISD} and Lemma \ref{lemma.cmd properties}. 
\end{proof}

\begin{proof}[Proof of Proposition \ref{prop.cm inverse asymptotic limit}]
The results follow from \eqref{eq.Lambda weak convergence}, continuous mapping theorem, Fubini's theorem, and Theorem 2.1 of \citet{fang2014inference}. 
\end{proof}

\begin{proof}[Proof of Proposition \ref{prop.confidence interval ISDC}]
The proof is similar to that of Proposition \ref{prop.confidence interval}.    
\end{proof}

\subsection{Proofs for Section \ref{sec.ASD}}
\begin{proof}[Proof of Lemma \ref{lemma.cm properties}]
The proof is similar to that of Lemma \ref{lemma.c properties}.
\end{proof}

\begin{proof}[Proof of Proposition \ref{prop.SDC properties}]
The results directly follow from Lemma \ref{lemma.cm properties} and Theorem 4 of \citet{tsetlin2015generalized}.    
\end{proof}

\begin{proof}[Proof of Lemma \ref{lemma.F asymptotic limit}]
The proof for independent samples is trivial and is omitted. For matched pairs, let $\mathcal{H}_{1}=\{1_{[a,x]\times[a,b]}:x\in[a,b]\}$,
$\mathcal{H}_{2}=\{1_{[a,b]\times[a,x]}:x\in[a,b]\}$, and
$\mathcal{H}=\mathcal{H}_{1}\cup\mathcal{H}_{2}$. Define%
\[
\hat{G}_{n}\left(  f\right)  =\sqrt{n}\left\{  \frac{1}{n}\sum_{i=1}%
^{n}f\left(  X_{i},Y_{i}\right)  -E\left[  f\left(  X_{i},Y_{i}\right)
\right]  \right\}
\]
for every $f\in\mathcal{H}$. Since $\mathcal{H}_{1}$ and $\mathcal{H}_{2}$ are
both Donsker classes, by Example 2.10.7 of \citet{van1996weak}, $\hat{G}_{n}\leadsto\mathbb{G}$ in $\ell^{\infty
}\left(  \mathcal{H}\right)  $ for some zero-mean Gaussian process $\mathbb{G}$. Define a map $T$ such that
for every $\psi\in\ell^{\infty}\left(  \mathcal{H}\right)  $ and every
$x\in[a,b]$,
\[
T\left(  \psi\right)  \left(  x\right)  =\binom{\psi\left(  1_{[a
,x]\times[a,b]}\right)  }{\psi\left(  1_{[a,b]\times[a
,x]}\right)  }.
\]
Then by the continuous mapping theorem,
\[
T\left(  \hat{G}_{n}\right)  =\binom{\sqrt{n}\{  \hat{F}_{1}%
-F_{1}\}  }{\sqrt{n}\{  \hat{F}_{2}-F_{2}\}  }\leadsto
T\left(  \mathbb{G}\right)  ,
\]
where for every $x\in[a,b]$,
\[
T\left(  \mathbb{G}\right)  \left(  x\right)  =\binom{\mathbb{G}\left(
1_{[a,x]\times[a,b]}\right)  }{\mathbb{G}\left(  1_{[a,b]\times[a,x]}\right)  }.
\]
By the continuous mapping theorem again, we obtain that
\[
\sqrt{T_{n}}\left\{  (\hat{F}_{1}-\hat{F}_{2})-(F_{1}-F_{2})\right\}
\leadsto\mathbb{G}_{F},
\]
where for every $x\in[a,b]$,
\[
\mathbb{G}_{F}\left(  x\right)  =\sqrt{1-\lambda}\mathbb{G}\left(
1_{[a,x]\times[a,b]}\right)  -\sqrt{\lambda}\mathbb{G}\left(
1_{[a,b]\times[a,x]}\right)
\]
with $\lambda=1/2$.
Then it is easy to show that for all $x,x^{\prime}\in[a,b]$,
\begin{align*}
&  E\left[  \mathbb{G}_{F}\left(  x\right)  \mathbb{G}_{F}(  x^{\prime
})  \right]  \\
=&\,\left(  1-\lambda\right)  E\left[  \mathbb{G}\left(  1_{[a
,x]\times[a,b]}\right)  \mathbb{G}\left(  1_{[a,x^{\prime}%
]\times[a,b]}\right)  \right]  -\sqrt{\lambda\left(  1-\lambda\right)
}E\left[  \mathbb{G}\left(  1_{[a,x]\times[a,b]}\right)
\mathbb{G}\left(  1_{[a,b]\times[a,x^{\prime}]}\right)  \right]  \\
&  -\sqrt{\lambda\left(  1-\lambda\right)  }E\left[  \mathbb{G}\left(
1_{[a,b]\times[a,x]}\right)  \mathbb{G}\left(  1_{[a
,x^{\prime}]\times[a,b]}\right)  \right]  +\lambda E\left[
\mathbb{G}\left(  1_{[a,b]\times[a,x]}\right)  \mathbb{G}\left(
1_{[a,b]\times[a,x^{\prime}]}\right)  \right]  \\
=&\,\left(  1-\lambda\right)  F_{1}(  x\wedge x^{\prime})  -\left(
1-\lambda\right)  F_{1}\left(  x\right)  F_{1}(  x^{\prime})
-\sqrt{\lambda\left(  1-\lambda\right)  }F_{12}(  x,x^{\prime})
  \\
& +\sqrt{\lambda\left(  1-\lambda\right)  }F_{1}\left(  x\right)  F_{2}(
x^{\prime}) -\sqrt{\lambda\left(  1-\lambda\right)  }F_{12}(  x^{\prime},x)
+\sqrt{\lambda\left(  1-\lambda\right)  }F_{1}(  x^{\prime})
F_{2}\left(  x\right)\\
  &+\lambda F_{2}(  x\wedge x^{\prime})
-\lambda F_{2}\left(  x\right)  F_{2}(  x^{\prime}).
\end{align*}
\end{proof}

\begin{proof}[Proof of Proposition \ref{prop.cm asymptotic limit}]
As shown in Lemma \ref{lemma.F asymptotic limit},
\[
\sqrt{T_{n}}\left\{  (  \hat{F}_{1}-\hat{F}_{2})  -\left(
F_{1}-F_{2}\right)  \right\}  \leadsto\mathbb{G}_F.
\]
Then by the linearity and continuity of $\mathcal{I}_m$ defined in \eqref{eq.Im} and the continuous mapping theorem, it follows that
\begin{align*}
\sqrt{T_{n}}\left(  \hat{\phi}_{m}-\phi_{m}\right)    & =\sqrt{T_{n}}\left\{
\mathcal{I}_{m}(  \hat{F}_{1}-\hat{F}_{2})  -\mathcal{I}_{m}\left(
F_{1}-F_{2}\right)  \right\}  \\
& =\sqrt{T_{n}}\mathcal{I}_{m}\left(  (  \hat{F}_{1}-\hat{F}_{2})
-\left(  F_{1}-F_{2}\right)  \right)  \leadsto\mathcal{I}_{m}\left(
\mathbb{G}_F\right) \equiv\mathbb{G}_{m}.
\end{align*}
For every $x\in[a,b]$,
\[
Var\left(    \mathbb{G}_{m}  \left(  x\right)
\right)  =E\left[  (\mathcal{I}_{m}\left(  \mathbb{G}_{F}
\right)\left(  x\right))  ^{2}\right]  -E\left[  \mathcal{I}_{m}\left(  \mathbb{G}_{F}  \right)\left(
x\right)  \right]  ^{2},
\]
where, by Fubini's theorem (see, for example, Theorem 2.37 of \citet{folland2013real}),
\begin{align*}
&E\left[ ( \mathcal{I}_{m}\left(  \mathbb{G}_{F}  \right)\left(  x\right))^{2}\right]\\    =&\,\int\left(  \int_{a}^{x}\cdots\int_{a}^{t_{3}}\int
_{a}^{t_{2}}\mathbb{G}_{F}\left(  \omega\right)  \left(  t_{1}\right)
\mathrm{d}t_{1}\mathrm{d}t_{2}\cdots\mathrm{d}t_{m-1}\right)  ^{2}%
\mathrm{d}\mathbb{P}\left(  \omega\right)  \\
=&\,\int_{a}^{x}\cdots\int_{a}^{t_{3}^{\prime}}\int_{a}^{t_{2}^{\prime}%
}\left(  \int_{a}^{x}\cdots\int_{a}^{t_{3}}\int_{a}^{t_{2}}E\left[
\mathbb{G}_{F}\left(  t_{1}\right)  \mathbb{G}_{F}\left(  t_{1}^{\prime
}\right)  \right]  \mathrm{d}t_{1}\mathrm{d}t_{2}\cdots\mathrm{d}%
t_{m-1}\right)  \mathrm{d}t_{1}^{\prime}\mathrm{d}t_{2}^{\prime}%
\cdots\mathrm{d}t_{m-1}^{\prime}%
\end{align*}
and
\[
E\left[  \mathcal{I}_{m}\left(  \mathbb{G}_{F}\right)\left(  x\right)  
\right]  =0.
\]

By a proof similar to that of Proposition \ref{prop.c asymptotic limit}, we can show the weak convergence
\[
\sqrt{T_{n}}\left\{  \hat{c}_{m}\left(  F_{1},F_{2}\right)  -c_{m}\left(
F_{1},F_{2}\right)  \right\}  \leadsto\mathcal{F}_{\phi_{m}}^{\prime}\left(
\mathbb{G}_{m}\right).
\]
\end{proof}

\begin{proof}[Proof of Proposition \ref{prop.confidence interval SDC}]
The proof is similar to that of Proposition \ref{prop.confidence interval}.
\end{proof}
\putbib
\end{bibunit}
\end{document}